\documentclass[10pt]{article}

\usepackage[T1]{fontenc}
\usepackage[utf8]{inputenc}
\usepackage{lmodern}
\usepackage[a4paper,margin=0.85in]{geometry}
\usepackage{amsmath,amssymb,amsthm,mathtools}
\usepackage{mathrsfs}
\usepackage{booktabs}
\usepackage{array}
\usepackage{microtype}
\usepackage[colorlinks=true,
            linkcolor=blue,
            citecolor=blue,
            urlcolor=blue]{hyperref}
\usepackage[nameinlink,noabbrev]{cleveref}

\newtheorem{theorem}{Theorem}[section]
\newtheorem{proposition}[theorem]{Proposition}
\newtheorem{lemma}[theorem]{Lemma}
\newtheorem{corollary}[theorem]{Corollary}
\theoremstyle{definition}
\newtheorem{definition}[theorem]{Definition}
\newtheorem{example}[theorem]{Example}

\DeclareMathOperator{\Herm}{Herm}
\DeclareMathOperator{\rank}{rank}
\DeclareMathOperator{\Tr}{Tr}
\DeclareMathOperator{\spanR}{span_{\mathbb R}}
\DeclareMathOperator{\spanC}{span_{\mathbb C}}
\DeclareMathOperator{\Lie}{Lie}
\DeclareMathOperator{\erfc}{erfc}
\DeclareMathOperator{\Ad}{Ad}
\DeclareMathOperator{\Mat}{Mat}

\newcommand{\R}{\mathbb R}
\newcommand{\C}{\mathbb C}
\newcommand{\A}{\mathcal A}
\newcommand{\Hh}{\mathcal H}
\newcommand{\cR}{\mathscr R}
\newcommand{\cE}{\mathscr E}
\newcommand{\cB}{\mathscr B}
\newcommand{\cV}{\mathscr V}

\title{\bfseries Behavioral Memory under Symmetry in One-Way Quantum Automata}
\author{Zeyu Chen}
\date{}

\begin{document}
\maketitle

\begin{abstract}
Under compact symmetry, observable behavior reduces to an invariant operator algebra, but its dimension is not yet classical memory: some coordinates are dynamically frozen, some invisible to threshold tests, and some already classical.  We develop an operator-algebraic theory that separates these effects through three filters.  For one automaton, behavior is the Hilbert--Schmidt pairing between prefix-reachable states and suffix-observable effects, whose rank equals the real Hankel rank without controllability or observability assumptions.  Maximizing this invariant over a symmetry-constrained dynamical class gives a structural capacity controlled by the symmetry commutant: its center stores isotypic populations frozen by reversible dynamics, its traceless multiplicity blocks carry movable noncommutative coordinates, dissipation removes the unary spectral loss inside those blocks, and covariant mobility releases relative populations subject to component conservation.  Operational realization then determines which surviving coordinates force probabilistic states.  For a fixed nontrivial invariant readout, full mobility gives an exact dichotomy in worst-case state cost: a commutative invariant algebra costs exactly its dimension, whereas a noncommutative multiplicity block raises the unrestricted cost by exactly one state.  Thus noncommutativity has a one-state worst-case classical price.  The known four-letter quadratic-plus-one law at trivial symmetry is the fully mobile endpoint of this principle.  Schur--Weyl duality further shows that different preserved symmetries on the same tensor-power Hilbert space can change the worst memory scale from polynomial to exponential, while fixed-weight modules give an exact Catalan law at half filling, with structural capacity equal to the Catalan count minus its central-sector correction.
\end{abstract}

\section{Introduction}

Finite automata make memory a discrete computational resource.  In one-way quantum automata the evolving memory is a finite-dimensional operator state, and under compact symmetry its observable behavior can be reduced to an invariant operator algebra rather than the full matrix space.  A smaller operator space, however, is not yet a smaller classical memory: some surviving coordinates are frozen by the dynamics, some movable coordinates are invisible to a fixed threshold readout, and some are already classical stochastic degrees of freedom.  Measure-once quantum finite automata isolate reversible evolution \cite{MooreCrutchfield,BrodskyPippenger}; one-way general quantum finite automata assign a quantum channel to each letter; and measure-many automata interleave the updates with halting measurements \cite{LiEtAl1gQFA,KondacsWatrous}.  Across these models we ask one question: under compact symmetry, which operator degrees of freedom survive as distinguishable word behavior, and which of them force states in every probabilistic finite automaton recognizing the same strict-cutpoint language?

The standard linear theory identifies the rank of a word function's Hankel matrix with the dimension of its minimal weighted-automaton representation \cite{BerstelReutenauer,KieferEtAl}; classical conversion then turns such a real representation into a probabilistic automaton \cite{Rabin,Paz,Turakainen}.  Prepare--test and sign-rank constructions established the ambient \(\Theta(N^2)\) strict-cutpoint scale for reversible automata \cite{ChenWuSimulation}, while dynamic shattering gives the exact unsymmetrized general-channel cost \(N^2+1\), already over four letters, together with the codimension-one stochastic embedding used here \cite{ChenWuQuadratic}.  These results determine the scale when the full operator space is available.  They do not explain what replaces that space after symmetry, why reversible and dissipative models retain different coordinates, or when a continuous operator dimension becomes an additional classical threshold state.

The paper's central answer is that behavioral memory is produced by three logically distinct filters, and the strongest general-channel endpoint is governed by a sharp algebraic boundary.  For fully mobile covariant channels with a nontrivial invariant readout, a commutative symmetry commutant is already a classical stochastic memory: the worst strict-cutpoint cost equals the dimension of that invariant algebra, even over a binary alphabet.  Once the commutant has a noncommutative multiplicity block, the unrestricted-alphabet cost is exactly one state larger, with at most five letters and four when the readout is noncentral.  The binary cost remains within the corresponding one-state interval.  Thus commutativity is the exact full-mobility boundary between ``already classical'' invariant memory and memory that incurs a one-state stochastic overhead.  The familiar four-letter value \(N^2+1\) at trivial symmetry is the endpoint where the invariant algebra is the whole matrix algebra, rather than an isolated quadratic phenomenon.

Three filters explain both this dichotomy and the models in which it does not collapse to an equality.  Instance geometry identifies the exact continuation space of one automaton by quotienting prefix-reachable states by every direction invisible to suffix effects; the Hilbert--Schmidt pairing on the resulting spaces has rank equal to the real Hankel rank without controllability or observability assumptions.  Structural capacity then asks which coordinates a symmetry-constrained dynamics can write.  In the compact-group and open-system setting \cite{FultonHarris,MarvianSpekkens,BucaProsen,AlbertJiang,BaumgartnerNarnhofer,DeGrootEtAl,CirstoiuEtAl}, the center of the commutant stores isotypic populations frozen by commuting unitaries, while its traceless multiplicity blocks carry the reversible directions; charge-zero dissipation removes the unary spectral loss inside those blocks, and covariant mobility releases relative central populations subject to component conservation.  Operational realization is the independent final filter: finite strict-cutpoint shattering converts visible continuous directions into sign obstructions, while recurrent Markov-limit centering can force one further probabilistic state beyond what finite sign-rank alone detects.  Together, symmetry, dynamics, and readout do more than reduce dimension: symmetry fixes the invariant algebra, the dynamics determines which coordinates can move, and threshold geometry decides which movable directions become unavoidable probabilistic states.

The distinction matters away from full mobility.  For a prescribed reversible readout, the relevant visible dimension is the orbit of the accepting projector rather than the whole commutator space, giving the binary interval \(2+\kappa\leq\operatorname{SC}\leq\mathsf B+1\).  For component-conserving channels with structural capacity \(M\), strict-cutpoint cost lies in \([M,M+1]\), and a persistent noncommutative phase attains the upper endpoint over unrestricted alphabets.  Intermediate halting changes the structural input by promoting traces of active nonhalting corners, but the same interval survives; a scalar profile attains the lower endpoint and shows that structural dimension alone does not force the dynamic extra state.  These are not exceptions to the framework: they identify which of its three filters is binding.

The structural law has consequences that are visible before any individual automaton is constructed.  Schur--Weyl duality places two mutual commutants on the same tensor-power Hilbert space and makes the contrast extreme: preserving permutation symmetry gives a polynomial worst-case memory scale at fixed local dimension, whereas preserving the collective-unitary symmetry gives an exponential one.  The ambient quantum system is the same; the preserved symmetry changes the available memory scale.  Fixed-weight permutation modules sharpen this representation-theoretic picture further: at half filling their squared-sector-dimension sum is exactly Catalan, so the structural capacity is that Catalan count minus the central-sector correction, with a critical window describing how this scale is approached.  The accepting-projector orbit and its intrinsic Fisher rank similarly quantify the task-visible part of reversible capacity, while subgroup branching measures how symmetry release restores hidden directions.  Adjacent notions remain distinct: common-character Kraus operators form a proper boundary subclass of component-conserving channels, and group actions on nominal alphabets act on a different object altogether \cite{BojanczykKlinLasota}.

The resulting capability is a symmetry-aware memory calculus rather than a collection of model-specific simulation bounds.  Given a compact symmetry, a dynamical class, and a readout profile, the theory first computes the exact instance invariant, then the largest operator space the dynamics can activate, and finally the strict-cutpoint obstruction that converts part of that space into classical states.  In the fully mobile case the commutant dimension and its commutativity already decide the exact unrestricted state cost.

The scope is strict-cutpoint language recognition by real PFAs.  The simulator preserves the threshold language rather than numerical acceptance probabilities word by word; isolated-cutpoint, bounded-error, and hybrid quantum--classical models therefore require different invariants.  Qualitative unbounded-error equivalences \cite{YakaryilmazSay} and contextuality-based bounded-error promise separations \cite{PrakashContextuality} concern different comparison classes.  Within the present scope, reversible evolution, dissipative channels, and intermediate halting probe the same principle under progressively richer dynamics: symmetry specifies the operator algebra, the dynamics selects its movable part, and threshold geometry prices the resulting classical memory.

\section{Instance geometry: the exact reachable--observable invariant}

Before symmetry can be priced, one automaton already has an exact memory that can be much smaller than its state space.  Its acceptance function is bilinear in a prefix-generated state and a suffix-generated effect, so the relevant quantity is the rank of that interaction rather than the ambient operator dimension.  Compact word closure makes this rank an exact reachable--observable pairing and resolves the paired space into irreducible operator channels.

\subsection{Model and sector reduction}

\begin{definition}\label{def:model}
A sector-preserving measure-once one-way quantum finite automaton is a
tuple
\[
  \A=(\Hh,\rho_0,\{U_a\}_{a\in\Sigma},P,\tau),
\]
where \(\Sigma\) is finite,
\[
  \Hh=\bigoplus_{\alpha\in\Lambda}W_\alpha,
  \qquad
  U_a=\bigoplus_{\alpha\in\Lambda}U_{a,\alpha},
  \qquad
  P=\bigoplus_{\alpha\in\Lambda}P_\alpha,
\]
and write \(D_\alpha=\dim W_\alpha\).  Here \(\rho_0\) is a density
operator, each \(U_a\) is unitary, \(P\) is an orthogonal projector, and
\(\tau\in\R\) is a strict cutpoint.  The standard pure-state model is
the special case \(\rho_0=|\psi_0\rangle\langle\psi_0|\).
For \(w=a_1\cdots a_m\), set
\[
  U_w=U_{a_m}\cdots U_{a_1}
\]
and
\[
  f_\A(w)
  =
  \Tr(PU_w\rho_0U_w^\dagger).
\]
The recognized language is
\[
  L_{\A,\tau}
  =
  \{w\in\Sigma^*:f_\A(w)>\tau\}.
\]
\end{definition}

Let \(\mathcal D\) be dephasing across the fixed sectors and define
\[
  \bar\rho_0
  =
  \mathcal D(\rho_0).
\]
Since every \(U_w\) and \(P\) is block diagonal,
\[
  f_\A(w)=\Tr(PU_w\bar\rho_0U_w^\dagger).
\]
Cross-sector coherences in the initial state never enter the acceptance
function.  Allowing a mixed \(\rho_0\) is also necessary for the reduced
multiplicity-space description below, because Haar twirling a pure
physical state may produce mixed multiplicity blocks.  All extremal
unitary capacities proved later are nevertheless attained by pure
physical inputs.

For any automaton considered below and every prefix \(x\), let \(f_{\A,x}(y)=f_\A(xy)\) be its continuation behavior and define
\[
  \mathscr M_\A
  =
  \spanR\{f_{\A,x}:x\in\Sigma^*\}.
\]
We call \(\mathscr M_\A\) the linear behavioral-memory space of the instance: two prefix states are identified exactly when every suffix gives them the same acceptance value.  The acceptance Hankel matrix is
\[
  H_\A(x,y)=f_\A(xy),
  \qquad x,y\in\Sigma^*,
\]
and its real rank is denoted
\[
  \beta(\A)=\rank_\R H_\A.
\]
Thus \(\dim\mathscr M_\A=\beta(\A)\).  This linear memory is distinct from the operational strict-cutpoint memory \(\operatorname{sc}_{\mathrm{PFA},\mathbb R}(\A,\tau)\), the minimum number of states in a real PFA recognizing \(L_{\A,\tau}\).  Operational lower bounds arise only when a finite or recurrent construction turns this linear dimension into strict threshold distinctions.
For a cutpoint \(\tau\), we also use the centered Hankel rank
\[
  \beta_\tau(\A)
  =
  \rank_\R\bigl(f_\A(xy)-\tau\bigr)_{x,y\in\Sigma^*}.
\]
Plainly \(\beta_\tau(\A)\leq\beta(\A)+1\); it can be no larger than
\(\beta(\A)\) when the constant series is already contained in a
minimal behavior space.
By the standard Hankel theorem for real weighted automata,
\(\beta(\A)\) is the minimum dimension of a real linear
representation of \(f_\A\) \cite{BerstelReutenauer}.

The dephasing step identifies the operator space in which the acceptance
function lives.  It does not yet determine the effective dimension,
because the words may generate only a proper subset of the available
states and effects.  That dependence is captured by the compact closure
of the word dynamics.

\subsection{The reachable--observable pairing}

Let
\[
  \Gamma=\{U_w:w\in\Sigma^*\},
  \qquad
  G=\overline{\Gamma}.
\]

\begin{lemma}\label{lem:compact-closure}
The compact closure \(G\) is a subgroup of
\(\prod_\alpha U(D_\alpha)\).
\end{lemma}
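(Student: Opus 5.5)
The plan is the standard argument that a closed subsemigroup of a compact group is a group. First, \(\Gamma\) is a monoid inside the compact group \(K=\prod_\alpha U(D_\alpha)\). It contains \(U_\varepsilon=I\), and \(U_{xy}=U_yU_x\), so \(\Gamma\) is closed under products. Each \(U_a\) lies in \(K\) because it is block diagonal with unitary blocks \(U_{a,\alpha}\). Since \(K\) is closed in the matrix space, the closure \(G=\overline\Gamma\) lies in \(K\) and is compact. By continuity of multiplication, \(G\) is again closed under products and contains \(I\).

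It remains to show that \(G\) is closed under inverses; this is the only nontrivial step. Fix \(g\in G\) and consider the powers \(g^n\), \(n\geq1\), which all lie in \(G\). By compactness of \(K\), some subsequence converges, and hence \(g^{n_{k+1}-n_k}\to I\) along suitable differences. Concretely, choose \(n_k\) with \(g^{n_k}\) convergent and \(m_k=n_{k+1}-n_k\to\infty\). Then
\[
g^{m_k}=g^{n_{k+1}}\,(g^{n_k})^{-1}\to I,
\]
where the inverse is taken in \(K\) and we use continuity of inversion in \(K\).

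Consequently \(g^{m_k-1}\to g^{-1}\), since \(g^{m_k-1}=g^{m_k}g^{-1}\). Each \(g^{m_k-1}\) lies in \(G\) once \(m_k\geq 1\); for \(m_k=1\) this is \(I\in G\). Closedness of \(G\) then gives \(g^{-1}\in G\). Hence \(G\) is a closed subgroup of \(K\).

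A slicker alternative, usable in place of the subsequence argument, is to apply Cayley–Hamilton to the unitary \(g\). Since \(g\) is invertible, \(g^{-1}\) is a polynomial in \(g\), but the coefficients need not be nonnegative integers, so this alone does not place \(g^{-1}\) in the semigroup. I will therefore use the recurrence argument.

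The main point requiring care is the passage to differences of convergent subsequence indices. One must ensure \(m_k\geq1\), which holds because the \(n_k\) are strictly increasing. One must also interpret inverses inside \(K\), where they exist, rather than inside \(G\), which is not yet known to be a group.
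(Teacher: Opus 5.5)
Your proof is correct and follows the paper's argument: $G$ is a closed subsemigroup of the compact group $\prod_\alpha U(D_\alpha)$, powers of $g$ accumulate at the identity, and $g^{m_k-1}\to g^{-1}$ together with closedness of $G$ gives $g^{-1}\in G$. Your difference-of-indices step $g^{n_{k+1}}(g^{n_k})^{-1}\to I$ supplies the justification that the paper omits when it asserts that the closure of $\{g^n:n\geq1\}$ contains the identity; requiring $m_k\to\infty$ is unnecessary, since $m_k\geq1$ already suffices.
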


\begin{proof}
It is a closed subsemigroup of a compact group.  For \(g\in G\), the
closure of \(\{g^n:n\geq1\}\) contains the identity.  Hence a
subsequence \(g^{n_j}\) converges to the identity, and
\(g^{n_j-1}\to g^{-1}\).  Closedness gives \(g^{-1}\in G\).
\end{proof}

Define the real orbit spans
\[
  \cR
  =
  \spanR\{g\bar\rho_0g^\dagger:g\in G\},
\]
\[
  \cE
  =
  \spanR\{g^\dagger Pg:g\in G\}.
\]

The forward--backward factorization of a Hankel matrix and its
minimality criterion are standard in weighted-automaton theory
\cite{KieferEtAl}.  The specialization needed here replaces the
word-generated spaces by compact orbit spans without assuming
controllability or observability.

\begin{theorem}[Exact reachable--observable invariant]\label{thm:exact-rank}
For every sector-preserving measure-once automaton, its linear behavioral-memory dimension satisfies
\[
  \dim\mathscr M_\A
  =
  \beta(\A)
  =
  \rank
  \left(
  T:\cR\longrightarrow\cE^*
  \right),
\]
where
\[
  T(X)(Y)=\Tr(YX).
\]
Equivalently,
\[
  \beta(\A)
  =
  \dim\cR-\dim(\cR\cap\cE^\perp)
  =
  \dim\cE-\dim(\cE\cap\cR^\perp).
\]
\end{theorem}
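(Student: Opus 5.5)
The plan is to factor the Hankel matrix through the word-generated spans and then show that compact closure does not change those spans. Define the word-reachable span \(\cR_0=\spanR\{U_x\bar\rho_0U_x^\dagger:x\in\Sigma^*\}\) and the word-observable span \(\cE_0=\spanR\{U_y^\dagger PU_y:y\in\Sigma^*\}\). Since \(U_{xy}=U_yU_x\) and the trace is cyclic,
\[
  H_\A(x,y)=\Tr\bigl(U_y^\dagger PU_y\,U_x\bar\rho_0U_x^\dagger\bigr).
\]
Thus \(H_\A\) is the matrix of the bilinear form \((X,Y)\mapsto\Tr(YX)\) on \(\cR_0\times\cE_0\), evaluated on spanning families. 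The row space of \(H_\A\) is therefore the image of \(\cR_0\) under \(X\mapsto(y\mapsto\Tr(U_y^\dagger PU_yX))\). The kernel of this map is \(\cR_0\cap\cE_0^\perp\), so \(\rank H_\A=\dim\cR_0-\dim(\cR_0\cap\cE_0^\perp)\). The same count gives the rank of \(T:\cR_0\to\cE_0^*\).

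For \(\dim\mathscr M_\A\), each continuation \(f_{\A,x}\) is the row of \(H_\A\) indexed by \(x\). So \(\mathscr M_\A\) is the row space, and \(\dim\mathscr M_\A=\beta(\A)\) is immediate. All rows, columns and spans are real because the operators involved are Hermitian.

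The main step is \(\cR_0=\cR\) and \(\cE_0=\cE\). The inclusion \(\cR_0\subseteq\cR\) is trivial. For the reverse, \(\cR_0\) is a finite-dimensional real subspace of \(\Herm(\Hh)\), hence closed. The map \(g\mapsto g\bar\rho_0g^\dagger\) is continuous and sends \(\Gamma\) into \(\cR_0\), so it sends \(G=\overline\Gamma\) into \(\overline{\cR_0}=\cR_0\). The same argument gives \(\cE_0=\cE\). Here I would note that \(\cE\) is defined via \(g^\dagger Pg\), which matches \(U_y^\dagger PU_y\) with \(g=U_y\). Lemma~\ref{lem:compact-closure} is not even strictly needed for this step, only closure and continuity, though it justifies calling these group orbits.

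Finally I would obtain the two dimension formulas by rank–nullity, applied once to \(T\) and once to its transpose \(T^t:\cE\to\cR^*\), \(Y\mapsto\Tr(Y\,\cdot\,)\). The kernel of \(T\) is \(\cR\cap\cE^\perp\) and the kernel of \(T^t\) is \(\cE\cap\cR^\perp\), with \(\perp\) taken with respect to the real Hilbert--Schmidt pairing. Row rank equals column rank, so the two expressions agree.

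The only delicate points are bookkeeping. First, the word order in \(U_w\) must make the prefix act on the state and the suffix on the effect. Second, infinite Hankel rank must be finite; this holds because it is bounded by \(\dim\cR_0\leq\sum_\alpha D_\alpha^2\), so a finite subset of words realizes the rank. The closure step is where controllability-free exactness lives, and I expect it to be the conceptual crux, although it is technically short.
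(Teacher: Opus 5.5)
Your proposal is correct and follows essentially the paper's own argument. You factor $H_\A(x,y)=\Tr(E_y\rho_x)$ through the word-generated spans, identify those spans with $\cR$ and $\cE$ using continuity of conjugation and the closedness of finite-dimensional subspaces, and conclude by rank--nullity; your remarks that the group property of $G$ is not needed and that $\dim\cR_0\leq\sum_\alpha D_\alpha^2$ only make explicit what the paper leaves implicit.
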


\begin{proof}
For each prefix \(x\) and suffix \(y\), write
\[
  \rho_x=U_x\bar\rho_0U_x^\dagger,
  \qquad
  E_y=U_y^\dagger P U_y.
\]
Since \(U_{xy}=U_yU_x\),
\[
  H_\A(x,y)=\Tr(E_y\rho_x).
\]
The row indexed by \(x\) is therefore \(T(\rho_x)\) restricted to the
word-generated effect set.  The word semigroup $\Gamma$ is dense in
$G$, and the conjugation maps
$g\mapsto g\bar\rho_0g^\dagger$ and $g\mapsto g^\dagger Pg$ are
continuous.  Thus the closures of the word-generated state and effect
sets are the corresponding $G$-orbits; finite-dimensional linear spans
are closed, so these sets span \(\cR\) and \(\cE\).  Hence
the row space of \(H_\A\) is exactly \(T(\cR)\).  The remaining
identities are rank--nullity.
\end{proof}

This theorem makes the instance-geometric filter exact without a controllability or observability assumption: behavioral memory is reachability only after quotienting out every direction annihilated by all suffix effects.  The pairing formula consequently separates the two ways in which symmetry can reduce memory---the dynamics may fail to reach an operator direction, or the suffix measurement orbit may fail to observe it---and the representation-theoretic resolution below makes that separation explicit inside each irreducible operator mode.

\subsection{Representation-theoretic resolution}

Let
\[
  \cB=\bigoplus_{\alpha\in\Lambda}\Herm(W_\alpha)
\]
with the conjugation action
\[
  \alpha(g)X=gXg^\dagger.
\]
Complexify this real unitary representation and decompose it as
\[
  \cB_\C
  \cong
  \bigoplus_{\mu\in\widehat G_{\cB}}
  V_\mu\otimes\mathcal N_\mu,
\]
where \(V_\mu\) is irreducible,
\(d_\mu=\dim_\C V_\mu\), and \(\mathcal N_\mu\) is its multiplicity space.
Let \(\rho_\mu\) and \(P_\mu\) denote the corresponding components of
\(\bar\rho_0\) and \(P\).  Define
\[
  C_\mu
  =
  \Tr_{\mathcal N_\mu}
  \bigl(
  |\rho_\mu\rangle\langle P_\mu|
  \bigr)
  \in\operatorname{End}(V_\mu).
\]

\begin{theorem}[Isotypic rank formula]\label{thm:isotypic}
The exact behavior rank is
\[
  \beta(\A)
  =
  \sum_{\mu\in\widehat G_{\cB}}
  d_\mu\,\rank C_\mu.
\]
\end{theorem}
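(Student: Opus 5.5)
Starting from \Cref{thm:exact-rank}, I will write \(\beta(\A)\) as the rank of the bilinear form \(B(g,h)=\Tr(h^\dagger P h\, g\bar\rho_0 g^\dagger)\) over \(G\times G\), and then compute that rank isotypic by isotypic. Since \((h^\dagger P h, g\bar\rho_0g^\dagger)_{\mathrm{HS}}=(P,\alpha(hg)\bar\rho_0)_{\mathrm{HS}}\) by unitarity of \(\alpha\), the form becomes the matrix coefficient \(\phi(k)=\langle P,\alpha(k)\bar\rho_0\rangle\) evaluated at \(k=hg\). So \(\beta(\A)\) is the rank of the kernel \((g,h)\mapsto\phi(hg)\) on the compact group \(G\), that is, the dimension of the span of the left translates of \(\phi\). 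The real rank equals the complex rank, because the real Hankel matrix has the same rank over \(\C\). This lets me complexify \(\cB\) and work in \(\cB_\C\), where the Hilbert--Schmidt pairing is a sesquilinear unitary pairing.

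Next I decompose along \(\cB_\C\cong\bigoplus_\mu V_\mu\otimes\mathcal N_\mu\). With \(\rho_\mu=\sum_i v_i\otimes n_i\) and \(P_\mu=\sum_j u_j\otimes m_j\),
\[
\phi(k)=\sum_\mu\sum_{i,j}\langle m_j,n_i\rangle\langle u_j,\pi_\mu(k)v_i\rangle=\sum_\mu\Tr\bigl(\pi_\mu(k)\,C_\mu\bigr),
\]
up to fixing conventions for which factor is conjugated in \(C_\mu=\Tr_{\mathcal N_\mu}|\rho_\mu\rangle\langle P_\mu|\). Here I take \(\widehat G_{\cB}\) to index pairwise inequivalent irreducibles, which is exactly what the multiplicity-space decomposition provides. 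Thus \(\phi\) is a sum of matrix coefficients \(k\mapsto\Tr(\pi_\mu(k)C_\mu)\), one for each inequivalent irreducible of the compact group \(G\).

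The key step, and the main obstacle, is computing the rank of the kernel \(K(g,h)=\sum_\mu\Tr(\pi_\mu(h)\pi_\mu(g)C_\mu)=\sum_\mu\Tr(\pi_\mu(g)C_\mu\pi_\mu(h))\). I plan to factor it as \(K(g,h)=\sum_\mu\langle \Phi_\mu(h),\Psi_\mu(g)\rangle\), where \(\Psi_\mu(g)=\pi_\mu(g)C_\mu\) and \(\Phi_\mu(h)=\pi_\mu(h)^\dagger\) are viewed in \(\operatorname{End}(V_\mu)\).
\begin{itemize}
\item The span of \(\{\pi_\mu(g)C_\mu\}\) is \(\operatorname{End}(V_\mu)C_\mu\). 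This holds by Burnside's theorem: \(\pi_\mu(G)\) spans \(\operatorname{End}(V_\mu)\) because \(\pi_\mu\) is irreducible over \(\C\). That span is a left ideal of dimension \(d_\mu\rank C_\mu\).
\item The span of \(\{\pi_\mu(h)^\dagger\}\) is all of \(\operatorname{End}(V_\mu)\). The trace pairing of \(\operatorname{End}(V_\mu)\) with \(\operatorname{End}(V_\mu)C_\mu\) is then nondegenerate on the latter, so the \(\mu\)-block has rank exactly \(d_\mu\rank C_\mu\).
\item The blocks are independent across \(\mu\). By Peter--Weyl/Schur orthogonality, matrix coefficients of inequivalent irreducibles are linearly independent functions on \(G\). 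Hence the joint map \(g\mapsto(\pi_\mu(g))_\mu\) spans \(\bigoplus_\mu\operatorname{End}(V_\mu)\), and the ranks add.
\end{itemize}
Summing gives \(\beta(\A)=\sum_\mu d_\mu\rank C_\mu\).

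The remaining care points are bookkeeping. I must ensure that only irreducibles actually occurring in \(\cB_\C\) appear, that conjugate-linear conventions in \(|\rho_\mu\rangle\langle P_\mu|\) do not change ranks, and that passing from the word semigroup to \(G\) is justified, which \Cref{thm:exact-rank} already provides.
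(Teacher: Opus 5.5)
Your proposal is correct and follows essentially the same route as the paper. Both reduce via \cref{thm:exact-rank} to the rank of the translation kernel of the matrix coefficient $\sum_\mu\Tr(\pi_\mu(\cdot)C_\mu)$ on $G$, use Burnside's theorem to identify each block's span as the left ideal $\operatorname{End}(V_\mu)C_\mu$ of dimension $d_\mu\rank C_\mu$, invoke Peter--Weyl independence across inequivalent irreducibles, and pass between real and complex rank; your factorization of the kernel through $\bigoplus_\mu\operatorname{End}(V_\mu)$, with one side spanning the whole direct sum, simply makes explicit the injectivity step the paper leaves implicit.
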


\begin{proof}
Choose an orthonormal basis \(\{m_u\}\) of \(\mathcal N_\mu\) and write
\[
  \rho_\mu=\sum_u r_{\mu,u}\otimes m_u,
  \qquad
  P_\mu=\sum_u p_{\mu,u}\otimes m_u.
\]
The partial-trace convention in the definition gives
\[
  C_\mu=\sum_u r_{\mu,u}p_{\mu,u}^\dagger.
\]
The contribution of this isotypic component to the acceptance
function is
\[
  f_\mu(g)
  =
  \sum_u
  p_{\mu,u}^\dagger
  \pi_\mu(g)
  r_{\mu,u}
  =
  \Tr\bigl(\pi_\mu(g)C_\mu\bigr),
\]
where $\pi_\mu$ acts by left multiplication on $V_\mu$.

With the right-translation convention $R_hf(g)=f(gh)$, the coefficient
matrix is replaced by \(\pi_\mu(h)C_\mu\).  By irreducibility and
Burnside's theorem,
\[
  \spanC\{\pi_\mu(h):h\in G\}
  =
  \operatorname{End}(V_\mu).
\]
Therefore the translated coefficient space is
\[
  \{AC_\mu:A\in\operatorname{End}(V_\mu)\},
\]
whose complex dimension is \(d_\mu\rank C_\mu\).
Matrix-coefficient spaces of inequivalent irreducible
representations are linearly independent by the Peter--Weyl theorem.
Summing over \(\mu\) gives the complexified Hankel rank.  Complexifying
a real matrix preserves its largest nonzero minor and therefore its
rank, so this is the original real Hankel rank.
\end{proof}

The rank of \(C_\mu\) is the dimension of the effective irreducible-direction space that remains after the excitation and observation tensors are contracted over multiplicity coordinates.  It need not equal the number of nonzero multiplicity components.

The isotypic formula is an exact statement about the acceptance
function.  To translate it into a probabilistic state bound, we use a
codimension-one stochastic embedding of a real linear representation.

\subsection{From Hankel rank to probabilistic states}

A real probabilistic finite automaton is allowed real stochastic
transition probabilities.  More generally, for an ordered subfield
\(\mathbb F\subseteq\R\), an \(\mathbb F\)-PFA has all stochastic data in \(\mathbb F\).

We write a \(k\)-dimensional real linear representation as
\((u,\{A_a\}_{a\in\Sigma},v)\), with row vector \(u\), column vector
\(v\), and series value \(uA_wv\).  The following embedding improves
the general quantitative conversion needed here and, unlike a generic
positive--negative splitting, preserves the alphabet.

\begin{theorem}[Codimension-one stochastic embedding]
\label{thm:codimension-one-pfa}
Let \(\mathbb F\subseteq\R\) be an ordered subfield.
Let \(k\geq1\).  Every \(k\)-dimensional \(\mathbb F\)-linear representation at
cutpoint zero has the same strict-cutpoint language as an
alphabet-preserving \((k+1)\)-state \(\mathbb F\)-PFA.  Consequently,
every cutpoint \(\tau\in\mathbb F\) costs at most \(k+2\)
probabilistic states.

The latter bound improves to \(k+1\) whenever the same
\(k\)-dimensional representation contains a normalized constant mode:
either there is a column \(t\) with
\[
  A_at=t\quad(a\in\Sigma),\qquad ut=1,
\]
or there is a row \(q\) with
\[
  qA_a=q\quad(a\in\Sigma),\qquad qv=1.
\]
\end{theorem}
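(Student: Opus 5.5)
The approach is a one-sided version of Turakainen's conversion. Turakainen forces both row and column sums to be constant and therefore needs two auxiliary states; here I will force only constant row sums, which needs one auxiliary state, and then let the cutpoint-zero hypothesis absorb the remaining normalisation.

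\emph{Step 1: a $(k+1)$-dimensional representation with constant row sums.} Given $(u,\{A_a\},v)$, I may assume $v\neq0$, since otherwise the language is empty and one state suffices. Write $\mathbf 1$ for the all-ones column. After a change of basis over $\mathbb F$, which preserves the series and the alphabet, I arrange $v=e_1$. Fix $s\in\mathbb F$ and put
\[
  B_a=\begin{pmatrix}A_a & s\mathbf 1-A_a\mathbf 1\\ 0 & s\end{pmatrix},\qquad
  \pi=(u,\,-u\mathbf 1),\qquad \eta=\begin{pmatrix}e_1\\0\end{pmatrix}.
\]
Block triangularity gives $B_w=\begin{pmatrix}A_w&*\\0&s^{|w|}\end{pmatrix}$, so $\pi B_w\eta=uA_wv$. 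Moreover $B_a\mathbf 1=s\mathbf 1$ and $\pi\mathbf 1=0$.

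\emph{Step 2: positivity.} Let $J=\mathbf 1\mathbf 1^{\mathsf T}$ and set $M_a=(B_a+cJ)/(s+c(k+1))$, with $c\in\mathbb F$ large enough that every entry is nonnegative. Then each $M_a$ is row-stochastic over $\mathbb F$. Expand $\pi(B_{a_1}+cJ)\cdots(B_{a_m}+cJ)\eta$ as a sum of words in the $B_a$ and $J$. Any term containing a $J$ factorises as $\pi X\mathbf 1\cdot\mathbf 1^{\mathsf T}Y\eta$, where $X$ is a product of $B$'s and $J$'s. Because $B_a\mathbf 1=s\mathbf 1$ and $J\mathbf 1=(k+1)\mathbf 1$, the factor $\pi X\mathbf 1$ is a scalar multiple of $\pi\mathbf 1=0$. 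Hence
\[
  \pi M_w\eta=\frac{uA_wv}{(s+c(k+1))^{|w|}},
\]
which has the same sign as $uA_wv$. The state set $\{1,\dots,k+1\}$ and the alphabet are unchanged, and the accepting vector $\eta=e_1$ is already $0/1$.

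\emph{Step 3: the initial vector (main obstacle).} I expect this step to be the hardest. The vector $\pi$ has zero sum, so it is not a distribution. Forcing $\pi\mathbf 1=0$ is exactly what kills the $J$-terms, and naive fixes reintroduce them: adding $\mathbf 1^{\mathsf T}/(k+1)$ or $e_{k+1}$ to $\pi$ leaves a nonconstant term $\mathbf 1^{\mathsf T}M_w\eta$.

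What I would try first is to dualise Step 2. Force constant column sums of $B_a$ instead, so that $\mathbf 1^{\mathsf T}$ is a left eigenvector. Then put the zero-sum condition on the final vector $\eta$, and make $\pi$ an honest distribution, for example a basis vector, after a change of basis arranging $u=e_1^{\mathsf T}$.

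The difficulty then moves to making the nonnegative matrices row-stochastic rather than column-stochastic. To handle this I would try a diagonal similarity $D^{-1}M_aD$ that preserves the sign pattern, combined with using the freedom in the auxiliary row and column to keep one sum vector fixed while the $cJ$ shift fixes the other. Failing that, I would show that a $[0,1]$-valued final vector can be made $0/1$ by shifting the cutpoint to a constant multiple of the shift. This is legitimate because the cutpoint-zero hypothesis lets me scale $\eta$ by a positive factor and add a multiple of $\mathbf 1$ compensated by $\pi\mathbf 1$. Whichever route closes, the accounting must be checked to stay at $k+1$ states.

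\emph{Step 4: general cutpoints.} For $\tau\in\mathbb F$, the series $uA_wv-\tau$ has a $(k+1)$-dimensional representation $(u\oplus1,\ A_a\oplus1,\ v\oplus(-\tau))$. Applying the first part to it gives $k+2$ states.

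If the representation already contains a normalized constant mode $t$ with $A_at=t$ and $ut=1$, then $v-\tau t$ gives
\[
  uA_w(v-\tau t)=uA_wv-\tau,
\]
a $k$-dimensional representation of the centred series, hence $k+1$ states. Symmetrically, a row $q$ with $qA_a=q$ and $qv=1$ gives $(u-\tau q)A_wv=uA_wv-\tau$, again with $k+1$ states.
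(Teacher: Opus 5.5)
Your Steps 1, 2 and 4 are correct. The bordered matrices satisfy $B_a\mathbf 1=s\mathbf 1$, the $J$-terms vanish because $\pi\mathbf 1=0$, and your centering for general cutpoints and for constant modes matches the paper. The problem is Step 3, which is where the $(k+1)$-state claim is actually decided, and you leave it open. At the end of Step 2 you have a $(k+1)$-dimensional representation with row-stochastic letter matrices but a zero-sum initial row, and that is not a PFA.

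None of your fallbacks closes this:
\begin{itemize}
\item The dual route yields column-stochastic letters, which a PFA cannot use without reversing the word order.
\item A diagonal similarity $D^{-1}M_aD$ is row-stochastic only if $D\mathbf 1$ is a common positive fixed vector of all the $M_a$. That fails generically once there are two letters.
\item An affine change $\eta\mapsto\alpha\eta+\beta\mathbf 1$ cannot turn a final vector with more than two distinct entries into a $0/1$ vector, and the usual $[0,1]$-to-$0/1$ conversion adds states.
\end{itemize}
The underlying obstruction is the one you spotted yourself: with only row sums controlled, $\mathbf 1^{\mathsf T}M_w\eta$ in general depends on $w$.

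The missing idea is that controlling row \emph{and} column sums does not cost two auxiliary states. Instead of bordering, conjugate the representation into the zero-sum hyperplane of $\mathbb F^{N}$, with $N=k+1$. Pick an $\mathbb F$-isomorphism $L:\mathbb F^k\to\mathbf 1^\perp$ with $Lv=e_1-\frac1N\mathbf 1$, and set $R=L^{-1}(I-\frac1NJ)$. Then $RL=I_k$, $R\mathbf 1=0$, $\mathbf 1^{\mathsf T}L=0$ and $Re_1=v$.

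With these choices, $B(M)=LMR$ is multiplicative and has zero row and column sums, so $JB(M)=B(M)J=0$. Hence $P_a=\frac1NJ+\varepsilon B(A_a)$ is doubly stochastic for small $\varepsilon>0$, and $P_w=\frac1NJ+\varepsilon^{|w|}B(A_w)$.

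Now the fix you rejected works. The row $\pi=\frac1N\mathbf 1^{\mathsf T}+\delta uR$ is a genuine probability vector for small $\delta$, since $uR\mathbf 1=0$. Column-stochasticity means its uniform part contributes only the constant $\frac1N$, so $\pi P_we_1=\frac1N+\delta\varepsilon^{|w|}uA_wv$. With state $1$ as the sole accepting state and PFA cutpoint $\frac1N$, this reproduces the zero-cutpoint language with $k+1$ states.

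This is the paper's construction. Your Step 4 then applies verbatim to give the $k+2$ bound and its $k+1$ refinement.
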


\begin{proof}
First consider cutpoint zero.  Put \(N=k+1\), let
\(\mathbf 1\in\mathbb F^N\) be the all-ones column, and set
\(J=\mathbf 1\mathbf 1^{\mathsf T}\).  If \(v=0\), the represented
language is empty and needs only one probabilistic state, so suppose
\(v\ne0\).  Choose an isomorphism
\[
  L:\mathbb F^k\longrightarrow \mathbf 1^\perp
  \quad\text{such that}\quad
  Lv=e_1-\frac1N\mathbf 1,
\]
and define
\[
  R=L^{-1}\!\left(I-\frac1N J\right).
\]
Then
\[
  RL=I_k,\qquad LR=I-\frac1N J,\qquad
  R\mathbf 1=0,\qquad \mathbf 1^{\mathsf T}L=0,\qquad Re_1=v.
\]
For a \(k\times k\) matrix \(M\), write \(B(M)=LMR\).  This map is
multiplicative and every \(B(M)\) has zero row and column sums:
\[
  B(M)B(M')=B(MM'),\qquad
  B(M)\mathbf 1=0,\qquad \mathbf 1^{\mathsf T}B(M)=0.
\]
Because the alphabet is finite, one may choose
\(\varepsilon\in\mathbb F\), \(\varepsilon>0\), sufficiently small that
\[
  P_a=\frac1N J+\varepsilon B(A_a)
\]
is entrywise nonnegative for every letter.  Each \(P_a\) is therefore
doubly stochastic.  Likewise, for sufficiently small
\(\delta\in\mathbb F\), \(\delta>0\),
\[
  \pi=\frac1N\mathbf 1^{\mathsf T}+\delta uR
\]
is a probability row vector.  Take state \(1\) as the sole accepting
state.  Since \(JB(M)=B(M)J=0\), multiplicativity gives
\[
  P_w=\frac1N J+\varepsilon^{|w|}B(A_w)
\]
for every word, including the empty word because \(B(I_k)=I-J/N\).
Hence
\[
  \pi P_we_1
  =\frac1N+\delta\varepsilon^{|w|}uA_wv.
\]
Thus comparison with the PFA cutpoint \(1/N\) reproduces exactly the
zero-cutpoint language.

For a general cutpoint \(\tau\in\mathbb F\), augment the representation
by the constant coordinate
\[
  u'=(u,-\tau),\qquad A'_a=A_a\oplus[1],\qquad
  v'=\binom{v}{1}.
\]
It has dimension \(k+1\) and zero-cutpoint value \(uA_wv-\tau\), so
the first part gives \(k+2\) probabilistic states.  If a normalized
right constant mode exists, replace \(v\) by \(v-\tau t\); if a
normalized left constant mode exists, replace \(u\) by \(u-\tau q\).
Either operation centers the cutpoint inside the original
\(k\)-dimensional representation, and the first part then gives
\(k+1\) states.
\end{proof}

The zero-cutpoint mechanism is the codimension-one construction of
\cite{ChenWuQuadratic}.  We record it here over an ordered field and
isolate the normalized constant-mode centering that lets the
symmetry-reduced and accumulator representations below retain the same
one-state overhead.  It should still be distinguished from the older
qualitative conversion of Turakainen \cite{Turakainen}.

\begin{corollary}\label{cor:pfa-upper}
Every sector-preserving automaton satisfies
\[
  \operatorname{sc}_{\mathrm{PFA},\R}(\A,\tau)
  \leq
  \beta_\tau(\A)+1
  \leq
  \beta(\A)+2.
\]
If a minimal \(\beta(\A)\)-dimensional representation of \(f_\A\)
contains a normalized left or right constant mode, then
\[
  \beta_\tau(\A)\leq\beta(\A),
  \qquad
  \operatorname{sc}_{\mathrm{PFA},\R}(\A,\tau)
  \leq\beta(\A)+1.
\]
If the representation and cutpoint lie in an ordered subfield
\(\mathbb F\), the corresponding statement holds for
\(\mathbb F\)-PFAs.
\end{corollary}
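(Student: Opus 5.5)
The plan is to apply Theorem~\ref{thm:codimension-one-pfa} to explicit real linear representations of the acceptance function and of its centered version. Everything needed is already in place, so the work is bookkeeping about dimensions and about how the cutpoint is absorbed.

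For the first inequality, I consider the centered series \(g(w)=f_\A(w)-\tau\). Its Hankel matrix is \((f_\A(xy)-\tau)_{x,y}\), whose rank is \(\beta_\tau(\A)\) by definition. By the Hankel theorem for real weighted automata \cite{BerstelReutenauer}, \(g\) has a real linear representation of dimension \(\beta_\tau(\A)\). The strict-cutpoint language \(L_{\A,\tau}=\{w:g(w)>0\}\) is then the zero-cutpoint language of that representation. The zero-cutpoint part of Theorem~\ref{thm:codimension-one-pfa} gives a real PFA with \(\beta_\tau(\A)+1\) states.

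There is one degenerate case to handle first: if \(\beta_\tau(\A)=0\), then \(g\equiv0\) and the language is empty. A single rejecting state suffices, so the bound holds trivially. The second inequality, \(\beta_\tau(\A)+1\leq\beta(\A)+2\), is the already-noted bound \(\beta_\tau\leq\beta+1\). It follows because \(H-\tau\mathbf 1\mathbf 1^{\mathsf T}\) is a rank-one perturbation of \(H_\A\).

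For the refinement, I take a minimal representation \((u,\{A_a\},v)\) of \(f_\A\) of dimension \(\beta(\A)\) containing, say, a normalized right constant mode \(t\). Then \((u,\{A_a\},v-\tau t)\) represents \(uA_wv-\tau\,uA_wt=f_\A(w)-\tau\), since \(A_wt=t\) and \(ut=1\). The left case is symmetric, replacing \(u\) by \(u-\tau q\). So \(g\) has a \(\beta(\A)\)-dimensional representation. Because Hankel rank is at most the dimension of any representation, this gives \(\beta_\tau(\A)\leq\beta(\A)\). Combining with the first part, or invoking the constant-mode clause of Theorem~\ref{thm:codimension-one-pfa} directly, yields \(\operatorname{sc}_{\mathrm{PFA},\R}(\A,\tau)\leq\beta(\A)+1\).

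For the subfield statement, I need the representation in \(\mathbb F\), with \(\tau\in\mathbb F\). Centering then stays in \(\mathbb F\), and Theorem~\ref{thm:codimension-one-pfa} is stated over \(\mathbb F\). The one point needing care is the first inequality over \(\mathbb F\). Here I would rely on the fact that a Hankel matrix with entries in \(\mathbb F\) admits a minimal representation over \(\mathbb F\), by the standard rank-factorization construction using rows and columns of a nonsingular minor. That construction is where I expect any subtlety to lie; the rest is immediate.
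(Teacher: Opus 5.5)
Your proposal is correct and follows essentially the same route as the paper. The Hankel theorem on the centered series gives a $\beta_\tau(\A)$-dimensional representation at cutpoint zero, the rank-zero case is an empty language, \cref{thm:codimension-one-pfa} supplies the $(\beta_\tau(\A)+1)$-state PFA, and the normalized constant mode centers the cutpoint inside a minimal representation of $f_\A$. Your remark that the Hankel realization theorem holds over any field is exactly what justifies the paper's one-line claim that the construction stays inside $\mathbb F$.
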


\begin{proof}
The Hankel theorem applied to the centered series supplies a real GFA
of dimension $\beta_\tau(\A)$ at cutpoint zero.  If this rank is zero,
the strict language is empty and one state suffices.  Otherwise apply
\cref{thm:codimension-one-pfa}.  The constant-mode clause gives the
stated refinement when one starts from a minimal representation of
$f_\A$.  The construction stays inside $\mathbb F$ when the data and
cutpoint do.
\end{proof}

The PFA reproduces the threshold language.  It is not asserted to
reproduce the numerical acceptance probability word by word.

This completes the instance-geometric layer: \(\beta(\A)\) is exact, while the stochastic embedding turns it only into an upper bridge for strict-cutpoint memory.  The next layer asks for the largest pairing rank permitted by symmetry and dynamics before a particular automaton is fixed.

\section{Structural capacity: the center--commutator split}

An exact invariant for one automaton does not yet say how much memory a symmetric dynamical class can support.  The structural question is which invariant operator coordinates the dynamics can ever write.  For compact symmetry the answer is encoded by the commutant: Haar reduction exposes the multiplicity blocks, while the center--commutator split separates isotypic populations that reversible conjugation leaves read-only from traceless coordinates that it can move.

\subsection{Haar reduction to multiplicity spaces}

Let $K$ be a compact group with a finite-dimensional unitary
representation $\pi:K\to U(\Hh)$.  A measure-once automaton is
$K$-equivariant when
\[
  [U_a,\pi(k)]=0,
  \qquad
  [P,\pi(k)]=0
\]
for every $a\in\Sigma$ and $k\in K$.  Choose an isotypic decomposition
\[
  \Hh
  \cong
  \bigoplus_{\lambda\in\widehat K}
  \mathcal M_\lambda\otimes V_\lambda,
\]
where the $V_\lambda$ are inequivalent irreducible $K$-modules and
$\mathcal M_\lambda$ are their multiplicity spaces.  Write
\[
  m_\lambda=\dim \mathcal M_\lambda,
  \qquad
  d_\lambda=\dim V_\lambda.
\]
In this decomposition,
\[
  \pi(k)
  =
  \bigoplus_\lambda
  I_{\mathcal M_\lambda}\otimes\pi_\lambda(k).
\]

Define the Haar twirl
\[
  \mathcal T_K(X)
  =
  \int_K\pi(k)X\pi(k)^\dagger\,dk.
\]

\begin{proposition}[Standard multiplicity-space reduction]\label{prop:twirl-reduction}
Let $\A$ be a $K$-equivariant measure-once automaton with initial state
$\rho_0$.  Set
\(\widetilde\rho_0=\mathcal T_K(\rho_0)\).  Then for every word $w$,
\[
  \Tr(PU_w\rho_0U_w^\dagger)
  =
  \Tr\!\left(PU_w\widetilde\rho_0U_w^\dagger\right).
\]
Moreover, there exist positive semidefinite
$\sigma_\lambda\in\Herm(\mathcal M_\lambda)$, unitaries
$W_{a,\lambda}\in U(\mathcal M_\lambda)$, and orthogonal projectors
$Q_\lambda$ on $\mathcal M_\lambda$ such that
\[
  \widetilde\rho_0
  =
  \bigoplus_\lambda
  \sigma_\lambda\otimes\frac{I_{V_\lambda}}{d_\lambda},
\]
\[
  U_a
  =
  \bigoplus_\lambda
  W_{a,\lambda}\otimes I_{V_\lambda},
  \qquad
  P
  =
  \bigoplus_\lambda
  Q_\lambda\otimes I_{V_\lambda},
\]
and hence
\[
  f_\A(w)
  =
  \sum_\lambda
  \Tr\!\left(
    Q_\lambda W_{w,\lambda}
    \sigma_\lambda W_{w,\lambda}^\dagger
  \right).
\]
\end{proposition}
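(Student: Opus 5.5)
The plan has two parts: an invariance argument for the first identity, and Schur's lemma for the block forms; the final formula then follows by tensor algebra.

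For the first identity, I will use the fact that every \(U_w\) and \(P\) commute with \(\pi(k)\). For each \(k\in K\) and word \(w\), cyclicity of the trace gives
\[
  \Tr\bigl(PU_w\pi(k)\rho_0\pi(k)^\dagger U_w^\dagger\bigr)
  =
  \Tr\bigl(\pi(k)^\dagger P\pi(k)\,U_w\rho_0U_w^\dagger\bigr)
  =
  \Tr(PU_w\rho_0U_w^\dagger),
\]
after moving \(\pi(k)\) through \(U_w\) and \(U_w^\dagger\). The map \(X\mapsto\Tr(PU_wXU_w^\dagger)\) is linear and continuous, and Haar measure is a probability measure. Integrating over \(K\) therefore exchanges with the trace and yields the identity with \(\widetilde\rho_0\).

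For the block forms, I will apply Schur's lemma to the isotypic decomposition. Every operator commuting with \(\pi(K)\) lies in the commutant
\[
  \bigoplus_\lambda\operatorname{End}(\mathcal M_\lambda)\otimes I_{V_\lambda}.
\]
There are two facts to check. First, the \(\lambda\to\lambda'\) block of an intertwiner with \(\lambda\ne\lambda'\) is a sum of intertwiners \(V_\lambda\to V_{\lambda'}\), which vanish. Second, on a diagonal block, expanding in a basis of \(\mathcal M_\lambda\) gives \(d_\lambda\times d_\lambda\) entries, each an intertwiner of \(V_\lambda\) and hence a scalar. Consequently \(U_a=\bigoplus_\lambda W_{a,\lambda}\otimes I\) and \(P=\bigoplus_\lambda Q_\lambda\otimes I\).

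The conditions on these blocks transfer from the original operators. Unitarity of \(U_a\) forces each \(W_{a,\lambda}\) to be unitary, since \((W\otimes I)^\dagger(W\otimes I)=W^\dagger W\otimes I\). Likewise \(P^2=P=P^\dagger\) forces \(Q_\lambda^2=Q_\lambda=Q_\lambda^\dagger\).

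The twirl \(\widetilde\rho_0\) commutes with every \(\pi(k)\) by invariance of Haar measure, so it has the form \(\bigoplus_\lambda X_\lambda\otimes I_{V_\lambda}\). I will set \(\sigma_\lambda=d_\lambda X_\lambda\), which realizes the normalization in the statement. To see that \(\sigma_\lambda\) is positive semidefinite, note that \(\widetilde\rho_0\) is an average of positive operators and is therefore positive. Compressing to \(\mathcal M_\lambda\otimes\mathbb C v\) for a unit vector \(v\in V_\lambda\) gives \(X_\lambda\succeq0\).

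Finally, the word operators factor blockwise as \(U_w=\bigoplus_\lambda W_{w,\lambda}\otimes I\), with \(W_{w,\lambda}=W_{a_m,\lambda}\cdots W_{a_1,\lambda}\). Substituting into the first identity, the trace splits over \(\lambda\) and over the tensor factors:
\[
  \Tr\!\left(Q_\lambda W_{w,\lambda}\sigma_\lambda W_{w,\lambda}^\dagger\right)\cdot\Tr(I_{V_\lambda}/d_\lambda).
\]
The second factor equals \(1\), which gives the stated formula for \(f_\A(w)\).

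The only step needing care is the Schur-lemma description of the commutant. That requires fixing the identification \(\Hh\cong\bigoplus_\lambda\mathcal M_\lambda\otimes V_\lambda\) with \(\pi\) acting as \(I\otimes\pi_\lambda\). The identification must also be unitary, so that adjoints and projectors are preserved; this holds if orthonormal bases are chosen in each isotypic component. Everything else is routine.
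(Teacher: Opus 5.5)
Your proposal is correct and follows the paper's proof. Like the paper, you prove the first identity by commuting \(\pi(k)\) through \(U_w\) and \(P\) inside the Haar integral, obtain the block forms from Schur's lemma, and get the reduced formula from the trace over each \(V_\lambda\); you also supply routine details the paper leaves implicit (unitarity and projector transfer, positivity of \(\sigma_\lambda\), a unitary isotypic identification), though expanding a diagonal block in a basis of \(\mathcal M_\lambda\) gives \(m_\lambda\times m_\lambda\) entries, each an operator on \(V_\lambda\), not \(d_\lambda\times d_\lambda\) entries.
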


\begin{proof}
Since $U_w$ and $P$ commute with $\pi(k)$,
\begin{align*}
  \Tr\!\left(PU_w\widetilde\rho_0U_w^\dagger\right)
  &=
  \int_K
  \Tr\!\left(
    PU_w\pi(k)\rho_0\pi(k)^\dagger U_w^\dagger
  \right)\,dk\\
  &=
  \int_K
  \Tr\!\left(
    \pi(k)^\dagger PU_w\pi(k)\rho_0U_w^\dagger
  \right)\,dk\\
  &=
  \Tr(PU_w\rho_0U_w^\dagger).
\end{align*}
Schur's lemma gives the displayed forms of the twirled state and every
operator in the commutant.  Taking the trace over each irreducible
factor $V_\lambda$ yields the reduced acceptance formula.
\end{proof}

The reduction itself is standard compact-group representation theory \cite{FultonHarris}.  Its role here is to identify the operator space on which the automaton acts; the new step is to connect the Hermitian noncommutative part of that space to real Hankel rank and finite-alphabet probabilistic state complexity.

Thus every $K$-equivariant automaton is behaviorally identical to a
sector-preserving automaton on the multiplicity-space direct sum
$\bigoplus_\lambda \mathcal M_\lambda$.  The irreducible dimensions
$d_\lambda$ affect how the physical Hilbert space is assembled, but the
word-dependent acceptance behavior is carried by the multiplicities
$m_\lambda$.

\subsection{The noncommutative commutant law}

Let
\[
  \mathcal C_K
  =
  \operatorname{End}_K(\Hh)
  =
  \{X\in\operatorname{End}(\Hh):
      X\pi(k)=\pi(k)X\ \text{for all }k\in K\}.
\]
For a finite-dimensional algebra $\mathcal C$, write
\[
  [\mathcal C,\mathcal C]
  =
  \spanC\{XY-YX:X,Y\in\mathcal C\}
\]
for the linear commutator space, and define
\[
  \nu_K
  =
  \dim_\C[\mathcal C_K,\mathcal C_K].
\]
The Wedderburn decomposition supplied by the isotypic representation is
\[
  \mathcal C_K
  \cong
  \bigoplus_\lambda \Mat_{m_\lambda}(\C),
\]
so
\[
  [\mathcal C_K,\mathcal C_K]
  \cong
  \bigoplus_\lambda\mathfrak{sl}(m_\lambda,\C)
\]
and
\[
  \nu_K
  =
  \sum_\lambda(m_\lambda^2-1)
  =
  \dim_\C\mathcal C_K-
  \dim_\C Z(\mathcal C_K).
\]

\begin{lemma}[Hermitian bridge]\label{lem:hermitian-bridge}
For the finite-dimensional $C^*$-algebra $\mathcal C_K$,
\[
  [\mathcal C_K,\mathcal C_K]
  \cong
  \bigoplus_\lambda\mathfrak{sl}(m_\lambda,\C),
\]
and
\[
  [\mathcal C_K,\mathcal C_K]\cap\Herm(\Hh)
  \cong
  \bigoplus_\lambda\Herm_0(\mathcal M_\lambda).
\]
Consequently,
\[
  \dim_\R\bigl([\mathcal C_K,\mathcal C_K]\cap\Herm(\Hh)\bigr)
  =
  \dim_\C[\mathcal C_K,\mathcal C_K].
\]
\end{lemma}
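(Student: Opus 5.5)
The plan is to work entirely inside the Wedderburn decomposition supplied by the isotypic representation. By Schur's lemma, and as already used in \cref{prop:twirl-reduction}, every $X\in\mathcal C_K$ has the form $X=\bigoplus_\lambda X_\lambda\otimes I_{V_\lambda}$ with $X_\lambda\in\operatorname{End}(\mathcal M_\lambda)$. The map $\Phi:X\mapsto(X_\lambda)_\lambda$ is a $*$-algebra isomorphism $\mathcal C_K\to\bigoplus_\lambda\Mat_{m_\lambda}(\C)$. The key point is that it respects adjoints: $(X_\lambda\otimes I)^\dagger=X_\lambda^\dagger\otimes I$ with respect to the Hilbert-space structure on $\Hh$, because the isotypic decomposition is chosen unitarily. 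Consequently $X\in\Herm(\Hh)$ if and only if every $X_\lambda$ is Hermitian.

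First I will verify the commutator identity. Since $\Phi$ is multiplicative and blockwise, $[\mathcal C_K,\mathcal C_K]$ is carried onto $\bigoplus_\lambda[\Mat_{m_\lambda},\Mat_{m_\lambda}]$; cross-block products vanish, and sums of commutators of block-diagonal elements split blockwise. For each block, $[\Mat_m,\Mat_m]=\mathfrak{sl}(m,\C)$. The inclusion $\subseteq$ holds because the trace kills commutators. For $\supseteq$, one checks that $E_{ij}=[E_{ii},E_{ij}]$ for $i\neq j$ and $E_{ii}-E_{jj}=[E_{ij},E_{ji}]$, and these span the traceless matrices. When $m_\lambda=1$ the block contributes $0$, consistent with $\mathfrak{sl}(1)=0$. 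This gives the first display.

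Next comes the Hermitian part. Intersecting with $\Herm(\Hh)$ and using the adjoint compatibility, the intersection corresponds to tuples $(X_\lambda)$ with each $X_\lambda$ Hermitian and traceless, that is, to $\bigoplus_\lambda\Herm_0(\mathcal M_\lambda)$. Two points need care. The first is that traceless here means $\Tr X_\lambda=0$ on the multiplicity space; this is not the same as $\Tr X=d_\lambda\Tr X_\lambda$ summed over $\lambda$, and the statement concerns the blockwise condition, which is what the commutator space imposes. The second is that the identification is an isomorphism of real vector spaces. For the dimension count, $\mathfrak{sl}(m,\C)=\Herm_0(m)\oplus i\,\Herm_0(m)$ is the decomposition of a traceless matrix into Hermitian and anti-Hermitian parts, and it preserves tracelessness. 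Hence $\dim_\R\Herm_0(m)=m^2-1=\dim_\C\mathfrak{sl}(m,\C)$. Summing over $\lambda$ gives the equality of dimensions.

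The only step requiring genuine attention is the adjoint compatibility of $\Phi$, namely that the Hermitian elements of $\Herm(\Hh)$ inside the commutant are exactly the blockwise-Hermitian tuples. I would justify this by fixing the isotypic isomorphism as a unitary $\Hh\cong\bigoplus_\lambda\mathcal M_\lambda\otimes V_\lambda$ with tensor-product inner products, which is always possible for unitary representations of compact groups. Once that is fixed, the rest is the elementary linear algebra above.
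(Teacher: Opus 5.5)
Your proposal is correct and follows the same route as the paper: both pass to the Wedderburn blocks, identify the commutator space with the traceless blocks, and obtain the dimension count from the Hermitian/anti-Hermitian splitting $\mathfrak{sl}(m,\C)=\Herm_0(m)\oplus i\,\Herm_0(m)$. Your version also spells out two points the paper leaves implicit: the matrix-unit check that commutators span $\mathfrak{sl}(m,\C)$, and the use of a unitary isotypic identification so that Hermiticity on $\Hh$ can be checked block by block.
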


\begin{proof}
In each block, the algebraic commutator space is the traceless complex
matrix algebra.  Every traceless complex matrix has a unique
decomposition $X=A+iB$ with $A,B$ traceless Hermitian.  Thus
$\mathfrak{sl}(m,\C)$ is the complexification of $\Herm_0(m)$, and the
real dimension of its Hermitian part equals its complex dimension.
Taking direct sums proves the claim.
\end{proof}

\begin{theorem}[Noncommutative commutant law]\label{thm:commutant-law}
Every $K$-equivariant measure-once automaton satisfies
\[
  \beta(\A)\leq 1+\nu_K.
\]
If, in every block with $m_\lambda\geq2$, the reduced initial state
$\sigma_\lambda$ has a nonzero traceless component, the accepting
projector $Q_\lambda$ is nontrivial, the central state and effect have
nonzero pairing,
\[
  \sum_\lambda
  \frac{\Tr(\sigma_\lambda)\rank(Q_\lambda)}{m_\lambda}
  >0,
\]
and the compact word-closure
contains
\[
  \prod_{\lambda:m_\lambda\geq2}SU(m_\lambda)
\]
acting independently on the multiplicity spaces, then
\[
  \beta(\A)=1+\nu_K.
\]
Consequently,
\[
  \max_{\A\ \text{$K$-equivariant}}\beta(\A)
  =
  1+\dim_\C[\mathcal C_K,\mathcal C_K].
\]
\end{theorem}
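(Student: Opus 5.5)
The plan is to reduce everything to the multiplicity spaces and then apply \cref{thm:exact-rank}. By \cref{prop:twirl-reduction}, $f_\A$ is the acceptance function of the sector-preserving automaton on $\bigoplus_\lambda\mathcal M_\lambda$ with state $\sigma=\bigoplus_\lambda\sigma_\lambda$, letters $W_a=\bigoplus_\lambda W_{a,\lambda}$ and projector $Q=\bigoplus_\lambda Q_\lambda$. Hence $\beta(\A)$ is the rank of the Hilbert--Schmidt pairing $T:\cR\to\cE^*$ for that reduced automaton.

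\textbf{Upper bound.} I will bound $\dim\cR$ directly. For every $g$ in the compact closure $G\subseteq\prod_\lambda U(m_\lambda)$, the block traces of $g\sigma g^\dagger$ equal $\Tr\sigma_\lambda$. So every reachable state has the form
\[
  c+X,\qquad c=\bigoplus_\lambda\frac{\Tr(\sigma_\lambda)}{m_\lambda}I_{\mathcal M_\lambda},\qquad X\in\bigoplus_\lambda\Herm_0(\mathcal M_\lambda).
\]
Blocks with $m_\lambda=1$ have $\Herm_0=0$. Therefore $\cR\subseteq\R c\oplus\bigoplus_\lambda\Herm_0(\mathcal M_\lambda)$, and
\[
  \beta(\A)\leq\dim\cR\leq 1+\sum_\lambda(m_\lambda^2-1)=1+\nu_K,
\]
where the last equality uses \cref{lem:hermitian-bridge}.

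\textbf{Lower bound under the hypotheses.} I will show that $\cR$ and $\cE$ both fill the bound and that $T$ is nondegenerate on them.

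Let $S=\prod_{m_\lambda\ge2}SU(m_\lambda)\subseteq G$. Because $\cR$ is a closed linear space containing the $S$-orbit of $\sigma$, it also contains Haar averages over any closed subgroup of $S$.
\begin{itemize}
\item Averaging over all of $S$ kills every traceless part, since the adjoint representation of $SU(m)$ on $\Herm_0(m)$ has no fixed vectors. This puts $c\in\cR$.
\item Fix one block $\lambda$ and average only over the other factors. This yields $c+X_\lambda$, where $X_\lambda$ is the traceless part of $\sigma_\lambda$ placed in block $\lambda$. Subtracting $c$ puts the whole $SU(m_\lambda)$-orbit of $X_\lambda$ in $\cR$.
\item Since $X_\lambda\ne0$ and $\Herm_0(m_\lambda)$ is irreducible under $SU(m_\lambda)$, its complexification $\mathfrak{sl}(m_\lambda,\C)$ being the irreducible adjoint module, this orbit spans $\Herm_0(\mathcal M_\lambda)$.
\end{itemize}
Hence $\cR=\R c\oplus\bigoplus_\lambda\Herm_0(\mathcal M_\lambda)$.

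The same argument applied to $Q$ uses that $Q_\lambda$ is nontrivial, so its traceless part is nonzero. It gives $\cE=\R e\oplus\bigoplus_\lambda\Herm_0(\mathcal M_\lambda)$ with $e=\bigoplus_\lambda(\rank Q_\lambda/m_\lambda)I$.

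The pairing is block diagonal:
\begin{itemize}
\item $c$ and $e$ are block scalars, so their pairings with traceless blocks vanish.
\item Distinct blocks are trace-orthogonal.
\item On each $\Herm_0(\mathcal M_\lambda)$ the Hilbert--Schmidt form is positive definite.
\item $\Tr(ec)=\sum_\lambda\Tr(\sigma_\lambda)\rank(Q_\lambda)/m_\lambda$, which is positive by hypothesis.
\end{itemize}
Thus $T$ has full rank $1+\nu_K$.

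\textbf{Attainment.} It remains to exhibit one $K$-equivariant automaton meeting the hypotheses; I would realize the following data physically via the isotypic decomposition. Take two letters whose block components generate a dense subgroup of $\prod_{m_\lambda\ge2}SU(m_\lambda)$ (generic pairs do), and a rank-one projector $Q_\lambda$ in each block with $m_\lambda\ge2$. For the initial state, take a pure vector in $\mathcal M_\lambda\otimes V_\lambda$ for one block, or a mixture across all blocks, so that each $\sigma_\lambda$ is non-maximally mixed with positive trace; positivity of the central pairing is then automatic. If all $m_\lambda=1$, then $\nu_K=0$ and any automaton with $f_\A\not\equiv0$ gives $\beta=1$.

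I expect the main obstacle to be the block-independence step: checking that the joint orbit span really splits as a direct sum across blocks. The danger is that $G$, being larger than $S$ or correlated across blocks, might tie different blocks together. The partial Haar-averaging over the complementary factors is the device I would rely on to decouple them.
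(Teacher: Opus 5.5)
Your proposal is correct and follows essentially the same route as the paper. The upper bound comes from invariance of the block traces $\Tr\sigma_\lambda$ together with \cref{thm:exact-rank}; the lower bound comes from irreducibility of the adjoint action on each $\Herm_0(\mathcal M_\lambda)$ plus the positive central pairing; attainment uses dense two-letter control, where the paper cites the explicit \cref{thm:binary-dense} and you appeal to generic pairs. Your partial Haar averaging over the complementary factors spells out the block-independence step that the paper states in one line.

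One small slip in the attainment step: the hypotheses require a non-maximally mixed $\sigma_\lambda$ in \emph{every} block with $m_\lambda\geq2$. A pure vector supported in a single block therefore works only when there is exactly one such block. Otherwise you need your mixture, or a superposition, of product vectors across all blocks.
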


\begin{proof}
By \cref{prop:twirl-reduction}, each block trace
$\Tr\sigma_\lambda$ is invariant under every word.  All central
components therefore contribute to the acceptance function through one
word-independent scalar sequence.  Every word-dependent direction lies
in
\[
  \bigoplus_\lambda\Herm_0(\mathcal M_\lambda),
\]
whose real dimension is $\nu_K$ by \cref{lem:hermitian-bridge}.  \Cref{thm:exact-rank} gives the upper bound.

Under independent $SU(m_\lambda)$ control, the conjugacy orbit of every
nonzero traceless Hermitian operator spans
$\Herm_0(\mathcal M_\lambda)$.  Applying this fact to the traceless parts of
$\sigma_\lambda$ and $Q_\lambda$ shows that the reachable and observable
spans project onto all traceless blocks.  The central-overlap condition
supplies one additional paired constant direction.  The
Hilbert--Schmidt pairing therefore has rank $1+\nu_K$.

For the maximum, choose nonzero traceless state components and
nontrivial accepting projectors in every block with $m_\lambda\geq2$,
and choose a nonzero central pairing.  \Cref{thm:binary-dense} supplies
the required independent compact closure with two letters.  If
$\nu_K=0$, taking $P=I$ gives the constant function one and rank one.
Thus the upper bound is attained in every case.
\end{proof}

An isotypic block with \(m_\lambda=1\) has no traceless multiplicity
direction.  It therefore contributes zero both to \(\nu_K\) and to the
prepare--test/readout-orbit summands, and it is omitted from the
nontrivial \(SU(m_\lambda)\) control factors without any exceptional
case in the formulas.

The theorem identifies the reversible part of symmetric memory by what the dynamics can write.  The center of $\mathcal C_K$ stores isotypic populations that the accepting projector may read but commuting unitaries cannot change; the linear commutator space is exactly the direct sum of traceless multiplicity blocks and contains every word-dependent direction.  Within this reversible class, symmetry does not merely reduce the operator space: the central coordinates are read-only, while the writable part has dimension $\dim_\C[\mathcal C_K,\mathcal C_K]$.

After symmetry reduction, the multiplicity spaces form an ordinary
sector profile.  The exact Hankel rank still depends on the transition
group, initial state, and accepting projector.  The total operator
capacity allowed by these reduced sectors is controlled by traceless
Hermitian blocks, whereas the part visible to the chosen readout is
controlled by the orbit of the accepting projector.

\subsection{The universal sector cap}

For each sector set
\[
  p_\alpha=\Tr\bar\rho_{0,\alpha},
  \qquad
  r_\alpha=\rank P_\alpha,
  \qquad
  q_\alpha=D_\alpha-r_\alpha,
\]
and define the traceless parts
\[
  \rho_\alpha^\circ
  =
  \bar\rho_{0,\alpha}
  -\frac{p_\alpha}{D_\alpha}I_\alpha,
  \qquad
  P_\alpha^\circ
  =
  P_\alpha-\frac{r_\alpha}{D_\alpha}I_\alpha.
\]
A sector is jointly active when
\[
  \rho_\alpha^\circ\neq0
  \qquad\text{and}\qquad
  P_\alpha^\circ\neq0.
\]
Write \(\Lambda_{\mathrm{act}}\) for the jointly active set and put
\[
  \cV_0
  =
  \bigoplus_{\alpha\in\Lambda_{\mathrm{act}}}
  \Herm_0(W_\alpha).
\]
This definition remains valid after Haar reduction, when a positive-
weight multiplicity block can be maximally mixed and hence have no
traceless excitation.

Define the central parts
\[
  c_\rho
  =
  \bigoplus_{\alpha\in\Lambda}
  \frac{p_\alpha}{D_\alpha}I_\alpha,
  \qquad
  c_P
  =
  \bigoplus_{\alpha\in\Lambda}
  \frac{r_\alpha}{D_\alpha}I_\alpha,
\]
and let
\[
  \chi=\Tr(c_\rho c_P),
  \qquad
  \delta=\begin{cases}1,&\chi>0,\\0,&\chi=0.\end{cases}
\]
Let \(\Pi_0\) denote the Hilbert--Schmidt projection from the direct
sum of all traceless sector spaces onto \(\cV_0\).  If
$X=\sum_i a_i g_i\bar\rho_0g_i^\dagger$, then
$\Tr X=\sum_i a_i$ and its $\alpha$-block trace is
$(\Tr X)p_\alpha$.  Hence every \(X\in\cR\) satisfies
\[
  X=(\Tr X)c_\rho+X^\circ
\]
with blockwise traceless \(X^\circ\).  Likewise, if \(P\neq0\), every
\(Y\in\cE\) has the form
\[
  Y=s(Y)c_P+Y^\circ,
  \qquad
  s(Y)=\frac{\Tr Y}{\Tr P},
\]
with blockwise traceless \(Y^\circ\); set \(s(Y)=0\) when \(P=0\).
Define the behavior-relevant images
\[
  \widehat\cR
  =
  \left\{
    \delta(\Tr X)c_\rho+\Pi_0X^\circ:X\in\cR
  \right\},
\]
\[
  \widehat\cE
  =
  \left\{
    \delta s(Y)c_P+\Pi_0Y^\circ:Y\in\cE
  \right\}.
\]

\begin{theorem}[Universal sector cap and exact saturation criterion]
\label{thm:sector-cap}
For every sector-preserving automaton,
\[
  \beta(\A)
  =
  \rank
  \left(
    \langle\cdot,\cdot\rangle_{\mathrm{HS}}
    \big|_{\widehat\cR\times\widehat\cE}
  \right)
  \leq
  \delta+
  \sum_{\alpha\in\Lambda_{\mathrm{act}}}(D_\alpha^2-1).
\]
Equality holds if and only if
\[
  \widehat\cR
  =
  \delta\operatorname{span}_{\R}\{c_\rho\}
  \oplus\cV_0
\]
and
\[
  \widehat\cE
  =
  \delta\operatorname{span}_{\R}\{c_P\}
  \oplus\cV_0.
\]
\end{theorem}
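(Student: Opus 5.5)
The plan is to start from \cref{thm:exact-rank}, which identifies \(\beta(\A)\) with the rank of the Hilbert--Schmidt pairing \(T:\cR\to\cE^*\), and to show that this pairing factors through the hatted images. The decompositions recorded before the statement are the tool: every \(X\in\cR\) has the form \((\Tr X)c_\rho+X^\circ\), and every \(Y\in\cE\) has the form \(s(Y)c_P+Y^\circ\), with blockwise traceless remainders. Expanding \(\Tr(YX)\) then gives four terms. The central--central term is \((\Tr X)s(Y)\chi\). Both mixed terms vanish, because a central operator is a blockwise scalar and pairs to zero with a blockwise traceless operator. The traceless term is \(\sum_\alpha\Tr(Y^\circ_\alpha X^\circ_\alpha)\).

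The next step is to restrict the traceless term to active sectors. Since \(\cR\) is spanned by \(G\)-conjugates of \(\bar\rho_0\) and \(G\) acts blockwise, \(X^\circ_\alpha\) is a combination of conjugates of \(\rho^\circ_\alpha\). So \(X^\circ_\alpha=0\) whenever \(\rho^\circ_\alpha=0\), and likewise \(Y^\circ_\alpha=0\) whenever \(P^\circ_\alpha=0\). Hence only sectors in \(\Lambda_{\mathrm{act}}\) contribute, and \(\Tr(Y^\circ X^\circ)=\langle\Pi_0Y^\circ,\Pi_0X^\circ\rangle\). For the central term, \(\chi(\Tr X)s(Y)\) equals \(\langle\delta s(Y)c_P,\delta(\Tr X)c_\rho\rangle\): when \(\chi>0\) we have \(\delta=1\), and when \(\chi=0\) both sides vanish. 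The central and \(\cV_0\) parts are HS-orthogonal, so the pairing satisfies \(\Tr(YX)=\langle\widehat Y,\widehat X\rangle_{\mathrm{HS}}\). Here the hat maps \(\cR\to\widehat\cR\) and \(\cE\to\widehat\cE\) are linear surjections.

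Rank is unchanged when a bilinear form is pulled back along surjections, which gives the equality. For the bound, note that \(\widehat\cR\) and \(\widehat\cE\) both lie in \(\delta\spanR\{c_\rho\}\oplus\cV_0\) and \(\delta\spanR\{c_P\}\oplus\cV_0\) respectively, each of dimension at most \(\delta+\sum_{\mathrm{act}}(D_\alpha^2-1)\). The rank is at most \(\min(\dim\widehat\cR,\dim\widehat\cE)\).

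For the equality criterion, sufficiency should be routine. If both images are the full ambient spaces, the pairing is block diagonal: the central block is \(\chi\cdot(\text{scalar})\), which is nonzero when \(\delta=1\), and the \(\cV_0\) block is the nondegenerate HS inner product on \(\cV_0\). Its rank is therefore the full dimension. Necessity is the step that needs care, and I would argue it by counting dimensions. If the rank equals the cap \(n\), then \(\dim\widehat\cR\ge n\), so \(\widehat\cR\), being a subspace of an \(n\)-dimensional space, equals it; the same argument applies to \(\widehat\cE\). The one subtlety is checking that the ambient space really has dimension \(n\). This holds because \(c_\rho\neq0\) and \(c_P\neq0\) when \(\chi>0\), and \(\delta\) removes the central line otherwise. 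I also need the decomposition of \(X\) to be well defined, which is covered by the trace bookkeeping in the paragraph preceding the statement.
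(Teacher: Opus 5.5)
Your proposal is correct and follows essentially the same route as the paper: the orthogonal central/traceless split of $\Tr(YX)$, dropping the inactive sectors, pulling the pairing back to the hatted images via \cref{thm:exact-rank}, and the nondegeneracy-plus-dimension argument for the saturation criterion. Your explicit check that $X^\circ_\alpha$ is a combination of conjugates of $\rho^\circ_\alpha$, and your dimension-count proof of necessity, spell out steps the paper states in one line each.
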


\begin{proof}
Central and traceless sector operators are orthogonal.  In every sector
outside \(\Lambda_{\mathrm{act}}\), either the state orbit has no
traceless component or the effect orbit has none.  Moreover, the
central contribution to the pairing is
\[
  (\Tr X)s(Y)\Tr(c_\rho c_P),
\]
which vanishes exactly when \(\delta=0\).  Hence
\[
  \Tr(YX)
  =
  \left\langle
    \delta(\Tr X)c_\rho+\Pi_0X^\circ,
    \delta s(Y)c_P+\Pi_0Y^\circ
  \right\rangle_{\mathrm{HS}}
\]
for all \(X\in\cR\) and \(Y\in\cE\).  The exact pairing formula in
\cref{thm:exact-rank} proves the first equality.

The ambient pairing between
\[
  \delta\operatorname{span}_{\R}\{c_\rho\}\oplus\cV_0
  \quad\text{and}\quad
  \delta\operatorname{span}_{\R}\{c_P\}\oplus\cV_0
\]
is nondegenerate: when \(\delta=1\), its central coefficient is
\(\chi>0\), and on \(\cV_0\) it is the Hilbert--Schmidt inner
product.  Both ambient spaces have dimension
\[
  \delta+
  \sum_{\alpha\in\Lambda_{\mathrm{act}}}(D_\alpha^2-1).
\]
This gives the upper bound.  A restriction of a nondegenerate pairing
has full ambient rank if and only if both of its argument spaces equal
their respective ambient spaces, which proves the saturation
criterion.
\end{proof}

For a prescribed nonempty active profile, write
\[
  \mathsf B(\mathbf D)
  =
  1+\sum_{\alpha\in\Lambda_{\mathrm{act}}}(D_\alpha^2-1).
\]
Joint activity implies \(\delta=1\), so this is the corresponding
profile cap.  If the active set is empty, the exact instance cap is
instead \(\delta\in\{0,1\}\).

\begin{corollary}\label{cor:full-sector-control}
If
\[
  \prod_{\alpha\in\Lambda_{\mathrm{act}}} SU(D_\alpha)
  \subseteq G
\]
acts independently on the active sectors, then
\[
  \beta(\A)
  =
  \delta+
  \sum_{\alpha\in\Lambda_{\mathrm{act}}}(D_\alpha^2-1).
\]
In particular, a nonempty active profile attains
\(\mathsf B(\mathbf D)\).
\end{corollary}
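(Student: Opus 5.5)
The plan is to verify the saturation criterion of \cref{thm:sector-cap} and then read off the equality. Since \(G\) is a group containing \(S:=\prod_{\alpha\in\Lambda_{\mathrm{act}}}SU(D_\alpha)\) acting independently, the orbit spans satisfy \(\cR\supseteq\spanR\{g\bar\rho_0g^\dagger:g\in S\}\) and \(\cE\supseteq\spanR\{g^\dagger Pg:g\in S\}\). The images \(\widehat\cR,\widehat\cE\) are always contained in the ambient spaces \(\delta\spanR\{c_\rho\}\oplus\cV_0\) and \(\delta\spanR\{c_P\}\oplus\cV_0\), as shown in the proof of \cref{thm:sector-cap}. It therefore suffices to prove the reverse inclusions.

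For the traceless part, fix \(\alpha\in\Lambda_{\mathrm{act}}\). Because \(\rho_\alpha^\circ\neq0\) and \(SU(D_\alpha)\) acts irreducibly on \(\Herm_0(W_\alpha)\) by conjugation (the adjoint representation of \(\mathfrak{su}(D_\alpha)\) is irreducible), the real span of \(\{h\rho_\alpha^\circ h^\dagger:h\in SU(D_\alpha)\}\) is all of \(\Herm_0(W_\alpha)\). Independence is what lets us separate the active sectors. I would take \(g\) equal to \(h\) in sector \(\alpha\) and the identity in the other active sectors, and form \(g\bar\rho_0g^\dagger-\bar\rho_0\). This difference lies in \(\cR\), has zero trace, and equals \(h\rho_\alpha^\circ h^\dagger-\rho_\alpha^\circ\) in block \(\alpha\) and zero in every other block. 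Such differences span \(\Herm_0(W_\alpha)\), because averaging \(h\rho_\alpha^\circ h^\dagger\) over Haar measure on \(SU(D_\alpha)\) gives zero. Summing over \(\alpha\), \(\Pi_0\) applied to the trace-zero part of \(\cR\) covers \(\cV_0\). The same argument with \(P_\alpha^\circ\neq0\) gives the effect side.

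For the central coordinate, if \(\delta=1\) then \(\bar\rho_0\in\cR\) has \(\Tr\bar\rho_0=1\), so its image is \(c_\rho+\Pi_0\rho^\circ\). Subtracting the \(\cV_0\) part, which was already obtained, leaves \(c_\rho\in\widehat\cR\). Likewise \(P\in\cE\) with \(s(P)=1\) yields \(c_P\in\widehat\cE\). When \(\delta=0\) there is nothing to check. Both images then equal their ambient spaces, and the equality clause of \cref{thm:sector-cap} gives \(\beta(\A)=\delta+\sum_{\alpha\in\Lambda_{\mathrm{act}}}(D_\alpha^2-1)\). For a nonempty active profile, joint activity forces \(\chi>0\) because \(\Tr c_\rho c_P=\sum p_\alpha r_\alpha/D_\alpha\) and some \(p_\alpha,r_\alpha>0\) there. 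Hence \(\delta=1\), and the value is \(\mathsf B(\mathbf D)\).

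The main obstacle is the separation step. Orbit spans of a product action alone would only give the span of simultaneous conjugates, not each block separately. The difference trick (or a Haar average over the complementary factors) is what isolates a single block, and it relies on \(G\) being a group containing the product with independent factors.
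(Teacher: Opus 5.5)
Your proof is correct and follows the paper's route. You verify the saturation criterion of \cref{thm:sector-cap}: simplicity of $\mathfrak{su}(D_\alpha)$ fills each active traceless block, independent control isolates the blocks one at a time, and the central coordinate is supplied when $\delta=1$. The paper gets $c_\rho$ and $c_P$ by Haar-averaging over the product group, while you subtract the already-obtained $\cV_0$ component from the images of $\bar\rho_0$ and $P$, which is slightly more elementary. One small imprecision: your blockwise difference need only vanish after applying $\Pi_0$, not ``in every other block,'' because the product may also act on inactive sectors; this does not affect the argument.
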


\begin{proof}
The conjugacy orbit of every nonzero traceless Hermitian operator under
\(SU(D)\) spans \(\Herm_0(D)\).  Indeed, its span is a nonzero invariant
subspace of the adjoint representation, and \(\mathfrak{su}(D)\) is
simple.  Independent sector control isolates the active blocks one at a
time.  Haar averaging over their product supplies the central coordinate
when \(\delta=1\); the average belongs to the orbit span because it lies
in its closure and finite-dimensional linear subspaces are closed.  Thus
both behavior-relevant images in
\cref{thm:sector-cap} fill their ambient spaces.  The empty-active-set
case reduces directly to the central pairing of rank \(\delta\).
\end{proof}

Full independent control is not necessary.

\begin{example}[Saturation with locked sectors]
Take two two-dimensional sectors and let one copy of \(SU(2)\) act
diagonally on both.  On the two traceless Bloch spaces, the operator
representation is the three-dimensional adjoint representation with
multiplicity two.  Choose the state components along Bloch directions
\(z\) and \(x\), and choose the accepting-projector components along
the same two directions.  The isotypic contraction is
\[
  C=zz^\dagger+xx^\dagger,
\]
which has rank two.  The isotypic formula gives
\[
  \beta=1+3\cdot2=7.
\]
This equals
\[
  1+2(2^2-1)=7,
\]
although the control group is the diagonal \(SU(2)\), not
\(SU(2)\times SU(2)\).
\end{example}

The quantity \(\mathsf B(\mathbf D)\) measures the largest blockwise behavior
space compatible with the fixed sector traces.  Saturation requires
both reachability and observability, which is why locked controls can
still attain the cap when their multiplicity channels span the relevant
operator modes.  Lower bounds for probabilistic automata, however, must
ultimately be read through the accepting measurement.  This leads to a
second, generally smaller capacity.

\section{Operational realization: finite and dynamic shattering}

A movable operator coordinate is a continuous dimension, whereas a probabilistic automaton counts states.  The operational question is when one forces the other.  Finite Jacobian shattering turns visible coordinates into strict signs, the codimension-one embedding turns linear memory into stochastic states, and recurrent Markov-limit centering can force one additional state that finite sign-rank does not expose.

\subsection{Differential sign witnesses}

For \(d\geq1\) and \(\sigma\in\{\pm1\}\), let \(H_d^\sigma\) be the
\((d+1)\times2^d\) sign matrix whose column indexed by
\(\eta\in\{\pm1\}^d\) is
\[
  (\eta_1,\ldots,\eta_d,\sigma)^{\mathsf T}.
\]

\begin{lemma}[Complete and affine complete-sign matrices]
\label{lem:affine-complete-sign}
The complete sign matrix \(C_d(j,\eta)=\eta_j\) has sign-rank \(d\),
whereas
\[
  \operatorname{signrank}(H_d^\sigma)=d+1.
\]
\end{lemma}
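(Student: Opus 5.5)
Both claims split into an upper bound, obtained by an explicit low-dimensional real realization of the sign pattern, and a lower bound, obtained by a sign-change (Descartes-type) argument on a suitably chosen subfamily of columns. Throughout we use the convention that a real matrix \(M\) realizes a sign matrix \(S\) when \(\operatorname{sign}M(i,j)=S(i,j)\) for every entry.

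For \(C_d\), the \(d\times 2^d\) matrix \(C_d\) itself has real rank at most \(d\), so \(\operatorname{signrank}(C_d)\le d\). For the lower bound, suppose \(M\) realizes \(C_d\) with rank \(r<d\). Write \(M(j,\eta)=\langle a_j,b_\eta\rangle\) with \(a_j,b_\eta\in\R^r\). Since \(d>r\), the rows \(a_1,\ldots,a_d\) admit a nontrivial linear dependence \(\sum_j\lambda_ja_j=0\). Choose a column \(\eta\) with \(\eta_j=\operatorname{sign}\lambda_j\) whenever \(\lambda_j\ne0\). Then \(\sum_j\lambda_jM(j,\eta)=0\), but every nonzero term \(\lambda_jM(j,\eta)\) is strictly positive, because \(M(j,\eta)\) has sign \(\eta_j=\operatorname{sign}\lambda_j\). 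At least one \(\lambda_j\) is nonzero, so the sum is strictly positive, a contradiction. Hence \(\operatorname{signrank}(C_d)=d\).

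For \(H_d^\sigma\), the upper bound \(d+1\) is again realized by the matrix itself. The lower bound follows from the same dependence argument with one extra row. Suppose \(M\) realizes \(H_d^\sigma\) with rank \(r\le d\), factored as \(M(i,\eta)=\langle a_i,b_\eta\rangle\) for rows \(i=1,\ldots,d+1\). The \(d+1\) vectors \(a_i\) lie in \(\R^r\) with \(r\le d\), so some nontrivial combination gives \(\sum_{i\le d+1}\lambda_ia_i=0\). The coordinates \(\eta_1,\ldots,\eta_d\) are free, so we choose \(\eta_j=\operatorname{sign}\lambda_j\) for \(j\le d\) whenever \(\lambda_j\neq0\). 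This makes every term \(\lambda_jM(j,\eta)\) with \(j\le d\) nonnegative, and strictly positive when \(\lambda_j\ne0\). The last row is the obstacle, because its sign is fixed at \(\sigma\) and cannot be adjusted. There are two cases.

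\emph{Case A: \(\lambda_{d+1}\sigma\ge0\).} Then the \((d+1)\)-st term \(\lambda_{d+1}M(d+1,\eta)\) is also nonnegative. Its sign is \(\operatorname{sign}(\lambda_{d+1}\sigma)\), so it is strictly positive when \(\lambda_{d+1}\ne0\). Every term is therefore nonnegative and at least one is strictly positive, which contradicts \(\sum_i\lambda_iM(i,\eta)=0\).

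\emph{Case B: \(\lambda_{d+1}\sigma<0\).} Replace \(\lambda\) by \(-\lambda\). This is still a nontrivial dependence, and now \(\lambda_{d+1}\sigma>0\). We then choose \(\eta_j=\operatorname{sign}\lambda_j\) for the new coefficients, and the situation reduces to Case A.

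In both cases the dependence yields a contradiction. Hence no real matrix of rank at most \(d\) realizes \(H_d^\sigma\), and \(\operatorname{signrank}(H_d^\sigma)=d+1\). The main subtlety is recognizing that the global sign freedom of a linear dependence absorbs the fixed sign \(\sigma\) of the last row.
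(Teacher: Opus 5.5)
Your proof is correct and follows the paper's argument. A nonzero left-kernel vector of a putative low-rank realization is contradicted by choosing the column whose free signs match its coordinates, and the fixed anchor sign \(\sigma\) is absorbed by an overall sign flip; your negation of \(\lambda\) is the paper's ``choose the opposite signs'' case, and the one inaccuracy is your opening description of this linear-dependence argument as a ``Descartes-type'' sign-change argument.
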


\begin{proof}
The displayed matrices give the upper bounds.  For \(C_d\), a
nonzero vector in the left kernel of a putative rank-below-\(d\)
sign realization is contradicted by choosing the column whose signs
agree with its nonzero coordinates.

For \(H_d^\sigma\), suppose a sign-equivalent real matrix \(G\) had
rank at most \(d\), and choose a nonzero vector \(c=(u,c_*)\) in its
left kernel.  If \(c_*\sigma>0\), choose
\(\eta_i=\operatorname{sign}(u_i)\) on nonzero coordinates; every
nonzero term in \(c^{\mathsf T}G_{\cdot,\eta}\) is positive.  If
\(c_*\sigma<0\), choose the opposite signs and every nonzero term is
negative.  When \(c_*=0\), use the first choice.  Each case
contradicts \(c^{\mathsf T}G=0\).
\end{proof}

\begin{lemma}[PFA cutpoint rank]\label{lem:pfa-cutpoint-rank}
Let $F$ be a finite prefix--suffix matrix of an $s$-state PFA at
cutpoint $\tau$:
\[
  F(x,y)=f(xy)-\tau.
\]
Then $\rank F\leq s$.  Consequently, every finite strict-cutpoint sign
matrix realized by the PFA has sign-rank at most $s$.
\end{lemma}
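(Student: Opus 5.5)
The plan is to factor the centered prefix--suffix matrix through the \(s\)-dimensional state space of the PFA. Let the PFA have initial distribution \(\pi\) (a row vector in \(\R^s\)), stochastic matrices \(\{P_a\}_{a\in\Sigma}\), and accepting indicator column \(\eta\in\{0,1\}^s\), so that \(f(w)=\pi P_w\eta\) with \(P_{xy}=P_xP_y\) in the row-vector convention. For any prefix \(x\) and suffix \(y\), I will exploit stochasticity, which gives \(P_y\mathbf 1=\mathbf 1\), and the normalization \(\pi P_x\mathbf 1=1\). Together these let me write
\[
  f(xy)-\tau=\pi P_xP_y\eta-\tau\,\pi P_xP_y\mathbf 1
  =(\pi P_x)\,P_y(\eta-\tau\mathbf 1).
\]
The cutpoint is absorbed into the effect vector without adding a coordinate.

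This identity expresses every entry of \(F\) as a product \(a_x b_y\), where \(a_x=\pi P_x\in\R^{1\times s}\) and \(b_y=P_y(\eta-\tau\mathbf 1)\in\R^{s\times1}\). For any finite row set \(X\) and column set \(Y\), stacking gives \(F=AB\) with \(A\in\R^{|X|\times s}\) and \(B\in\R^{s\times|Y|}\), so \(\rank F\le s\).

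The sign-rank consequence then follows directly from the definitions. Suppose a finite sign matrix \(S\) is realized, meaning there are prefixes \(x_i\) and suffixes \(y_j\) with \(S_{ij}=\operatorname{sign}(f(x_iy_j)-\tau)\) and all these entries nonzero, as strict-cutpoint realization with \(\pm1\) patterns requires. Then \(F\) itself is a real matrix sign-equivalent to \(S\), and it has rank at most \(s\). Hence \(\operatorname{signrank}(S)\le s\).

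The only step needing care is the centering, since it is what avoids the naive bound \(s+1\). The point is that the constant \(\tau\) lies in the row span of the reachable distributions. This holds because every reachable row vector has total mass one, which is exactly the normalized left constant mode that already appeared in \cref{thm:codimension-one-pfa}. Apart from that, the argument needs only the empty-word conventions (\(P_\varepsilon=I\)) and the observation that realizing a sign matrix forbids zero entries.
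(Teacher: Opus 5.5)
Your factorization $f(xy)-\tau=(\pi P_x)\,P_y(\eta-\tau\mathbf 1)$ is the same one the paper uses, and the bound $\rank F\le s$ is correct. The gap is in the sign-rank consequence. You assume that a realized strict-cutpoint sign matrix has every centered entry $f(x_iy_j)-\tau$ nonzero, and you say strict-cutpoint realization requires this. It does not. The language is $\{w:f(w)>\tau\}$, so a word with $f(w)=\tau$ is rejected and gets sign $-1$ in the decision matrix, while the corresponding entry of $F$ is $0$. In that case $F$ is not sign-equivalent to the decision matrix, and your argument gives no bound.

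This is exactly the case that matters downstream. In \cref{thm:tangent-witness}, \cref{thm:curved-witness}, \cref{thm:fixed-readout-component-pfa} and \cref{thm:mm-pfa}, the lemma is applied to an arbitrary $s$-state PFA that only recognizes the same language as the quantum automaton. The quantum values have strict margins, but the simulating PFA may place rejected words exactly on its own cutpoint.

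The missing step is short, because your factorization holds for every real cutpoint, not only $\tau$. On the finite matrix, choose $\gamma>0$ smaller than every positive value $f(x_iy_j)-\tau$ (any $\gamma>0$ if there are none), and set $\tau'=\tau+\gamma$. Then $f(x_iy_j)-\tau'$ is strictly positive on accepted entries and strictly negative on all rejected ones, ties included. The matrix $(\pi P_{x_i})\,P_{y_j}(\eta-\tau'\mathbf 1)$ is therefore sign-equivalent to the decision matrix and factors through $\R^s$. This is precisely the paper's shifted-cutpoint argument.

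A minor terminological point: in the conventions of \cref{thm:codimension-one-pfa}, the column $\mathbf 1$ with $P_a\mathbf 1=\mathbf 1$ and $\pi\mathbf 1=1$ is a normalized right constant mode, not a left one.
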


\begin{proof}
Let $\pi$ be the initial row distribution, let $P_w$ be the row-stochastic
transition matrix of $w$, and let $e$ be the accepting column.  Since
$\pi P_x\mathbf 1=1$,
\[
  f(xy)-\tau
  =
  (\pi P_x)\bigl(P_ye-\tau\mathbf 1\bigr).
\]
This factors every finite centered prefix--suffix matrix through
$\mathbb R^s$.  If a rejecting entry equals the cutpoint, increase the
cutpoint on the finite matrix by less than its smallest positive
margin; all accepting signs remain positive and every other sign
becomes negative.  The same factorization applies at the shifted
cutpoint, proving the sign-rank claim.
\end{proof}

\begin{lemma}[Dynamic affine lift]
\label{lem:dynamic-affine-lift}
Let a strict-cutpoint language be witnessed by prefixes
\(x_1,\ldots,x_d,x_*\), suffixes
\(y_\eta\) indexed by \(\eta\in\{\pm1\}^d\), a carrier prefix
\(z\), and a letter \(a\).  Suppose
\begin{align*}
 f(x_i y_\eta)&>\tau &&\text{if }\eta_i=+1,\\
 f(x_i y_\eta)&\leq\tau &&\text{if }\eta_i=-1,\\
 f(x_*y_\eta)&>\tau &&\text{for every }\eta,
\end{align*}
and suppose that, for every \(q\geq1\) and every \(\eta\), each of
the two sets
\[
 \{n:f(za^{qn}y_\eta)>\tau\},\qquad
 \{n:f(za^{qn}y_\eta)\leq\tau\}
\]
is infinite.  Then every real PFA recognizing the language has at
least \(d+2\) states.
\end{lemma}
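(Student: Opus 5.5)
The plan is to prove the bound in two stages: first extract $d+1$ from the finite witness, then use the carrier oscillation to show that an $s$-state PFA actually compresses the finite witness into dimension $s-1$. Fix an $s$-state real PFA with initial distribution $\pi$, row-stochastic matrices $P_w$, accepting column $e$ and cutpoint $\tau$ recognizing the language. As in the proof of \cref{lem:pfa-cutpoint-rank}, put
\[
  v_\eta=P_{y_\eta}e-\tau\mathbf 1\in\R^s .
\]
Then for every prefix $x$ we have $f(xy_\eta)-\tau=(\pi P_x)v_\eta$.

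\emph{Weak sign argument.} Let $G$ be the $(d+1)\times 2^d$ matrix with rows indexed by $x_1,\ldots,x_d,x_*$ and columns by $\eta$. I claim every real matrix satisfying the sign pattern of the hypothesis has rank $d+1$, even though the rejecting entries are only $\le 0$. This reruns the argument of \cref{lem:affine-complete-sign} with weak inequalities, so no cutpoint shift is needed. Suppose $c=(u,c_*)\neq0$ lies in the left kernel of $G$.
\begin{itemize}
\item If $c_*>0$, take $\eta_i=\operatorname{sign}(u_i)$ on the nonzero coordinates. Every term $u_iG_{i,\eta}$ is then $\ge0$, and $c_*G_{*,\eta}>0$.
\item If $c_*<0$, take the opposite signs. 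Every term is then $\le 0$, and the last term is $<0$.
\item If $c_*=0$ and some $u_i>0$, take $\eta_i=+1$ where $u_i>0$ and $\eta_i=-1$ where $u_i<0$. This yields a strictly positive term and no negative ones.
\item If $c_*=0$ and all $u_i\le0$, take $\eta_i=+1$ where $u_i<0$. This yields a strictly negative term and no positive ones.
\end{itemize}
Each case contradicts $c^{\mathsf T}G=0$, so $\rank G=d+1$. It also follows that any factorization of $G$ through a space of dimension $r$ forces $r\ge d+1$.

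\emph{Markov limit.} For the single stochastic matrix $P_a$ there is a period $q\ge1$ such that $P_a^{qn}$ converges as $n\to\infty$. This is standard: it follows from the Jordan form, since the peripheral eigenvalues of a stochastic matrix are roots of unity with semisimple Jordan blocks. Hence
\[
  \mu=\lim_{n\to\infty}\pi P_zP_a^{qn}
\]
exists and is a probability vector, in particular $\mu\ne0$. Consequently $f(za^{qn}y_\eta)-\tau\to\mu v_\eta$. If $\mu v_\eta\ne0$ for some $\eta$, the sign of $f(za^{qn}y_\eta)-\tau$ is eventually constant. That contradicts the hypothesis that both the accepting and the rejecting index sets are infinite. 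Therefore $\mu v_\eta=0$ for all $\eta$.

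\emph{Conclusion.} All $v_\eta$ lie in the hyperplane $\mu^\perp$, which has dimension $s-1$. The matrix $G$ therefore factors as (rows $\pi P_{x}$ restricted to $\mu^\perp$ via a fixed basis) times (coordinates of $v_\eta$ in that basis). This gives $d+1=\rank G\le s-1$, i.e.\ $s\ge d+2$.

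I expect the main obstacle to be the interplay between strictness and centering. The usual cutpoint-shift trick of \cref{lem:pfa-cutpoint-rank} would replace $v_\eta$ by $v_\eta-\varepsilon\mathbf 1$ and destroy $\mu v_\eta=0$, because $\mu\mathbf 1=1$. That is why I would insist on the weak-inequality rank argument above, which never shifts the cutpoint. A secondary point is to justify the convergent subsequence of powers with a uniform period $q$. This is exactly where the hypothesis ``for every $q\ge1$'' is used: $q$ depends on the unknown PFA.
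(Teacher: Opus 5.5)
Your proof is correct and follows the paper's argument: you take a Markov limit along a period-divisible subsequence, use two-sided recurrence to force that limit to annihilate every centered suffix vector, and run the same weak-inequality sign case analysis to obtain rank \(d+1\). The only difference is a dual packaging of the rank bound. You place the suffix columns \(v_\eta\) in the \((s-1)\)-dimensional space \(\ker\mu\), whereas the paper centers the prefix rows at \(\omega\) and places them in the zero-sum hyperplane; as a result your version needs only \(\mu\neq0\), not \(\mu\mathbf 1=1\).
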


\begin{proof}
Let an \(s\)-state PFA recognize the language, let \(S\) be its
transition matrix for \(a\), and put
\(c_\eta=P_{y_\eta}e-\tau\mathbf1\).  There is an integer \(q\geq1\),
divisible by all periods of the recurrent classes of \(S\), such that
\[
 \pi P_zS^{qn}\longrightarrow\omega
\]
for a probability row \(\omega\).  The two-sided recurrence assumption
and convergence force \(\omega c_\eta=0\) for every \(\eta\).

Set \(r_i=\pi P_{x_i}-\omega\) and
\(r_*=\pi P_{x_*}-\omega\).  These \(d+1\) rows lie in the zero-sum
hyperplane of \(\mathbb R^s\).  They are linearly independent.  Indeed,
if \(\sum_i u_ir_i+u_*r_*=0\) and \(u_*\ne0\), choose
\(\eta_i=\operatorname{sign}(u_iu_*)\) whenever \(u_i\ne0\).
After pairing with \(c_\eta\), every term has the weak sign of
\(u_*\), while the anchor term is strict, a contradiction.  If
\(u_*=0\), use \(\eta_i=\operatorname{sign}(u_i)\) when some
\(u_i>0\), and the opposite choice otherwise.  Again all terms have
one weak sign and at least one is strict.  Thus
\(d+1\leq s-1\), proving \(s\geq d+2\).
\end{proof}

The extra state in this lemma is not finite sign-rank: it is created by
centering the prefix distributions at a Markov limit.  The one-sided
positive anchor is essential; a negative anchor could lie exactly at
the cutpoint after centering.

\begin{theorem}[Tangent witness criterion]\label{thm:tangent-witness}
Suppose there are \(g_1,\ldots,g_m\in G\) and
\(X_1,\ldots,X_m\in\mathfrak g\) such that, with
\[
  \rho_j=g_j\bar\rho_0g_j^\dagger,
\]
one has
\[
  \Tr(P\rho_j)=\tau
\]
and the Jacobian
\[
  J_{j\ell}
  =
  \Tr\bigl(\rho_j[P,X_\ell]\bigr)
\]
is nonsingular.  Then finite prefix and suffix sets realize an affine
complete-sign matrix \(H_m^\sigma\) for some
\(\sigma\in\{\pm1\}\).
Every real PFA whose strict-cutpoint decisions agree with those of \(\A\) on these concatenations has
at least \(m+1\) states.
\end{theorem}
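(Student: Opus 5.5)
Perturb each point \(\rho_j\) of the cutpoint level set along the tangent directions \(X_\ell\). For a small \(t>0\) and \(\eta\in\{\pm1\}^m\), I will try to find group elements \(h_\eta\in G\) close to the identity such that
\[
  \Tr\bigl(h_\eta^\dagger P h_\eta\,\rho_j\bigr)-\tau
  \approx t\sum_\ell c_\ell(\eta)J_{j\ell}.
\]
Since \(J\) is nonsingular, I choose the coefficient vector \(c(\eta)=J^{-1}\eta\). This makes the first-order sign of row \(j\) equal to \(\eta_j\).

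Concretely, I set \(h_\eta=\exp\bigl(t\sum_\ell c_\ell(\eta)X_\ell\bigr)\). Expanding \(h^\dagger P h=P+t[P,\sum c_\ell X_\ell]+O(t^2)\), up to the sign convention of \(X\mapsto -X\), gives
\[
  \Tr\bigl(\rho_j\,h_\eta^\dagger Ph_\eta\bigr)=\tau+t\,(Jc(\eta))_j+O(t^2)=\tau+t\eta_j+O(t^2),
\]
where the zeroth-order term uses \(\Tr(P\rho_j)=\tau\). There are finitely many \(\eta\), so a single small \(t\) makes every sign strict and correct. The \(m\times 2^m\) matrix is then sign-equivalent to the complete matrix \(C_m\).

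To obtain the affine row \(\sigma\), I add one further prefix state \(\rho_*\). Any \(g\in G\) with \(\Tr(P g\bar\rho_0 g^\dagger)\neq\tau\) works, and its value at the perturbed effects keeps the same strict sign for small \(t\). Such a \(g\) exists because otherwise all \(J_{j\ell}\) would vanish: \(t\mapsto\Tr(P e^{tX}\rho_j e^{-tX})\) would be constant along the one-parameter subgroups, which lie in \(G\) because \(G\) is a compact Lie group by \cref{lem:compact-closure}. The sign of \(\rho_*\) fixes \(\sigma\).

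Next I must realize everything by words. Prefixes \(x_j\) and suffixes \(y_\eta\) must have \(U_{x_j}\) near \(g_j\) and \(U_{y_\eta}\) near \(h_\eta\), which is possible because \(\Gamma\) is dense in \(G\). Here I rely on \(f(xy)=\Tr\bigl(U_y^\dagger PU_y\,U_x\bar\rho_0U_x^\dagger\bigr)\) from the proof of \cref{thm:exact-rank}, and on the fact that \(h_\eta\) acts as the suffix, conjugating \(P\). All finitely many entries have strict margins of order \(t\), so by continuity approximations within \(o(t)\) preserve every sign.

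The result is a finite prefix--suffix sign pattern equal to \(H_m^\sigma\), with rejecting entries strictly below \(\tau\). By \cref{lem:affine-complete-sign} its sign-rank is \(m+1\). \Cref{lem:pfa-cutpoint-rank} then forces any agreeing real PFA to have at least \(m+1\) states.

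The step I expect to be the main obstacle is the bookkeeping in the approximation: both the perturbation parameter \(t\) and the word approximations must be chosen uniformly over all \(2^m\) suffixes and all \(m+1\) prefixes. To handle this, I will first fix \(t\) using the second-order remainder bound, which is uniform because \(\|c(\eta)\|\le\|J^{-1}\|\sqrt m\). I will then take word approximations finer than \(t/2\) in the entry error.
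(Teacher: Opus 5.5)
Your proposal is correct and follows essentially the same route as the paper: you perturb the cutpoint states along the directions \(X_\ell\), use nonsingularity of \(J\) to realize every sign pattern, obtain a one-signed anchor from a nonvanishing derivative along a one-parameter subgroup of \(G\), approximate all finitely many group elements by words, and conclude with \cref{lem:affine-complete-sign} and \cref{lem:pfa-cutpoint-rank}. The only difference is that you choose \(c(\eta)=J^{-1}\eta\) and control a uniform second-order remainder, whereas the paper invokes the inverse function theorem to hit \(\varepsilon\eta\) exactly; your version suffices because only signs are needed.
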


\begin{proof}
Define
\[
  F_j(\theta)
  =
  \Tr
  \left[
  P
  e^{\sum_\ell\theta_\ell X_\ell}
  \rho_j
  e^{-\sum_\ell\theta_\ell X_\ell}
  \right]
  -\tau.
\]
Then \(F(0)=0\) and \(DF(0)=J\).  The inverse function theorem gives
a neighborhood of the origin in the image of \(F\).  Hence for every
\(\eta\in\{\pm1\}^m\), a sufficiently small positive multiple of
\(\eta\) is realized by a group element \(h_\eta\).  The target
multiple may be chosen uniformly small, so all \(h_\eta\) may be
taken arbitrarily close to the identity.

Because \(J\) is nonsingular, at least one derivative
\(J_{j\ell}\) is nonzero.  Moving \(\rho_j\) a sufficiently small
distance along the corresponding reachable group direction gives a
reachable anchor state \(\rho_*\) with
\(\Tr(P\rho_*)-\tau\) strictly nonzero.  By taking the
\(h_\eta\)'s still closer to the identity, continuity makes the sign
of
\[
  \Tr(P h_\eta\rho_*h_\eta^\dagger)-\tau
\]
independent of \(\eta\); call it \(\sigma\).  The \(m\) cutpoint rows
together with this anchor row therefore form \(H_m^\sigma\).
Density of the word semigroup and continuity allow all finitely many
tests, preparations, and the anchor to be approximated by words
without changing any sign.

By \cref{lem:pfa-cutpoint-rank}, the finite sign matrix has sign-rank at
most $s$.  Since \cref{lem:affine-complete-sign} gives sign-rank
$m+1$, one has \(s\geq m+1\).
\end{proof}

This criterion is independent of the particular sector decomposition:
any control group that provides a full-rank cutpoint Jacobian yields an
\((m+1)\)-state obstruction.  To obtain uniform profile-level bounds, we
must realize the relevant group motions with a fixed finite alphabet.

\subsection{Two generators for independent sector control}

The first task is to generate the block dynamics themselves.  We use
frequency separation to isolate every adjacent matrix edge in every
sector, and then invoke local dense generation in the resulting compact
semisimple group.  Two infinitesimal generators suffice for arbitrary
active sector dimensions.

Assign globally distinct nonnegative integers \(m_{\alpha,j}\) to all
basis vectors and define
\[
  H_0
  =
  \bigoplus_\alpha
  \left[
  \sum_{j=1}^{D_\alpha}
  2^{m_{\alpha,j}}|j\rangle\langle j|
  -
  \frac{\sum_j2^{m_{\alpha,j}}}{D_\alpha}I_\alpha
  \right],
\]
\[
  H_1
  =
  \bigoplus_\alpha
  \sum_{j=1}^{D_\alpha-1}
  \left(
  |j\rangle\langle j+1|
  +
  |j+1\rangle\langle j|
  \right).
\]

\begin{lemma}[Binary Lie generation]\label{lem:binary-lie}
The two anti-Hermitian matrices \(iH_0,iH_1\) generate
\[
  \Lie(iH_0,iH_1)
  =
  \bigoplus_\alpha\mathfrak{su}(D_\alpha).
\]
\end{lemma}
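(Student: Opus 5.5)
The plan is to prove the two inclusions separately. The inclusion $\Lie(iH_0,iH_1)\subseteq\bigoplus_\alpha\mathfrak{su}(D_\alpha)$ is immediate. Both $H_0$ and $H_1$ are block diagonal and Hermitian. Each block of $H_0$ is made traceless by the subtracted scalar, and $H_1$ has zero diagonal. Hence $iH_0,iH_1$ lie in the Lie algebra $\bigoplus_\alpha\mathfrak{su}(D_\alpha)$, which is closed under brackets. All the work is in the reverse inclusion, which I would obtain by frequency separation under $\operatorname{ad}_{iH_0}$.

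Write $\lambda_{\alpha,j}$ for the diagonal entries of $H_0$. The scalar shift in each block cancels from differences, so $\lambda_{\alpha,j}-\lambda_{\alpha,k}=2^{m_{\alpha,j}}-2^{m_{\alpha,k}}$. The key arithmetic fact is that for $a>b$ the number $2^a-2^b=2^b(2^{a-b}-1)$ has $2$-adic valuation $b$ and odd part $2^{a-b}-1$, so it determines the ordered pair $(a,b)$. Since all exponents are globally distinct, the edge frequencies $\omega_e=|\lambda_{\alpha,j}-\lambda_{\alpha,j+1}|$, taken over all adjacent edges $e=(\alpha,j)$ in all sectors, are nonzero and pairwise distinct. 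I expect this separation step to be the main point, although the valuation argument disposes of it.

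For each edge put
\[
  X_e=|j\rangle\langle j+1|+|j+1\rangle\langle j|,
  \qquad
  Y_e=|j\rangle\langle j+1|-|j+1\rangle\langle j|
\]
inside sector $\alpha$. A direct computation gives $[iH_0,iX_e]=\pm\omega_eY_e$ and $[iH_0,Y_e]=\pm\omega_e iX_e$, with the signs arranged so that
\[
  \operatorname{ad}_{iH_0}^{2n}(iH_1)=\sum_e(-\omega_e^2)^n\,iX_e .
\]
The numbers $\omega_e^2$ are distinct, so the Vandermonde matrix $\bigl((-\omega_e^2)^n\bigr)$ with $0\le n<\#\text{edges}$ is invertible. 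Taking real linear combinations therefore isolates every $iX_e$ in the generated Lie algebra. One further bracket with $iH_0$ then yields every $Y_e$.

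It remains to fill each block. For a single edge, $[iX_e,Y_e]$ is a nonzero multiple of $i(|j\rangle\langle j|-|j+1\rangle\langle j+1|)$, which gives the traceless diagonal directions. For adjacent edges, the brackets of $iX_{(\alpha,j)},Y_{(\alpha,j)}$ with $iX_{(\alpha,j+1)},Y_{(\alpha,j+1)}$ produce the real and imaginary parts of the matrix unit $|j\rangle\langle j+2|$. Inducting along the path $1,\dots,D_\alpha$ gives all off-diagonal anti-Hermitian directions $i(E_{jk}+E_{kj})$ and $E_{jk}-E_{kj}$ within the sector. Together with the diagonal directions, these span $\mathfrak{su}(D_\alpha)$. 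Because every element constructed is supported in a single sector, the generated algebra contains each summand separately, and hence contains their direct sum. Sectors with $D_\alpha=1$ contribute nothing on either side.
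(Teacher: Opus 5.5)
Your proof is correct and follows essentially the same route as the paper. You separate the edge frequencies by the $2$-adic valuation, isolate each $iX_e$ spectrally under $\operatorname{ad}_{iH_0}^2$ (your Vandermonde inversion is the paper's Lagrange-polynomial interpolation), and obtain $Y_e$ by one further bracket. The diagonal differences come from $[iX_e,Y_e]$, and brackets of adjacent edges propagate along the path, just as in the paper; you also write out the easy inclusion, which the paper only mentions in passing.
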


\begin{proof}
See \cref{app:binary-lie-proof}.
\end{proof}

We use the standard conventions \(SU(1)=\{1\}\) and
\(\mathfrak{su}(1)=\{0\}\); one-dimensional sectors therefore require
no control and contribute no Lie-algebra directions.

A theorem of Breuillard and Gelander provides an identity
neighborhood in every connected semisimple real Lie group such that
near-identity elements generate a dense subgroup whenever their
logarithms generate the Lie algebra \cite{BreuillardGelander}.

\begin{theorem}[Binary dense sector control]\label{thm:binary-dense}
For every finite active profile \(\mathbf D\), there exist two
sector-preserving unitaries \(U_0,U_1\) such that
\[
  \overline{\langle U_0,U_1\rangle}
  =
  \prod_\alpha SU(D_\alpha).
\]
\end{theorem}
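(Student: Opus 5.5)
The plan is to combine \cref{lem:binary-lie} with the Breuillard--Gelander local dense-generation theorem. We take \(U_0=\exp(i t H_0)\) and \(U_1=\exp(i t H_1)\) for a small real parameter \(t>0\). Both generators \(iH_0,iH_1\) are blockwise traceless anti-Hermitian: \(H_0\) was centered in each sector, and \(H_1\) has zero diagonal. So \(U_0,U_1\) lie in the compact group
\[
  \mathbf G=\prod_\alpha SU(D_\alpha),
\]
which is connected and semisimple. Its Lie algebra is \(\bigoplus_\alpha\mathfrak{su}(D_\alpha)\), since each factor is simple and one-dimensional sectors contribute trivially by the stated convention.

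We next choose \(t\) so that both elements fall into the Breuillard--Gelander identity neighborhood \(\Omega\subseteq\mathbf G\). That neighborhood can be taken to lie inside a region where \(\exp\) is a diffeomorphism from a small ball in the Lie algebra, with \(\log\) its inverse. For \(t\) small enough, \(\log U_0=itH_0\) and \(\log U_1=itH_1\) on the nose. These logarithms generate the same Lie algebra as \(iH_0,iH_1\), because rescaling by \(t\neq0\) does not change the generated Lie algebra. By \cref{lem:binary-lie} that algebra is all of \(\bigoplus_\alpha\mathfrak{su}(D_\alpha)\).

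The cited theorem then gives that \(\langle U_0,U_1\rangle\) is dense in \(\mathbf G\). Since \(\mathbf G\) is closed, the closure of the generated group is exactly \(\mathbf G\). The closure of the semigroup generated by \(U_0,U_1\) coincides with this group closure by the argument of \cref{lem:compact-closure}. So the statement holds for words as well, which is the form used later.

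The main obstacle lies in the hypotheses of the Breuillard--Gelander theorem, not in the algebra. We must check that it applies to a product of simple groups, where it needs the logarithms to generate the full semisimple algebra, and not merely to project onto each simple factor. This is where the full Lie-generation conclusion of \cref{lem:binary-lie} is essential: frequency separation distinguishes the sectors, and without it diagonal locking could occur. The second technical point is the branch of \(\log\): we must ensure \(\log\exp(itH_j)=itH_j\) holds inside \(\Omega\), which we get by shrinking \(t\) below the injectivity radius and the neighborhood scale. Everything else is routine.
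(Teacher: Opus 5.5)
Your proposal is correct and follows the paper's proof essentially step for step. Both arguments place the small-parameter exponentials of $iH_0,iH_1$ in the Breuillard--Gelander neighborhood, take full Lie generation from \cref{lem:binary-lie}, and use the compact-closure argument to pass from the generated group to positive words. The paper additionally treats the all-one-dimensional case explicitly and notes that $\varepsilon$ may be chosen outside a countable exceptional set so that $U_0$ has an irrational spectral phase; this is not needed for the statement, but it is used later in \cref{thm:prepare-test}.
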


\begin{proof}
If every \(D_\alpha=1\), both sides of the asserted closure are the
trivial group.  Otherwise, discard the one-dimensional factors and
apply the following argument to the factors with \(D_\alpha\geq2\).
Take
\[
  U_0=e^{i\varepsilon H_0},
  \qquad
  U_1=e^{i\varepsilon H_1}
\]
with \(\varepsilon>0\) sufficiently small that both lie in the
Breuillard--Gelander neighborhood.  Their logarithms generate the
full direct-sum Lie algebra by \cref{lem:binary-lie}, so the generated
group is dense.  The parameter may moreover be chosen outside a
countable exceptional set.  In particular, for any prescribed
nonzero spectral gap \(\Delta\) of \(H_0\), we may and do require
\(\varepsilon\Delta/(2\pi)\notin\mathbb Q\).

The closure of the positive-word semigroup generated by
\(U_0,U_1\) is a closed subsemigroup of a compact group and hence a
group.  It therefore equals the same dense group closure even though
input words do not contain formal inverse letters.
\end{proof}

This two-letter statement is an abstract global-control result inside
the symmetry commutant.  It does not assert that the same pair is
generated by geometrically local or hardware-native gates.  Symmetry
and locality can impose additional restrictions on realizable
unitaries \cite{MarvianLocality}; under such restrictions the exact
instance rank in \cref{thm:exact-rank} and the tangent criterion in
\cref{thm:tangent-witness} remain applicable, but the saturated profile
law need not be attainable.

Dense binary control supplies the finite alphabet, but the lower bound
also requires a family of states placed exactly at the cutpoint and a
controlled perturbation that realizes every sign pattern.  The next
construction implements this prepare--test geometry with an explicit
uniform margin.

\subsection{A binary prepare--test construction}

Fix nontrivial ranks \(r_\alpha\) and
\(q_\alpha=D_\alpha-r_\alpha\).  In every sector choose
\[
  W_\alpha=A_\alpha\oplus B_\alpha
\]
with bases
\[
  \{e_{\alpha,a}\}_{a=1}^{r_\alpha},
  \qquad
  \{f_{\alpha,b}\}_{b=1}^{q_\alpha},
\]
and let \(P_\alpha\) project onto \(A_\alpha\).

Set
\[
  \kappa
  =
  \sum_\alpha2r_\alpha q_\alpha.
\]
For every accepting--rejecting pair define
\[
  |\phi^R_{\alpha ab}\rangle
  =
  \frac{|e_{\alpha,a}\rangle+|f_{\alpha,b}\rangle}{\sqrt2},
\]
\[
  |\phi^I_{\alpha ab}\rangle
  =
  \frac{|e_{\alpha,a}\rangle+i|f_{\alpha,b}\rangle}{\sqrt2}.
\]

\begin{theorem}[Binary prepare--test lower bound]\label{thm:prepare-test}
For every active profile \((\mathbf D,\mathbf r)\), there exists a
binary sector-preserving measure-once automaton such that every real PFA recognizing the same strict-cutpoint language has at least
\[
  2+\kappa(\mathbf D,\mathbf r)
  =
  2+\sum_\alpha2r_\alpha q_\alpha
\]
states.
\end{theorem}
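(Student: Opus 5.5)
We will use the dynamic affine lift, \cref{lem:dynamic-affine-lift}, with \(d=\kappa\). The finite Jacobian route of \cref{thm:tangent-witness} only gives \(\kappa+1\). The automaton uses the two generators \(U_0=e^{i\varepsilon H_0}\), \(U_1=e^{i\varepsilon H_1}\) of \cref{thm:binary-dense}. We fix the accepting projector \(P=\bigoplus_\alpha P_\alpha\) onto \(\bigoplus_\alpha A_\alpha\). We also fix a pure initial state supported in one chosen sector \(\alpha_0\), on the superposition \(\phi^R_{\alpha_0 11}\). Its acceptance value is \(1/2\), and we take the cutpoint \(\tau=1/2\).

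The anchored rows come from the \(\kappa\) balanced states \(\phi^R_{\alpha ab},\phi^I_{\alpha ab}\). Each has acceptance exactly \(1/2\) once it is placed (via a global phase-free sector transfer) at unit weight in its sector. Here one difficulty appears: sector weights are invariant under sector-preserving unitaries. So all prefixes see one fixed weight vector, which we take concentrated on \(\alpha_0\). I would instead realize the states in other sectors through the effect side. Concretely, I would choose the initial state with positive weight \(p_\alpha\) in every sector and a traceless-balanced component \(\phi^R_{\alpha11}\) in each. Prefixes are then products \(h\in\prod SU(D_\alpha)\), by density, that move sector \(\alpha\) to a prescribed \(\phi^{R/I}_{\alpha ab}\) and all other sectors to balanced reference states. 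This makes every prefix state sit exactly at \(\Tr(P\rho)=\tau\).

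Suffixes \(y_\eta\) approximate small group elements \(e^{t\sum\eta X}\). The tangent directions are chosen so that the Jacobian of the \(\kappa\) prepared rows against the \(\kappa\) generators \(X_{\alpha ab}^{R},X_{\alpha ab}^{I}\) (rotations in the \(e_a\)–\(f_b\) plane, real and imaginary) is diagonal. A direct check should show \(\Tr(\phi^R[P,X^I])=0\) while \(\Tr(\phi^R[P,X^R])\ne0\), and similarly for the other pairs, which gives this. The inverse function theorem, as in \cref{thm:tangent-witness}, then yields a uniform margin and the sign pattern \(\eta\) on the \(\kappa\) rows. The positive anchor \(x_*\) is a prefix slightly rotated toward \(A\), so that its acceptance exceeds \(\tau\) by a margin exceeding all perturbation effects. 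Density transfers everything to words without changing any sign.

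The main obstacle is the recurrence hypothesis: for each \(q\ge1\) and each \(\eta\), \(f(za^{qn}y_\eta)\) must lie both above and not above \(\tau\) infinitely often. For this I would take \(a=U_0\) and the carrier prefix \(z\) producing a state with a coherence across a gap \(\Delta\) of \(H_0\). Then \(f(zU_0^{qn}y_\eta)-\tau\) is a trigonometric function \(A_\eta\cos(qn\varepsilon\Delta+\theta_\eta)\) plus, ideally, zero constant part. The irrationality \(\varepsilon\Delta/2\pi\notin\mathbb Q\) chosen in \cref{thm:binary-dense} makes \(qn\varepsilon\Delta\) equidistributed mod \(2\pi\), so both signs recur. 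The delicate part is ensuring the constant (time-averaged) part vanishes exactly and \(A_\eta\ne0\) for all \(\eta\). I would arrange this by choosing \(z\) so that the dephased carrier state in the \(H_0\) eigenbasis is balanced, meaning it has acceptance exactly \(\tau\) after every conjugation. This requires the dephased part to be maximally mixed within each sector relative to \(P\), weighted so that \(\sum p_\alpha r_\alpha/D_\alpha=1/2\). The consistency of these weights with the earlier prefix normalization must then be rechecked. With these hypotheses verified, \cref{lem:dynamic-affine-lift} gives \(s\ge\kappa+2\).
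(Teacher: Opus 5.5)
Your architecture matches the paper's: cutpoint-balanced prepare rows, suffixes realizing every sign vector, a common positive anchor, a \(U_0\)-carrier, then \cref{lem:dynamic-affine-lift} with \(d=\kappa\). The recurrence step, however, does not go through as you set it up. Sector weights are invariant, so requiring the \(H_0\)-dephased carrier to have acceptance exactly \(\tau\) against every conjugated effect forces each dephased block to be \((p_\alpha/D_\alpha)I_\alpha\). You would then need \(\sum_\alpha p_\alpha r_\alpha/D_\alpha=\tau=\sum_\alpha p_\alpha/2\). No positive weights satisfy this when every active sector has \(r_\alpha<D_\alpha/2\) (already one sector with \(D=3\), \(r=1\)), so the construction fails for profiles the theorem covers. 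Even when the identity can be met, an exact mean is generally lost once the carrier group element is approximated by a word.

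The missing idea is that no exact zero mean is needed. It suffices that an \(\eta\)-dependent oscillation of order \(t\) dominates a uniform \(O(t^2)\) remainder. The paper aligns the accepting and rejecting bases with the \(H_0\)-eigenbasis, so \(U_0^k\) merely rephases the carrier. Take the carrier to be \((e_{\alpha_0,a_0}+f_{\alpha_0,b_0})/\sqrt2\) in one sector, where \(U_0\) has irrational relative phase \(\theta\) on that plane, and rejecting eigenvectors elsewhere. Then it sits exactly at \(\tau\) to zeroth order along the whole orbit. Its first-order response to \(W_\eta(t)\) is \(pt(\eta^R\cos k\theta\pm\eta^I\sin k\theta)\), with amplitude \(\sqrt2\,pt\). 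The blockwise double-commutator remainder is at most \(2p\kappa t^2\), which is smaller at \(t=1/(8\kappa)\). This strict inequality survives word approximation, and density of \(qn\theta\) modulo \(2\pi\) gives crossings in both directions along every arithmetic subsequence.

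The static rows are also wrong as stated. A pure state of acceptance \(1/2\) in sector \(\beta\) has accepting--rejecting coherence \((P_\beta\psi)((I-P_\beta)\psi)^\dagger\) of norm \(1/2\). Each of your balanced reference components therefore adds a nonzero first-order term to every row it appears in, and the Jacobian is not diagonal. Worse, suppose the references are the initial components \(\phi^R_{\beta11}\), as your setup suggests. Then the prefix for each of the \(s\) coordinates \((\beta,1,1,R)\) leaves the initial state unchanged, so those rows are identical and cannot carry independent signs. The implicit inverse-function suffixes would also leave the carrier amplitude \(A_\eta\) uncontrolled.

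The fix is the paper's choice. Put every nontarget component in a rejecting basis vector; its first derivative vanishes because \(P_\beta\) annihilates it. Take equal weights \(p=1/s\) and cutpoint \(\tau=p/2\), and use the explicit suffixes \(W_\eta(t)=\bigoplus_\alpha e^{tX_{\alpha,\eta}}\). These give \(F'_{j,\eta}(0)=p\eta_j\) exactly and the uniform margin \(3p/(32\kappa)\) at \(t=1/(8\kappa)\). With an all-accepting anchor preparation, they supply the one-sided \(H_\kappa^{(+1)}\) rows the lift requires.
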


\begin{proof}
Use the two dense generators from \cref{thm:binary-dense}.  Let \(s\)
be the number of active sectors and set \(p=1/s\).  Choose a pure
initial state whose sector-dephased state has weight \(p\) on one
fixed basis vector in every sector.

For a coordinate
\[
  j=(\alpha,a,b,R)
  \quad\text{or}\quad
  j=(\alpha,a,b,I),
\]
independent sector control gives a target prefix unitary that maps the
initial component in sector \(\alpha\) to
\(|\phi^R_{\alpha ab}\rangle\) or
\(|\phi^I_{\alpha ab}\rangle\), and maps every other sector component
to a rejecting basis vector.  At the identity suffix, its acceptance
probability is
\[
  \tau=\frac p2.
\]

For a sign vector
\(\eta\in\{\pm1\}^{\kappa}\), define
\[
  (Z_{\alpha,\eta})_{ab}
  =
  \eta^R_{\alpha ab}
  -
  i\eta^I_{\alpha ab}
\]
and
\[
  X_{\alpha,\eta}
  =
  \begin{pmatrix}
    0 & Z_{\alpha,\eta}\\
    -Z_{\alpha,\eta}^\dagger & 0
  \end{pmatrix}.
\]
Let
\[
  W_\eta(t)
  =
  \bigoplus_\alpha e^{tX_{\alpha,\eta}}.
\]
For the prefix corresponding to coordinate \(j\), write
\(F_{j,\eta}(t)\) for the resulting acceptance probability.  Direct
differentiation gives
\[
  F_{j,\eta}(0)=\tau,
  \qquad
  F'_{j,\eta}(0)=p\eta_j.
\]
For the target sector this follows by evaluating the corresponding
two-dimensional accepting--rejecting block.  Every nontarget sector
was prepared in a rejecting basis vector, so its first derivative
vanishes because $P_\alpha$ annihilates that vector.

Moreover,
\[
  \|Z_{\alpha,\eta}\|^2
  \leq
  \|Z_{\alpha,\eta}\|_F^2
  =
  2r_\alpha q_\alpha.
\]
The double-commutator bound
\[
  \|[X,[X,P]]\|
  \leq4\|X\|^2
\]
must be applied blockwise.  If $\rho_{j,\alpha}(t)$ is the evolved
prefix block in sector $\alpha$, then
$\|\rho_{j,\alpha}(t)\|_1=p$.  Hence
\begin{align*}
  |F''_{j,\eta}(t)|
  &\leq
  \sum_\alpha
  \|\rho_{j,\alpha}(t)\|_1
  \|[X_{\alpha,\eta},[X_{\alpha,\eta},P_\alpha]]\|\\
  &\leq
  4p\sum_\alpha\|X_{\alpha,\eta}\|^2
  \leq4p\sum_\alpha2r_\alpha q_\alpha
  =4p\kappa.
\end{align*}
Choose
\[
  t=\frac{1}{8\kappa}.
\]
Taylor's theorem yields
\[
  \eta_j\bigl(F_{j,\eta}(t)-\tau\bigr)
  \geq
  \frac{p}{8\kappa}
  -
  \frac{p}{32\kappa}
  =
  \frac{3p}{32\kappa}.
\]
In particular, all signs are stable with the conservative margin
\[
  \gamma=\frac{p}{16\kappa}.
\]

Add one anchor preparation that maps the initial component in every
active sector to an accepting basis vector.  For a unit vector
\(e\in\operatorname{ran}P_\alpha\), the elementary estimate
\[
  \|(I-P_\alpha)e^{tX_{\alpha,\eta}}e\|^2
  \leq
  2t^2\|X_{\alpha,\eta}\|^2
\]
holds here because \(t\|X_{\alpha,\eta}\|\leq1/8\).  Consequently the
anchor acceptance under any target suffix is at least
\[
  1-2pt^2\sum_\alpha\|X_{\alpha,\eta}\|^2
  \geq
  1-2p\kappa t^2
  =
  1-\frac{p}{32\kappa}
  >\tau.
\]
Thus the anchor supplies a common positive row.

It remains to retain information beyond this finite sign matrix.  Fix
one coordinate plane \((e_{\alpha_0,a_0},f_{\alpha_0,b_0})\), and use
the choice in \cref{thm:binary-dense} for which the relative phase
\(\theta/(2\pi)\) of \(U_0\) on this plane is irrational.  Add a
carrier preparation whose \(\alpha_0\)-component is
\((e_{\alpha_0,a_0}+f_{\alpha_0,b_0})/\sqrt2\), with all other active
components in rejecting eigenvectors of \(H_0\).  Along the carrier
orbit \(U_0^k\), direct first-order expansion gives
\[
 f(za^ky_\eta)-\tau
 =pt\bigl(\eta^R_{\alpha_0a_0b_0}\cos(k\theta)
          \pm\eta^I_{\alpha_0a_0b_0}\sin(k\theta)\bigr)
   +R_{\eta,k},
 \]
where the sign convention in the sine term is immaterial and the
blockwise double-commutator estimate above gives the uniform bound
\( |R_{\eta,k}|\leq2p\kappa t^2\).  More explicitly, the leading
sinusoid has amplitude \(p\sqrt2,t\).  Our choice
\(t=1/(8\kappa)\) therefore makes its amplitude strictly larger than
the remainder.

There are finitely many target prefixes, the anchor, the carrier, and
suffixes.  Dense generation allows each to be approximated by a binary
word closely enough to preserve both the static margin and the strict
amplitude inequality.  Irrationality of \(\theta/(2\pi)\) implies that
for every \(q\geq1\) the subsequence \(k=qn\) is dense on the phase
circle.  Hence every approximated suffix is crossed infinitely often
on both sides of the cutpoint along that subsequence.  The static rows
give the one-sided \(H_\kappa^{(+1)}\) hypotheses, and
\cref{lem:dynamic-affine-lift} now gives the lower bound
\(\kappa+2\).
\end{proof}

If there are $s$ active sectors, assign weight $1/(2s-1)$ to each and
weight $(s-1)/(2s-1)$ to a one-dimensional always-accepting spectator.
The common cutpoint is then $1/2$, and the active-sector construction is
unchanged apart from this uniform rescaling.

The dynamic lift already supplies the extra state for every nontrivial
rank profile.  For rank-one readouts, curvature supplies an independent
purely finite certificate: it enlarges the sign matrix itself, without
using a Markov limit or an infinite carrier orbit.

\begin{theorem}[Curved rank-one readout witness]
\label{thm:curved-witness}
Suppose every active sector has $D_\alpha\geq2$ and accepting rank
$r_\alpha=1$.  There exists a binary sector-preserving measure-once
automaton with this profile such that every real PFA recognizing the
same strict-cutpoint language has at least
\[
  \kappa(\mathbf D,\mathbf 1)+2
  =
  2+\sum_\alpha2(D_\alpha-1)
\]
states.
\end{theorem}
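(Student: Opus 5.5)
The idea is to build a purely finite sign matrix of sign-rank \(\kappa+2\) and then invoke \cref{lem:pfa-cutpoint-rank}. With rank-one readouts we have \(\kappa=\sum_\alpha 2(D_\alpha-1)\), and the affine matrix \(H_\kappa^\sigma\) from \cref{thm:prepare-test} already costs \(\kappa+1\) through \cref{lem:affine-complete-sign}. The extra row and column will come from the curvature of the rank-one accepting effect. For a rank-one projector \(P_\alpha=|e_\alpha\rangle\langle e_\alpha|\) and a prefix state that sits on the accepting vector, the first-order response to every tangent \(X_{\alpha,\eta}\) vanishes. The second-order response is \(-t^2\|Z_{\alpha,\eta}e\|^2\cdot p\), which is strictly negative and of size proportional to \(\sum_b|\eta_{\alpha b}|^2\), so it is uniform in \(\eta\). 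Such a row therefore sees a nonconstant quadratic function of the perturbation parameter, while the cutpoint rows see linear functions. I would use this as a second, independently tunable anchor with the opposite sign. Concretely, I aim to realize the sign matrix whose columns are \((\eta,+1,-1)\) together with extra columns in which the two anchors' signs are decoupled.

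The steps are as follows. First, take the dense binary pair from \cref{thm:binary-dense}, the pure initial state with weight \(p=1/s\) on one basis vector per active sector, and the \(\kappa\) cutpoint preparations \(\phi^R,\phi^I\) exactly as in \cref{thm:prepare-test}. Second, introduce two scales of suffix perturbations \(W_\eta(t)\) with \(t\in\{t_1,t_2\}\). Third, add a \emph{curved anchor} prefix that places every active component on the accepting vector scaled so that its acceptance is \(\tau+c\). Its value under \(W_\eta(t)\) is \(\tau+c-p\,t^2\sum_\alpha\|Z_{\alpha,\eta}e_\alpha\|^2+O(t^3)\), and here \(\|Z_{\alpha,\eta}e_\alpha\|^2=2(D_\alpha-1)\) exactly for sign vectors. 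Choosing \(c\) between the two values \(2p\kappa t_1^2\) and \(2p\kappa t_2^2\) makes this row positive on the small-\(t\) columns and negative on the large-\(t\) columns. The cutpoint rows keep the sign pattern \(\eta\) at both scales, and the flat positive anchor from \cref{thm:prepare-test} is positive on all columns. Fourth, compute the sign-rank of the resulting \((\kappa+2)\times 2^{\kappa+1}\) matrix, whose columns are \((\eta,+1,\epsilon)\) with \(\epsilon\in\{\pm1\}\) arbitrary. It is a complete-sign matrix in \(\kappa+1\) coordinates with one constant row appended. By \cref{lem:affine-complete-sign} applied with \(d=\kappa+1\), it has sign-rank \(\kappa+2\). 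Finally, use density and strict margins to replace all group elements by binary words, and conclude via \cref{lem:pfa-cutpoint-rank}.

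I expect the main obstacle to be the margin bookkeeping at two scales. The cutpoint rows must stay strictly signed at \(t_2\), which requires the Taylor remainder \(4p\kappa t_2^2\) to stay below the linear term \(p t_2\). The curved anchor, in turn, needs a window of width of order \(p\kappa(t_2^2-t_1^2)\) that dominates the cubic remainders. I would first try \(t_2=1/(8\kappa)\) and \(t_1=t_2/2\). For the cubic bound I would use the blockwise estimate \(\|e^{tX}-I-tX-\tfrac{t^2}{2}X^2\|\le \tfrac{t^3}{6}\|X\|^3\). The rank-one hypothesis is essential at exactly one point: it makes \(\|Z_{\alpha,\eta}e_\alpha\|^2\) independent of \(\eta\), so the curvature gives a clean constant threshold. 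With \(r_\alpha\geq2\), \(\eta\)-dependent cross terms would spoil it. I would also need to check that the anchor value \(\tau+c\) is reachable with pure physical inputs. One way to arrange this is to split weight between the accepting vector and a rejecting vector in a nonactive coordinate within the same sector.
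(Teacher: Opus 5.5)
Your endgame is sound: the $(\kappa+2)\times2^{\kappa+1}$ matrix with columns $(\eta,+1,\epsilon)$ is $H_{\kappa+1}^{(+1)}$ and has sign-rank $\kappa+2$. The gap is that the curved anchor meant to supply the $\epsilon$ row cannot be built.

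Sector-preserving unitaries fix every sector weight. A prefix putting each active component on $e_\alpha$ therefore has acceptance exactly $\sum_\alpha p\cos^2(\omega_\alpha t)$ with $\omega_\alpha^2=2(D_\alpha-1)$. Its baseline is $1$, while the prepare--test cutpoint is $\tau=p/2$, so nothing can be ``scaled''.

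The obstruction is general. Write a sector component as $v=v_Ae_\alpha+v_B$ with $v_B\in B_\alpha$. Its first-order response to $W_\eta(t)$ is
\[
  2p\operatorname{Re}\bigl(\overline{v_A}\,Z_{\alpha,\eta}v_B\bigr)
  =2p\sum_b\bigl(\eta^R_{\alpha1b}\operatorname{Re}(\overline{v_A}v_{B,b})
  +\eta^I_{\alpha1b}\operatorname{Im}(\overline{v_A}v_{B,b})\bigr),
\]
which vanishes for all sign vectors only if $\overline{v_A}v_B=0$. So any anchor whose response is purely curvature has each component on $e_\alpha$ or inside $B_\alpha$. Its baseline is then a sum of whole sector weights, $kp$, never within $O(p\kappa t^2)$ of $p/2$. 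The same half-weight mismatch persists with the always-accepting spectator.

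Your remedy, splitting weight between $e_\alpha$ and a ``nonactive'' rejecting coordinate, is unavailable for rank-one readouts. $Z_{\alpha,\eta}$ has the nonzero entry $\eta^R-i\eta^I$ on every $f_{\alpha,b}$, so the split reinstates an $\eta$-dependent term of order $t$ that swamps the $O(t^2)$ window. Freezing a rejecting coordinate to avoid this deletes two of the $\kappa$ sign coordinates, and the bound drops to $\kappa$.

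Already for one active sector with $D=2$, the cutpoint rows sit at $p/2$ while every admissible curved anchor sits at $0$ or $p$. With the prepare--test weights, matching baselines would force the cutpoint preparations to within $O(\kappa t^2)$ of a pole. That shrinks their first-order signal to the same $t^2$ scale as the curvature terms, so this is not a bookkeeping issue.

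The missing idea is to carry the extra coordinate on the effect side, where the baseline equals the cutpoint by construction. The paper starts from $\rho_{\mathrm{rad}}=\bigoplus_\alpha p_\alpha|\psi_\alpha\rangle\langle\psi_\alpha|$ with cutpoint $\tau=s\in(1/2,1)$. It tests with blockwise rank-one effects $E_\eta=\bigoplus_\alpha|u_{\alpha,\eta}\rangle\langle u_{\alpha,\eta}|$, which lie in the suffix orbit closure exactly because $r_\alpha=1$. Here $|\langle u_{\alpha,\eta},\psi_\alpha\rangle|^2=s+\eta_0\varepsilon^3$, and the transverse part of $u_{\alpha,\eta}$ points along $\sum_i\eta_{\alpha,i}\chi_{\alpha,i}$ for a real-orthonormal basis $\{\chi_{\alpha,i}\}$ of $\psi_\alpha^\perp$.

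\begin{itemize}
\item $\rho_{\mathrm{rad}}$ reads exactly $\eta_0\varepsilon^3$.
\item The states rotated by angle $\varepsilon$ from $\psi_\alpha$ toward $\chi_{\alpha,i}$ read $\eta_{\alpha,i}$, because their first-order term of size $\gtrsim\varepsilon$ dominates the common $\varepsilon^3$ shift and the $O(\varepsilon^2)$ remainder.
\item The state on $\chi_{\alpha,1}$ in every sector is accepted with probability at most $1-s+\varepsilon^3<\tau$.
\end{itemize}

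This realizes $H_{\kappa+1}^{(-1)}$ with one effect family and no second perturbation scale. Your other ingredients are fine: the two-scale margins for the cutpoint rows, density, and \cref{lem:affine-complete-sign,lem:pfa-cutpoint-rank}. Only the curved anchor has to be replaced.
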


\begin{proof}
Let $L$ be the number of active sectors, set $p_\alpha=1/L$, and fix
$s\in(1/2,1)$.  In each sector choose a unit vector $\psi_\alpha$ and
a real-orthonormal basis
\[
  \chi_{\alpha,1},\ldots,\chi_{\alpha,d_\alpha}
  \quad\text{of}\quad
  \psi_\alpha^\perp,
  \qquad
  d_\alpha=2(D_\alpha-1),
\]
where orthogonality is taken for the inner product
$\operatorname{Re}\langle\cdot,\cdot\rangle$.  Put
$\kappa=\sum_\alpha d_\alpha$ and take the initial reduced state
\[
  \rho_{\rm rad}
  =
  \bigoplus_\alpha
  p_\alpha|\psi_\alpha\rangle\langle\psi_\alpha|
\]
at cutpoint $\tau=s$.  The dense binary controls from
\cref{thm:binary-dense} approximate independent special-unitary
motions in all active sectors.

For $\varepsilon>0$, let $\rho_{\alpha,i}$ replace $\psi_\alpha$ by
\[
  \cos\varepsilon\,\psi_\alpha
  +\sin\varepsilon\,\chi_{\alpha,i}
\]
in sector $\alpha$ and retain $\psi_\mu$ in every other sector.  Let
$\rho_*$ carry $\chi_{\alpha,1}$ in every sector.  These states and
$\rho_{\rm rad}$ lie in the closure of the prefix orbit.

Fix a sign vector
$\eta=(\eta_0,(\eta_{\alpha,i}))\in\{\pm1\}^{\kappa+1}$ and set
\[
  w_{\alpha,\eta}
  =
  \frac1{\sqrt{d_\alpha}}
  \sum_{i=1}^{d_\alpha}
  \eta_{\alpha,i}\chi_{\alpha,i},
  \qquad
  s_{\alpha,\eta}=s+\eta_0\varepsilon^3.
\]
For sufficiently small $\varepsilon$, the vector
\[
  u_{\alpha,\eta}
  =
  \sqrt{s_{\alpha,\eta}}\,\psi_\alpha
  +
  \sqrt{1-s_{\alpha,\eta}}\,w_{\alpha,\eta}
\]
is unit, and the rank-one effect
$E_\eta=\bigoplus_\alpha
|u_{\alpha,\eta}\rangle\langle u_{\alpha,\eta}|$
lies in the closure of the suffix orbit.  Since
$\operatorname{Re}\langle\chi_{\alpha,i},w_{\alpha,\eta}\rangle
=\eta_{\alpha,i}/\sqrt{d_\alpha}$, one has
\[
  \Tr(E_\eta\rho_{\rm rad})-\tau
  =
  \eta_0\varepsilon^3
\]
and
\[
  \Tr(E_\eta\rho_{\alpha,i})-\tau
  =
  \eta_0\varepsilon^3+T_{\alpha,i}+R_{\alpha,i},
\]
where
\[
  T_{\alpha,i}
  =
  2p_\alpha\sin\varepsilon\cos\varepsilon
  \sqrt{s_{\alpha,\eta}(1-s_{\alpha,\eta})}
  \frac{\eta_{\alpha,i}}{\sqrt{d_\alpha}},
  \qquad
  |R_{\alpha,i}|\leq p_\alpha\sin^2\varepsilon.
\]
There is a constant $c_0>0$ such that
$|T_{\alpha,i}|\geq c_0\varepsilon$ for every $\alpha,i,\eta$ once
$\varepsilon$ is small.  Choose it still smaller so that
$\varepsilon^3+\max_\alpha p_\alpha\sin^2\varepsilon<c_0\varepsilon$.
The radial row then has sign $\eta_0$, and every coordinate row has
sign $\eta_{\alpha,i}$.

For the anchor,
\[
  \Tr(E_\eta\rho_*)
  =
  \sum_\alpha
  p_\alpha(1-s_{\alpha,\eta})
  |\langle\chi_{\alpha,1},w_{\alpha,\eta}\rangle|^2
  \leq
  1-\sum_\alpha p_\alpha s_{\alpha,\eta}.
\]
Hence
$\Tr(E_\eta\rho_*)-\tau\leq1-2s+\varepsilon^3<0$ uniformly in
$\eta$.  The $\kappa+2$ preparation rows therefore form
$H_{\kappa+1}^{(-1)}$ against the $2^{\kappa+1}$ target effects.
All margins are strict, so density replaces the targets by finite
binary words without changing a sign.  By
\cref{lem:affine-complete-sign,lem:pfa-cutpoint-rank}, every equivalent
real PFA has at least $\kappa+2$ states.
\end{proof}

\subsection{Exact alphabet-capacity phase diagram}

The binary construction proves that two symbols attain the full
noncommutative capacity.  A single symbol has a more rigid spectral
structure, and its exact loss can be computed block by block.  Define
\[
  B_q(K;\Hh)
  =
  \sup\{\beta(\A):
      \A\text{ is $K$-equivariant and }|\Sigma|\leq q\},
\]
and let
\[
  \varrho_K
  =
  \sum_\lambda(m_\lambda-1).
\]
This is the sum of the ranks of the semisimple factors of the
commutant.

\begin{lemma}[Hankel rank of exponential sequences]\label{lem:exponential-hankel}
Let $z_1,\ldots,z_s$ be distinct nonzero complex numbers and let
$c_1,\ldots,c_s$ be nonzero.  The sequence
$h_t=\sum_{r=1}^s c_r z_r^t$ has complex Hankel rank $s$.  If $h_t$ is
real for all $t$, its real Hankel rank is also $s$.
\end{lemma}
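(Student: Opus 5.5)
The plan is to factor the infinite Hankel matrix \(H=(h_{t+t'})_{t,t'\ge0}\) through the \(s\) exponential modes. Writing \(h_{t+t'}=\sum_r c_r z_r^{t}z_r^{t'}\), one has
\[
  H = V\,\operatorname{diag}(c_1,\ldots,c_s)\,V^{\mathsf T},
\]
where \(V\) is the \(\infty\times s\) Vandermonde matrix with entries \(V(t,r)=z_r^t\). This factorization immediately gives \(\rank_\C H\le s\).

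For the reverse inequality, I will restrict to the finite leading block \(H_s=(h_{t+t'})_{0\le t,t'\le s-1}\). This block equals \(V_s\,\operatorname{diag}(c)\,V_s^{\mathsf T}\), where \(V_s\) is the square Vandermonde matrix of the distinct nodes \(z_1,\ldots,z_s\). Its determinant is
\[
  \det H_s = \Bigl(\prod_r c_r\Bigr)\prod_{r<r'}(z_{r'}-z_r)^2 \neq 0,
\]
since the nodes are distinct and every coefficient \(c_r\) is nonzero. Hence \(H_s\) is invertible, so \(\rank_\C H\ge s\), and the complex Hankel rank is exactly \(s\).

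Nonvanishing of the \(z_r\) is not actually needed for this determinant, which uses only distinctness of the nodes. It does, however, ensure that each \(z_r\) is a genuine nonzero mode, and the same argument applies to any shifted window \(h_{t+t'+t_0}\), because \(c_r z_r^{t_0}\) remains nonzero. I will note this in passing, since in the alphabet-capacity application the sequences may be indexed from a shifted starting power.

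For the real statement, if every \(h_t\) is real then \(H\) is a real matrix. Exactly as in the proof of \cref{thm:isotypic}, its rank over \(\R\) equals its rank over \(\C\): the largest nonvanishing minor is a real number that is unchanged by complexification, and \(\det H_s\) is itself such a minor. Therefore the real Hankel rank is also \(s\).

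I do not expect a genuine obstacle. The only point needing care is to check that the infinite Hankel rank is bounded above by the factorization and below by a single finite minor. Both follow directly: every finite submatrix factors through \(\C^s\), and the leading \(s\times s\) minor is nonzero.
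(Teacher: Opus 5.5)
Your proposal is correct and matches the paper's proof: you factor the Hankel matrix through the Vandermonde vectors $(z_r^t)_{t\ge0}$ for the upper bound, use a nonsingular $s\times s$ minor for the lower bound, and pass from $\C$ to $\R$ because complexification preserves the rank of a real matrix. Your explicit formula $\det H_s=\bigl(\prod_r c_r\bigr)\prod_{r<r'}(z_{r'}-z_r)^2$ makes precise where the hypotheses $c_r\neq0$ and distinctness of the $z_r$ enter, which the paper leaves implicit.
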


\begin{proof}
The Hankel matrix factors through the Vandermonde vectors
$(1,z_r,z_r^2,\ldots)$.  Every finite $s\times s$ Vandermonde minor is
nonsingular, giving rank $s$ over $\C$.  Complexification preserves the
rank of a real matrix, which proves the final statement.
\end{proof}

\begin{theorem}[Exact alphabet capacity]\label{thm:alphabet-capacity}
For every finite-dimensional compact symmetry representation,
\[
  B_1(K;\Hh)
  =
  1+\sum_\lambda m_\lambda(m_\lambda-1)
  =
  1+\nu_K-\varrho_K,
\]
whereas
\[
  B_q(K;\Hh)=1+\nu_K,
  \qquad q\geq2.
\]
Hence
\[
  B_2(K;\Hh)-B_1(K;\Hh)
  =
  \varrho_K.
\]
\end{theorem}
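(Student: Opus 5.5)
The claim for \(q\geq2\) follows from results already proved. \Cref{thm:commutant-law} bounds every \(K\)-equivariant automaton by \(1+\nu_K\), whatever its alphabet. Conversely, \cref{thm:binary-dense} gives two letters whose closure contains \(\prod_\lambda SU(m_\lambda)\) acting independently. With nonzero traceless states, nontrivial projectors and a nonzero central pairing, the saturation clause of \cref{thm:commutant-law} then attains \(1+\nu_K\). Adding letters cannot lower the supremum: repeat a letter, or note that the supremum is monotone in \(q\) and capped by \(1+\nu_K\). The difference identity is then pure arithmetic:
\[
1+\nu_K-\varrho_K=1+\sum_\lambda\bigl(m_\lambda^2-1-(m_\lambda-1)\bigr)=1+\sum_\lambda m_\lambda(m_\lambda-1).
\]
So the whole content is the unary formula.

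For the unary upper bound, I would use \cref{prop:twirl-reduction} to reduce to a single unitary \(W=\bigoplus_\lambda W_\lambda\) on \(\bigoplus_\lambda\mathcal M_\lambda\). Then \(f(a^t)=\sum_\lambda\Tr(Q_\lambda W_\lambda^t\sigma_\lambda W_\lambda^{-t})\). Diagonalize each \(W_\lambda\) with eigenvalues \(e^{i\theta_{\lambda,j}}\). Each block then contributes a sum of exponentials \(e^{i(\theta_{\lambda,j}-\theta_{\lambda,k})t}\) with coefficients \((Q_\lambda)_{kj}(\sigma_\lambda)_{jk}\). The \(m_\lambda\) diagonal terms \(j=k\), together with any coincident frequencies, have frequency \(1\). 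Combining the frequency-\(1\) terms across all blocks gives a single constant. The remaining nonconstant frequencies are at most \(\sum_\lambda m_\lambda(m_\lambda-1)\) in number. \Cref{lem:exponential-hankel} then gives Hankel rank at most \(1+\sum_\lambda m_\lambda(m_\lambda-1)\), since grouping equal frequencies can only lower the count. This bound is the same whether one argues with \(\beta\) or with the Hankel rank of the unary sequence.

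For the unary lower bound, I would choose all phase differences \(\theta_{\lambda,j}-\theta_{\lambda,k}\) for \(j\neq k\) to be distinct and nonzero modulo \(2\pi\), both within and across blocks. For example, take \(\theta_{\lambda,j}=2^{n_{\lambda,j}}\varepsilon\) with distinct exponents and a generic \(\varepsilon\), so that differences are nonzero mod \(2\pi\) and pairwise distinct. I would then pick \(\sigma_\lambda\) and \(Q_\lambda\) whose off-diagonal entries are all nonzero in the eigenbasis of \(W_\lambda\). A convenient choice is \(\sigma_\lambda=|v\rangle\langle v|\) scaled by its block weight, with \(v\) the uniform superposition, together with a rank-one \(Q_\lambda=|u\rangle\langle u|\) for a generic \(u\), so that \((Q_\lambda)_{kj}(\sigma_\lambda)_{jk}=\overline{u_k}u_j\,|v|^2/m\neq0\). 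Finally, I would arrange the constant term to be nonzero, adjusting the weights if necessary. \Cref{lem:exponential-hankel} then gives exactly \(1+\sum_\lambda m_\lambda(m_\lambda-1)\). If the constant happens to vanish, perturbing \(u\) or the weights fixes it, because the constant is a nontrivial positive combination.

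The main obstacle I expect is the unary upper bound: one has to make sure that no hidden degeneracy pushes the rank above the frequency count. The frequency count is an upper bound on Hankel rank only when the sequence really is a finite exponential sum. That holds here because \(W\) is diagonalizable, being unitary, so no polynomial-times-exponential terms arise. Taking real parts needs no care either, since \cref{lem:exponential-hankel} already transfers the complex rank to the real rank.
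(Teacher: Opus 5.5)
Your proposal is correct and follows essentially the same route as the paper. The $q\geq2$ case comes from \cref{thm:binary-dense} together with \cref{thm:commutant-law}. The unary upper bound counts distinct Bohr frequencies $e^{i(\theta_{\lambda,j}-\theta_{\lambda,k})}$, with all diagonal directions collapsing into one constant mode; the paper states this as a cyclic-subspace bound for the normal operator $\Ad_U$ rather than on the scalar sequence. The unary lower bound uses globally distinct phase differences, a state and effect with all eigenbasis entries nonzero, and \cref{lem:exponential-hankel}.

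Your hedge about the constant term is unnecessary: with your choices it equals $\sum_\lambda p_\lambda/m_\lambda>0$ automatically, so no perturbation is ever needed.
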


\begin{proof}
For a unary automaton, write the reduced transition as
\[
  U=\bigoplus_\lambda U_\lambda
\]
and let $e^{i\theta_{\lambda,j}}$ be the eigenvalues of
$U_\lambda$.  The conjugation operator $\Ad_U$ is normal and acts on a
matrix unit $E_{jk}$ by the eigenvalue
\[
  e^{i(\theta_{\lambda,j}-\theta_{\lambda,k})}.
\]
A cyclic subspace of a normal operator contains at most one direction
from each distinct eigenspace.  Every diagonal matrix unit belongs to
the common eigenvalue $1$, so all block-diagonal directions together
contribute at most one constant dimension.  The off-diagonal ordered
pairs contribute at most
$\sum_\lambda m_\lambda(m_\lambda-1)$ dimensions.  This gives the unary
upper bound.

To attain it, choose the eigenphases so that all nonzero differences
$\theta_{\lambda,j}-\theta_{\lambda,k}$ are globally distinct modulo
$2\pi$ and avoid $\pi$.  Only finitely many affine hyperplanes are
excluded, so such a choice exists.  For every block choose
\[
  |\psi_\lambda\rangle
  =
  \frac{1}{\sqrt{m_\lambda}}
  \sum_{j=1}^{m_\lambda}
  e^{i\alpha_{\lambda j}}|j\rangle
\]
with arbitrary phases, assign it a positive isotypic weight, and choose
$Q_\lambda=|\phi_\lambda\rangle\langle\phi_\lambda|$ with every
coordinate of $|\phi_\lambda\rangle$ nonzero.  The state and effect
then have no zero matrix entries in the eigenbasis.  These reduced pure states arise from a global pure input by choosing product vectors in the isotypic blocks and superposing them with the prescribed positive weights.  The unary acceptance
sequence then has the form
\[
  f_\A(a^t)
  =
  c_0+
  \sum_{\lambda}\sum_{j\neq k}
  c_{\lambda jk}
  e^{it(\theta_{\lambda,j}-\theta_{\lambda,k})},
\]
with every $c_{\lambda jk}\neq0$.  By \cref{lem:exponential-hankel}, the Hankel rank attains the upper bound.

For $q\geq2$, \cref{thm:binary-dense} gives a word-closure
containing $\prod_\lambda SU(m_\lambda)$.  \Cref{thm:commutant-law} then gives rank $1+\nu_K$, and no automaton can exceed this value.
\end{proof}

A unary transition resolves generic nonzero Bohr frequencies but cannot
separate the traceless diagonal directions contained in the zero
frequency.  A second noncommuting transition releases exactly the
$\sum_\lambda(m_\lambda-1)$ Cartan directions.  The threshold from one
to two symbols is therefore an exact noncommutative alphabet transition,
not merely a limitation of a particular construction.

Central phases and central populations play different roles here.  A
central commutant unitary is scalar on each isotypic block, so its
conjugation acts trivially on invariant states and effects.  Missing
relative central phase gates can obstruct synthesis of the full
invariant unitary group, but they do not reduce the measure-once
behavior capacity above.  By contrast, the component laws for
covariant channels concern relative central populations; those are
genuine observable coordinates once dissipation is allowed to move
weight between labels.

\subsection{Tight binary profile laws}

Let
\[
  \operatorname{SC}^{(2)}_{\R}
  (\mathbf D,\mathbf r)
\]
be the supremum, over binary automata with the indicated active sector
dimensions and accepting ranks, of the minimum number of states in an
equivalent real PFA.

\begin{theorem}[Fixed-profile binary bounds]\label{thm:profile-bounds}
For every active profile,
\[
  2+\kappa(\mathbf D,\mathbf r)
  \leq
  \operatorname{SC}^{(2)}_{\R}
  (\mathbf D,\mathbf r)
  \leq
  \mathsf B(\mathbf D)+1.
\]
\end{theorem}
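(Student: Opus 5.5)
The lower bound is immediate from \cref{thm:prepare-test}. For every active profile $(\mathbf D,\mathbf r)$ that theorem gives a binary sector-preserving automaton with exactly these sector dimensions and accepting ranks. Every equivalent real PFA for it needs at least $2+\kappa(\mathbf D,\mathbf r)$ states. Since $\operatorname{SC}^{(2)}_{\R}$ is a supremum over such automata, the left inequality follows. The only check is that the constructed automaton has the stated profile, including the optional one-dimensional spectator. A one-dimensional sector has no traceless part, so it is never jointly active and does not change $\kappa$ or $\mathsf B$.

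For the upper bound, I will fix an arbitrary binary automaton $\A$ with the given profile and cutpoint $\tau$. The target is $\operatorname{sc}_{\mathrm{PFA},\R}(\A,\tau)\le\mathsf B(\mathbf D)+1$.

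\Cref{thm:sector-cap} gives $\beta(\A)\le\delta+\sum_{\alpha\in\Lambda_{\mathrm{act}}}(D_\alpha^2-1)$. Here the active set is computed for $\A$ itself, and it could be smaller than the nominal profile if some sector happens to have a scalar state or effect. In every case $\delta\le1$, and the sum runs over a subset of the profile's sectors. Hence $\beta(\A)\le\mathsf B(\mathbf D)$.

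The naive route, \cref{cor:pfa-upper} without a constant mode, only gives $\beta+2\le\mathsf B+2$. That is one state too many, so the real work is to show $\beta_\tau(\A)\le\mathsf B(\mathbf D)$. Then \cref{thm:codimension-one-pfa} yields $\mathsf B+1$ states.

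To get this, I will work with the centered pairing directly, via the decomposition used in the proof of \cref{thm:sector-cap}. Every $X\in\cR$ has $X=(\Tr X)c_\rho+X^\circ$, and $f_\A(xy)-\tau=\Tr(E_y\rho_x)-\tau\Tr\rho_x$ because $\Tr\rho_x=1$. So the centered Hankel matrix is the pairing of $\widehat\cR$-type vectors against the modified effects $Y\mapsto\Tr(Y\,\cdot)-\tau\Tr(\cdot)$. The identity is central, so subtracting $\tau I$ from the effect changes only the central coordinate. The traceless coordinates $\Pi_0X^\circ$ are unaffected.

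The centered pairing therefore factors through the space $\R\oplus\cV_0$, where the first coordinate records $\Tr X$ paired with $s(Y)\chi-\tau$. This space has dimension $1+\sum_{\alpha\in\Lambda_{\mathrm{act}}}(D_\alpha^2-1)\le\mathsf B(\mathbf D)$, which gives $\beta_\tau\le\mathsf B$. If the active set of $\A$ is empty, the centered series is constant, its rank is at most $1$, and two states suffice, well within $\mathsf B+1$.

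I expect the main obstacle to be exactly this centering. The earlier bound $\beta_\tau\le\beta+1$ loses a state when $\delta=0$, or when the central direction is not in the minimal behavior space. The point to verify carefully is that the trace functional of prefix states is constant ($=1$). The cutpoint shift therefore lives in the same single central slot that $\delta$ already budgets for, and the factorization goes through the fixed-dimensional space $\R\oplus\cV_0$ regardless of $\delta$.

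Once $\beta_\tau(\A)\le\mathsf B(\mathbf D)$ is established, I apply the zero-cutpoint part of \cref{thm:codimension-one-pfa} to a minimal real representation of the centered series, obtained from the Hankel theorem. This gives at most $\mathsf B(\mathbf D)+1$ states. Taking the supremum over all binary $\A$ with the given profile completes the upper bound.
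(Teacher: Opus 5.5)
Your proposal is correct and follows the paper's proof. The lower bound is taken directly from \cref{thm:prepare-test}, and the upper bound comes from showing $\beta_\tau(\A)\le\mathsf B(\mathbf D)$, because the trace coordinate of the profile space $\R\oplus\cV_0$ is a normalized constant mode that absorbs the cutpoint, followed by \cref{thm:codimension-one-pfa}. Your explicit factorization of the centered Hankel matrix through $\R\oplus\cV_0$ is a spelled-out version of the paper's one-line constant-mode centering; the paper also mentions \cref{thm:curved-witness} as an alternative finite certificate for rank-one profiles.
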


\begin{proof}
The lower bound is \cref{thm:prepare-test}.  For rank-one profiles,
\cref{thm:curved-witness} gives the same numerical bound through a
finite sign-rank certificate.  The upper bound follows from
the centered representation on the profile space: its trace coordinate
is the normalized constant mode, so
\(\beta_\tau(\A)\leq\mathsf B(\mathbf D)\).  Apply
\cref{thm:codimension-one-pfa}.
\end{proof}

For fixed \(D\),
\[
  \max_{1\leq r\leq D-1}2r(D-r)
  =
  \left\lfloor\frac{D^2}{2}\right\rfloor.
\]

\begin{corollary}[Tight worst-rank law]\label{cor:worst-profile}
Let
\[
  \operatorname{SC}^{(2)}_{\R}(\mathbf D)
  =
  \max_{\mathbf r}
  \operatorname{SC}^{(2)}_{\R}
  (\mathbf D,\mathbf r).
\]
Then
\[
  2+\sum_\alpha
  \left\lfloor\frac{D_\alpha^2}{2}\right\rfloor
  \leq
  \operatorname{SC}^{(2)}_{\R}(\mathbf D)
  \leq
  2+\sum_\alpha(D_\alpha^2-1).
\]
In particular,
\[
  \operatorname{SC}^{(2)}_{\R}(\mathbf D)
  =
  \Theta
  \left(
    1+\sum_\alpha(D_\alpha^2-1)
  \right)
\]
uniformly over finite profiles.
\end{corollary}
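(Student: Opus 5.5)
The statement follows by maximizing both sides of \cref{thm:profile-bounds} over the accepting-rank vector \(\mathbf r\) and then comparing the two resulting quantities. I will take the maximum over nontrivial ranks \(1\leq r_\alpha\leq D_\alpha-1\) in every active sector, since activity forces each \(P_\alpha\) to be neither \(0\) nor \(I_\alpha\).

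For the lower bound, \(\kappa(\mathbf D,\mathbf r)=\sum_\alpha 2r_\alpha q_\alpha\) separates over sectors. Each summand \(2r(D-r)\) depends only on its own \(r_\alpha\), so the maximum of the sum is the sum of the blockwise maxima. The elementary identity stated just before the corollary gives that blockwise maximum as \(\lfloor D^2/2\rfloor\), attained at \(r=\lfloor D/2\rfloor\); this rank is admissible because \(D_\alpha\geq2\) in any active sector. Taking the maximizing \(\mathbf r\) in \cref{thm:prepare-test} gives
\[
  \operatorname{SC}^{(2)}_{\R}(\mathbf D)\geq 2+\sum_\alpha\lfloor D_\alpha^2/2\rfloor .
\]

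For the upper bound, \(\mathsf B(\mathbf D)+1=2+\sum_\alpha(D_\alpha^2-1)\) does not depend on \(\mathbf r\), so the upper bound of \cref{thm:profile-bounds} holds uniformly and therefore also bounds the maximum.

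For the \(\Theta\) claim, I compare the two sides sector by sector. For \(D\geq2\) one has \(\lfloor D^2/2\rfloor\geq (D^2-1)/2\), since \(D^2/2\geq(D^2-1)/2\) and flooring loses less than \(1/2\). Writing \(S=\sum_\alpha(D_\alpha^2-1)\), this gives
\[
  2+\tfrac12 S\;\leq\; \operatorname{SC}^{(2)}_{\R}(\mathbf D)\;\leq\; 2+S .
\]
Both \(2+S/2\) and \(2+S\) lie between \(1+S\) and \(2(1+S)\) up to constant factors, with the lower one at least \((1+S)/2\). The asymptotic constants are therefore absolute and independent of the profile.

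The only point requiring care is that the active-profile hypothesis is what makes the maximizing rank \(r=\lfloor D/2\rfloor\) admissible, so that \cref{thm:prepare-test} actually applies to it. Beyond that check, the argument is pure arithmetic.
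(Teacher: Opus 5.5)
Your proposal is correct and follows the paper's proof. The lower bound takes balanced ranks $r_\alpha=\lfloor D_\alpha/2\rfloor$ in the prepare--test bound, the upper bound uses the $\mathbf r$-independent value $\mathsf B(\mathbf D)+1$, and the $\Theta$ claim rests on $\lfloor D^2/2\rfloor\geq(D^2-1)/2$. One small wording fix: for odd $D$ the floor loses exactly $1/2$, not less than $1/2$. The correct justification is that $D^2/2$ is an integer or half-integer, so flooring loses at most $1/2$, and the inequality still holds.
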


\begin{proof}
The lower bound uses balanced accepting ranks, the affine anchor, and
the dynamic lift.
Since
\[
  \left\lfloor\frac{D^2}{2}\right\rfloor
  \geq
  \frac{D^2-1}{2},
\]
the lower bound is at least
\[
  \frac{\mathsf B(\mathbf D)-1}{2}.
\]
The upper bound is linear in \(\mathsf B(\mathbf D)\).
\end{proof}

Thus the finite-alphabet problem closes at the same scale as the
sectorwise operator capacity: balanced readouts expose a constant
fraction of \(\mathsf B(\mathbf D)\), and the codimension-one stochastic
embedding leaves only one additional probabilistic state.  The pair
\((\mathsf B,\kappa)\) also quantifies how this cost changes when the
symmetry constraints themselves are relaxed.

\begin{corollary}[Binary probabilistic state law under compact symmetry]\label{cor:compact-pfa}
Let $\operatorname{SC}^{(2)}_K(\Hh)$ denote the supremum, over binary
$K$-equivariant measure-once automata on $\Hh$, of the minimum number of
states in an equivalent real probabilistic finite automaton under a
strict cutpoint.  Then
if \(\nu_K>0\),
\[
  2+\sum_\lambda
  \left\lfloor\frac{m_\lambda^2}{2}\right\rfloor
  \leq
  \operatorname{SC}^{(2)}_K(\Hh)
  \leq
  \nu_K+2.
\]
If \(\nu_K=0\), then \(\operatorname{SC}^{(2)}_K(\Hh)=1\).
In particular,
\[
  \operatorname{SC}^{(2)}_K(\Hh)
  =
  \Theta(1+\nu_K).
\]
\end{corollary}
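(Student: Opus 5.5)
The plan is to reduce to the multiplicity-space profile via \cref{prop:twirl-reduction}, then invoke the profile bounds of \cref{thm:profile-bounds} and \cref{cor:worst-profile}, with the multiplicities $m_\lambda$ in place of the sector dimensions $D_\alpha$.

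\emph{Upper bound.} Every $K$-equivariant binary automaton is behaviorally identical to a sector-preserving automaton on $\bigoplus_\lambda\mathcal M_\lambda$ with reduced state $\bigoplus_\lambda\sigma_\lambda$, unitaries $W_{a,\lambda}$ and projector $\bigoplus_\lambda Q_\lambda$. Its behavior space is spanned by the central trace coordinate together with directions in $\bigoplus_\lambda\Herm_0(\mathcal M_\lambda)$. The total dimension is therefore at most $1+\nu_K$, as in \cref{thm:commutant-law}. In the centered representation, the reduced trace functional (equivalently, the block traces, which are word-invariant) gives a normalized constant mode. Concretely, I would use the left row $q$ pairing a state with $\Tr$, rescaled so that $qv$ is normalized, or center the cutpoint through the central coordinate exactly as in the proof of \cref{thm:profile-bounds}. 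This yields $\beta_\tau(\A)\le 1+\nu_K$, and \cref{thm:codimension-one-pfa} then gives at most $\nu_K+2$ states. If the constant mode degenerates, for instance when $P=0$ or $P=I$, the language is trivial and one state suffices.

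\emph{Lower bound.} When $\nu_K>0$, apply \cref{thm:prepare-test} to the profile $\mathbf D=(m_\lambda)_{m_\lambda\ge2}$ with balanced ranks $r_\lambda=\lfloor m_\lambda/2\rfloor$, so that $2r_\lambda q_\lambda=\lfloor m_\lambda^2/2\rfloor$. The required work is lifting the reduced automaton back to a physical $K$-equivariant automaton on $\Hh$. The binary unitaries $U_a=\bigoplus_\lambda W_{a,\lambda}\otimes I_{V_\lambda}$ supplied by \cref{thm:binary-dense} commute with $\pi(K)$, and so does $P=\bigoplus_\lambda Q_\lambda\otimes I$. A pure initial vector $\bigoplus_\lambda\sqrt{p_\lambda}\,\xi_\lambda\otimes v_\lambda$ with product components twirls to $\bigoplus_\lambda p_\lambda|\xi_\lambda\rangle\langle\xi_\lambda|\otimes I/d_\lambda$, so the reduced state is exactly the pure prepare--test input. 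Blocks with $m_\lambda=1$ receive zero weight and contribute nothing. The reduced acceptance function therefore equals the one in \cref{thm:prepare-test}, so the strict-cutpoint language and its PFA lower bound transfer unchanged. The summands with $m_\lambda=1$ equal $\lfloor1/2\rfloor=0$, so summing over all $\lambda$ is harmless.

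\emph{Case $\nu_K=0$.} Every $m_\lambda=1$, and $f_\A(w)=\sum_\lambda\Tr(\sigma_\lambda)Q_\lambda$ is constant in $w$. The language is therefore $\emptyset$ or $\Sigma^*$, and it is recognized by a one-state PFA. No PFA has zero states, so the value is exactly $1$.

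\emph{The $\Theta$ statement.} Use $\lfloor m^2/2\rfloor\ge(m^2-1)/2$ as in \cref{cor:worst-profile}: the lower bound is at least $2+\nu_K/2$ and the upper bound is $\nu_K+2$.

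The main obstacle is the constant-mode centering in the upper bound. One must check that the normalization $ut=1$ or $qv=1$ holds within a $(1+\nu_K)$-dimensional representation when the central pairing $\chi$ may vanish, since then $\beta$ may be smaller and the constant coordinate must be added explicitly. I would handle this by working directly on the ambient space $\R c\oplus\bigoplus_\lambda\Herm_0(\mathcal M_\lambda)$ of dimension $1+\nu_K$ rather than a minimal representation. On that space the trace row is exactly invariant and normalized on states, so the centering of \cref{thm:codimension-one-pfa} applies without any minimality assumption.
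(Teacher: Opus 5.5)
Your proposal is correct and follows the paper's proof. The lower bound uses balanced ranks $r_\lambda=\lfloor m_\lambda/2\rfloor$ in the binary prepare--test construction, the upper bound uses the constant-normalized $(1+\nu_K)$-dimensional representation together with the codimension-one embedding, and the case $\nu_K=0$ follows because the acceptance function is constant. Your explicit lift of the reduced witness to a physical $K$-equivariant automaton, and your use of the ambient non-minimal space to carry the trace mode, spell out what the paper's short proof leaves implicit. One convention slip: in the paper's $uA_wv$ convention $u$ is the initial state, so the trace functional is a right constant mode $t$ with $ut=1$ rather than a left row $q$; this is harmless, since both kinds of mode are covered and no rescaling is needed.
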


\begin{proof}
Use balanced accepting ranks
$r_\lambda=\lfloor m_\lambda/2\rfloor$ in the binary prepare--test
construction.  Then
\[
  2+\sum_\lambda
  2r_\lambda(m_\lambda-r_\lambda)
  =
  2+\sum_\lambda\left\lfloor\frac{m_\lambda^2}{2}\right\rfloor.
\]
The upper bound follows from the constant-normalized behavior
representation of dimension at most $1+\nu_K$ and
\cref{thm:codimension-one-pfa}.  When \(\nu_K=0\), every invariant
measure-once acceptance function is constant, so one state is both
necessary and sufficient.
\end{proof}

The noncommutative commutant dimension therefore controls both the
largest exact Hankel behavior and, up to universal constants, the worst
strict-cutpoint probabilistic state cost.

\section{Dissipative release: channels and central mobility}

Reversible dynamics freeze the center; dissipation can make central populations writable.  This is the structural change that produces the general-channel dichotomy.  A mobility partition records exactly which sums of isotypic populations remain conserved, so the capacity interpolates between reversible read-only center coordinates and full mobility, where the whole invariant algebra becomes behavioral memory.

\subsection{Model and exact channel rank}

\begin{definition}\label{def:covariant-1gqfa}
A symmetry-compatible $K$-covariant measure-once one-way general quantum
finite automaton is a tuple
\[
  \A=(\Hh,\rho_0,\{\Phi_a\}_{a\in\Sigma},P,\tau),
\]
where $\rho_0$ is a density operator, $P\in\mathcal C_K$ is an
orthogonal projector, $\tau\in\R$, and every
\[
  \Phi_a:\operatorname{End}(\Hh)\longrightarrow
  \operatorname{End}(\Hh)
\]
is completely positive and trace preserving.  Each symbol channel
satisfies
\[
  \Phi_a\circ\Ad_{\pi(k)}
  =
  \Ad_{\pi(k)}\circ\Phi_a
\]
for every $a\in\Sigma$ and $k\in K$.  For
$w=a_1\cdots a_m$, set
\[
  \Phi_w=\Phi_{a_m}\circ\cdots\circ\Phi_{a_1}
\]
and
\[
  f_\A(w)=\Tr\!\left(P\Phi_w(\rho_0)\right).
\]
The recognized language is
\[
  L_{\A,\tau}
  =
  \{w\in\Sigma^*:f_\A(w)>\tau\}.
\]
\end{definition}

This is the measure-once general-channel model of
\cite{LiEtAl1gQFA}, with covariance imposed separately on every input
symbol and with a symmetry-compatible final readout.  The covariance
condition is strictly weaker than requiring every Kraus operator of
every channel to belong to $\mathcal C_K$.

Set
\[
  \widetilde\rho_0=\mathcal T_K(\rho_0).
\]
Covariance implies
\(
  \Phi_a\mathcal T_K=\mathcal T_K\Phi_a
\), and $P\in\mathcal C_K$ gives
\[
  f_\A(w)
  =
  \Tr\!\left(P\Phi_w(\widetilde\rho_0)\right).
\]
Thus the physical initial state need not itself be invariant.  Define
\[
  \cR_\Phi
  =
  \spanR\{\Phi_x(\widetilde\rho_0):x\in\Sigma^*\},
\]
\[
  \cE_\Phi
  =
  \spanR\{\Phi_y^*(P):y\in\Sigma^*\}.
\]

From this point onward, a sector-preserving channel construction means
its multiplicity-space realization after Haar reduction; the physical
map is always the corresponding \(K\)-covariant extension unless a
stronger Kraus condition is stated explicitly.

\begin{theorem}[Exact channel behavioral-memory invariant]
\label{thm:channel-exact-rank}
For every automaton in \cref{def:covariant-1gqfa},
\[
  \dim\mathscr M_\A
  =
  \beta(\A)
  =
  \rank_\R
  \left(
    \langle\cdot,\cdot\rangle_{\mathrm{HS}}
    \big|_{\cR_\Phi\times\cE_\Phi}
  \right).
\]
Moreover,
\[
  \cR_\Phi,\cE_\Phi
  \subseteq\Herm(\mathcal C_K).
\]
\end{theorem}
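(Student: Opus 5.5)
The plan follows the pattern of \cref{thm:exact-rank}, with the Haar twirl and covariance replacing the compact-closure argument.

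\textbf{Step 1: Hankel factorization.} Using $\Phi_{xy}=\Phi_y\circ\Phi_x$ and the twirl identity $f_\A(w)=\Tr(P\Phi_w(\widetilde\rho_0))$ recorded just before the statement, we get
\[
  H_\A(x,y)=\Tr\bigl(P\Phi_y(\Phi_x(\widetilde\rho_0))\bigr)=\bigl\langle\Phi_y^*(P),\Phi_x(\widetilde\rho_0)\bigr\rangle_{\mathrm{HS}}.
\]
Here $\Phi_y^*$ is the Hilbert--Schmidt adjoint, which is unital and completely positive. Since $P$ is Hermitian, $\Phi_y^*(P)$ is Hermitian, and the Hilbert--Schmidt pairing of Hermitian operators is real. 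So the pairing is real-bilinear on the real spans.

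\textbf{Step 2: rank of the pairing equals Hankel rank.} The row of $H_\A$ indexed by $x$ is the functional $T(\Phi_x(\widetilde\rho_0))$, restricted to the spanning set $\{\Phi_y^*(P)\}$ of $\cE_\Phi$. Restricting a functional on $\cE_\Phi$ to a spanning set is injective. Hence the row space of $H_\A$ is isomorphic to $T(\cR_\Phi)\subseteq\cE_\Phi^*$. Therefore $\beta(\A)$ equals the rank of the pairing on $\cR_\Phi\times\cE_\Phi$.

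Unlike \cref{thm:exact-rank}, no closure argument is needed, because the spans are taken directly over words. Finite dimensionality ensures the spans are attained by finitely many words. The identity $\dim\mathscr M_\A=\beta(\A)$ holds because $f_{\A,x}$ is exactly the row indexed by $x$.

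\textbf{Step 3: both spans lie in $\Herm(\mathcal C_K)$.} First, $\widetilde\rho_0=\mathcal T_K(\rho_0)$ commutes with every $\pi(k)$ by invariance of Haar measure, and it is Hermitian. Covariance gives
\[
  \Ad_{\pi(k)}\Phi_x(\widetilde\rho_0)=\Phi_x(\Ad_{\pi(k)}\widetilde\rho_0)=\Phi_x(\widetilde\rho_0),
\]
so each reachable state lies in $\mathcal C_K$. These states are Hermitian because CP maps preserve Hermiticity.

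For effects, take adjoints in the covariance relation. Since $\Ad_{\pi(k)}^*=\Ad_{\pi(k)^\dagger}$, each $\Phi_a^*$ commutes with every $\Ad_{\pi(k)}$. Because $P\in\mathcal C_K$, we get $\Ad_{\pi(k)}\Phi_y^*(P)=\Phi_y^*(P)$. Real spans of Hermitian commutant elements remain in $\Herm(\mathcal C_K)$.

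\textbf{Main obstacle.} The only delicate point is the adjoint covariance in Step 3. It requires checking that the HS-adjoint of $\Ad_U$ is $\Ad_{U^\dagger}$, and that $K$ is closed under inverses so the relation holds for all $k$. Both facts are immediate, so I expect the proof to be short.
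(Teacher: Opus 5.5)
Your proposal is correct and follows the paper's proof: both factor $H_\A(x,y)=\Tr\bigl(\Phi_y^*(P)\Phi_x(\widetilde\rho_0)\bigr)$, read $\beta(\A)$ off the Hilbert--Schmidt pairing on the generating families, and get the inclusions from invariance of the twirled state, covariance of each $\Phi_a$ and of its adjoint, and Hermiticity preservation. Your Step 2 and the adjoint-covariance check just spell out details that the paper compresses into one line each.
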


\begin{proof}
For all $x,y\in\Sigma^*$,
\[
  H_\A(x,y)
  =
  \Tr\!\left(P\Phi_y\Phi_x(\widetilde\rho_0)\right)
  =
  \Tr\!\left(\Phi_y^*(P)\Phi_x(\widetilde\rho_0)\right).
\]
The Hankel matrix is therefore the matrix of the Hilbert--Schmidt
pairing on the displayed generating families.  Passing to either real
span does not change its rank.

The twirled state belongs to $\mathcal C_K$, and covariance preserves
the fixed-point algebra.  The adjoint of a covariant channel is
covariant, so every $\Phi_y^*(P)$ also belongs to $\mathcal C_K$.
Hermiticity preservation proves the inclusions.
\end{proof}

Only one invariant boundary is necessary.  With $P\in\mathcal C_K$, an
arbitrary initial state may be twirled as above.  Conversely, if
$\rho_0\in\mathcal C_K$, every reachable state is invariant and an
arbitrary final effect may be replaced behaviorally by its Haar twirl.
If neither boundary is invariant, covariance alone need not confine the
behavior to $\Herm(\mathcal C_K)$.

\subsection{Mobility components and the channel commutant law}

Let $Z_\lambda$ denote the central projection onto the
$\lambda$-isotypic summand.  Choose a partition
\[
  \Pi=\{C_1,\ldots,C_c\}
\]
of the occurring isotypic labels and define
\[
  Z_C=\sum_{\lambda\in C}Z_\lambda,
  \qquad
  D_C=\dim_\C(Z_C\mathcal C_K)
  =\sum_{\lambda\in C}m_\lambda^2.
\]
It is useful to separate its noncommutative part:
\[
  \nu_C
  =\dim_\C[Z_C\mathcal C_K,Z_C\mathcal C_K]
  =\sum_{\lambda\in C}(m_\lambda^2-1).
\]
Thus
\[
  D_K:=\dim_\C\mathcal C_K=\sum_{C\in\Pi}D_C.
\]

\begin{definition}\label{def:component-conservation}
A $K$-covariant channel $\Phi$ is $\Pi$-conservative when
\[
  \Phi^*(Z_C)=Z_C
\]
for every $C\in\Pi$.  An automaton is $\Pi$-conservative when every
symbol channel is $\Pi$-conservative.
\end{definition}

This condition is equivalent to exact preservation of every component
weight:
\[
  \Tr\!\left(Z_C\Phi(\rho)\right)=\Tr(Z_C\rho)
\]
for every density operator $\rho$.  Define
\[
  M_\Pi=D_K-c+1
  =1+\nu_K+\bigl(\dim_\C Z(\mathcal C_K)-c\bigr).
\]
For $q\geq1$, let $B_q^{\mathrm{1g}}(K,\Pi;\Hh)$ be the supremum of
$\beta(\A)$ over all symmetry-compatible, $\Pi$-conservative
$K$-covariant automata on $\Hh$ with $|\Sigma|\leq q$.

\begin{theorem}[Component-conserving channel capacity]
\label{thm:component-channel-capacity}
Every symmetry-compatible, $\Pi$-conservative $K$-covariant automaton
satisfies
\[
  \beta(\A)\leq M_\Pi.
\]
For every $q\geq1$,
\[
  B_q^{\mathrm{1g}}(K,\Pi;\Hh)=M_\Pi.
\]
In particular, one input symbol already attains the full capacity.
\end{theorem}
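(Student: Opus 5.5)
The proof has two halves: an upper bound from component conservation, and a one-letter construction that saturates it.

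\emph{Upper bound.} By \cref{thm:channel-exact-rank}, $\beta(\A)$ is the rank of the Hilbert--Schmidt pairing on $\cR_\Phi\times\cE_\Phi$, and both spaces lie in $\Herm(\mathcal C_K)$, whose real dimension is $D_K$. $\Pi$-conservation gives the linear constraints
\[
  \Tr\bigl(Z_C\Phi_x(\widetilde\rho_0)\bigr)=\Tr(Z_C\widetilde\rho_0)=:w_C,
\]
fixed for every $x$ and every $C\in\Pi$. Hence every element $X=\sum_i a_i\Phi_{x_i}(\widetilde\rho_0)$ of $\cR_\Phi$ satisfies $\Tr(Z_CX)=w_C\Tr X$. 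These $c$ functionals, restricted to $\cR_\Phi$, span a space of dimension at most $1$, since each is a multiple of the trace. So $\cR_\Phi$ lies in the affine-type subspace
\[
  \{X\in\Herm(\mathcal C_K):\Tr(Z_CX)=w_C\Tr X\ \text{for all }C\},
\]
which has codimension at least $c-1$. Concretely, the $c$ constraints $\Tr(Z_CX)-w_C\Tr X=0$ sum to the identity $0=0$, because $\sum_C Z_C=I$ and $\sum_C w_C=1$; they still have rank $c-1$, since the $Z_C$ are linearly independent. Therefore $\dim\cR_\Phi\le D_K-c+1=M_\Pi$, and the rank of the pairing is at most $M_\Pi$. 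I would double-check the edge case $\Tr\widetilde\rho_0=1$; it holds because the maps are trace preserving.

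\emph{Lower bound with $q=1$.} It suffices to build a single covariant $\Pi$-conservative channel $\Phi$ together with $\widetilde\rho_0$ and $P$ that saturate the bound; monotonicity in $q$ then gives equality for all $q$.

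\begin{itemize}
\item \textbf{The channel.} Work in the reduced picture $\bigoplus_\lambda \operatorname{End}(\mathcal M_\lambda)$. Take $\Phi$ to be a mixture of a conjugation by a block unitary $U=\bigoplus_\lambda U_\lambda$ and a component-internal population-transfer map. For each component $C$, the transfer map is a covariant measure-and-prepare channel that replaces the state inside $Z_C$ by a fixed normalized invariant state $\omega_C$ carrying the same weight. Such maps are covariant, because they are built from central projections and invariant states tensored with $I_{V_\lambda}/d_\lambda$. They preserve every $\Tr(Z_C\,\cdot)$.
\item \textbf{Rank accounting.} A cleaner way to attain rank $M_\Pi$ is the paper's unary method. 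By \cref{lem:exponential-hankel}, $\beta$ equals the number of distinct eigenvalues of $\Phi$, acting on the cyclic space generated by $\widetilde\rho_0$, that are paired nontrivially with $P$. The target spectrum on $\Herm(\mathcal C_K)$ has the following pieces.
  \begin{itemize}
  \item Eigenvalue $1$ with multiplicity exactly $c$, on the conserved directions $Z_C$. This contributes one rank, since on the reachable side they collapse to the single combination fixed by the weights $w_C$.
  \item Distinct contracted eigenvalues on the $\sum_{C}(|C|-1)$ relative-population directions within each component.
  \item Distinct eigenvalues on the diagonal traceless directions inside each block.
  \item Distinct complex eigenvalues $r_{\lambda jk}e^{i(\theta_j-\theta_k)}$ on the off-diagonal matrix units.
  \end{itemize}
  Together these give $1+(D_K-c)=M_\Pi$ distinct eigenvalues. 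To realize them, I would take $\Phi=(1-p)\Ad_U+p\,\Psi$, where $\Psi$ is a covariant intra-component Markov population map chosen generically, for instance a generic stochastic mixing among the diagonal matrix units of all blocks in the same component. This is implementable covariantly as $\rho\mapsto\sum \Tr(E_{jj}^{\lambda}\rho)\,E^{\mu}_{kk}/d_\mu$-type transitions with stochastic weights. It must also damp the off-diagonal coherences by generic factors. For generic parameters all the non-unit eigenvalues are distinct and nonzero.
\item \textbf{Initial state and readout.} Choose $\widetilde\rho_0$ and $P$ with nonzero components along every eigenvector: a generic positive invariant state, and a generic invariant projector having nonzero overlap in every eigen-direction. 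Then the cyclic spaces are full, and \cref{lem:exponential-hankel} yields Hankel rank $M_\Pi$.
\end{itemize}

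\emph{Main obstacle.} The hard part is the simultaneous genericity in the construction. I must ensure that a single invariant projector $P$ has nonzero pairing with every left eigenvector, including the relative-population and diagonal Cartan ones. I must also check that the combination $\sum_C Z_C$ structure does not force degeneracies, and that the generic Markov part is diagonalizable with eigenvalues distinct from the unitary phases. I would handle this by a perturbation argument: every bad event is a proper algebraic subvariety of parameter space, and each is excluded by exhibiting one explicit parameter point that avoids it.
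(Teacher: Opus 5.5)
Your upper bound is correct and is the paper's argument in a slightly different form. Your subspace $\{X\in\Herm(\mathcal C_K):\Tr(Z_CX)=w_C\Tr X\}$ is exactly the linear span of the paper's affine slice $\{\Tr(Z_CX)=p_C\}$, which has dimension $D_K-c$. Your rank count $c-1$ for the constraints is right. Reducing to $q=1$ by monotonicity is also fine.

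The lower bound has a real gap, at the step you yourself flag.

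First, the replacement-channel mixture in your first bullet cannot saturate. In $(1-p)\Ad_U+p\,\mathcal R$, the replacement part annihilates every direction that is trace-zero inside each component. Meanwhile $\Ad_U$ fixes all $\sum_C(|C|-1)+\sum_\lambda(m_\lambda-1)$ such directions that commute with $U$. These directions therefore share the single eigenvalue $1-p$ and contribute at most one Hankel dimension.

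So everything rests on the generic Markov part $\Psi$ with population matrix $T$. There the whole lower bound reduces to one claim: every non-stationary right eigenvector of $T$ pairs nontrivially with the diagonal $\ell$ of $P$. You assert this by genericity, but the source you name, a generic invariant projector, is not available. On a one-dimensional block $Q_\lambda\in\{0,1\}$, and block traces are integers. When $\mathcal C_K$ is commutative, $\ell$ is a fixed $0/1$ vector, and all the freedom sits in $T$. Saying that the bad set is a proper subvariety is precisely the existence of one good $T$, which is the content of the lower bound. That existence is not supplied.

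The missing idea, which is how the paper proceeds, is to reverse the order and prescribe the eigenbasis rather than hope for it.
\begin{itemize}
\item Fix $P$ with $0<Q_C<Z_C$ on every component with $D_C\geq2$. The functional $X\mapsto\Tr(Q_CX)$ is nonzero on $\mathcal V_{C,0}$, since otherwise $Q_C$ would be a multiple of $Z_C$.
\item Choose a basis $F_{C,j}$ of $\mathcal V_{C,0}$ with every $\Tr(Q_CF_{C,j})\neq0$.
\item Set $\Phi_C=\mathcal R_{\omega_C}+\varepsilon L_C\circ\mathcal T_K$, where $L_C(\omega_C)=0$ and $L_C(F_{C,j})=\lambda_{C,j}F_{C,j}$ with distinct nonzero $\lambda_{C,j}$.
\item Complete positivity for small $\varepsilon$ follows because the replacement channel has positive-definite Choi matrix. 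Trace preservation holds because $L_C\circ\mathcal T_K$ is trace-annihilating. Covariance and $\Pi$-conservation come from taking the direct sum over components.
\item With $\rho_0=\bigoplus_Cp_C\bigl(\omega_C+\delta_C\sum_jF_{C,j}\bigr)$, one gets $f(a^t)=c_0+\sum_{C,j}c_{C,j}(\varepsilon\lambda_{C,j})^t$, and every coefficient is visibly nonzero.
\end{itemize}
This needs no genericity argument, no unitary part, and no complex spectrum. In your commutative Markov language it is $T=J/n+\epsilon B$, with $B$ given an eigenbasis of the sum-zero hyperplane adapted to $\ell$. Inserting that choice is also what your mixture construction would need in order to be completed.
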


\begin{proof}
Let
\[
  p_C=\Tr(Z_C\widetilde\rho_0).
\]
Every reachable state lies in
\[
  \mathcal S_{\mathbf p}
  =
  \{X\in\Herm(\mathcal C_K):
    \Tr(Z_CX)=p_C\ \text{for all }C\in\Pi\}.
\]
The $c$ component-trace functionals are linearly independent, so this
affine space has real dimension $D_K-c$.  Its linear span has dimension
at most $D_K-c+1=M_\Pi$.  The upper bound follows from
\cref{thm:channel-exact-rank}.

We construct a unary automaton attaining equality.  Set
\[
  \Hh_C=Z_C\Hh,
  \qquad
  n_C=\dim\Hh_C,
  \qquad
  \omega_C=\frac{Z_C}{n_C},
\]
and
\[
  \mathcal V_{C,0}
  =
  \{X\in\Herm(Z_C\mathcal C_K):\Tr X=0\}.
\]
Then $\dim_\R\mathcal V_{C,0}=D_C-1$.  If $D_C\geq2$, the
finite-dimensional $C^*$-algebra $Z_C\mathcal C_K$ contains an
orthogonal projection
\[
  0<Q_C<Z_C.
\]
The functional $X\mapsto\Tr(Q_CX)$ is nonzero on
$\mathcal V_{C,0}$; otherwise $Q_C$ would be Hilbert--Schmidt
orthogonal to every trace-zero element of the algebra and hence would
be a scalar multiple of $Z_C$, contrary to $0<Q_C<Z_C$.  We may
therefore choose a basis
\[
  F_{C,1},\ldots,F_{C,D_C-1}
\]
such that
\[
  \Tr(Q_CF_{C,j})\neq0
\]
for every $j$.  Choose globally pairwise distinct nonzero real numbers
$\lambda_{C,j}$ and define a Hermiticity-preserving linear map by
\[
  L_C(\omega_C)=0,
  \qquad
  L_C(F_{C,j})=\lambda_{C,j}F_{C,j},
\]
extended complex linearly to $Z_C\mathcal C_K$.  Let
\[
  \Psi_C=L_C\circ
  \left.\mathcal T_K\right|_{\operatorname{End}(\Hh_C)}.
\]
Every value of $\Psi_C$ has trace zero.
When $D_C=1$, set $L_C=0$ and $\Psi_C=0$.

The replacement channel
\[
  \mathcal R_{\omega_C}(X)=\Tr(X)\omega_C
\]
has positive-definite Choi matrix
\[
  J(\mathcal R_{\omega_C})=I_{\Hh_C}\otimes\omega_C.
\]
Since the number of components is finite, a sufficiently small common
$\varepsilon>0$ makes
\[
  \Phi_C=\mathcal R_{\omega_C}+\varepsilon\Psi_C
\]
completely positive.  It is trace preserving because $\Psi_C$ is trace
annihilating.  Both summands are $K$-covariant: the twirl removes the
input action, and the output of $\Psi_C$ is invariant.  Define
\[
  \Phi(X)=\sum_{C\in\Pi}\Phi_C(Z_CXZ_C).
\]
This is a completely positive trace-preserving $K$-covariant channel;
it kills intercomponent coherences and satisfies
$\Phi^*(Z_C)=Z_C$ for every component.

Choose positive $p_C$ with $\sum_Cp_C=1$.  For sufficiently small
$\delta_C>0$,
\[
  \eta_C
  =
  \omega_C+\delta_C\sum_{j=1}^{D_C-1}F_{C,j}
\]
is a density operator on $\Hh_C$.  Set
\[
  \rho_0=\bigoplus_{C\in\Pi}p_C\eta_C,
  \qquad
  P=\sum_{C:D_C\geq2}Q_C.
\]
Use the unary alphabet $\Sigma=\{a\}$ with $\Phi_a=\Phi$.  After
decreasing $\varepsilon$ if necessary, the numbers
\[
  1,
  \qquad
  z_{C,j}=\varepsilon\lambda_{C,j}
\]
are nonzero and pairwise distinct.  Direct induction gives
\[
  \Phi^t(\rho_0)
  =
  \bigoplus_{C\in\Pi}p_C
  \left(
    \omega_C+
    \delta_C\sum_{j=1}^{D_C-1}z_{C,j}^{\,t}F_{C,j}
  \right).
\]
Consequently,
\[
  f_\A(a^t)
  =
  c_0+
  \sum_{C\in\Pi}\sum_{j=1}^{D_C-1}
  c_{C,j}z_{C,j}^{\,t},
\]
where
\[
  c_0=\sum_{C:D_C\geq2}p_C\Tr(Q_C\omega_C),
  \qquad
  c_{C,j}=p_C\delta_C\Tr(Q_CF_{C,j}).
\]
When some $D_C\geq2$, all these coefficients are nonzero.  The
Vandermonde factorization in \cref{lem:exponential-hankel} yields
\[
  \beta(\A)
  =
  1+\sum_{C\in\Pi}(D_C-1)
  =D_K-c+1.
\]
If every $D_C=1$, then $M_\Pi=1$; taking $P=I$ gives the constant
function one and rank one.
\end{proof}

Fix now an invariant accepting projector \(P\in\mathcal C_K\), and
write
\[
  P_C=Z_CP,
  \qquad
  \mathcal A_{\Pi,P}
  =
  \{C\in\Pi:0<P_C<Z_C\}.
\]
For \(P\ne0\), define
\[
  M_{\Pi,P}
  =
  1+\sum_{C\in\mathcal A_{\Pi,P}}(D_C-1).
\]
Let \(B_{q,\Pi,P}^{\mathrm{1g}}(K;\Hh)\) denote the preceding channel
capacity with this readout held fixed.

\begin{theorem}[Prescribed-readout component capacity]
\label{thm:fixed-readout-component-capacity}
If \(P=0\), then
\(B_{q,\Pi,P}^{\mathrm{1g}}(K;\Hh)=0\).  If \(P\ne0\), then for every
\(q\geq1\),
\[
  B_{q,\Pi,P}^{\mathrm{1g}}(K;\Hh)=M_{\Pi,P}.
\]
Thus a component contributes its \(D_C-1\) trace-zero directions if
and only if the prescribed readout is nontrivial on that component;
all component baselines together contribute only one constant mode.
\end{theorem}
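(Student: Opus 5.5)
The plan has three parts: an upper bound through the exact channel invariant, a unary construction that attains it, and the trivial case \(P=0\).

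\textbf{Trivial case.} If \(P=0\), the acceptance function vanishes identically, so \(\beta(\A)=0\). This case is immediate.

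\textbf{Upper bound for \(P\ne0\).} I will use \cref{thm:channel-exact-rank}, which writes \(\beta(\A)\) as the rank of the Hilbert--Schmidt pairing on \(\cR_\Phi\times\cE_\Phi\). The aim is to confine the effects rather than the states.

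Every \(\Pi\)-conservative covariant channel satisfies \(\Phi^*(Z_C)=Z_C\). For \(C\notin\mathcal A_{\Pi,P}\), the component \(P_C\) equals either \(0\) or \(Z_C\). Hence \(P\) splits as \(\sum_{C:P_C=Z_C}Z_C+\sum_{C\in\mathcal A_{\Pi,P}}P_C\). The first sum is fixed by every \(\Phi_y^*\). It is also a combination of the \(Z_C\), and these pair with any reachable state through the fixed weights \(p_C\).

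For each active component, I will decompose \(\Phi_y^*(P)\) into its component pieces \(Z_C\Phi_y^*(P)Z_C\). This requires care: \(\Phi^*\) need not be block diagonal across components. However, the pairing with reachable states only sees the blocks, because reachable states lie in \(\Herm(\mathcal C_K)\), which is block diagonal in the \(Z_\lambda\). So I only need the \(Z_C\)-projections of the effect orbit.

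I will write each such projection as a multiple of \(Z_C\) plus a trace-zero part in \(\mathcal V_{C,0}\). Against a reachable state, the \(Z_C\)-multiple contributes \(p_C\) times a word-dependent scalar. Here lies the gap: these scalars need not be constant, since inactive components might acquire nonconstant \(Z_C\)-coefficients. The fix is to work on the state side as well.

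Every reachable \(X\) has the form \(\sum_C p_C\omega_C\) plus trace-zero parts. Pairing \(X\) with an effect \(Y\) gives the constant mode \(\sum_C p_C s_C(Y)\) plus \(\sum_C\langle X_C^\circ,Y_C^\circ\rangle\). The state-side constant is the single fixed vector \(\sum_C p_C\omega_C\), so all baselines together contribute at most one dimension.

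For an inactive \(C\), I must show that \(Y_C^\circ\) is zero for every \(Y\in\cE_\Phi\); this is the main obstacle. I will argue that \(\Phi_y^*(P)\) restricted to such a component stays in \(\operatorname{span}\{Z_C\}\). This is not automatic, because an active component's effect could leak into an inactive one under \(\Phi^*\). If it fails, I will instead bound the rank by counting state-side directions in active components only, which needs the symmetric fact that states in inactive components pair trivially. If neither works, the right reformulation is to read \(M_{\Pi,P}\) as the capacity maximized over automata, and then compute it via the construction below together with a more careful invariant-subspace argument on \(\Phi^*\) restricted to the algebra \(\bigoplus_C Z_C\mathcal C_K\).

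\textbf{Attainment.} Reuse the unary construction from \cref{thm:component-channel-capacity}, with \(Q_C\) replaced by the prescribed \(P_C\) for \(C\in\mathcal A_{\Pi,P}\):
\begin{itemize}
\item For active components, \(0<P_C<Z_C\) makes the functional \(\Tr(P_C\,\cdot)\) nonzero on \(\mathcal V_{C,0}\). I choose the basis \(F_{C,j}\) with \(\Tr(P_CF_{C,j})\ne0\) and distinct eigenvalues \(z_{C,j}\).
\item For inactive components, I use \(\Psi_C=0\), a pure replacement channel.
\end{itemize}
The acceptance sequence is then \(c_0\) plus the sum over active components and \(j\) of \(c_{C,j}z_{C,j}^t\), with all \(c_{C,j}\ne0\). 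To make \(c_0\ne0\), I choose the weights \(p_C\) so that \(\sum_C p_C\Tr(P_C\omega_C)>0\), which is possible since \(P\ne0\). \Cref{lem:exponential-hankel} then gives rank \(M_{\Pi,P}\) with one letter, and adding letters does not lower the supremum.
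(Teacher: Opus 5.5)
\textbf{Gap in the upper bound.} Your trivial case, your unary attainment construction, and your state-side decomposition $X=(\Tr X)\sum_C p_C\omega_C+X^\circ$ all match the paper. The upper bound, however, has a genuine gap at exactly the step you flag as the main obstacle, and your diagnosis there is wrong. For a $\Pi$-conservative channel, $\Phi^*$ \emph{is} block diagonal across components, so an active component's effect cannot leak into an inactive one.

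\textbf{The missing idea.} Because $\Phi^*$ is unital and completely positive, the relation $\Phi^*(Z_C^2)=\Phi^*(Z_C)=Z_C=\Phi^*(Z_C)^2$ gives equality in the Schwarz inequality. Hence $Z_C$ lies in the multiplicative domain of $\Phi^*$, and $\Phi^*(Z_CYZ_{C'})=Z_C\Phi^*(Y)Z_{C'}$; the paper's proof uses exactly this. In Kraus form the argument is elementary:
\begin{itemize}
\item compressing $\sum_rA_r^\dagger Z_CA_r=Z_C$ by $I-Z_C$ gives $Z_CA_r(I-Z_C)=0$;
\item compressing $\sum_rA_r^\dagger(I-Z_C)A_r=I-Z_C$ by $Z_C$ gives $(I-Z_C)A_rZ_C=0$.
\end{itemize}
So every Kraus operator commutes with every $Z_C$. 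Consequently $\Phi_y^*(P)=\sum_C\Phi_y^*(P_C)$, where each summand is supported in its own component. In particular $P_C=0$ stays $0$ and $P_C=Z_C$ stays $Z_C$ under every suffix, so effects have zero trace-zero part on every inactive component. Combined with your state-side decomposition, the pairing then factors through $X\mapsto\bigl(\Tr X,(X_C^\circ)_{C\in\mathcal A_{\Pi,P}}\bigr)$, whose image has dimension $M_{\Pi,P}$.

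\textbf{Your fallbacks do not close the gap.} Showing that inactive trace-zero state directions ``pair trivially'' is the same non-leakage statement viewed from the state side, so that route is circular. Rereading $M_{\Pi,P}$ as a capacity maximized over automata does not remove the obligation either: the supremum equals $M_{\Pi,P}$ only if every $\Pi$-conservative automaton with readout $P$ satisfies $\beta(\A)\le M_{\Pi,P}$. Without non-leakage, your argument yields only the weaker bound $M_\Pi$.
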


\begin{proof}
For a \(\Pi\)-conservative channel,
\(\Phi^*(Z_C)=Z_C\).  Because $\Phi^*$ is unital and $Z_C$ is a
projection, equality holds in the Schwarz inequality for $Z_C$; hence
$Z_C$ belongs to the multiplicative domain of $\Phi^*$ and
\[
  \Phi^*(Z_CYZ_C)=Z_C\Phi^*(Y)Z_C.
\]
Therefore \(P_C=0\) remains zero under every suffix, whereas
\(P_C=Z_C\) remains \(Z_C\).  Such components contribute no
nonconstant observable direction.  On a component in
\(\mathcal A_{\Pi,P}\), the suffix effect can pair with at most the
\((D_C-1)\)-dimensional trace-zero corner.  The fixed component
weights supply at most one shared constant series.  This proves the
upper bound \(M_{\Pi,P}\), and \(P=0\) gives the zero series.

For unary saturation, repeat the Choi-interior construction in the
proof of \cref{thm:component-channel-capacity}, but only on the active
components and with \(Q_C=P_C\).  The functional
\[
  \ell_C(X)=\Tr(P_CX)
\]
is nonzero on
\(V_C=\{X\in\Herm(Z_C\mathcal C_K):\Tr X=0\}\): otherwise \(P_C\)
would be a scalar multiple of \(Z_C\), impossible for a nontrivial
projection.  A basis of \(V_C\) can therefore be chosen so that
\(\ell_C\) is nonzero on every basis vector.  Assign globally distinct
nonzero contraction eigenvalues to these basis vectors, use replacement
dynamics on the inactive components, and choose positive initial
component weights.  The acceptance sequence then consists of one
nonzero constant mode together with exactly
\(\sum_{C\in\mathcal A_{\Pi,P}}(D_C-1)\) distinct nonconstant modes.
\Cref{lem:exponential-hankel} gives rank \(M_{\Pi,P}\).  If the active
set is empty and \(P\ne0\), positive initial weight on a component
where \(P_C=Z_C\) gives a nonzero constant series of rank one.
\end{proof}

The formula has two natural endpoints.  If all isotypic labels form one
mobility component, then
\[
  B_q^{\mathrm{1g}}(K,\{\Lambda\};\Hh)=D_K.
\]
If every component is a singleton, then
\[
  M_\Pi
  =D_K-\dim_\C Z(\mathcal C_K)+1
  =1+\nu_K.
\]

Writing \(z_K=\dim_\C Z(\mathcal C_K)\), the component law has the
structural form
\[
  M_\Pi=1+\nu_K+(z_K-|\Pi|).
\]
Thus each independent merge of conserved central components releases
one relative population coordinate in addition to the noncommutative
directions already available to reversible dynamics.  More generally,
if \(H\subseteq K\) and \(\Pi\) is a mobility partition of the
\(H\)-isotypic labels, then
\[
  M_\Pi(H)-B_2^{\mathrm{MO}}(K;\Hh)
  =
  (\nu_H-\nu_K)+(z_H-|\Pi|).
\]
The first term is algebraic symmetry release; the second is dissipative
release of central populations.

\subsection{Kraus-wise charge-zero symmetry}

A stronger symmetry convention requires the environment to carry no
symmetry charge.

\begin{definition}\label{def:kraus-zero}
A channel $\Phi$ is Kraus-wise charge zero when it admits a Kraus
representation
\[
  \Phi(X)=\sum_rA_rXA_r^\dagger
\]
such that
\[
  A_r\in\mathcal C_K
\]
for every $r$.
\end{definition}

Every Kraus-wise charge-zero channel is \(K\)-covariant, but the converse does not hold.  The term ``charge zero'' also excludes the adjacent convention in which every Kraus operator carries the same nontrivial character \cite{BucaProsen,DeGrootEtAl}.

\begin{proposition}[Common-character boundary]\label{prop:common-character}
Fix a unitary character \(\chi:K\to U(1)\), and suppose a trace-preserving channel has Kraus operators satisfying
\[
  A_r\pi(k)=\chi(k)\pi(k)A_r
  \qquad(k\in K)
\]
for every \(r\).  Let \(T_\chi\) be the induced permutation of the occurring isotypic labels, defined by
\[
  V_{T_\chi(\lambda)}\cong\chi^{-1}\otimes V_\lambda,
\]
and let \(\Pi_\chi\) be its orbit partition.  Then
\[
  \Phi^*(Z_\lambda)=Z_{T_\chi^{-1}(\lambda)},
  \qquad
  \Phi^*(Z_C)=Z_C\quad(C\in\Pi_\chi),
\]
and every symmetry-compatible automaton whose symbol channels carry this same character satisfies
\[
  \beta(\A)\leq D_K-|\Pi_\chi|+1.
\]
The upper bound need not be attained within the common-character subclass.
\end{proposition}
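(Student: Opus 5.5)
The plan has three steps: turn the Kraus relation into an intertwining statement, read off the two operator identities from it, and then reduce the rank bound to \cref{thm:component-channel-capacity} by showing that common-character automata are $\Pi_\chi$-conservative $K$-covariant automata. The non-attainment clause will need a separate example.

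\emph{Intertwining and the identities.} The hypothesis $A_r\pi(k)=\chi(k)\pi(k)A_r$ says that $A_r$ intertwines $(\Hh,\pi)$ with $(\Hh,\chi\otimes\pi)$. So $A_r$ maps the $\lambda$-isotypic summand of $\pi$ into the $\lambda$-isotypic summand of $\chi\otimes\pi$. That summand is the $\mu$-isotypic summand of $\pi$ with $V_\mu\cong\chi^{-1}\otimes V_\lambda$, that is, $\mu=T_\chi(\lambda)$. Hence $A_rZ_\lambda=Z_{T_\chi(\lambda)}A_r$ for every $r$ and $\lambda$. If $T_\chi(\lambda)$ does not occur, then $A_rZ_\lambda=0$ for all $r$, which contradicts $\sum_rA_r^\dagger A_r=I$; so the hypothesis itself forces $T_\chi$ to be a permutation of the occurring labels. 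Next,
\[
  \Phi^*(Z_\mu)=\sum_rA_r^\dagger Z_\mu A_r=\sum_rA_r^\dagger A_rZ_{T_\chi^{-1}(\mu)}=Z_{T_\chi^{-1}(\mu)}.
\]
Summing this identity over a $T_\chi$-orbit $C$ gives $\Phi^*(Z_C)=Z_C$. The same relation shows covariance: conjugating $A_r\pi(k)XA_r^\dagger$ produces the phases $\chi(k)$ and $\overline{\chi(k)}$, which cancel, so $\Phi\circ\Ad_{\pi(k)}=\Ad_{\pi(k)}\circ\Phi$.

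\emph{Rank bound.} Every symbol channel of such an automaton is therefore $K$-covariant and $\Pi_\chi$-conservative. The upper bound in \cref{thm:component-channel-capacity} then gives $\beta(\A)\leq M_{\Pi_\chi}=D_K-|\Pi_\chi|+1$ directly.

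\emph{Non-attainment.} I expect this to be the main obstacle, because it needs an explicit counterexample. The first attempt will be $K=\mathbb Z_2$ acting by $\operatorname{diag}(1,-1)$ on $\C^2$ with $\chi$ the sign character. There the Kraus operators are off-diagonal, populations swap deterministically, and the bound $2$ is still attained, so this case is not enough. I would therefore pass to an orbit with unequal multiplicities, for example $\mathbb Z_2$ on $\C^3$ with charges $(0,0,1)$ and $\chi$ the sign character. Then $m_0=2$, $m_1=1$, there is a single orbit, and the bound is $D_K=5$. Every $A_r$ maps the two-dimensional charge-$0$ block into the one-dimensional charge-$1$ block and back. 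For words of each fixed parity, the reachable invariant states therefore factor through a one-dimensional sector: after any odd-length prefix the state has the form $0\oplus s$, and after any even-length prefix of positive length the charge-$0$ block is a convex combination of the finitely many operators $\sum_rA_r^\dagger$-type images $B_{ab}$ determined by the last two letters, scaled by the population.

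The point to verify is the count of reachable directions. Odd-length states contribute only a one-dimensional direction, and even-length states are pinned to a fixed total charge-$0$ trace $p_0$. The aim is to show that the Hankel pairing of \cref{thm:channel-exact-rank} has rank strictly below $5$, for instance by checking that $\cR_\Phi$ has dimension at most $4$ for every such automaton. If the generic multi-letter count fails to drop, I would instead use the fact that a common-character channel never mixes populations within an orbit except by the rigid permutation $T_\chi$. With $|\Pi_\chi|=1$ but $T_\chi$ nontrivial, the relative population between $\lambda$ and $T_\chi(\lambda)$ is then a deterministic function of word-length parity. The plan is to show that this costs one direction relative to $M_{\Pi_\chi}$, whereas \cref{thm:component-channel-capacity} requires a genuine contraction toward the mixed population to realize all $D_C-1$ modes.
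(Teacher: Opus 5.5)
Your derivation of the two operator identities and of the rank bound is correct and is essentially the paper's argument. Intertwining gives $A_rZ_\lambda=Z_{T_\chi(\lambda)}A_r$, trace preservation forces $T_\chi$ to permute the occurring labels, and then comes the same computation of $\Phi^*(Z_\mu)$ followed by \cref{thm:component-channel-capacity}. Your check that the phases cancel, so each channel is $K$-covariant, supplies a step the paper leaves implicit.

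The gap is the non-attainment clause. You leave it as a plan, and the plan as written would not work. Your $\mathbb Z_2$ example on $\C^3$ with charges $(0,0,1)$ is in fact a valid counterexample, but not for the reason you give.

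Write $A_r=\begin{pmatrix}0&b_r\\ c_r^\dagger&0\end{pmatrix}$. Trace preservation gives $\sum_rc_rc_r^\dagger=I_2$ and $\sum_rb_r^\dagger b_r=1$. Hence on invariant operators $\Phi_a(X_0\oplus x_1)=x_1B_a\oplus\Tr X_0$, where $B_a=\sum_rb_rb_r^\dagger$ is a density matrix. After a nonempty prefix the charge-$0$ block is $p_1B_{a_t}$ or $p_0B_{a_t}$, according to parity. It is fixed by the last letter alone, and odd-length states have the form $0\oplus s$ only when $p_1=0$.

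Your proposed test, $\dim\cR_\Phi\le4$ for every automaton, is false. If $p_0\neq p_1$ are both positive, the two states $p_1B_a\oplus p_0$ and $p_0B_a\oplus p_1$ give $0\oplus1$. Three letters whose $B_a$, together with $X_0$, span $\Herm(\C^2)$ then give $\cR_\Phi=\Herm(\C^2)\oplus\R$.

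The collapse is on the observable side. Since $\Phi_a^*(Y_0\oplus y_1)=y_1I_2\oplus\Tr(Y_0B_a)$, every nonempty suffix yields an effect in $\spanR\{I_2\oplus0,\,0\oplus1\}$. Hence $\dim\cE_\Phi\le3$, and \cref{thm:channel-exact-rank} gives $\beta(\A)\le3<5$.

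Your fallback heuristic is not an argument either, and your own first example contradicts it. In the $\C^2$ case the relative population is a deterministic function of parity, yet the mode $(-1)^t$ supplies exactly the one relative coordinate and the bound $2$ is attained. Parity dependence costs something only when several independent directions must share that single mode.

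This is what the paper's example exploits. Take $K=\mathbb Z_2\times\mathbb Z_2$ with each of the four characters occurring once and $\chi$ nontrivial. Then $T_\chi$ has two transposition orbits, so the bound is $4-2+1=3$. The invariant algebra is commutative, and each common-character channel swaps populations deterministically within each pair. Every behavior therefore lies in $\spanR\{1,(-1)^{|w|}\}$ and has Hankel rank at most $2$.
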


The proposition keeps the structural hierarchy sharp.  Common-character dynamics deterministically twist the isotypic labels and therefore conserve character orbits, whereas a general \(\Pi_\chi\)-conservative covariant channel may mix populations arbitrarily inside each orbit.  The proof and a four-label example with strict inequality are given in \cref{app:common-character-boundary}.

\begin{corollary}[Charge-zero channel capacity]
\label{cor:kraus-zero-capacity}
Let $B_{q,\mathrm{K0}}^{\mathrm{1g}}(K;\Hh)$ denote the maximal Hankel
rank over symmetry-compatible automata whose symbol channels are
Kraus-wise charge zero and whose alphabet has at most $q$ symbols.  Then
\[
  B_{q,\mathrm{K0}}^{\mathrm{1g}}(K;\Hh)
  =1+\nu_K
\]
for every $q\geq1$.
\end{corollary}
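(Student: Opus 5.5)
The plan is to prove matching upper and lower bounds. The upper bound reduces to the singleton-partition case of \cref{thm:component-channel-capacity}. The lower bound follows because unitary charge-zero channels already include the reversible automata of \cref{thm:alphabet-capacity}, supplemented for $q=1$ by a dissipative unary construction.

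\emph{Upper bound.} Let $\Phi(X)=\sum_r A_rXA_r^\dagger$ with every $A_r\in\mathcal C_K$. Each central projection $Z_\lambda$ commutes with $\mathcal C_K$, so
\[
  \Phi^*(Z_\lambda)=\sum_r A_r^\dagger Z_\lambda A_r=Z_\lambda\sum_rA_r^\dagger A_r=Z_\lambda .
\]
Thus every charge-zero channel is $\Pi$-conservative for the finest partition into singletons, and it is $K$-covariant. \Cref{thm:component-channel-capacity} then gives $\beta(\A)\le M_\Pi$. For singletons $M_\Pi=D_K-z_K+1=1+\nu_K$.

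\emph{Lower bound for $q\ge2$.} Each unitary $U_a\in\mathcal C_K$ is a single-Kraus charge-zero channel. A $K$-equivariant measure-once automaton is therefore a charge-zero automaton with the same acceptance function, and \cref{thm:alphabet-capacity} gives rank $1+\nu_K$ with two letters.

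\emph{Lower bound for $q=1$.} This is the main obstacle, because unary unitaries lose $\varrho_K$ Cartan directions. I would redo the unary Choi-interior construction of \cref{thm:component-channel-capacity} on each singleton component $\lambda$, but now with a channel whose Kraus operators lie in $\mathcal C_K$. After Haar reduction, $Z_\lambda\mathcal C_K\cong\Mat_{m_\lambda}\otimes I_{V_\lambda}$. A channel on the multiplicity space $\mathcal M_\lambda$ with Kraus operators $B_r$ lifts to Kraus operators $B_r\otimes I_{V_\lambda}\in\mathcal C_K$, and these lifts act on block-diagonal invariant operators exactly as the reduced channel does.
\begin{itemize}
\item Work on the reduced multiplicity space $\mathcal M_\lambda$. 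The reduced channel $\mathcal R_{I/m_\lambda}+\varepsilon\Psi_\lambda$ is completely positive for small $\varepsilon$, since the replacement channel has Choi matrix $I\otimes I/m_\lambda$, which is interior.
\item Any completely positive map on $\operatorname{End}(\mathcal M_\lambda)$ has some Kraus decomposition. Lifting it gives a channel with Kraus operators in $\mathcal C_K$.
\item Here $\Psi_\lambda$ is trace-annihilating with distinct nonzero eigenvalues on a basis of $\Herm_0(\mathcal M_\lambda)$. That basis is chosen, as before, to pair nontrivially with a projector $0<Q_\lambda<I$.
\item Take the direct sum over $\lambda$, with the central-projection pinching, which is itself charge zero with Kraus operators $Z_\lambda$.
\end{itemize}
The acceptance sequence then has one constant mode and $\sum_\lambda(m_\lambda^2-1)$ distinct exponential modes. \Cref{lem:exponential-hankel} gives rank $1+\nu_K$. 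If $\nu_K=0$, take $P=I$ to get rank one.

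The point to check carefully is that the lifted Kraus representation reproduces the reduced channel on the relevant states. This holds because reachable states are twirled and block diagonal, and because composing with the twirl is unnecessary once the inputs already lie in $\Herm(\mathcal C_K)$.
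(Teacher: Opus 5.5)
Your proof is correct and follows the paper's route. The upper bound is the same singleton-conservation argument via $\Phi^*(Z_\lambda)=Z_\lambda$ and \cref{thm:component-channel-capacity}, and the lower bound is the same unary construction: a Choi-interior perturbation $\mathcal R_{I/m_\lambda}+\varepsilon L_\lambda$ on each multiplicity space, lifted to Kraus operators $B_{\lambda,r}\otimes I_{V_\lambda}\in\mathcal C_K$. Your separate $q\ge2$ case through reversible automata and \cref{thm:alphabet-capacity} is valid but unnecessary, since the unary witness already has at most $q$ symbols for every $q\ge1$; that is how the paper covers all $q$ at once.
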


\begin{proof}
For every isotypic central projection $Z_\lambda$,
\[
  \Phi^*(Z_\lambda)
  =
  \sum_rA_r^\dagger Z_\lambda A_r
  =
  Z_\lambda\sum_rA_r^\dagger A_r
  =Z_\lambda.
\]
Thus all singleton components are conserved, and
\cref{thm:component-channel-capacity} gives the upper bound
$1+\nu_K$.

For saturation, work on each multiplicity space
$\mathcal M_\lambda$.  When $m_\lambda\geq2$, choose a nontrivial
projector $Q_\lambda$ and a basis
\[
  F_{\lambda,1},\ldots,F_{\lambda,m_\lambda^2-1}
\]
of $\Herm_0(\mathcal M_\lambda)$ with
$\Tr(Q_\lambda F_{\lambda,j})\neq0$.  Choose globally pairwise
distinct nonzero real numbers $\mu_{\lambda,j}$ and define
\[
  L_\lambda(I_{\mathcal M_\lambda}/m_\lambda)=0,
  \qquad
  L_\lambda(F_{\lambda,j})=\mu_{\lambda,j}F_{\lambda,j}.
\]
For a sufficiently small common $\varepsilon>0$, the Choi-interior
perturbation
\[
  \phi_\lambda
  =\mathcal R_{I/m_\lambda}+\varepsilon L_\lambda
\]
is completely positive and trace preserving.  Decreasing
$\varepsilon$ if necessary makes
$\{1\}\cup\{z_{\lambda,j}=\varepsilon\mu_{\lambda,j}\}_{\lambda,j}$
nonzero and globally pairwise distinct.  If
\[
  \phi_\lambda(X)
  =\sum_rB_{\lambda,r}XB_{\lambda,r}^\dagger,
\]
use the physical Kraus operators
\[
  A_{\lambda,r}
  =B_{\lambda,r}\otimes I_{V_\lambda}
\]
on the $\lambda$-isotypic summand, extended by zero elsewhere.  Their
union over $\lambda$ is trace preserving on $\Hh$, and every one lies
in $\mathcal C_K$.  For $m_\lambda=1$, use the identity channel on that
isotypic summand.

Choose positive isotypic weights $p_\lambda$ summing to one and small
$\delta_\lambda>0$ such that
\[
  \tau_\lambda
  =
  \frac{I_{\mathcal M_\lambda}}{m_\lambda}
  +\delta_\lambda
  \sum_{j=1}^{m_\lambda^2-1}F_{\lambda,j}
\]
is positive definite, with the sum empty when $m_\lambda=1$.  Take
\[
  \rho_0
  =
  \bigoplus_\lambda
  p_\lambda\tau_\lambda\otimes
  \frac{I_{V_\lambda}}{d_\lambda}
\]
and
\[
  P
  =
  \bigoplus_{\lambda:m_\lambda\geq2}
  Q_\lambda\otimes I_{V_\lambda}.
\]
If some $m_\lambda\geq2$, the unary acceptance sequence contains one
nonzero constant mode and all
\[
  \sum_\lambda(m_\lambda^2-1)=\nu_K
\]
distinct nonconstant modes, so its Hankel rank is $1+\nu_K$.  If every
$m_\lambda=1$, take $P=I$ to obtain the constant rank-one behavior.
\end{proof}

The smallest example separating the two channel notions is a qubit
with
\[
  K=U(1),
  \qquad
  \pi(\theta)=\operatorname{diag}(1,e^{i\theta}).
\]
Here
\[
  \mathcal C_K=\C\oplus\C,
  \qquad
  D_K=2,
  \qquad
  \nu_K=0.
\]
Kraus-wise charge-zero channels have diagonal Kraus operators and
preserve the two populations separately, so their symmetry-compatible
capacity is one.  In the ordered basis $(|0\rangle,|1\rangle)$, let
\[
  T=
  \begin{pmatrix}
    1-\alpha&\beta\\
    \alpha&1-\beta
  \end{pmatrix},
  \qquad
  0<\alpha,\beta,
  \quad
  \alpha+\beta<1,
\]
and use Kraus operators
\[
  A_{ji}=\sqrt{T_{ji}}\,|j\rangle\langle i|.
\]
Each $A_{ji}$ has a definite $U(1)$ charge, so the channel is
$U(1)$-covariant although its off-diagonal Kraus operators do not lie
in $\mathcal C_K$.  With $\rho_0=|0\rangle\langle0|$ and
$P=|0\rangle\langle0|$,
\[
  f_\A(a^t)
  =q+(1-q)(1-\alpha-\beta)^t,
  \qquad
  q=\frac{\beta}{\alpha+\beta},
\]
and hence $\beta(\A)=2=D_K$.  The released direction is the relative
population of the two isotypic sectors.

\begin{example}[Necessity of an invariant boundary]
\label{ex:boundary-sharpness}
The invariant-boundary hypothesis is essential.  For the same $U(1)$
representation, compose a generalized amplitude-damping channel with a
nontrivial rotation about the symmetry axis.  Its Bloch action may be
chosen as
\[
  (x+iy,z)
  \longmapsto
  \left(
    \sqrt\gamma e^{i\varphi}(x+iy),
    \gamma z+(1-\gamma)z_*
  \right),
\]
where $0<\gamma<1$ and $\varphi\notin\pi\mathbb Z$.  This channel is
$U(1)$-covariant.  For generic noninvariant initial and final rank-one projectors, the
acceptance sequence has the form
\[
  f_\A(a^t)
  =
  c_0+c_z\gamma^t
  +c_+\bigl(\sqrt\gamma e^{i\varphi}\bigr)^t
  +\overline{c_+}
   \bigl(\sqrt\gamma e^{-i\varphi}\bigr)^t,
\]
with $c_zc_+\neq0$.  The four modes are distinct, so
\cref{lem:exponential-hankel} gives Hankel rank four although $D_K=2$.
Thus channel covariance alone does not imply the commutant bound when
both boundaries break the symmetry.
\end{example}

\subsection{Strict-cutpoint probabilistic state cost}

The unary construction above determines numerical Hankel capacity.  A
probabilistic automaton need only reproduce the threshold language, so a
separate finite sign witness is required for a classical state lower
bound.  Let
\[
  \operatorname{SC}_{\Pi,\R}^{\mathrm{1g}}(\Hh)
\]
be the supremum, over arbitrary finite alphabets and all
symmetry-compatible, $\Pi$-conservative $K$-covariant automata on
$\Hh$, of the minimum number of states in a real PFA recognizing the
same strict-cutpoint language.

\begin{lemma}[Block-cyclic impulse system]
\label{lem:block-cyclic-impulse}
Let
\[
  V=\bigoplus_{j=1}^sV_j,
  \qquad
  d_j=\dim V_j\geq1,
  \qquad
  d=\sum_{j=1}^sd_j,
\]
and let \(\ell_j\in V_j^*\) be nonzero.  There are
\(A_j\in\operatorname{End}_{\mathbb R}(V_j)\) and \(v_j\in V_j\)
such that, for
\[
  A=\bigoplus_jA_j,
  \qquad
  v=\bigoplus_jv_j,
  \qquad
  \ell=\bigoplus_j\ell_j,
\]
the impulse sequence \(h_t=\ell(A^tv)\) satisfies
\[
  h_0=\cdots=h_{d-2}=0,
  \qquad
  h_{d-1}=1.
\]
\end{lemma}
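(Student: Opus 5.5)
We will build each block as a companion-type cyclic system observed by \(\ell_j\), and choose the characteristic polynomials so that the total impulse response is a shifted delta at time \(d-1\). The key reformulation is that a sequence of the form \(h_t=\sum_j\ell_j(A_j^tv_j)\) with \(h_0=\cdots=h_{d-2}=0\), \(h_{d-1}=1\) is exactly the impulse response of a single-output realization of a rational generating function \(\sum_t h_tz^{-t-1}=1/\chi(z)\) with \(\chi\) monic of degree \(d\). So we factor a monic degree-\(d\) real polynomial as \(\chi=\prod_j\chi_j\) with \(\deg\chi_j=d_j\), the \(\chi_j\) pairwise coprime (say, distinct real roots overall). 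Partial fractions give
\[
  \frac1{\chi(z)}=\sum_j\frac{r_j(z)}{\chi_j(z)},\qquad \deg r_j<d_j,
\]
and each summand \(r_j/\chi_j\) is a strictly proper rational function of McMillan degree at most \(d_j\). Its expansion \(\sum_t g_{j,t}z^{-t-1}\) is then realizable on a \(d_j\)-dimensional space.

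Next we transport each realization to \(V_j\) with the prescribed functional \(\ell_j\). We choose a basis of \(V_j\) whose first dual coordinate is \(\ell_j\); this is possible because \(\ell_j\neq0\): extend a vector \(e\) with \(\ell_j(e)=1\) suitably and use a dual basis in which \(\ell_j=e_1^*\). In this basis, let \(A_j\) be the observer (transposed) companion matrix of \(\chi_j\), so that \(e_1^*A_j^t\) for \(t=0,\ldots,d_j-1\) forms a basis of \(V_j^*\) (observability). Any sequence satisfying the recursion of \(\chi_j\) is therefore \(e_1^*A_j^tv_j\) for the unique \(v_j\) matching its first \(d_j\) terms. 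The sequence \(g_{j,t}\) satisfies that recursion because \(\chi_j\cdot(r_j/\chi_j)\) is a polynomial, so such a \(v_j\) exists.

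Summing over \(j\) gives \(h_t=\sum_jg_{j,t}\), whose generating function is \(1/\chi(z)=z^{-d}+O(z^{-d-1})\). Hence \(h_t=0\) for \(t\le d-2\) and \(h_{d-1}=1\), as the statement of \cref{lem:block-cyclic-impulse} requires. Real coefficients throughout follow from choosing real distinct roots.

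The main point requiring care is the observability and representation step: that every solution of the \(\chi_j\)-recursion is of the form \(\ell_j(A_j^tv_j)\) with the fixed \(\ell_j\). We handle it with the explicit companion form and an invertible Hankel/observability matrix, which is where \(\ell_j\neq0\) is used. If \(r_j\) happens to vanish, we take \(v_j=0\), which is harmless. As an alternative check, one can verify the claim directly from the nonsingular block Vandermonde system for the coefficients of the distinct exponentials, in the spirit of \cref{lem:exponential-hankel}.
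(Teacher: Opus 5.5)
Your argument is correct. It rests on the same identity as the paper's proof, seen from the transfer-function side. If you take \(\chi_j=\prod_{r\in I_j}(z-\theta_r)\) with distinct real \(\theta_r\), the simple-fraction expansion \(1/\chi(z)=\sum_r w_r/(z-\theta_r)\) has \(w_r=1/\chi'(\theta_r)=\prod_{q\neq r}(\theta_r-\theta_q)^{-1}\). These are exactly the Vandermonde weights the paper writes down, so your \(h_t\) and the paper's \(\sum_r w_r\theta_r^t\) are the same sequence.

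The difference is in how each block is realized with the prescribed functional:
\begin{itemize}
\item The paper takes the diagonal realization on \(\mathbb R^{I_j}\), with the all-ones vector and covector \((w_r)_{r\in I_j}\). It then uses only that this covector is nonzero, since every \(w_r\neq0\), to carry it onto \(\ell_j\) by a change of basis. No recursion or observability argument is needed.
\item You group the fractions into \(r_j/\chi_j\), put \(\ell_j=e_1^*\), take \(A_j\) in observer companion form, and solve for \(v_j\) through the invertible observability map. Here \(\ell_j\neq0\) enters when you choose the basis, and you never need individual coefficients to be nonzero.
\end{itemize}

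Your route is slightly heavier but more flexible. It needs only that the \(\chi_j\) be pairwise coprime, so repeated or complex-conjugate roots inside a block would also work. The paper's diagonal construction needs distinct real eigenvalues, but in exchange gives fully explicit \(A_j\), \(v_j\), and weights in a few lines.
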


\begin{proof}
Choose distinct real numbers \(\theta_1,\ldots,\theta_d\), partitioned
into sets \(I_j\) of sizes \(d_j\).  The Vandermonde system has a
unique solution \(w_1,\ldots,w_d\) to
\[
  \sum_{r=1}^dw_r\theta_r^t
  =
  \begin{cases}
    0,&0\leq t<d-1,\\
    1,&t=d-1.
  \end{cases}
\]
Every \(w_r\) is nonzero; explicitly,
\(w_r=\prod_{q\neq r}(\theta_r-\theta_q)^{-1}\).
On \(\mathbb R^{I_j}\), take the diagonal map with eigenvalues
\((\theta_r)_{r\in I_j}\), the vector with all coordinates one, and
the covector \((w_r)_{r\in I_j}\).  The last covector is nonzero, so an
isomorphism from \(V_j\) to \(\mathbb R^{I_j}\) can be chosen to carry
\(\ell_j\) to it.  Transporting the diagonal map and vector through
these isomorphisms gives the claimed block-diagonal system.
\end{proof}

Let
\(\operatorname{SC}_{\Pi,\mathbb R}^{\mathrm{1g},(2)}(\Hh)\)
denote the same worst-case state cost as
\(\operatorname{SC}_{\Pi,\mathbb R}^{\mathrm{1g}}(\Hh)\), restricted
to a binary input alphabet.

For a prescribed invariant projector \(P\), define
\(\operatorname{SC}_{\Pi,P,\mathbb R}^{\mathrm{1g},(2)}(\Hh)\) and
\(\operatorname{SC}_{\Pi,P,\mathbb R}^{\mathrm{1g}}(\Hh)\) analogously,
with and without the binary restriction.

\begin{theorem}[Binary prescribed-readout component law]
\label{thm:fixed-readout-component-pfa}
Suppose \(P\ne0\) and \(\mathcal A_{\Pi,P}\ne\varnothing\).  Then
\[
  M_{\Pi,P}
  \leq
  \operatorname{SC}_{\Pi,P,\mathbb R}^{\mathrm{1g},(2)}(\Hh)
  \leq
  \operatorname{SC}_{\Pi,P,\mathbb R}^{\mathrm{1g}}(\Hh)
  \leq
  M_{\Pi,P}+1.
\]
If \(P=0\) or \(\mathcal A_{\Pi,P}=\varnothing\), both state costs
equal one.  Thus every readout-visible component direction contributes
to an exact binary sign-rank obstruction, and stochastic positivity
costs at most one further state.
\end{theorem}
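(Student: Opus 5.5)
The upper bounds come from results already proved, and the lower bound needs a new finite sign witness; I take them in that order.

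\textbf{Degenerate cases and upper bound.} If \(P=0\), or \(\mathcal A_{\Pi,P}=\varnothing\), then \cref{thm:fixed-readout-component-capacity} and its proof show that every suffix effect \(\Phi_y^*(P)\) is a fixed combination of central projections \(Z_C\). The acceptance function is therefore a constant fixed by the conserved component weights. Its strict-cutpoint language is \(\varnothing\) or \(\Sigma^*\), so both state costs equal one.

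Otherwise \cref{thm:fixed-readout-component-capacity} gives \(\beta(\A)\leq M_{\Pi,P}\) for arbitrary alphabets. The trace functional \(X\mapsto\Tr X=\sum_C\Tr(Z_CX)\) is a row \(q\) with \(q\Phi_a=q\) for every letter and \(q\,\widetilde\rho_0=1\). After restricting to a minimal representation on the reachable span, this is a normalized left constant mode. Hence \(\beta_\tau(\A)\leq M_{\Pi,P}\), and \cref{thm:codimension-one-pfa} gives \(M_{\Pi,P}+1\) states. The middle inequality holds trivially, since enlarging the alphabet only enlarges the supremum.

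\textbf{Lower bound: reduction to a sign matrix.} Put \(d=M_{\Pi,P}-1=\sum_{C\in\mathcal A_{\Pi,P}}(D_C-1)\) and \(V=\bigoplus_{C\in\mathcal A_{\Pi,P}}V_C\), where \(V_C\) is the trace-zero corner of \(\Herm(Z_C\mathcal C_K)\). Let \(\ell_C=\Tr(P_C\,\cdot\,)\); each \(\ell_C\) is nonzero on \(V_C\), as shown in the proof of \cref{thm:fixed-readout-component-capacity}. I aim to realize the affine complete-sign matrix \(H_d^{\sigma}\) by a binary automaton. \Cref{lem:affine-complete-sign,lem:pfa-cutpoint-rank} then force at least \(d+1=M_{\Pi,P}\) states.

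The template is \cref{thm:tangent-witness}, with the unitary group replaced by Choi-interior perturbations. As in \cref{thm:component-channel-capacity}, each letter is a channel \(\Phi_C=\mathcal R_{\omega_C}+\varepsilon\Psi_C\) acting on the components, and \(\Psi_C\) is an arbitrary trace-annihilating map \(L_C\circ\mathcal T_K\) with \(L_C\) small. These stay inside the \(\Pi\)-conservative covariant class.

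\textbf{Lower bound: the binary construction.} The two letters have separate jobs.
\begin{itemize}
\item Letter \(a\) realizes, on \(V\), the block-cyclic system of \cref{lem:block-cyclic-impulse} with covectors \(\ell_C\), scaled by \(\varepsilon\). The powers \(a^t\), \(0\leq t<d\), then act as a triangular, hence invertible, family of \(d\) readout directions. Prefixes \(x_i=a^i\) applied to a suitable initial state give \(d\) reachable states, all at the common cutpoint \(\tau\) equal to the constant-mode value. They also give a pairing matrix with the suffixes \(a^j\) that is anti-triangular with nonzero antidiagonal, which is the nonsingular Jacobian needed.
\item Letter \(b\) is a second small perturbation used to steer. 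Combined with \(a\), it produces for every \(\eta\in\{\pm1\}^d\) a suffix \(y_\eta\) whose effect, modulo the constant mode, has prescribed signs against the \(d\) prefix states. The sign patterns come from the inverse-function argument of \cref{thm:tangent-witness}, applied to the parametrized family \(y\mapsto\Phi_y^*(P)\).
\item The anchor row is one extra prefix that shifts the constant mode, for instance by choosing different positive component weights, so that it lies on one fixed side of \(\tau\) for all \(y_\eta\).
\end{itemize}

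\textbf{Main obstacle.} The delicate step is the second bullet: producing all \(2^d\) sign patterns with only two letters, whose finite words form a countable and non-dense family of channels. Unlike \cref{thm:binary-dense}, there is no compact group closure to approximate arbitrary perturbations. I would first try to make letter \(b\) unitary-like on a single Cartan-free direction and use words \(a^{t_1}ba^{t_2}\cdots\), exploiting the distinct contraction rates \(\varepsilon\theta_r\). Choosing exponents suitably should make the \(V\)-component of the effect approximate any prescribed direction up to positive scaling, since only signs matter and the margins are strict. If this fails, the fallback is to encode \(\eta\) in a rational-mixture schedule of \(a\) and \(b\) whose first-order effect on each impulse coordinate is the sum of independently signed contributions \(\pm\varepsilon^{k}\). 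These are made dominant by geometric separation of the \(\varepsilon\)-powers, in the spirit of the Taylor-margin estimate in \cref{thm:prepare-test}.
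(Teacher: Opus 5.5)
Your degenerate cases and upper bound follow the paper, with one repair. Do not pass to a minimal representation, because the trace row need not descend to it. For example, amplitude damping towards \(|1\rangle\) with \(P=|0\rangle\langle0|\) and \(0<\gamma<1\) gives \(f(a^t)=\gamma^t\), whose minimal representation has no constant mode. Instead use the affine system on \(\R\oplus\bigoplus_{C\in\mathcal A_{\Pi,P}}V_C\) itself: it has dimension \(M_{\Pi,P}\) and a fixed first coordinate, which gives \(\beta_\tau(\A)\le M_{\Pi,P}\) directly.

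The genuine gap is the lower bound, at exactly the step you flag. With two channel letters, the words give a countable family of effects whose trace-zero parts are \(O(\varepsilon^{|y|})\). So there is no open set of effects to which the inverse-function argument of \cref{thm:tangent-witness} could apply, and neither of your fallbacks is carried out. Geometric separation of \(\varepsilon\)-powers would let the lowest power dominate each test, so one test cannot read several independent coordinates. Your anchor also fails. Every letter conserves the component weights and the initial state is fixed, so all prefixes share the same central contribution, and no prefix can shift the constant mode.

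The missing idea is to put the \(2^d\) patterns in the prefixes and use the conserved trace as an affine source. For \(b\in\{0,1\}\) the paper takes
\[
  \phi_{C,b}(X)=\Tr(X)\omega_C+\varepsilon A_C\bigl(\mathcal T_K(X)-\Tr(X)\omega_C\bigr)+\varepsilon b\,\frac{\Tr(X)}{p_C}v_C,
\]
with \((A,v)\) supplied by \cref{lem:block-cyclic-impulse} for the covectors \(\ell_C=\Tr(P_C\,\cdot\,)\).

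\begin{itemize}
\item On the active trace-zero coordinates, letter \(b\) acts as \(x\mapsto\varepsilon Ax+\varepsilon bv\), starting from \(x=0\). The prefix \(b_1\cdots b_d\) therefore writes its bits into the state.
\item The suffix \(0^k\) then reads \(\sum_t\varepsilon^{k+d-t+1}h_{k+d-t}b_t\), where \(h_j=\ell(A^jv)\).
\item Since \(h_j=0\) for \(j<d-1\) and \(h_{d-1}=1\), the suffix \(0^k\) with \(k<d\) reads \(\varepsilon^d(b_{k+1}+O(\varepsilon))\). All coordinates are read at the same order \(\varepsilon^d\), and \(0^d\) reads only \(O(\varepsilon^{d+1})\).
\item At \(\tau=\Tr(P\rho_0)+\varepsilon^d/2\), the prefix--suffix sign matrix is the transpose of \(H_d^{(-1)}\). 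The anchor is the suffix \(0^d\), not a prefix.
\end{itemize}

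Your orientation, with the patterns in the suffixes, is harder for a structural reason. Because \(\Phi^*(Z_C)=Z_C\), the trace-zero part of an effect evolves linearly; the dual of any injection lands in \(\R Z_C\). Effects therefore have no constant source from which a letter could write bits.
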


\begin{proof}
The upper bound follows from
\cref{thm:fixed-readout-component-capacity}: the component
representation contains the shared trace coordinate as a normalized
constant mode, so its centered rank is at most \(M_{\Pi,P}\).  Apply
\cref{thm:codimension-one-pfa}.  If \(P=0\), the language is empty.  If
the active set is empty, each \(P_C\) is zero or \(Z_C\), and component
conservation makes the acceptance probability constant; one state is
then exact.

Assume \(d=M_{\Pi,P}-1>0\).  For every component put
\[
  n_C=\Tr Z_C,
  \qquad
  \omega_C=\frac{Z_C}{n_C}.
\]
For each \(C\in\mathcal A_{\Pi,P}\), let
\[
  V_C=\{X\in\Herm(Z_C\mathcal C_K):\Tr X=0\}.
\]
Choose positive initial weights \(p_C\) on all components, summing to
one.  The functional \(\ell_C(X)=\Tr(P_CX)\) is nonzero on \(V_C\).
Apply
\cref{lem:block-cyclic-impulse} to the spaces \(V_C\) and these
functionals, obtaining \(A=\bigoplus_CA_C\) and
\(v=\bigoplus_Cv_C\).  Set \(A_C=v_C=0\) on every inactive component.
Extend every \(A_C\) complex linearly from its Hermitian domain.

For \(b\in\{0,1\}\), define on component \(C\)
\begin{align*}
  \phi_{C,b}(X)
  ={}&
  \Tr(X)\omega_C\\
  &+\varepsilon A_C
  \left(
    \mathcal T_K(X)-\Tr(X)\omega_C
  \right)
  +\varepsilon b\,\frac{\Tr(X)}{p_C}v_C,
\end{align*}
where \(A_C=v_C=0\) on inactive components.  The perturbation of the
replacement channel is Hermiticity preserving, trace annihilating,
and \(K\)-covariant.  Since the replacement channel has a
positive-definite Choi matrix, both \(\phi_{C,0}\) and
\(\phi_{C,1}\) are completely positive and trace preserving for all
sufficiently small \(\varepsilon>0\).  The direct-sum channels
\[
  \Phi_b(X)=\sum_{C\in\Pi}\phi_{C,b}(Z_CXZ_C)
\]
are therefore \(K\)-covariant and \(\Pi\)-conservative.

Start from \(\rho_0=\sum_Cp_C\omega_C\) and use the prescribed
projector \(P\).  On the direct sum of the active trace-zero
coordinates, the two symbols act as
\[
  x\longmapsto\varepsilon Ax+\varepsilon b v.
\]
For a prefix \(w=b_1\cdots b_d\) and suffix \(0^k\),
\begin{equation}
  f_\A(w0^k)-\Tr(P\rho_0)
  =
  \sum_{t=1}^d
  \varepsilon^{k+d-t+1}
  h_{k+d-t}b_t,
  \label{eq:block-cyclic-readout}
\end{equation}
where \(h_j=\ell(A^jv)\).  For \(0\leq k<d\), all terms with
\(t>k+1\) vanish and the term \(t=k+1\) equals
\(\varepsilon^db_{k+1}\).  The remaining terms involve only
\(h_d,\ldots,h_{2d-2}\).  Shrinking \(\varepsilon\) if necessary
gives, uniformly over all prefixes,
\[
  \left|
    \varepsilon^{-d}
    \bigl(f_\A(w0^k)-\Tr(P\rho_0)\bigr)
    -b_{k+1}
  \right|<\frac14.
\]
The same choice makes the normalized deviation for the suffix \(0^d\)
smaller than \(1/4\) in absolute value.  At the cutpoint
\[
  \tau=\Tr(P\rho_0)+\frac{\varepsilon^d}{2},
\]
the suffixes \(0^0,0^1,\ldots,0^{d-1}\) read the signs
\(2b_1-1,\ldots,2b_d-1\), while \(0^d\) is negative for every
prefix.  The resulting finite sign matrix is the transpose of
\(H_d^{(-1)}\), whose sign-rank is \(d+1=M_{\Pi,P}\).  Hence every real
PFA recognizing the same strict-cutpoint language has at least
\(M_{\Pi,P}\) states by \cref{lem:pfa-cutpoint-rank}.
\end{proof}

\begin{theorem}[Dynamic prescribed-readout component law]
\label{thm:dynamic-fixed-readout-component-pfa}
Suppose \(P\ne0\), \(\mathcal A_{\Pi,P}\ne\varnothing\), and some
active component \(C_*\in\mathcal A_{\Pi,P}\) has
\(\nu_{C_*}>0\).  Then the unrestricted-alphabet state cost is the
upper endpoint:
\[
  \operatorname{SC}_{\Pi,P,\mathbb R}^{\mathrm{1g}}(\Hh)
  =M_{\Pi,P}+1.
\]
Writing \(s=|\mathcal A_{\Pi,P}|\), a witness needs at most
\(2s+3\) letters.  If \(P_{C_*}=PZ_{C_*}\) is noncentral in
\(Z_{C_*}\mathcal C_K\), at most \(2s+2\) letters suffice.
\end{theorem}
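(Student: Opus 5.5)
The upper bound $\operatorname{SC}\leq M_{\Pi,P}+1$ is already contained in \cref{thm:fixed-readout-component-pfa}. The task is therefore the lower bound $M_{\Pi,P}+1$, and I will obtain it from the dynamic affine lift \cref{lem:dynamic-affine-lift} with $d=M_{\Pi,P}-1$. I need four ingredients: prefixes $x_1,\dots,x_d$ that read off independent signs; an anchor prefix $x_*$ that is accepted under every suffix $y_\eta$; and a carrier prefix $z$ together with a letter $a$ whose powers $za^{qn}y_\eta$ cross the cutpoint infinitely often on both sides for every $q$.

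For the static part I will reuse the channels of \cref{thm:fixed-readout-component-pfa}, transposed. Take the two symbols $\Phi_0,\Phi_1$ acting on active trace-zero coordinates by $x\mapsto\varepsilon Ax+\varepsilon bv$, built from the block-cyclic impulse system. The prefixes $x_i$ and the suffixes $y_\eta$ should realize the pattern $f(x_iy_\eta)-\tau\approx\eta_i$. To get this I will use the dual impulse structure: prefix $0^{i-1}$ after a ``write'' letter, and suffix a binary word encoding $\eta$ followed by readout. Equivalently, I take the transpose of the matrix in \eqref{eq:block-cyclic-readout}, with suffixes $w$ and prefixes from a reset letter. Concretely, I add per-component letters as needed. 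These comprise a reset letter (a replacement channel to $\rho_0$), letters encoding the two input values, and one more letter, which accounts for the $2s+O(1)$ letter budget.

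For the anchor I take a prefix whose trace-zero coordinate is a large multiple (still of order $\varepsilon^d$) of a vector pairing positively with every suffix contribution. Simplest is to raise the baseline: inside an active component, move component weight toward $P_C$. This is possible because $P_C$ is nontrivial and the component is mobile, so relative weight can be shifted. The anchor's acceptance then exceeds $\tau$ by a margin that uniformly dominates the $\eta$-dependent perturbation.

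The main obstacle is the carrier. Its oscillating orbit $za^k$ must cross every $y_\eta$ on both sides along every arithmetic subsequence. This is exactly where $\nu_{C_*}>0$ is used, and I will obtain it from a noncommutative rotation. On $Z_{C_*}\mathcal C_K$, choose a letter $a$ acting by conjugation with a unitary in $SU(m_\lambda)$ for some block with $m_\lambda\geq2$. The letter acts as the identity on the other coordinates (or as a unitary channel, which is covariant and conservative). Its relative eigenphase $\theta$ is chosen with $\theta/2\pi$ irrational. The carrier state is prepared with a traceless coherence in that block that each suffix effect sees with nonzero first-order coefficient. The expansion then mirrors the carrier argument in \cref{thm:prepare-test}: a sinusoid in $k\theta$ whose amplitude beats the remainder, and density of $qn\theta$ gives two-sided crossing. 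I must check that the suffix effects $\Phi_{y_\eta}^*(P)$ keep a nonzero overlap with this rotating coherence. If $P_{C_*}$ is noncentral, $P$ itself already has a traceless part in $C_*$, so the rotation letter doubles as a usable readout direction and one letter is saved ($2s+2$). If $P_{C_*}$ is central, I add one letter, a unitary conjugation or a mixing channel, that tilts the effect into a noncentral direction, giving $2s+3$. With all four ingredients in place, \cref{lem:dynamic-affine-lift} yields $d+2=M_{\Pi,P}+1$ states, which matches the upper bound.
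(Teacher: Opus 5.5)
Your outer frame matches the paper's: the upper bound comes from \cref{thm:fixed-readout-component-pfa}, the lower bound from \cref{lem:dynamic-affine-lift} with $d=M_{\Pi,P}-1$, and the clock is an irrational commutant rotation in a block with $m_\lambda\geq2$. The static shattering step, however, fails as written.

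In the lift the $2^d$ sign columns must be indexed by suffixes, and you propose to get them from binary suffix words over $\Phi_0,\Phi_1$. Both letters have the same linear part $\varepsilon A$ on the trace-zero slice; the bit enters only through the injection $\varepsilon b v$. Hence for a suffix $y$ of length $m$ and a prefix state with trace-zero part $x^\circ$,
$f(xy)=\Tr(P\rho_0)+\varepsilon^m\ell(A^mx^\circ)+g(y)$.
So the prefix--suffix coupling depends only on $|y|$, and the bits add only a row-independent shift. Each length therefore gives one fixed column vector plus a common offset, i.e.\ at most $d+1$ threshold patterns. For the real-spectrum $A$ of \cref{lem:block-cyclic-impulse}, the ordering of $(\ell(A^mx_i^\circ))_i$ changes only polynomially often in $m$, so the family realizes only polynomially many of the $2^d$ columns of $H_d$.

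The witness built on \eqref{eq:block-cyclic-readout} cannot simply be transposed either. Transposing a Hankel matrix means reversing words and transposing the letter matrices. The reversed block-cyclic system accumulates $\varepsilon b\,v^{\mathsf T}x$ into its read-out coordinate and has no conserved trace functional, so it does not fit in the fixed-trace slice.

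What is missing is a suffix family with genuinely different linear parts that reaches every orthant with $O(1)$ letters per component. The paper takes this from the four-adic orthant-coding lemma of \cite{ChenWuQuadratic}: $R_C\in O(d_C)$ and $u_C$ with $\operatorname{sign}(R_C^Nu_C)$ running through all of $\{\pm1\}^{d_C}$. It implements the contracted maps $T_CR_C^{\mathsf T}T_C^{-1}$ as one tester letter $t_C$ per active component via Choi-interior affine channels. The $d$ prefix rows come from one nilpotent-shift letter $p$ and $s-1$ selectors $s_C$.

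This missing structure also undercuts your anchor and carrier.
\begin{itemize}
\item \textbf{Anchor.} Tilting the anchor toward $P_C$ puts its excess in a trace-zero direction. Every $\eta$-dependent suffix contracts and transforms that direction along with the static signals, so it is not a margin uniform in $\eta$. The paper instead takes a high power of the nilpotent letter $p$, which annihilates the perturbation, so every test returns exactly the baseline. It places $\tau$ a small distance $\delta$ above that baseline. It then appends the measure-and-reprepare letter $h$, built from $H_C=\tau(1+\gamma)Z_C-\gamma P_C$ and satisfying $f(wh)-\tau=-\gamma(f(w)-\tau)$, and reverses the requested signs. This turns the uniformly negative baseline row into the common positive anchor.
\item \textbf{Carrier.} The overlap you defer (``I must check that the suffix effects keep a nonzero overlap'') is exactly where the clock must be tied to the tests. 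The paper makes the rotation plane $\spanR\{X,Y\}$ of $\Ad_U$ the first coding plane, with $T_{C_*}e_1=X$ and $T_{C_*}e_2=Y$. Each tester suffix then reads the carrier through the first-plane projection of $R_{C_*}^{N}u_{C_*}$, whose nonzero norm gives a nonzero sinusoid amplitude. $\delta$ is then taken below all these amplitudes.
\item \textbf{Central versus noncentral readout.} The distinction is not whether $P_{C_*}$ has a trace-zero part; it always does when $C_*$ is active. It is whether $\zeta_{C_*}=\Tr(P_{C_*}\,\cdot\,)$ is nonzero on the noncommutative corner where $U$ rotates. When it is not, one decoder letter $g$ repairs this.
\end{itemize}
The letter count is then $a$, $p$, $h$, $s$ testers, $s-1$ selectors, and possibly $g$, i.e.\ $2s+2$ or $2s+3$.
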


\begin{proof}
The upper bound is already contained in
\cref{thm:fixed-readout-component-pfa}.  We adapt the dynamic
shattering construction of \cite{ChenWuQuadratic} to the fixed-weight
symmetry-reduced slice; the details below identify the two points at
which the component constraint matters.

For \(C\in\mathcal A_{\Pi,P}\), put
\[
 V_C=\{X\in\Herm(Z_C\mathcal C_K):\Tr X=0\},
 \qquad d_C=\dim V_C=D_C-1,
\]
and choose positive component weights \(p_C\).  Relative to
\(\omega_C=Z_C/\Tr Z_C\), the reachable affine slice is the direct
product of the \(p_C\omega_C+V_C\), of total affine dimension
\[
 d=\sum_{C\in\mathcal A_{\Pi,P}}d_C=M_{\Pi,P}-1.
\]
We repeatedly use the following Choi-interior fact.  If \(L_C\) is
Hermiticity preserving and trace annihilating and \(v_C\in V_C\),
then, for all sufficiently small \(\epsilon>0\),
\begin{equation}
 \Phi_{C,L,v}(X)
 =\Tr(X)\omega_C
  +\epsilon L_C\!\left(\mathcal T_K(X)-\Tr(X)\omega_C\right)
  +\epsilon\Tr(X)v_C
 \label{eq:choi-interior-affine}
\end{equation}
is completely positive, trace preserving, and \(K\)-covariant.  This
follows because the replacement term has positive-definite Choi
matrix, while the perturbation is trace annihilating.  Taking the
direct sum over components preserves every \(Z_C\).  Consequently any
finite list of sufficiently contracted blockwise real affine maps on
the displayed slice can be implemented by \(\Pi\)-conservative
channels.

For each active component, the four-adic orthant-coding lemma of
\cite{ChenWuQuadratic} supplies \(R_C\in O(d_C)\) and
\(u_C\in\mathbb R^{d_C}\) such that, for every
\(\eta_C\in\{\pm1\}^{d_C}\), some \(N_C(\eta_C)\geq0\) satisfies
\[
 \operatorname{sign}\bigl(R_C^{N_C(\eta_C)}u_C\bigr)=\eta_C.
\]
The functional \(\zeta_C(X)=\Tr(P_CX)\) is nonzero on \(V_C\).
Since \(\nu_{C_*}>0\), choose traceless Hermitian \(X,Y\) in a
two-dimensional multiplicity corner and a commutant unitary \(U\) for
which \(\Ad_U\) rotates \(\spanR\{X,Y\}\) through an angle
\(\theta\) with \(\theta/(2\pi)\notin\mathbb Q\).  If
\(P_{C_*}\) is noncentral, the corner basis may be chosen so that
\(\zeta_{C_*}\) is nonzero on both \(X\) and \(Y\).  If it is central,
choose an invertible real map \(G:V_{C_*}\to V_{C_*}\) such that
\(\zeta_{C_*}(GX)\) and \(\zeta_{C_*}(GY)\) are nonzero, implement its
contracted version as one decoder letter \(g\) by
\eqref{eq:choi-interior-affine}, and set
\(\widetilde\zeta_{C_*}=\zeta_{C_*}\circ G\).  Otherwise set
\(\widetilde\zeta_C=\zeta_C\).
On the other active components, \(g\) acts by a positive identity
contraction, so it preserves all previously chosen signs.

After a rotation and common scaling in the first coding plane, choose
\(T_{C_*}:\mathbb R^{d_{C_*}}\to V_{C_*}\) with
\[
 T_{C_*}e_1=X,\qquad T_{C_*}e_2=Y,
 \qquad
 \widetilde\zeta_{C_*}(T_{C_*}x)=u_{C_*}^{\mathsf T}x.
\]
For every other component choose an isomorphism \(T_C\) with
\(\zeta_C(T_Cx)=u_C^{\mathsf T}x\).  Let \(J_C\) be the nilpotent
shift on this coordinate space.  Equation
\eqref{eq:choi-interior-affine} implements, up to positive scalar
contractions, the reduced maps
\(T_CJ_CT_C^{-1}\) and \(T_CR_C^{\mathsf T}T_C^{-1}\).
Use them simultaneously as a prefix letter \(p\) and one tester
\(t_C\) per component.  For every \(C\ne C_*\), a selector \(s_C\)
resets all perturbations and injects \(T_Ce_1\) in component \(C\).
Starting with a small \(T_{C_*}e_1\) perturbation, the prefixes
\[
 p^j\quad(C=C_*),
 \qquad
 s_Cp^j\quad(C\ne C_*),
 \qquad 0\leq j<d_C,
\]
supply one row for each of the \(d\) affine coordinates, while a
sufficiently high power of \(p\) supplies the baseline anchor.
Independent powers of the \(t_C\)'s realize every prescribed
coordinate sign vector; in the central case each test word ends with
\(g\).  The finitely many deviations are nonzero;
choose a cutpoint \(\tau\) a positive distance \(\delta\) above the
baseline, smaller than all their absolute margins.

Choose \(\gamma>0\) so small that
\[
 H_C=\tau(1+\gamma)Z_C-\gamma P_C
 \qquad(C\in\mathcal A_{\Pi,P})
\]
satisfies \(0<H_C<Z_C\).  Measuring \(\{H_C,Z_C-H_C\}\) inside each
component and preparing a normalized state in \(P_C\) or
\(Z_C-P_C\), respectively, defines a \(K\)-covariant,
\(\Pi\)-conservative channel \(h\) for which, on the fixed-weight
slice,
\[
 f(wh)-\tau=-\gamma\bigl(f(w)-\tau\bigr).
\]
Append \(h\) and reverse all requested coordinate signs whenever the
baseline anchor must be made positive.  This gives the one-sided
\(H_d^{(+1)}\) relations with a common positive anchor.

It remains to use the clock already built into the first coding plane.
Let \(a\) be the unitary channel of \(U\) and take a carrier with a
sufficiently small \(X\)-perturbation.

For every four-adic test suffix the centered carrier value is then a
sinusoid whose amplitude is the positive contraction factor times the
norm of the first-plane projection of
\(R_{C_*}^{N_{C_*}}u_{C_*}\), hence is nonzero.  Because there are only
finitely many suffixes, decrease \(\delta\) so it is smaller than every
such amplitude as well as every static margin.  Irrationality makes
each subsequence \(qn\theta\) dense modulo \(2\pi\), and \(h\) only
reflects the centered sinusoid.  Thus every test crosses the cutpoint
in both directions infinitely often along every arithmetic
subsequence.  The dynamic affine lift,
\cref{lem:dynamic-affine-lift}, gives
the lower bound \(d+2=M_{\Pi,P}+1\).  The letters are \(a,p,h\), the
\(s\) testers, the \(s-1\) selectors, and, only in the central-readout
case, \(g\), giving the stated counts.
\end{proof}

\begin{corollary}[Binary component-conserving state law]
\label{thm:component-pfa}
With \(M_\Pi=D_K-|\Pi|+1\),
\[
  M_\Pi
  \leq
  \operatorname{SC}_{\Pi,\mathbb R}^{\mathrm{1g},(2)}(\Hh)
  \leq
  \operatorname{SC}_{\Pi,\mathbb R}^{\mathrm{1g}}(\Hh)
  \leq
  M_\Pi+1.
\]
If \(M_\Pi=1\), both state costs equal one.
If \(M_\Pi>1\) and some component has \(\nu_C>0\), then
\[
  \operatorname{SC}_{\Pi,\mathbb R}^{\mathrm{1g}}(\Hh)
  =M_\Pi+1.
\]
\end{corollary}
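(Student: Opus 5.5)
The corollary follows by choosing a suitable readout \(P\) and applying the prescribed-readout results, \cref{thm:fixed-readout-component-pfa,thm:dynamic-fixed-readout-component-pfa}, in which \(P\) is fixed.

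\textbf{Upper bound.} Fix any \(\Pi\)-conservative automaton with readout \(P\). By \cref{thm:fixed-readout-component-capacity}, its behavior rank is at most \(M_{\Pi,P}\). We have \(M_{\Pi,P}\leq M_\Pi\), since \(M_{\Pi,P}\) sums \(D_C-1\) only over the active components. The shared component-trace coordinate is a normalized constant mode. So \cref{thm:fixed-readout-component-pfa}, or directly \cref{thm:codimension-one-pfa}, bounds the PFA cost by \(M_{\Pi,P}+1\leq M_\Pi+1\). This gives the upper bound uniformly over all alphabets. Binary automata form a subclass, so the middle inequality is immediate.

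\textbf{Lower bound.} Choose \(P\) so that every component with \(D_C\geq2\) is active. For each such component, pick a projector \(0<Q_C<Z_C\) in \(Z_C\mathcal C_K\); it exists because a finite-dimensional \(C^*\)-algebra of dimension at least two contains a nontrivial projection, as in the proof of \cref{thm:component-channel-capacity}. Set \(P=\sum_{C:D_C\geq2}Q_C\). Then \(\mathcal A_{\Pi,P}=\{C:D_C\geq2\}\). Components with \(D_C=1\) contribute \(D_C-1=0\) in either formula, so \(M_{\Pi,P}=M_\Pi\). If \(M_\Pi>1\), this active set is nonempty, and the binary part of \cref{thm:fixed-readout-component-pfa} yields a binary witness of cost at least \(M_\Pi\).

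\textbf{Case \(M_\Pi=1\).} Every \(D_C=1\), so each \(P_C\) is \(0\) or \(Z_C\) for any invariant \(P\). Component conservation then makes the acceptance probability constant, and one state suffices and is necessary.

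\textbf{Equality under \(\nu_C>0\).} Suppose some component \(C_*\) has \(\nu_{C_*}>0\). Then \(D_{C_*}\geq 4\geq2\), so \(C_*\) lies in the active set of the \(P\) built above. Moreover, we may take \(Q_{C_*}\) noncentral, for example a rank-one projector inside a block with \(m_\lambda\geq2\). The hypotheses of \cref{thm:dynamic-fixed-readout-component-pfa} then hold, giving unrestricted-alphabet cost \(M_{\Pi,P}+1=M_\Pi+1\). Combined with the upper bound, this is the claimed equality.

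The only step needing care is the bookkeeping showing \(M_{\Pi,P}=M_\Pi\) for the chosen \(P\): it must include every component with \(D_C\geq2\), and in particular \(C_*\). Everything else is quotation of earlier results.
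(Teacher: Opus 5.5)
Your proposal is correct and follows essentially the paper's proof. You choose an invariant readout that is nontrivial on every component with $D_C\geq2$, so that $M_{\Pi,P}=M_\Pi$. You then invoke the binary and dynamic prescribed-readout theorems for the two lower bounds, and combine the constant-normalized capacity representation with the codimension-one embedding for the upper bound. The only differences are cosmetic. You route the upper bound through $M_{\Pi,P}\leq M_\Pi$ rather than directly through \cref{thm:component-channel-capacity}. Choosing $Q_{C_*}$ noncentral is not needed for the equality, since it only saves the decoder letter.
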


\begin{proof}
When \(M_\Pi>1\), choose \(P_C\) nontrivial on every component with
\(D_C>1\).  Then \(M_{\Pi,P}=M_\Pi\), and
\cref{thm:fixed-readout-component-pfa} gives the lower bound.  The
upper bound holds for every readout by
\cref{thm:component-channel-capacity,thm:codimension-one-pfa}.  If
\(M_\Pi=1\), all component algebras are one-dimensional and every
acceptance function is constant.
When some \(\nu_C>0\), choose the preceding readout so that this
component is active and apply
\cref{thm:dynamic-fixed-readout-component-pfa}; its lower bound meets
the universal upper bound.
\end{proof}

The singleton partition admits the same obstruction without charged
Kraus operators.  Write
\[
  P=\bigoplus_\lambda Q_\lambda\otimes I_{V_\lambda},
  \qquad
  \Lambda_P=\{\lambda:0<Q_\lambda<I_{\mathcal M_\lambda}\},
\]
and, for \(P\ne0\), set
\[
  M_{\mathrm{K0},P}
  =
  1+\sum_{\lambda\in\Lambda_P}(m_\lambda^2-1).
\]
Write \(\operatorname{SC}_{\mathrm{K0},P,\mathbb R}^{\mathrm{1g},(2)}\)
for the worst binary fixed-\(P\) cost within the Kraus-wise charge-zero
class; omit the superscript \((2)\) for unrestricted finite alphabets,
and omit \(P\) when maximizing over invariant readouts.

\begin{corollary}[Charge-zero state law]
\label{cor:kraus-zero-pfa}
If \(P\ne0\) and \(\Lambda_P\ne\varnothing\), then
\[
  M_{\mathrm{K0},P}
  \leq
  \operatorname{SC}_{\mathrm{K0},P,\mathbb R}^{\mathrm{1g},(2)}(\Hh)
  \leq
  M_{\mathrm{K0},P}+1.
\]
Moreover,
\[
  \operatorname{SC}_{\mathrm{K0},P,\mathbb R}^{\mathrm{1g}}(\Hh)
  =M_{\mathrm{K0},P}+1.
\]
If \(P=0\) or \(\Lambda_P=\varnothing\), the state cost is one.
Maximizing over invariant readouts gives, for \(\nu_K>0\),
\[
  1+\nu_K
  \leq
  \operatorname{SC}_{\mathrm{K0},\mathbb R}^{\mathrm{1g},(2)}(\Hh)
  \leq
  \nu_K+2,
\]
and
\[
  \operatorname{SC}_{\mathrm{K0},\mathbb R}^{\mathrm{1g}}(\Hh)
  =\nu_K+2,
\]
whereas \(\nu_K=0\) gives state cost one.
\end{corollary}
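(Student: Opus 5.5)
The plan is to realize the Kraus-wise charge-zero class as the singleton-partition case of the component-conserving theory, and then to rerun the prescribed-readout constructions inside that class. First I would check that every Kraus-wise charge-zero channel is \(\Pi_{\mathrm{sing}}\)-conservative, where \(\Pi_{\mathrm{sing}}\) is the partition into singletons; this is already done in the proof of \cref{cor:kraus-zero-capacity}. For singletons, \(\mathcal A_{\Pi_{\mathrm{sing}},P}=\Lambda_P\) and \(D_{\{\lambda\}}-1=m_\lambda^2-1\), so \(M_{\Pi_{\mathrm{sing}},P}=M_{\mathrm{K0},P}\). The upper bound \(M_{\mathrm{K0},P}+1\) then follows from \cref{thm:fixed-readout-component-pfa}, since the charge-zero class is contained in the \(\Pi_{\mathrm{sing}}\)-conservative class. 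The degenerate cases \(P=0\) and \(\Lambda_P=\varnothing\) give a constant acceptance function by the same multiplicative-domain argument, so they cost one state.

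For the lower bounds, both the binary bound \(M_{\mathrm{K0},P}\) and the dynamic bound \(M_{\mathrm{K0},P}+1\), I would reuse the constructions of \cref{thm:fixed-readout-component-pfa,thm:dynamic-fixed-readout-component-pfa}. The point to verify is that on a singleton component every channel used there can be chosen Kraus-wise charge zero. On the \(\lambda\)-summand, a channel of the form \eqref{eq:choi-interior-affine} acts through the twirl as \(\phi_\lambda\otimes\mathrm{id}\) on \(\mathcal M_\lambda\otimes V_\lambda\), up to the factor \(I_{V_\lambda}/d_\lambda\) on outputs. I would replace it by the multiplicity-space channel \(\phi_\lambda\), which is CPTP on \(\operatorname{End}(\mathcal M_\lambda)\) by the same Choi-interior argument as in \cref{cor:kraus-zero-capacity}, and lift its Kraus operators \(B_r\) to \(B_r\otimes I_{V_\lambda}\in\mathcal C_K\). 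Behaviorally this agrees with the covariant version on invariant states, and invariant states are the only ones reached from \(\widetilde\rho_0\). The unitary clock letter \(a\) in the dynamic construction is conjugation by a commutant unitary, so it is charge zero. The threshold-reflection channel \(h\) must also be charge zero. It measures \(\{H_\lambda,I-H_\lambda\}\) with \(H_\lambda\in\Herm(\mathcal M_\lambda)\) and then prepares states in \(Q_\lambda\) or \(I-Q_\lambda\). Its Kraus operators are \(|\psi_i\rangle\langle e_j|\sqrt{H_\lambda}\)-type products on \(\mathcal M_\lambda\), tensored with \(I_{V_\lambda}\), so all of them lie in \(\mathcal C_K\). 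The selector and decoder letters are built the same way. Because no population ever crosses between labels, nothing charged is needed.

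Maximizing over readouts, I would take \(Q_\lambda\) nontrivial in every block with \(m_\lambda\geq2\), so that \(\Lambda_P\) consists of all such blocks and \(M_{\mathrm{K0},P}=1+\nu_K\). Since \(\nu_K>0\), some active block has \(\nu_{\{\lambda\}}=m_\lambda^2-1>0\), so the dynamic theorem applies and gives \(\nu_K+2\). \Cref{cor:kraus-zero-capacity} combined with \cref{thm:codimension-one-pfa} caps every readout at \(\nu_K+2\). When \(\nu_K=0\), every \(\Lambda_P\) is empty and the cost is one.

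The main obstacle is confirming that the preparation channels \(h\), and the reset/injection used by the selectors, admit Kraus operators inside \(\mathcal C_K\). Equivalently, each must be an honest CPTP map on the multiplicity spaces, tensored with the identity on \(V_\lambda\), rather than merely a twirled map. I would verify this block by block, using the fact that on singleton components all the reduced maps are CPTP maps on \(\operatorname{End}(\mathcal M_\lambda)\) by construction.
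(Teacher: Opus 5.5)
Your proposal is correct and follows essentially the paper's own route. You treat the Kraus-wise charge-zero class as the singleton-partition case and take the upper bound from the fixed-readout component law. You rerun the binary and dynamic witnesses with every letter, including the clock, selectors, decoder and the reflection $h$, realized as a CPTP map on $\operatorname{End}(\mathcal M_\lambda)$ and lifted by Kraus operators $B\otimes I_{V_\lambda}\in\mathcal C_K$. You then make every $Q_\lambda$ nontrivial to obtain the worst-readout formula. The one hypothesis you leave implicit is immediate: every $\lambda\in\Lambda_P$ has $m_\lambda\ge 2$, hence $\nu_{\{\lambda\}}>0$, as the dynamic theorem requires.
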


\begin{proof}
Use the singleton components in the construction of
\cref{thm:fixed-readout-component-pfa}, but perform each Choi-interior
perturbation directly on \(\operatorname{End}(\mathcal M_\lambda)\).
If its two multiplicity channels have Kraus operators
\(B_{\lambda,b,r}\), lift them physically as
\[
  B_{\lambda,b,r}\otimes I_{V_\lambda}.
\]
Every lifted Kraus operator lies in \(\mathcal C_K\), so the binary
witness is Kraus-wise charge zero and realizes
\(H_{M_{\mathrm{K0},P}-1}^{(-1)}\).  The upper bound follows from the
fixed-readout singleton capacity and
\cref{thm:codimension-one-pfa}.  Choosing every \(Q_\lambda\)
nontrivial when \(m_\lambda\geq2\) yields the worst-readout formula.
For singleton components, every channel in the dynamic construction of
\cref{thm:dynamic-fixed-readout-component-pfa} can likewise be
implemented on \(\mathcal M_\lambda\) and lifted with Kraus operators
\(B\otimes I_{V_\lambda}\).  It is therefore Kraus-wise charge zero
and gives the two unrestricted-alphabet equalities.
When no such multiplicity block exists, charge-zero dynamics preserve
all invariant populations and every invariant-readout behavior is
constant.
\end{proof}

For one full mobility component, the prescribed-readout theorem gives
the following uniform law.

\begin{theorem}[Full-mobility noncommutativity dichotomy]
\label{thm:binary-channel-shift}
Assume $D_K\geq2$ and fix a nontrivial invariant orthogonal projector
$0<P<I$.  Let
\[
\operatorname{SC}_{K,P,\R}^{\mathrm{1g},(2)}(\Hh)
\]
be the worst strict-cutpoint real-PFA state cost over binary
$K$-covariant general-channel automata on $\Hh$ with accepting
projector $P$, without an additional component-conservation constraint.
Use \(\operatorname{SC}_{K,P,\R}^{\mathrm{1g}}(\Hh)\) for arbitrary
finite alphabets.
Then
\[
  D_K
  \leq
  \operatorname{SC}_{K,P,\R}^{\mathrm{1g},(2)}(\Hh)
  \leq
  D_K+1.
\]
If \(\nu_K>0\), then
\[
  \operatorname{SC}_{K,P,\R}^{\mathrm{1g}}(\Hh)=D_K+1;
\]
a witness uses at most five letters, or four when \(P\) is noncentral
in \(\mathcal C_K\).  If \(\nu_K=0\), then instead
\[
  \operatorname{SC}_{K,P,\R}^{\mathrm{1g},(2)}(\Hh)
  =
  \operatorname{SC}_{K,P,\R}^{\mathrm{1g}}(\Hh)
  =D_K.
\]
\end{theorem}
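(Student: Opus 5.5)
The plan is to specialize the component results to the one-component partition \(\Pi=\{\Lambda\}\), with \(\Lambda\) the set of all occurring isotypic labels, and then handle the commutative case separately. Every \(K\)-covariant channel is trace preserving, so \(\Phi^*(I)=I\), and \(Z_\Lambda=I\). Hence every covariant automaton is automatically \(\{\Lambda\}\)-conservative, and the unconstrained worst-case costs coincide with \(\operatorname{SC}_{\Pi,P,\R}^{\mathrm{1g},(2)}\) and \(\operatorname{SC}_{\Pi,P,\R}^{\mathrm{1g}}\) for \(\Pi=\{\Lambda\}\). With \(0<P<I\), the single component is active, so \(\mathcal A_{\Pi,P}=\{\Lambda\}\), \(s=1\), and \(M_{\Pi,P}=1+(D_K-1)=D_K\). \Cref{thm:fixed-readout-component-pfa} then gives \(D_K\le\operatorname{SC}^{(2)}\le\operatorname{SC}\le D_K+1\) directly.

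If \(\nu_K>0\), the single component has \(\nu_{\Lambda}=\nu_K>0\). \Cref{thm:dynamic-fixed-readout-component-pfa} therefore yields \(\operatorname{SC}=D_K+1\), using at most \(2s+3=5\) letters, or \(2s+2=4\) when \(P\) is noncentral in \(\mathcal C_K\). No selectors are needed because \(s-1=0\), which matches the count.

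If \(\nu_K=0\), every \(m_\lambda=1\), so \(\mathcal C_K\) is commutative and equal to its center \(\C^{D_K}\). The task is to push the upper bound down to \(D_K\) for every finite alphabet. I would argue directly at the level of populations. Every reachable state \(\Phi_x(\widetilde\rho_0)\) lies in \(\Herm(\mathcal C_K)\cong\R^{D_K}\) by \cref{thm:channel-exact-rank}, and its coordinates \(\Tr(Z_\lambda\,\cdot)\) form a probability vector. Each symbol channel, restricted to invariant operators, is a map \(X\mapsto\mathcal T_K\Phi_a(X)\). Since \(\Phi_a\) is trace preserving and positive and \(\Phi_a\) commutes with \(\mathcal T_K\), this restriction acts on the population vector \((\Tr Z_\lambda X)_\lambda\) as a column-stochastic \(D_K\times D_K\) matrix. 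On invariant \(X=\sum_\lambda x_\lambda Z_\lambda/n_\lambda\), the output populations are \(\Tr(Z_\mu\Phi_a(Z_\lambda/n_\lambda))\ge0\), and these sum to one over \(\mu\).

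The readout \(P=\sum_{\lambda}Z_\lambda\cdot[\lambda\in\Lambda_P]\) is the indicator of an accepting set of labels. The automaton is therefore literally a \(D_K\)-state PFA: its states are the isotypic labels, its initial distribution is \(p_\lambda=\Tr(Z_\lambda\widetilde\rho_0)\), its transitions are the transposed stochastic matrices, its accepting states are those in \(\Lambda_P\), and its cutpoint is the same \(\tau\). It recognizes \(L_{\A,\tau}\) exactly, with no centering overhead. This gives \(\operatorname{SC}\le D_K\) over any alphabet, which combines with the binary lower bound \(D_K\) to give equality for both costs.

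The main point to verify carefully is this last step. One must check that covariance plus \(P\in\mathcal C_K\) really reduces the dynamics to the commutative invariant algebra, via the twirl identity \(f_\A(w)=\Tr(P\Phi_w(\widetilde\rho_0))\), so that off-diagonal (intercomponent) data never enters. One must also check that positivity yields nonnegative transition entries. Both follow from earlier remarks in the paper.
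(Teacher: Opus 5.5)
Your proposal is correct and follows the paper's own proof. You specialize to the one-component partition, where \(M_{\Pi,P}=D_K\), and invoke the binary and dynamic prescribed-readout component laws with \(s=1\). In the commutative case you observe that the symmetry-reduced automaton is itself a \(D_K\)-state PFA on the central populations. Your explicit checks fill in details the paper leaves implicit: trace preservation makes every covariant automaton \(\{\Lambda\}\)-conservative, and positivity plus covariance give a column-stochastic population map.
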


\begin{proof}
For the one-component partition, the nontrivial projector \(P\) is
active and \(M_{\Pi,P}=D_K\).  Apply
\cref{thm:fixed-readout-component-pfa}.  If \(\nu_K>0\),
\cref{thm:dynamic-fixed-readout-component-pfa} gives the matching
unrestricted lower bound and the alphabet counts with \(s=1\).

If \(\nu_K=0\), then
\(\mathcal C_K\cong\mathbb C^{D_K}\).  On invariant states every
covariant channel is a stochastic map of the \(D_K\) minimal central
populations, and every invariant projector reads a subset of those
populations.  Thus the reduced automaton is itself a \(D_K\)-state
PFA, while the binary lower bound above is \(D_K\).
\end{proof}

The two branches are qualitatively different worst-case memory laws.  When $\mathcal C_K$ is commutative, every symmetry-reduced state is a probability vector over minimal central sectors and every covariant channel acts stochastically on those coordinates, so the full-mobility class has worst state cost $D_K$.  For the fixed nontrivial invariant readout of the theorem, a noncommutative multiplicity block raises the unrestricted worst-case cost to $D_K+1$.  The universal upper bound makes this increase exact: for the full-mobility class with a fixed nontrivial invariant readout, noncommutativity has a one-state classical price.

\begin{corollary}[Fixed-rank general-channel law]
\label{cor:binary-fixed-rank-channel}
For $N\geq2$ and $1\leq r\leq N-1$, let
$\operatorname{SC}_{\R}^{\mathrm{1g},(2)}(N,r)$ be the worst
strict-cutpoint real-PFA state cost over binary $N$-dimensional
general-channel automata whose accepting projector has rank $r$.  Then
\[
  N^2
  \leq
  \operatorname{SC}_{\R}^{\mathrm{1g},(2)}(N,r)
  \leq
  N^2+1.
\]
Without the binary restriction,
\[
  \operatorname{SC}_{\R}^{\mathrm{1g}}(N,r)=N^2+1,
\]
and four input letters suffice.  Thus every nontrivial fixed rank,
including rank one, retains the exact quadratic-plus-one state cost.
\end{corollary}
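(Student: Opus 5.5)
This is the trivial-symmetry specialization of \cref{thm:binary-channel-shift}. Take $K=\{1\}$ acting on $\Hh=\C^N$ by the trivial representation. Then every channel is $K$-covariant and every projector is invariant. Hence the binary and unrestricted general-channel classes with accepting rank $r$ are exactly the classes in that theorem, with $P$ any rank-$r$ projector.

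The first step is to identify the invariants. The commutant is $\mathcal C_K=\operatorname{End}(\C^N)=\Mat_N(\C)$, so there is a single isotypic label with $m=N$. This gives $D_K=N^2$ and $\nu_K=N^2-1>0$, since $N\geq2$. A rank-$r$ projector with $1\leq r\leq N-1$ satisfies $0<P<I$. It is also noncentral, because the center of $\Mat_N(\C)$ is scalar and a nontrivial projector is not scalar.

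The second step is to handle the dependence on $P$. The worst-case cost over automata with rank-$r$ readout is a supremum over all rank-$r$ projectors. However, any two such projectors are unitarily conjugate, $P'=VPV^\dagger$. Conjugating the initial state and every channel by $V$ gives a bijection between automata with readout $P$ and automata with readout $P'$. This bijection preserves the acceptance function word by word, so it preserves the language and hence the PFA cost. Therefore $\operatorname{SC}^{\mathrm{1g},(2)}_{\R}(N,r)=\operatorname{SC}^{\mathrm{1g},(2)}_{K,P,\R}(\Hh)$ for any fixed rank-$r$ projector $P$, and likewise without the binary restriction.

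The third step applies \cref{thm:binary-channel-shift}. It gives $N^2\leq\operatorname{SC}^{(2)}\leq N^2+1$, and since $\nu_K>0$ it gives the unrestricted equality $N^2+1$. Because $P$ is noncentral in $\mathcal C_K$, the theorem's witness uses at most four letters.

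There is no genuine obstacle here. The only points requiring care are to confirm that ``noncentral'' holds for every nontrivial rank, including $r=1$, and that no hidden component-conservation constraint enters: the single component $\Pi=\{\Lambda\}$ makes $\Pi$-conservation equivalent to trace preservation.
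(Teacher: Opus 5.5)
Your proposal is correct and matches the paper's proof, which simply applies \cref{thm:binary-channel-shift} with trivial symmetry, using $D_K=N^2$, $\nu_K=N^2-1>0$, and the fact that every projector of rank $1\leq r\leq N-1$ is nontrivial and noncentral in $\Mat_N(\C)$. Your unitary-conjugation step is valid but not needed: the theorem's bounds hold for each fixed rank-$r$ projector, so they pass directly to the supremum over all such projectors.
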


\begin{proof}
Apply \cref{thm:binary-channel-shift} to the trivial symmetry group.
Then $D_K=N^2$, and every rank-$r$ projector with
$1\leq r\leq N-1$ is nontrivial and noncentral.  The noncommutative branch of
that theorem gives the unrestricted equality and the four-letter
count.
\end{proof}

For \(r=1\), the unrestricted equality recovers the quadratic state
law of \cite{ChenWuQuadratic}.  The extension to every
\(1\leq r<N\) uses only that the trace-zero part of the prescribed
readout is nonzero: the Choi-interior tester can align any such
functional with the four-adic dynamic coordinates, so its rank does
not alter the obstruction.

The block-cyclic binary witness also quantifies its finite tests.  For
a fixed readout, prefixes and suffixes have length at most
\(M_{\Pi,P}-1\), and every tested value is separated from the cutpoint
by at least \(\varepsilon^{M_{\Pi,P}-1}/4\).  The component,
charge-zero, and full-mobility corollaries inherit this linear word
length.  These are strict-cutpoint margins and may decrease with
dimension.

\section{Representation-theoretic consequence: Schur--Weyl inversion}

A structural law should predict genuinely different regimes, not merely re-express one quadratic bound.  Schur--Weyl duality provides a sharp test: two mutual commutants act on the same tensor-power Hilbert space, yet the center--commutator law assigns them polynomial and exponential worst-case memory scales.  The contrast isolates symmetry, rather than ambient quantum dimension, as the source of the change.

\subsection{Permutation-equivariant automata}

Let
\[
  \Hh_{n,d}=(\C^d)^{\otimes n}.
\]
Schur--Weyl duality gives
\cite{FultonHarris}
\[
  \Hh_{n,d}
  \cong
  \bigoplus_{\substack{\lambda\vdash n\\\ell(\lambda)\leq d}}
  \mathcal U_\lambda^{(d)}\otimes\mathcal V_\lambda,
\]
where $\mathcal U_\lambda^{(d)}$ is an irreducible $U(d)$-module and
$\mathcal V_\lambda$ is the Specht module of $S_n$.  Write
\[
  m_\lambda^{(d)}=\dim\mathcal U_\lambda^{(d)},
  \qquad
  f^\lambda=\dim\mathcal V_\lambda,
\]
and let
\[
  p_d(n)
  =
  \#\{\lambda\vdash n:\ell(\lambda)\leq d\}.
\]

If all transitions and the accepting projector commute with the
permutation action, then they act on the $U(d)$ factors.  The commutant
is
\[
  \operatorname{End}_{S_n}(\Hh_{n,d})
  \cong
  \bigoplus_{\ell(\lambda)\leq d}
  \operatorname{End}(\mathcal U_\lambda^{(d)}).
\]
Its dimension has the exact form
\[
  \sum_{\ell(\lambda)\leq d}
  \bigl(m_\lambda^{(d)}\bigr)^2
  =
  \binom{n+d^2-1}{d^2-1}.
\]
Indeed,
$\operatorname{End}(\Hh_{n,d})\cong
\operatorname{End}(\C^d)^{\otimes n}$, and conjugation by a permutation
permutes these $n$ tensor factors.  The fixed subspace is
$\operatorname{Sym}^n(\operatorname{End}(\C^d))$, whose dimension is
the displayed binomial coefficient.

\begin{theorem}[Permutation-equivariant state capacity]\label{thm:schur-permutation}
The maximal Hankel rank of an $S_n$-equivariant automaton on
$\Hh_{n,d}$ is
\[
  \mathsf B_{S_n}(n,d)
  =
  1+
  \binom{n+d^2-1}{d^2-1}
  -p_d(n).
\]
For fixed $d\geq2$,
\[
  \mathsf B_{S_n}(n,d)=\Theta(n^{d^2-1}),
\]
and the worst binary strict-cutpoint probabilistic simulation cost has
the same order.
\end{theorem}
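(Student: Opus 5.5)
The plan is to treat $S_n$ as the compact symmetry $K$ acting on $\Hh_{n,d}$, with isotypic labels $\lambda\vdash n$, $\ell(\lambda)\leq d$, and multiplicities $m_\lambda^{(d)}$. By Schur--Weyl duality the roles are swapped relative to the usual notation: the $S_n$-irreducibles are the Specht modules $\mathcal V_\lambda$, and the multiplicity spaces are $\mathcal U_\lambda^{(d)}$. \Cref{thm:commutant-law} together with \cref{thm:alphabet-capacity} (for $q\geq2$) gives the maximal measure-once Hankel rank as $1+\nu_{S_n}$. The commutant identification gives
\[
  \nu_{S_n}
  =\sum_{\ell(\lambda)\leq d}\bigl((m_\lambda^{(d)})^2-1\bigr)
  =\binom{n+d^2-1}{d^2-1}-p_d(n),
\]
because $\dim_\C Z(\mathcal C_K)$ is the number of occurring labels, namely $p_d(n)$. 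The binomial identity is the symmetric-power computation already recorded before the statement. This yields the exact formula for $\mathsf B_{S_n}(n,d)$, with attainment supplied by the binary dense control of \cref{thm:binary-dense} inside the saturation part of \cref{thm:commutant-law}.

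For the asymptotics with $d\geq2$ fixed, I would note that $\binom{n+d^2-1}{d^2-1}$ is a polynomial in $n$ of degree $d^2-1$ with leading coefficient $1/(d^2-1)!$. The partition count satisfies $p_d(n)\leq (n+1)^{d}$, or more sharply $p_d(n)=\Theta(n^{d-1})$. Since $d<d^2-1$ for $d\geq2$, the correction term $p_d(n)$ is of strictly lower order, and $\mathsf B_{S_n}(n,d)=\Theta(n^{d^2-1})$ follows. The one case to check is $d=2$: there $d^2-1=3>d-1=1$, so the argument still applies.

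For the strict-cutpoint statement I would invoke \cref{cor:compact-pfa}. Its upper bound $\nu_{S_n}+2$ already has order $n^{d^2-1}$. The lower bound is $2+\sum_\lambda\lfloor (m_\lambda^{(d)})^2/2\rfloor\geq 2+\tfrac12\nu_{S_n}$, using $\lfloor m^2/2\rfloor\geq (m^2-1)/2$, so it is at least half of the capacity and therefore of the same order. For large $n$ we have $\nu_{S_n}>0$, so the corollary's nondegenerate branch applies.

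The main obstacle is only bookkeeping. First, the multiplicity spaces for the $S_n$ action must be identified with the $U(d)$ irreducibles, which is what makes the center count equal $p_d(n)$. Second, $p_d(n)$ must be shown to be genuinely lower order, which needs an elementary bound on the number of partitions into at most $d$ parts, for example by counting the choices of $d$ part sizes, each at most $n$.
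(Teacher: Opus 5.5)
Your proposal is correct and follows essentially the same route as the paper. The commutant law gives $1+\nu_{S_n}$, where the center of $\bigoplus_{\ell(\lambda)\leq d}\operatorname{End}(\mathcal U_\lambda^{(d)})$ has dimension $p_d(n)$; the symmetric-power count supplies the binomial; $p_d(n)=O(n^{d-1})$ is of lower order; and \cref{cor:compact-pfa} transfers this order to the binary strict-cutpoint cost. Your check that $2+\sum_\lambda\lfloor (m_\lambda^{(d)})^2/2\rfloor\geq 2+\nu_{S_n}/2$, together with $\nu_{S_n}>0$, spells out what the paper's one-line appeal to the binary state law leaves implicit.
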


\begin{proof}
The center of the direct-sum commutant has one scalar coordinate for
each allowed partition, hence dimension $p_d(n)$.  \Cref{thm:commutant-law} gives the exact formula.  Since $p_d(n)=O(n^{d-1})$ for fixed $d$, it is
lower order than the binomial coefficient.  The binary probabilistic
state law then gives the final assertion.
\end{proof}

This state-count result is distinct from time-complexity simulation
results, but it is consistent with the polynomial Schur-basis size that
underlies efficient classical algorithms for sufficiently restrictive
permutation symmetry \cite{AnschuetzEtAl}.

\subsection{Collective-unitary-equivariant automata}

If all transitions and the accepting projector commute instead with
$U^{\otimes n}$ for every $U\in U(d)$, they act on the Specht factors.
The commutant is
\[
  \operatorname{End}_{U(d)}(\Hh_{n,d})
  \cong
  \bigoplus_{\ell(\lambda)\leq d}
  \operatorname{End}(\mathcal V_\lambda).
\]
Set
\[
  A_d(n)
  =
  \sum_{\substack{\lambda\vdash n\\\ell(\lambda)\leq d}}
  (f^\lambda)^2.
\]
By the Robinson--Schensted correspondence, $A_d(n)$ also counts
permutations whose longest decreasing subsequence has length at most
$d$ \cite{StanleyEC2}.  Regev's fixed-strip asymptotics give
\cite{RegevStrips}
\[
  A_d(n)
  =
  \Theta\!\left(
    d^{2n}n^{-(d^2-1)/2}
  \right)
\]
for fixed $d\geq2$.

\begin{theorem}[Collective-unitary state capacity]\label{thm:schur-collective}
The maximal Hankel rank of a collective-$U(d)$-equivariant automaton is
\[
  \mathsf B_{U(d)}(n,d)
  =
  1+A_d(n)-p_d(n).
\]
For fixed $d\geq2$,
\[
  \mathsf B_{U(d)}(n,d)
  =
  \Theta\!\left(
    d^{2n}n^{-(d^2-1)/2}
  \right),
\]
and the worst binary strict-cutpoint probabilistic simulation cost has
the same order.
\end{theorem}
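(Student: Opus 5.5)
The proof will mirror that of \cref{thm:schur-permutation}, with the roles of the two Schur--Weyl factors exchanged. The first step is to identify the multiplicity spaces. For the collective action \(U\mapsto U^{\otimes n}\), the isotypic decomposition in \cref{prop:twirl-reduction} takes the irreducible \(K\)-modules to be \(\mathcal U_\lambda^{(d)}\) and the multiplicity spaces to be the Specht modules \(\mathcal V_\lambda\). Thus \(m_\lambda=f^\lambda\) for \(\ell(\lambda)\leq d\). The displayed commutant \(\bigoplus_\lambda\operatorname{End}(\mathcal V_\lambda)\) then has dimension \(A_d(n)\) and center of dimension \(p_d(n)\), one scalar for each allowed shape. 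Hence
\[
  \nu_{U(d)}=\sum_{\ell(\lambda)\leq d}\bigl((f^\lambda)^2-1\bigr)=A_d(n)-p_d(n),
\]
and \cref{thm:commutant-law} gives exactly \(\mathsf B_{U(d)}(n,d)=1+A_d(n)-p_d(n)\). Attainment comes from that theorem through the two-letter dense control of \cref{thm:binary-dense}. The only check needed is that the collective action is by a compact group, which it is, since \(U(d)\) is compact.

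For the asymptotics, I will insert Regev's estimate \(A_d(n)=\Theta(d^{2n}n^{-(d^2-1)/2})\). It remains to show that the central correction \(p_d(n)=O(n^{d-1})\) is negligible against this. This holds because \(d^{2n}\geq 4^n\) grows exponentially while \(n^{-(d^2-1)/2}\) and \(n^{d-1}\) are only polynomial. For all large \(n\) we therefore have \(A_d(n)/2\leq 1+A_d(n)-p_d(n)\leq 1+A_d(n)\), and the \(\Theta\) bound transfers. Small \(n\) are absorbed into the constants, since both sides are positive there: for \(n\geq2\) the shape \((n-1,1)\) has \(f^\lambda=n-1\geq1\), and in every case \(\mathsf B\geq1\).

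For the state cost, I will apply \cref{cor:compact-pfa} with \(K=U(d)\). Its upper bound \(\nu_K+2\) is \(\Theta(\mathsf B_{U(d)})\) by the previous step. The lower bound \(2+\sum_\lambda\lfloor (f^\lambda)^2/2\rfloor\geq 2+(A_d(n)-p_d(n))/2\) is also \(\Theta(\mathsf B_{U(d)})\). When \(\nu_K=0\), which happens only for tiny \(n\), the cost is \(1\), again within constants.

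The main obstacle is modest: confirming that Regev's constant-factor asymptotics apply uniformly in \(n\) for fixed \(d\), and handling the finitely many small \(n\) where \(\nu_K\) may vanish, for instance \(n=1\). Beyond that the argument is a direct substitution into the commutant law.
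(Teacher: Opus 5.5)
Your proposal is correct and follows the paper's route. You identify the Specht modules as the multiplicity spaces, so the commutant has dimension $A_d(n)$ and center dimension $p_d(n)$, then apply \cref{thm:commutant-law}, treat the polynomial center term as negligible against Regev's asymptotics, and transfer to state cost via \cref{cor:compact-pfa}. Your explicit bound $\sum_\lambda\lfloor (f^\lambda)^2/2\rfloor\geq (A_d(n)-p_d(n))/2$ and your handling of the small-$n$ cases where $\nu_K=0$ only make explicit details the paper's short proof leaves implicit.
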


\begin{proof}
The commutant dimension is $A_d(n)$ and its center again has dimension
$p_d(n)$.  The exact formula follows from \cref{thm:commutant-law}.  The center
term is polynomial in $n$ and therefore negligible relative to Regev's
asymptotic expression.  Apply the binary probabilistic state law.
\end{proof}

\subsection{The inversion on a common Hilbert space}

Combining the two sides gives
\[
  \operatorname{SC}^{(2)}_{S_n}(n,d)
  =
  \Theta(n^{d^2-1})
\]
and
\[
  \operatorname{SC}^{(2)}_{U(d)}(n,d)
  =
  \Theta\!\left(
    d^{2n}n^{-(d^2-1)/2}
  \right).
\]
Therefore
\[
  \frac{\operatorname{SC}^{(2)}_{U(d)}(n,d)}
       {\operatorname{SC}^{(2)}_{S_n}(n,d)}
  =
  \Theta\!\left(
    \frac{d^{2n}}{n^{\frac32(d^2-1)}}
  \right).
\]
The physical Hilbert space, tensor power, and ambient quantum dimension are identical on the two sides.  The two automaton classes are the distinct, generally non-nested classes associated with the two members of the same Schur--Weyl dual pair, and each displayed scale is a maximum over its own class.  The polynomial--exponential inversion therefore isolates which symmetry is preserved, rather than ambient Hilbert-space dimension, as the source of the change in worst-case behavioral memory.

For qubits, $p_2(n)=\lfloor n/2\rfloor+1$, and the formulas become
\[
  \mathsf B_{S_n}(n,2)
  =
  \binom{n+3}{3}-\left\lfloor\frac n2\right\rfloor,
  \qquad
  \operatorname{SC}^{(2)}_{S_n}(n,2)=\Theta(n^3),
\]
whereas
\[
  \mathsf B_{U(2)}(n,2)
  =
  C_n-\left\lfloor\frac n2\right\rfloor,
  \qquad
  \operatorname{SC}^{(2)}_{U(2)}(n,2)
  =
  \Theta\!\left(\frac{4^n}{n^{3/2}}\right).
\]
Equivalently, indexing one internal state of a worst-case simulator requires $3\log_2 n+O(1)$ bits on the permutation-equivariant side and $2n-\frac32\log_2 n+O(1)$ bits on the collective-unitary side.  Thus the state-count inversion becomes a logarithmic-versus-linear separation in the classical memory bits needed to index the simulator states.

\section{Readout geometry and symmetry release}

Structural capacity counts every movable operator coordinate, but operational realization must pass through a fixed accepting boundary.  For reversible conjugation the orbit of that projector identifies the first-order visible component, its Fisher geometry gives the same intrinsic dimension, and subgroup branching quantifies how both total and visible memory change when symmetry is released.

\subsection{The accepting-orbit capacity}

Let \(G\) be a compact Lie group with Lie algebra \(\mathfrak g\),
represented unitarily on \(\Hh\).  We use anti-Hermitian Lie algebra
elements.

\begin{definition}
The accepting-orbit capacity is
\[
  \kappa_G(P)
  =
  \rank
  \left(
  \mathfrak g\longrightarrow\Herm(\Hh),
  \quad X\longmapsto[X,P]
  \right).
\]
\end{definition}

Let
\[
  \mathfrak g_P
  =
  \{X\in\mathfrak g:[X,P]=0\}.
\]

\begin{proposition}\label{prop:orbit-capacity}
The capacity is the orbit dimension:
\[
  \kappa_G(P)
  =
  \dim\mathfrak g-\dim\mathfrak g_P
  =
  \dim(G\cdot P).
\]
Under independent full sector control,
\[
  \kappa_G(P)
  =
  \kappa(\mathbf D,\mathbf r)
  :=
  \sum_\alpha2r_\alpha q_\alpha.
\]
\end{proposition}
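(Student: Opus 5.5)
The plan has two parts: an orbit-dimension computation and an explicit count under full sector control.

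For the first equality, I apply rank--nullity to the real-linear map \(\operatorname{ad}_P:\mathfrak g\to\Herm(\Hh)\), \(X\mapsto[X,P]\). Its values are Hermitian because \([X,P]^\dagger=[P^\dagger,X^\dagger]=[P,-X]=[X,P]\) for anti-Hermitian \(X\). Its kernel is exactly \(\mathfrak g_P\), so \(\kappa_G(P)=\dim\mathfrak g-\dim\mathfrak g_P\).

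For the second equality, I use the standard structure of orbits of compact group actions. The stabilizer \(G_P=\{g:gPg^\dagger=P\}\) is a closed subgroup with Lie algebra \(\mathfrak g_P\). Hence \(G\cdot P\cong G/G_P\) is an embedded submanifold of dimension \(\dim\mathfrak g-\dim\mathfrak g_P\). Equivalently, the differential of \(g\mapsto gPg^\dagger\) at the identity is \(X\mapsto[X,P]\), and its tangent image at the identity is translated to every point by equivariance. This part is routine.

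For the sector formula, I take \(\mathfrak g=\bigoplus_\alpha\mathfrak{su}(D_\alpha)\), which is the content of independent full sector control. Possible central phase directions lie in the kernel anyway, since they commute with the block-diagonal \(P\). The map then splits blockwise, so
\[
\kappa_G(P)=\sum_\alpha\bigl(\dim\mathfrak{su}(D_\alpha)-\dim\mathfrak{su}(D_\alpha)_{P_\alpha}\bigr).
\]
In each block, write \(W_\alpha=A_\alpha\oplus B_\alpha\) with \(P_\alpha\) projecting onto \(A_\alpha\). An anti-Hermitian \(X=\begin{pmatrix}X_{11}&X_{12}\\-X_{12}^\dagger&X_{22}\end{pmatrix}\) has
\[
[X,P_\alpha]=\begin{pmatrix}0&-X_{12}\\-X_{12}^\dagger&0\end{pmatrix}.
\]
Therefore the image of this block is parametrized exactly by the off-diagonal block \(X_{12}\in\Mat_{r_\alpha\times q_\alpha}(\C)\), which is arbitrary. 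Removing the trace condition does not constrain \(X_{12}\), so the image has real dimension \(2r_\alpha q_\alpha\). Equivalently, the stabilizer is \(\mathfrak s(\mathfrak u(r_\alpha)\oplus\mathfrak u(q_\alpha))\), of dimension \(r_\alpha^2+q_\alpha^2-1\), and \(D_\alpha^2-1-(r_\alpha^2+q_\alpha^2-1)=2r_\alpha q_\alpha\). Summing over \(\alpha\) gives \(\kappa(\mathbf D,\mathbf r)\).

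The only point requiring care is the convention for inactive sectors. When \(r_\alpha\in\{0,D_\alpha\}\) or \(D_\alpha=1\), the block contributes \(0=2r_\alpha q_\alpha\), so the formula needs no exceptional case.
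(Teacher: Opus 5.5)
Your proposal is correct and follows essentially the same route as the paper: rank--nullity for the differential \(X\mapsto[X,P]\) of the conjugation orbit map (whose kernel is \(\mathfrak g_P\)), followed by the blockwise stabilizer count \(S(U(r_\alpha)\times U(q_\alpha))\). That count gives \((D_\alpha^2-1)-(r_\alpha^2+q_\alpha^2-1)=2r_\alpha q_\alpha\), summed over the independently controlled sectors. Your explicit off-diagonal computation of the image of \(X\mapsto[X,P]\) checks the same number directly, and your \(G/G_P\) remark spells out the identification with \(\dim(G\cdot P)\) that the paper leaves implicit.
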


\begin{proof}
The differential of the conjugation orbit map at the identity is
\(X\mapsto[X,P]\), whose kernel is \(\mathfrak g_P\).  Rank--nullity
gives the first identity.  In a \(D_\alpha\)-dimensional sector the
stabilizer of a rank-\(r_\alpha\) projector is
\[
  S\bigl(U(r_\alpha)\times U(q_\alpha)\bigr),
\]
so the orbit dimension is
\[
  (D_\alpha^2-1)
  -
  (r_\alpha^2+q_\alpha^2-1)
  =
  2r_\alpha q_\alpha.
\]
The dimensions add over independently controlled sectors.
\end{proof}

Recall the fixed-readout decomposition
\[
  P=\bigoplus_\lambda Q_\lambda\otimes I_{V_\lambda},
  \qquad
  r_\lambda=\rank Q_\lambda,
  \qquad
  \Lambda_P=\{\lambda:0<r_\lambda<m_\lambda\}.
\]
Set
\[
  \nu_P=\sum_{\lambda\in\Lambda_P}(m_\lambda^2-1),
  \qquad
  \varrho_P=\sum_{\lambda\in\Lambda_P}(m_\lambda-1),
  \qquad
  \delta_P=\mathbf 1_{\{P\ne0\}}.
\]
For the maximal \(K\)-commuting reversible class, write
\[
  G_K=\prod_\lambda SU(\mathcal M_\lambda).
\]

\begin{theorem}[Prescribed-readout reversible capacity]
\label{thm:fixed-readout-reversible-capacity}
Let \(B_{q,P}^{\mathrm{MO}}(K;\Hh)\) be the maximal Hankel rank of a
\(K\)-commuting measure-once automaton whose accepting projector is
the prescribed \(P\), and let
\(\operatorname{SC}_{K,P,\mathbb R}^{\mathrm{MO},(2)}(\Hh)\) be the
corresponding worst binary strict-cutpoint real-PFA cost.  Then
\[
  B_{1,P}^{\mathrm{MO}}(K;\Hh)
  =
  \delta_P+\nu_P-\varrho_P
  =
  \delta_P+
  \sum_{\lambda\in\Lambda_P}m_\lambda(m_\lambda-1),
\]
whereas, for every \(q\geq2\),
\[
  B_{q,P}^{\mathrm{MO}}(K;\Hh)=\delta_P+\nu_P.
\]
If \(\Lambda_P\ne\varnothing\), the binary strict-cutpoint cost obeys
\[
  2+\kappa_{G_K}(P)
  \leq
  \operatorname{SC}_{K,P,\mathbb R}^{\mathrm{MO},(2)}(\Hh)
  \leq
  \nu_P+2.
\]
In particular, one active block with $m_\lambda=2$ and $r_\lambda=1$
satisfies
\[
  4=2+\kappa_{G_K}(P)
  \leq
  \operatorname{SC}_{K,P,\mathbb R}^{\mathrm{MO},(2)}(\Hh)
  \leq
  5=2+\nu_P.
\]
If \(\Lambda_P=\varnothing\), every reversible behavior with this
readout is constant and the state cost is one.
\end{theorem}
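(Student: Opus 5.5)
Reduce to the multiplicity-space picture by \cref{prop:twirl-reduction}: the automaton becomes a sector-preserving one on \(\bigoplus_\lambda\mathcal M_\lambda\) with \(P=\bigoplus_\lambda Q_\lambda\), reduced state \(\bigoplus_\lambda\sigma_\lambda\), and commutant unitaries \(W_{a,\lambda}\). Blocks with \(Q_\lambda\in\{0,I\}\) are read identically under every suffix, since \(W^\dagger Q_\lambda W=Q_\lambda\). They therefore contribute only through the conserved weight \(\Tr\sigma_\lambda\), which is a word-independent constant. Splitting every reachable state and effect into a central part plus a traceless part, as in the proof of \cref{thm:sector-cap}, the observable traceless directions all lie in \(\bigoplus_{\lambda\in\Lambda_P}\Herm_0(\mathcal M_\lambda)\), of dimension \(\nu_P\). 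All constant contributions combine into one series, and this series is absent when \(P=0\). The exact pairing theorem \cref{thm:exact-rank} then gives \(\beta\le\delta_P+\nu_P\). For \(\Lambda_P=\varnothing\) there are no varying directions, so the behavior is constant and one PFA state suffices. This disposes of the last clause.

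For the unary value I would rerun the normal-operator argument of \cref{thm:alphabet-capacity} restricted to the blocks in \(\Lambda_P\). The diagonal (zero-frequency) directions merge into the single constant mode, while the off-diagonal matrix units contribute at most \(\sum_{\lambda\in\Lambda_P}m_\lambda(m_\lambda-1)\) frequencies. For attainment, I would choose an eigenbasis of \(W_\lambda\) so that the prescribed \(Q_\lambda\) has no zero off-diagonal entry. Since \(Q_\lambda\) is a nontrivial projector, a generic unitary change of basis achieves this. I would also choose globally distinct eigenphase differences and a state with all entries nonzero. If \(\Lambda_P=\varnothing\) but \(P\neq0\), I would put weight on a block where \(Q_\lambda=I\) to obtain the constant mode. \Cref{lem:exponential-hankel} then gives \(\delta_P+\nu_P-\varrho_P\). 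For \(q\ge2\), \cref{thm:binary-dense} supplies the independent closure \(\prod_{\lambda\in\Lambda_P}SU(m_\lambda)\). The saturation argument of \cref{thm:commutant-law} then applies, using traceless state components in the active blocks and a positive central overlap, which exists since \(P\ne0\). Conjugacy orbits of the nonzero traceless parts of \(\sigma_\lambda\) and \(Q_\lambda\) span \(\Herm_0(\mathcal M_\lambda)\), giving rank \(\delta_P+\nu_P\).

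For the cost bounds, the upper bound \(\nu_P+2\) comes from the trace coordinate, which is a normalized constant mode inside a representation of dimension \(1+\nu_P\). Hence \(\beta_\tau\le1+\nu_P\), and \cref{thm:codimension-one-pfa} applies. For the lower bound I would apply the binary prepare--test construction \cref{thm:prepare-test} to the reduced sector profile. The active sectors are the \(\lambda\in\Lambda_P\) with dimensions \(m_\lambda\) and ranks \(r_\lambda\), and \cref{prop:orbit-capacity} gives \(\kappa=\sum 2r_\lambda(m_\lambda-r_\lambda)=\kappa_{G_K}(P)\). Inactive blocks receive zero initial weight, or are handled by the spectator rescaling remark. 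The main obstacle is making sure the construction respects the \emph{prescribed} \(P\) rather than a projector of our choosing. Since \(P\) is fixed only up to the commutant unitary group, I would conjugate by a commutant unitary to bring each \(Q_\lambda\) to the coordinate form \(A_\lambda\oplus B_\lambda\) used in the construction. This is legitimate because the prepare--test construction only needs bases adapted to \(\operatorname{ran}Q_\lambda\). The second point is lifting the reduced pure inputs to physical states, by taking product vectors with \(V_\lambda\) as in \cref{thm:alphabet-capacity}. The example \(m=2\), \(r=1\) then gives \(\kappa=2\) and \(\nu_P=3\), which is the numerical instance \(4\le\cdot\le5\).
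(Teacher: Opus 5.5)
Your proposal is correct and follows essentially the same route as the paper. The steps match: you restrict the central-plus-traceless split to $\Lambda_P$ for the capacity upper bound, count Bohr frequencies of the normal map $\Ad_U$ with a generic eigenbasis and distinct phase differences for the unary value, use dense independent $SU(m_\lambda)$ control for $q\geq2$, and use the constant-normalized representation with \cref{thm:codimension-one-pfa} and the anchored prepare--test construction with its dynamic lift for the cost bounds. Your remark that a commutant unitary carries each prescribed $Q_\lambda$ to coordinate form without changing the acceptance function makes explicit a step the paper leaves implicit.
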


\begin{proof}
Blocks with \(Q_\lambda=0\) or \(I\) contribute only to the shared
constant series.  In a partially split block, unary conjugation has at
most \(m_\lambda(m_\lambda-1)\) nonzero Bohr-frequency modes, since all
diagonal directions lie in the common zero-frequency eigenspace.  A
generic eigenbasis makes every matrix entry of the fixed
\(Q_\lambda\) nonzero, and generic eigenphases make all nonzero
frequency differences globally distinct.  Choosing a state with all
corresponding entries nonzero and applying
\cref{lem:exponential-hankel} attains the unary formula.

With two letters, dense independent \(SU(m_\lambda)\) control spans
the full trace-zero Hermitian block for every
\(\lambda\in\Lambda_P\), and no other block is observable.  This gives
the binary capacity.  The lower state bound is the affine-anchored
prepare--test construction, for which
\(\kappa_{G_K}(P)=\sum_{\lambda\in\Lambda_P}2r_\lambda(m_\lambda-r_\lambda)\).
The dynamic affine lift supplies the additional state for every
nontrivial rank profile.  When all active ranks are one,
\cref{thm:curved-witness} gives the same lower bound through an
independent finite curved certificate.
The upper bound follows from the constant-normalized capacity
representation and \cref{thm:codimension-one-pfa}.  When the active set
is empty, \(P\) is central and hence fixed by every commuting-unitary
conjugation.
\end{proof}

The orbit dimension counts the independent ways in which the readout can
move under the available control group.  The same tangent space also has
an intrinsic statistical interpretation, so the readout capacity can be
identified without reference to a particular circuit parametrization.

\subsection{Intrinsic Fisher geometry of the readout orbit}

Let
\[
  R_P=\Tr P,
  \qquad
  \sigma_P=\frac{P}{R_P}.
\]
Assume throughout this subsection that $P\neq0$.
We use the symmetric-logarithmic-derivative convention
\[
  g_\rho^{\mathrm{SLD}}(A,B)
  =
  2
  \sum_{j,k:p_j+p_k>0}
  \frac{\operatorname{Re}(A_{jk}B_{kj})}{p_j+p_k}
\]
for \(\rho=\sum_jp_j|j\rangle\langle j|\)
\cite{BraunsteinCaves,PetzGhinea}.

\begin{theorem}[Orbit capacity equals intrinsic Fisher rank]\label{thm:fisher-orbit}
For tangent projector variations
\[
  \dot P_X=[X,P],
  \qquad
  \dot P_Y=[Y,P],
\]
one has
\[
  g_{\sigma_P}^{\mathrm{SLD}}
  \left(
  \frac{\dot P_X}{R_P},
  \frac{\dot P_Y}{R_P}
  \right)
  =
  \frac{2}{R_P}
  \Tr(\dot P_X\dot P_Y).
\]
Consequently,
\[
  \rank
  g_{\sigma_P}^{\mathrm{SLD}}
  =
  \kappa_G(P).
\]
In a Hilbert--Schmidt orthonormal tangent frame, every nonzero
eigenvalue is \(2/R_P\).
\end{theorem}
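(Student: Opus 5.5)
The plan is to compute the SLD metric directly in an eigenbasis adapted to \(P\). Since \(\sigma_P=P/R_P\) has eigenvalue \(1/R_P\) on \(\operatorname{ran}P\) (multiplicity \(R_P\)) and \(0\) on \(\ker P\), the admissible index pairs in the definition of \(g^{\mathrm{SLD}}\) are those with \(p_j+p_k>0\). These are the pairs with at least one index in \(\operatorname{ran}P\).

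\emph{Step 1: block form of the tangent vector.} Write operators in the block decomposition \(\Hh=\operatorname{ran}P\oplus\ker P\). For anti-Hermitian \(X\) we have \([X,P]=XP-PX\), which is purely off-diagonal:
\[
  \dot P_X=\begin{pmatrix}0&-X_{12}\\ X_{21}&0\end{pmatrix}.
\]
Hence its \(PP\) and \((I-P)(I-P)\) blocks vanish. This is the key structural fact, and I expect it to be the only real point to check.

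\emph{Step 2: evaluate the metric.} Only pairs with exactly one index in \(\operatorname{ran}P\) contribute, and for each such pair \(p_j+p_k=1/R_P\). Therefore
\[
  g^{\mathrm{SLD}}_{\sigma_P}(A,B)=2R_P\sum_{\text{mixed }(j,k)}\operatorname{Re}(A_{jk}B_{kj}).
\]
Since the diagonal blocks vanish, the sum over mixed pairs is the full sum, namely \(\operatorname{Re}\Tr(AB)=\Tr(AB)\) for Hermitian \(A,B\). Substituting \(A=\dot P_X/R_P\) and \(B=\dot P_Y/R_P\) gives \(2R_P\cdot R_P^{-2}\Tr(\dot P_X\dot P_Y)=(2/R_P)\Tr(\dot P_X\dot P_Y)\), which is the displayed identity.

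\emph{Step 3: rank and spectrum.} The formula shows that, pulled back along \(X\mapsto[X,P]/R_P\), the Fisher form is \(2/R_P\) times the Hilbert--Schmidt Gram form of the tangent vectors \(\dot P_X\). This form is positive semidefinite on \(\mathfrak g\) with kernel exactly \(\mathfrak g_P\). Consequently its rank is \(\dim\mathfrak g-\dim\mathfrak g_P=\kappa_G(P)\) by Proposition~\ref{prop:orbit-capacity}. Equivalently, on the tangent space \(T_P(G\cdot P)=\{[X,P]\}\) the metric is \((2/R_P)\langle\cdot,\cdot\rangle_{\mathrm{HS}}\), a nondegenerate form. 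In any Hilbert--Schmidt orthonormal frame of that tangent space its matrix is therefore \((2/R_P)I_{\kappa_G(P)}\), so every nonzero eigenvalue equals \(2/R_P\).

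The main obstacle is bookkeeping rather than substance. One must ensure the zero eigenvalue of \(\sigma_P\) is handled correctly, so that kernel–kernel entries are legitimately excluded (they vanish anyway by Step 1). One must also fix the normalization convention for the tangent vectors, whether \(\dot P/R_P\) or \(\dot P\), so that the eigenvalue comes out as \(2/R_P\) in the stated frame.
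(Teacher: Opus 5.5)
Your proposal is correct and follows the paper's own argument. You evaluate the SLD form in the eigenbasis of \(\sigma_P\) and use that \([X,P]\) has only accepting--rejecting off-diagonal blocks, so only mixed pairs with \(p_j+p_k=1/R_P\) contribute; the rank then follows from positive definiteness of the Hilbert--Schmidt form on the orbit tangent space, of dimension \(\kappa_G(P)\). Your normalization remark is also right: the metric is evaluated on \(\dot P/R_P\) with a frame that is Hilbert--Schmidt orthonormal in the projector tangents \(\dot P\), whereas an orthonormal frame of state tangents \(\dot\sigma\) would give eigenvalue \(2R_P\) instead.
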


\begin{proof}
See \cref{app:fisher-proof}.
\end{proof}

This is an intrinsic orbit statement.  A circuit-coordinate quantum
Fisher matrix has the same rank only when the circuit differential
surjects onto the orbit tangent space.

The Fisher-rank identity shows that \(\kappa\) is an intrinsic dimension of
the measurement orbit rather than a coordinate count.  Comparing it
with \(\mathsf B\) reveals precisely which behavior directions can be exposed
by infinitesimal readout motion and which remain invisible to the
binary measurement.

\subsection{Visible and dark behavior directions}

Under independent full sector control,
\[
  \mathsf B(\mathbf D)-1
  =
  \sum_\alpha(D_\alpha^2-1)
\]
is the full traceless behavior mass, whereas
\[
  \kappa(\mathbf D,\mathbf r)
  =
  \sum_\alpha2r_\alpha q_\alpha
\]
is visible to infinitesimal accepting-subspace motion.  Their
difference is
\[
  \mathsf B(\mathbf D)-1-\kappa(\mathbf D,\mathbf r)
  =
  \sum_\alpha
  \bigl(r_\alpha^2+q_\alpha^2-1\bigr).
\]

\begin{lemma}[Readout curvature]\label{lem:readout-curvature}
Let $P$ be a rank-$r$ orthogonal projector on $\mathbb C^D$, put
$q=D-r$, and define the off-diagonal normal space
\[
  \mathfrak n_P
  =
  \{N\in\mathfrak{su}(D):PNP=0=(I-P)N(I-P)\}.
\]
Then
\[
  \spanR\{[[P,N],N]:N\in\mathfrak n_P\}
  =
  \{X\in\Herm_0(\mathbb C^D):[X,P]=0\},
\]
whose real dimension is $r^2+q^2-1$.  Moreover,
\[
  \spanR\{[P,X]:X\in\mathfrak{su}(D)\}
  \oplus
  \spanR\{[[P,N],N]:N\in\mathfrak n_P\}
  =
  \Herm_0(\mathbb C^D).
\]
\end{lemma}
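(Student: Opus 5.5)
Work in block form with respect to \(\C^D=\operatorname{ran}P\oplus\operatorname{ran}(I-P)\), writing \(P=\begin{pmatrix}I_r&0\\0&0\end{pmatrix}\).  An element of \(\mathfrak n_P\) is \(N=\begin{pmatrix}0&Z\\-Z^\dagger&0\end{pmatrix}\) with \(Z\in\Mat_{r\times q}(\C)\) arbitrary.  The first step is to compute the double commutator directly: \([P,N]=\begin{pmatrix}0&Z\\Z^\dagger&0\end{pmatrix}\), and then
\[
  [[P,N],N]
  =
  \begin{pmatrix}-2ZZ^\dagger&0\\0&2Z^\dagger Z\end{pmatrix}
\]
up to an overall sign fixed by the convention.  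This matrix is Hermitian and block diagonal, hence commutes with \(P\).  Its trace is \(-2\Tr(ZZ^\dagger)+2\Tr(Z^\dagger Z)=0\).  This proves the inclusion ``\(\subseteq\)'' into \(\{X\in\Herm_0:[X,P]=0\}\).  The target space is \(\{(A,B)\in\Herm(r)\oplus\Herm(q):\Tr A+\Tr B=0\}\), of real dimension \(r^2+q^2-1\).

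The main step is the reverse inclusion: show that the real span of pairs \((-ZZ^\dagger,Z^\dagger Z)\) fills this hyperplane.  First take rank-one \(Z=uv^\dagger\) with \(u\in\C^r\) and \(v\in\C^q\) unit vectors.  This gives \((-|u\rangle\langle u|\,\|v\|^2,\ |v\rangle\langle v|\,\|u\|^2)=(-|u\rangle\langle u|,|v\rangle\langle v|)\).  Differences of two such elements with the same \(v\) give \((|u'\rangle\langle u'|-|u\rangle\langle u|,0)\).  Rank-one projectors span \(\Herm(r)\), and differences of them span \(\Herm_0(r)\), so \(\Herm_0(r)\oplus0\) lies in the span.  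By symmetry, fixing \(u\) shows that \(0\oplus\Herm_0(q)\) lies in the span as well.  Finally, one element \((-|u\rangle\langle u|,|v\rangle\langle v|)\) has nonzero trace in the \(A\)-block, which supplies the remaining direction, \(-\tfrac1r I\oplus\tfrac1q I\) modulo the traceless parts.  Together these span the whole \((r^2+q^2-1)\)-dimensional hyperplane.  This uses \(r,q\geq1\); if \(r=0\) or \(q=0\) then \(\mathfrak n_P=0\), and the statement must be read with that degenerate convention, where both sides reduce appropriately.  I expect the only real care to be needed in this spanning argument, and the rank-one choice makes it elementary.

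For the direct sum, the tangent space \(\{[P,X]:X\in\mathfrak{su}(D)\}\) consists exactly of the Hermitian block-off-diagonal matrices \(\begin{pmatrix}0&Y\\Y^\dagger&0\end{pmatrix}\).  To see this, take \(X=N\) as above with \(Z=Y\); diagonal blocks of \(X\) commute with \(P\) and contribute nothing.  This space has real dimension \(2rq\), matching \cref{prop:orbit-capacity}, and it lies in \(\Herm_0\).  The off-diagonal space and the block-diagonal traceless space intersect trivially and are in fact Hilbert--Schmidt orthogonal.  Their dimensions add to \(2rq+r^2+q^2-1=D^2-1=\dim\Herm_0(\C^D)\).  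Hence their sum is direct and equals \(\Herm_0(\C^D)\), completing the proof.
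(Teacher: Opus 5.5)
Your proof is correct and follows the paper's argument essentially step for step. Both use the same block form and the same double commutator $\operatorname{diag}(-2ZZ^\dagger,2Z^\dagger Z)$, rank-one choices $Z=uv^\dagger$ with differences at fixed $u$ or fixed $v$ plus one extra element for the relative scalar direction, and the count $2rq+r^2+q^2-1=D^2-1$ for the direct sum.

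One correction to your aside on the degenerate case. If $r=0$ or $q=0$ and $D\geq2$, both spans on the left are $\{0\}$, while the commutant side is all of $\Herm_0(\mathbb C^D)$. So the degenerate case does not ``reduce appropriately''. The lemma genuinely requires $1\leq r\leq D-1$, which the paper's proof also tacitly assumes when it chooses unit vectors in $\operatorname{ran}P$ and $\ker P$.
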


\begin{proof}
Relative to
$\mathbb C^D=\operatorname{ran}P\oplus\ker P$, write
\[
  N=
  \begin{pmatrix}0&B\\-B^\dagger&0\end{pmatrix}.
\]
Then
\[
  [P,N]=
  \begin{pmatrix}0&B\\B^\dagger&0\end{pmatrix},
  \qquad
  [[P,N],N]
  =
  \begin{pmatrix}-2BB^\dagger&0\\0&2B^\dagger B\end{pmatrix}.
\]
The second operator is traceless Hermitian and commutes with $P$.
Conversely, taking $B=uv^\dagger$ with unit vectors $u$ and $v$ gives
$\operatorname{diag}(-2uu^\dagger,2vv^\dagger)$.  Differences at fixed
$u$ span $0\oplus\Herm_0(\mathbb C^q)$, and differences at fixed $v$
span $\Herm_0(\mathbb C^r)\oplus0$.  One remaining element supplies the
relative scalar direction
$\operatorname{diag}(-I_r/r,I_q/q)$.  These spaces have total dimension
$(r^2-1)+(q^2-1)+1=r^2+q^2-1$ and exhaust the traceless Hermitian
commutant of $P$.  The first span is the off-diagonal Hermitian space,
so the final direct sum follows from
$2rq+r^2+q^2-1=D^2-1$.
\end{proof}

Thus the visible directions are the accepting--rejecting coherences,
whereas the dark directions are accepting--accepting and
rejecting--rejecting coherences together with the traceless
block-diagonal population direction.  The lemma shows that these dark
directions are precisely the quadratic curvature span of the readout
orbit.  The witness in \cref{thm:curved-witness} converts one common
radial combination of this span into an additional threshold
coordinate.

The two capacities now have distinct roles.  The total capacity
\(\mathsf B\) controls the largest possible Hankel rank, while \(\kappa\) measures
the directions that can be converted directly into threshold tests.
The remaining step is operational: local tangent information must be
turned into finitely many prefix--suffix words whose signs force a
classical state lower bound.

\subsection{Symmetry release along subgroup chains}

A release of symmetry enlarges the commutant.  If $H\subseteq K$ are
compact groups represented on the same Hilbert space, then
\[
  \mathcal C_K\subseteq\mathcal C_H.
\]
Let
\[
  \mathsf B_K=1+\dim_\C[\mathcal C_K,\mathcal C_K]
\]
denote the maximal $K$-equivariant behavior rank.

Suppose the $K$-representation is
\[
  \Hh\cong\bigoplus_\lambda \mathcal M_\lambda\otimes V_\lambda,
  \qquad m_\lambda=\dim\mathcal M_\lambda,
\]
and that restriction to $H$ has branching rule
\[
  V_\lambda\!\downarrow_H
  \cong
  \bigoplus_\alpha \C^{b_{\alpha\lambda}}\otimes W_\alpha.
\]
For each $H$-type that occurs, define
\[
  \mathcal N_\alpha
  =
  \bigoplus_\lambda
  \mathcal M_\lambda\otimes\C^{b_{\alpha\lambda}},
  \qquad
  \widetilde m_\alpha
  =\sum_\lambda b_{\alpha\lambda}m_\lambda.
\]

\begin{theorem}[Branching-rule symmetry release]\label{thm:branching-release}
The noncommutative capacity after restriction from $K$ to $H$ is
\[
  \nu_H
  =
  \sum_{\alpha:\widetilde m_\alpha>0}
  (\widetilde m_\alpha^2-1)
  =
  \sum_{\alpha:\widetilde m_\alpha>0}
  \left[
    \left(\sum_\lambda b_{\alpha\lambda}m_\lambda\right)^2-1
  \right].
\]
Consequently,
\[
  \mathsf B_H-\mathsf B_K
  =
  \sum_{\alpha:\widetilde m_\alpha>0}
  (\widetilde m_\alpha^2-1)
  -
  \sum_\lambda(m_\lambda^2-1).
\]
\end{theorem}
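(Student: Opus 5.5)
The plan is to compute the $H$-isotypic decomposition of $\Hh$ directly from the $K$-decomposition and the branching rule, read off the multiplicity spaces, and then apply the Wedderburn count $\nu_H=\sum(\text{mult}^2-1)$ from the discussion preceding \cref{lem:hermitian-bridge}, now for the group $H$.

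First, substitute the branching rule into the $K$-isotypic decomposition:
\[
  \Hh\cong\bigoplus_\lambda\mathcal M_\lambda\otimes\Bigl(\bigoplus_\alpha\C^{b_{\alpha\lambda}}\otimes W_\alpha\Bigr)
  \cong\bigoplus_\alpha\Bigl(\bigoplus_\lambda\mathcal M_\lambda\otimes\C^{b_{\alpha\lambda}}\Bigr)\otimes W_\alpha
  =\bigoplus_\alpha\mathcal N_\alpha\otimes W_\alpha .
\]
Here $H$ acts only on the $W_\alpha$ factors, and trivially on $\mathcal M_\lambda$ and on the branching multiplicity spaces $\C^{b_{\alpha\lambda}}$. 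Since the $W_\alpha$ are pairwise inequivalent irreducible $H$-modules, this is an $H$-isotypic decomposition with multiplicity spaces $\mathcal N_\alpha$. Their dimensions are $\dim\mathcal N_\alpha=\sum_\lambda b_{\alpha\lambda}m_\lambda=\widetilde m_\alpha$. The only bookkeeping point is regrouping the double direct sum by $\alpha$, which is legitimate because the isotypic decomposition is unique up to isomorphism of the multiplicity spaces.

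Second, Schur's lemma gives $\mathcal C_H\cong\bigoplus_{\alpha:\widetilde m_\alpha>0}\Mat_{\widetilde m_\alpha}(\C)$. Only the $H$-types that actually occur, that is, those with $\widetilde m_\alpha>0$, contribute a block. Hence $[\mathcal C_H,\mathcal C_H]\cong\bigoplus\mathfrak{sl}(\widetilde m_\alpha,\C)$, and
\[
  \nu_H=\sum_{\alpha:\widetilde m_\alpha>0}(\widetilde m_\alpha^2-1).
\]
Substituting $\widetilde m_\alpha=\sum_\lambda b_{\alpha\lambda}m_\lambda$ gives the second displayed form.

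Third, \cref{thm:commutant-law} gives $\mathsf B_K=1+\nu_K$. Applied to $H$ on the same $\Hh$, it gives $\mathsf B_H=1+\nu_H$. Subtracting and inserting $\nu_K=\sum_\lambda(m_\lambda^2-1)$ yields the difference formula.

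There is no genuine obstacle. The one step needing care is justifying that regrouping by $H$-type really yields the isotypic multiplicity spaces. This requires that distinct $\alpha$ label inequivalent $H$-irreducibles and that every copy of a given $W_\alpha$ is collected into $\mathcal N_\alpha$, so that no two blocks of $\mathcal C_H$ are merged or split. The branching rule's indexing by inequivalent $H$-types guarantees both.
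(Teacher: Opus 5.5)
Your proposal is correct and matches the paper's proof essentially step for step. You regroup the restricted representation by $H$-type to get $\Hh\cong\bigoplus_\alpha\mathcal N_\alpha\otimes W_\alpha$, identify $\mathcal C_H\cong\bigoplus_{\alpha:\widetilde m_\alpha>0}\Mat_{\widetilde m_\alpha}(\C)$ by Schur's lemma, and apply the commutant law to both $H$ and $K$, with a little more explicit bookkeeping than the paper gives.
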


\begin{proof}
Regrouping the restricted representation by $H$-type gives
$\Hh\cong\bigoplus_\alpha\mathcal N_\alpha\otimes W_\alpha$.
Schur's lemma therefore identifies the $H$-commutant with
$\bigoplus_\alpha\Mat_{\widetilde m_\alpha}(\C)$.  Applying
\cref{thm:commutant-law} gives both formulas.
\end{proof}

\begin{corollary}[Basis-free symmetry-release law]\label{cor:release-difference}
For every compact subgroup inclusion $H\subseteq K$,
\[
  \mathsf B_H-\mathsf B_K
  =
  \dim_\C[\mathcal C_H,\mathcal C_H]
  -
  \dim_\C[\mathcal C_K,\mathcal C_K].
\]
Equivalently,
\begin{align*}
  \mathsf B_H-\mathsf B_K
  ={}&
  \dim_\C\mathcal C_H-\dim_\C\mathcal C_K\\
  &-
  \dim_\C Z(\mathcal C_H)
  +
  \dim_\C Z(\mathcal C_K).
\end{align*}
\end{corollary}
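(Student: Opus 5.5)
The plan is to read the corollary off \cref{thm:commutant-law}, which already gives $\mathsf B_K=1+\nu_K$ with $\nu_K=\dim_\C[\mathcal C_K,\mathcal C_K]$. The same theorem applied to the restricted representation gives $\mathsf B_H=1+\dim_\C[\mathcal C_H,\mathcal C_H]$. Subtracting cancels the common constant $1$, which is the first displayed identity. Two points need confirming here. First, \cref{thm:commutant-law} is stated for an arbitrary compact group acting unitarily on $\Hh$, so it applies to $H$ with the same Hilbert space. Second, the maximal rank $\mathsf B_H$ is attained, and this is supplied inside that theorem by \cref{thm:binary-dense}. With these, the formula is basis-free, because $\nu$ is defined intrinsically as the dimension of the linear commutator space.

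For the equivalent form, I will use the Wedderburn identity recorded just before \cref{lem:hermitian-bridge}:
\[
  \dim_\C[\mathcal C,\mathcal C]=\dim_\C\mathcal C-\dim_\C Z(\mathcal C).
\]
It holds for any finite-dimensional $C^*$-algebra $\mathcal C\cong\bigoplus\Mat_{m}(\C)$, since the commutator space is $\bigoplus\mathfrak{sl}(m,\C)$ and the center consists of the blockwise scalars. I will apply it to both $\mathcal C_K$ and $\mathcal C_H$. The $H$-case needs only that $\mathcal C_H$ is again a commutant of a unitary compact-group representation, hence a finite-dimensional $C^*$-algebra. Explicitly, $\mathcal C_H\cong\bigoplus_\alpha\Mat_{\widetilde m_\alpha}(\C)$, as in the proof of \cref{thm:branching-release}. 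Substituting both identities into the first formula gives the second display.

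As a consistency check, and as an alternative route, the first identity should also agree with \cref{thm:branching-release}. There $\nu_H=\sum_\alpha(\widetilde m_\alpha^2-1)$ equals $\dim\mathcal C_H-\#\{\alpha:\widetilde m_\alpha>0\}$, and the number of occurring $H$-types is exactly $\dim Z(\mathcal C_H)$.

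I expect no genuine obstacle. The only point needing care is that $\mathsf B_H$ in the corollary means the maximal $H$-equivariant behavior rank, so that \cref{thm:commutant-law} applies verbatim to $H$. The inclusion $\mathcal C_K\subseteq\mathcal C_H$ is not needed for the identity itself; it only explains the interpretation as a release.
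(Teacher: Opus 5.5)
Your proposal is correct and matches the paper's proof: apply \cref{thm:commutant-law} to both $H$ and $K$ so that the constant $1$ cancels, then use $\dim_\C[\mathcal C,\mathcal C]=\dim_\C\mathcal C-\dim_\C Z(\mathcal C)$ for the finite-dimensional $C^*$-algebras $\mathcal C_H$ and $\mathcal C_K$. Your branching-rule consistency check and your remark that $\mathcal C_K\subseteq\mathcal C_H$ is not needed for the identity are accurate, though the paper's proof does not use them.
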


\begin{proof}
Apply \cref{thm:commutant-law} to $H$ and $K$, then use
\[
  \dim_\C[\mathcal C,\mathcal C]
  =
  \dim_\C\mathcal C-\dim_\C Z(\mathcal C)
\]
for finite-dimensional semisimple $C^*$-algebras.
\end{proof}

This formula separates two effects of symmetry release.  The commutant
may acquire new matrix directions, while its center may also change.
Only the net growth of the noncommutative part increases the maximal
word-dependent behavior.

The earlier sector-merging calculus is the explicit partition form of
this general law.  Let $\Pi$ be a partition of the original reduced
sectors.  A block
\(B\in\Pi\) represents sectors that may now mix under full control.
Define
\[
  D_B=\sum_{\lambda\in B}D_\lambda,
  \qquad
  r_B=\sum_{\lambda\in B}r_\lambda,
  \qquad
  q_B=D_B-r_B.
\]
The maximal behavior and accepting-orbit capacities are
\[
  \mathsf B(\Pi)
  =
  1+\sum_{B\in\Pi}(D_B^2-1),
\]
\[
  \kappa(\Pi)
  =
  \sum_{B\in\Pi}2r_Bq_B.
\]

\begin{theorem}[Block-merging release identities]
\label{thm:block-merging-release}
Merging \(m\) blocks gives
\[
  \Delta\mathsf B
  =
  2\sum_{a<b}D_aD_b+(m-1),
\]
\[
  \Delta\kappa
  =
  2\sum_{a<b}(r_aq_b+r_bq_a),
\]
and
\[
  \Delta\mathsf B-\Delta\kappa
  =
  2\sum_{a<b}(r_ar_b+q_aq_b)+(m-1).
\]
\end{theorem}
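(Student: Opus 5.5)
The argument is a direct computation from the definitions of \(\mathsf B(\Pi)\) and \(\kappa(\Pi)\) just before the theorem. The claim is local: merging \(m\) blocks of a partition with dimensions \(D_1,\ldots,D_m\), accepting ranks \(r_1,\ldots,r_m\) and rejecting ranks \(q_a=D_a-r_a\) into one block \(B\) leaves every other summand unchanged. It therefore suffices to compare the single merged summand with the sum of the \(m\) original summands. Throughout I use \(D_B=\sum_aD_a\), \(r_B=\sum_ar_a\) and \(q_B=\sum_aq_a\); the last identity follows by linearity from \(q_B=D_B-r_B\).

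For \(\Delta\mathsf B\), expand
\[
  (D_B^2-1)-\sum_{a=1}^m(D_a^2-1)
  =\Bigl(\sum_aD_a\Bigr)^2-\sum_aD_a^2+(m-1)
  =2\sum_{a<b}D_aD_b+(m-1).
\]
The additive constant \(1\) in \(\mathsf B(\Pi)\) cancels in the difference. The term \(m-1\) records that \(m\) separate trace constraints (central populations) are replaced by a single one.

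For \(\Delta\kappa\), expand the bilinear expression
\[
  2r_Bq_B-\sum_a2r_aq_a
  =2\sum_{a\neq b}r_aq_b
  =2\sum_{a<b}(r_aq_b+r_bq_a).
\]

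For the third identity, I would subtract the second from the first and rewrite \(D_aD_b=(r_a+q_a)(r_b+q_b)=r_ar_b+q_aq_b+r_aq_b+r_bq_a\). The cross terms cancel against \(\Delta\kappa\), which leaves \(2\sum_{a<b}(r_ar_b+q_aq_b)+(m-1)\). As a consistency check, this is the growth of the dark dimension \(\sum(r^2+q^2-1)\) described in the readout-curvature discussion.

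There is no substantive obstacle. The only point requiring care is the bookkeeping: the constant in \(\mathsf B\) appears once globally, while the \(-1\) appears once per block. Merging therefore produces exactly \(m-1\) and not \(m\). I would also note that the identities hold for any simultaneous merge and are additive over disjoint merges.
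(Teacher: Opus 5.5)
Your proposal is correct and follows the paper's proof: expand the merged summands $D_B^2-1$ and $2r_Bq_B$, subtract the $m$ original contributions to isolate the pairwise cross terms, observe that $m$ copies of $-1$ collapse to one, and obtain the third identity from $D_aD_b=r_ar_b+q_aq_b+r_aq_b+r_bq_a$. Your extra check against the growth of the dark dimension $\sum(r^2+q^2-1)$ agrees with the paper's interpretation, though the proof does not need it.
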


\begin{proof}
Expand the capacity formulas for the merged block and subtract the sum
of the $m$ original contributions.  The cross terms are precisely the
displayed pairwise sums, and replacing $m$ copies of $-1$ by one copy
contributes $m-1$ to $\Delta\mathsf B$ and to the dark difference.
\end{proof}

For two blocks $A,B$, the formulas become
\[
  \Delta\mathsf B=2D_AD_B+1,
  \qquad
  \Delta\kappa=2(r_Aq_B+r_Bq_A),
\]
\[
  \Delta\mathsf B-\Delta\kappa
  =2(r_Ar_B+q_Aq_B)+1.
\]
The term $2D_AD_B$ counts newly available cross-block Hermitian
coherences, and the additional $1$ is the new relative block
population direction.  The visible part crosses the
accepting--rejecting boundary; the remaining same-outcome coherences
and relative population direction are dark to the chosen two-outcome
readout.

Since \(\mathsf B\) and \(\kappa\) are functions of the partition, these increments
are path independent on the partition lattice.

The readout-visible release fraction
\[
  \phi_{\mathrm{vis}}(A,B)
  =
  \frac{\Delta\kappa}{\Delta\mathsf B}
\]
is approximately \(1/2\) for balanced readouts.  It approaches one
only when one block is predominantly accepting and the other
predominantly rejecting.

These identities show that symmetry release is not uniformly useful to
the readout.  New accepting--rejecting coherences increase the visible
capacity, whereas same-outcome coherences and relative population modes
increase the behavior space without helping the chosen two-outcome readout.
The ratio \(\Delta\kappa/\Delta\mathsf B\) is a capacity-level geometric
fraction: it measures how much newly available tangent space crosses
the chosen accepting boundary.  It is not, by itself, a fixed-language
or task-performance gain.  The gain for a particular automaton is
determined by its reachable--observable pairing, and a threshold
advantage additionally requires a finite sign witness.

\section{Fixed-weight memory and the critical transition}

Fixed-weight permutation modules turn the structural law into an exact combinatorial memory count.  Their capacity truncates the two-row Schur--Weyl sum at the filling, so half filling becomes a Catalan law, while the macroscopic and critical regimes describe how that maximal representation-theoretic memory scale is approached.

\subsection{Multiplicity-free sector decomposition}

Let \(\mathcal H_{n,k}\) be the permutation module on \(k\)-subsets of
\([n]\), equivalently the Hamming-weight-\(k\) subspace of
\((\C^2)^{\otimes n}\).  Assume \(0\leq k\leq n/2\).
Young's rule gives the multiplicity-free decomposition
\cite{Sagan}
\[
  \mathcal H_{n,k}
  \cong
  \bigoplus_{j=0}^{k}S^{(n-j,j)}.
\]
The sector dimensions are
\[
  D_{n,j}
  =
  \binom nj-\binom n{j-1}
  =
  \frac{n-2j+1}{n-j+1}\binom nj,
\]
where \(\binom n{-1}=0\), and
\[
  \sum_{j=0}^{k}D_{n,j}=\binom nk.
\]

Define
\[
  N_{n,k}=\binom nk,
  \qquad
  S_{n,k}=\sum_{j=0}^{k}D_{n,j}^2.
\]
The trivial sector \(j=0\) has dimension one.  Taking the sectors
\(j=1,\ldots,k\) as active gives
\[
  \mathsf B_{n,k}
  =
  1+\sum_{j=1}^{k}(D_{n,j}^2-1)
  =
  S_{n,k}-k.
\]

Because the module is multiplicity free, the image of the complex
group algebra is
\[
  \bigoplus_{j=0}^{k}
  \operatorname{End}
  \bigl(S^{(n-j,j)}\bigr).
\]
Thus the abstract binary generators of the preceding sections can be
realized inside the unitary group of this image algebra.  This is a
representation-sector statement.  It is not a claim that the
resulting operators commute with the physical permutation action, nor
a claim of shallow local implementation.

Up to the lower-order subtraction of the number of nontrivial sectors,
\(S_{n,k}\) is the maximal behavior capacity.  Balanced accepting ranks
also make it the correct order of the binary probabilistic simulation
cost.  The asymptotic problem is therefore reduced to understanding how
the squared dimensions are distributed near the endpoint \(j=k\).

\subsection{Away from half filling}

\begin{theorem}[Non-half-filled macroscopic law]\label{thm:fixed-macro}
If
\[
  \frac{k_n}{n}\longrightarrow
  \alpha\in\left[0,\frac12\right),
\]
then
\[
  \frac{S_{n,k_n}}{N_{n,k_n}^2}
  \longrightarrow
  1-2\alpha.
\]
\end{theorem}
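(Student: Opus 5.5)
The plan is to show that when \(k/n\) stays below \(1/2\), the sum \(S_{n,k}\) is dominated by the last few sectors \(j=k,k-1,\dots\). Those sector dimensions decay geometrically, with ratio tending to \(\alpha/(1-\alpha)\), and each carries the same prefactor \(1-2\alpha\). Normalizing by \(N_{n,k}^2=\binom nk^2\), I will use the closed form \(D_{n,j}=\frac{n-2j+1}{n-j+1}\binom nj\). This gives
\[
  \frac{S_{n,k}}{N_{n,k}^2}
  =
  \sum_{i=0}^{k}
  \left(\frac{n-2(k-i)+1}{n-(k-i)+1}\right)^2
  \left(\frac{\binom n{k-i}}{\binom nk}\right)^2 .
\]

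For fixed \(i\), the ratio \(\binom n{k-i}/\binom nk=\prod_{l=0}^{i-1}\frac{k-l}{n-k+l+1}\) converges to \(\rho^i\) with \(\rho=\alpha/(1-\alpha)\in[0,1)\). The prefactor \((n-2j+1)/(n-j+1)\) with \(j=k-i\) converges to \((1-2\alpha)/(1-\alpha)\). So the \(i\)-th term tends to \(\frac{(1-2\alpha)^2}{(1-\alpha)^2}\rho^{2i}\). Summing the geometric series gives
\[
  \frac{(1-2\alpha)^2}{(1-\alpha)^2}\cdot\frac{1}{1-\rho^2}
  =
  \frac{(1-2\alpha)^2}{(1-\alpha)^2-\alpha^2}
  =
  1-2\alpha ,
\]
because \((1-\alpha)^2-\alpha^2=1-2\alpha\). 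This is the claimed limit.

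To justify exchanging limit and sum, I will use dominated convergence for series. Each prefactor lies in \([0,1]\). Since \(k\le n/2\), every factor \(\frac{k-l}{n-k+l+1}\) is at most \(\frac{k}{n-k+1}\). For large \(n\), \(k/(n-k+1)\le\rho'\) for some fixed \(\rho'<1\) with \(\rho<\rho'\), using \(\alpha<1/2\). Hence the \(i\)-th term is bounded by \((\rho')^{2i}\), which is summable and independent of \(n\), so termwise convergence passes to the sum.

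The degenerate case \(\alpha=0\) is covered by the same argument: \(\rho=0\), only the \(i=0\) term survives, and its prefactor tends to \(1\), including when \(k_n\) is bounded or zero. The main obstacle is obtaining the uniform bound on the factor ratios, which is exactly where the hypothesis \(\alpha<1/2\) enters. At half filling, \(\rho'\to1\), the domination fails, and a separate critical analysis is needed.
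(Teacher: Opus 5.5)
Your proposal is correct and follows the paper's proof: you compute termwise limits of \(D_{n,k_n-\ell}/N_{n,k_n}\) near the endpoint, bound them uniformly by a geometric sequence \(q^\ell\) with \(\alpha/(1-\alpha)<q<1\) to justify dominated convergence, and sum using \((1-\alpha)^2-\alpha^2=1-2\alpha\). The paper does exactly this. Your explicit product formula for the binomial ratio and your separate check of the bounded-\(k_n\) case at \(\alpha=0\) are harmless refinements of that same argument.
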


\begin{proof}
Write \(j=k_n-\ell\).  For every fixed \(\ell\geq0\),
\[
  \frac{\binom n{k_n-\ell}}{\binom n{k_n}}
  \longrightarrow
  \left(\frac{\alpha}{1-\alpha}\right)^\ell,
\]
and
\[
  \frac{n-2(k_n-\ell)+1}
       {n-(k_n-\ell)+1}
  \longrightarrow
  \frac{1-2\alpha}{1-\alpha}.
\]
Hence
\[
  \frac{D_{n,k_n-\ell}}{N_{n,k_n}}
  \longrightarrow
  \frac{1-2\alpha}{1-\alpha}
  \left(\frac{\alpha}{1-\alpha}\right)^\ell.
\]
Choose $q$ with
\[
  \frac{\alpha}{1-\alpha}<q<1.
\]
For all sufficiently large $n$ and every $0\leq\ell\leq k_n$,
\[
  \frac{D_{n,k_n-\ell}}{N_{n,k_n}}
  \leq
  \frac{\binom n{k_n-\ell}}{\binom n{k_n}}
  \leq q^\ell.
\]
This summable uniform bound justifies dominated convergence and gives
\[
  \lim_{n\to\infty}
  \frac{S_{n,k_n}}{N_{n,k_n}^2}
  =
  \frac{(1-2\alpha)^2}{(1-\alpha)^2}
  \sum_{\ell=0}^{\infty}
  \left(\frac{\alpha}{1-\alpha}\right)^{2\ell}.
\]
The geometric sum simplifies to \(1-2\alpha\).
\end{proof}

For fixed \(k\),
\[
  S_{n,k}
  \sim
  \frac{n^{2k}}{(k!)^2}.
\]
For every fixed \(\alpha<1/2\), balanced binary automata on these
sectors therefore have worst probabilistic state cost
\[
  \Theta(N_{n,k}^2).
\]

For every limiting density strictly below one half, the endpoint terms
remain geometrically dominant and the simulation cost retains the full
quadratic scale in the Hilbert-space dimension.  This mechanism breaks
down at half filling, where a growing window of irreducible sectors
contributes on the same scale.

\subsection{The Catalan law at half filling}

\begin{theorem}[Exact half-filling law]\label{thm:catalan}
For every \(n\),
\[
  S_{n,\lfloor n/2\rfloor}
  =
  C_n
  :=
  \frac{1}{n+1}\binom{2n}{n}.
\]
Consequently,
\[
  \mathsf B_{n,\lfloor n/2\rfloor}
  =
  C_n-\left\lfloor\frac n2\right\rfloor.
\]
\end{theorem}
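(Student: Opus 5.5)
The plan is to identify $S_{n,\lfloor n/2\rfloor}$ with the two-row case of the Robinson--Schensted sum already used for the collective-unitary side. For $k=\lfloor n/2\rfloor$, the sectors $S^{(n-j,j)}$ with $0\le j\le k$ run over \emph{all} partitions $\lambda\vdash n$ with $\ell(\lambda)\le2$. Each $D_{n,j}=\binom nj-\binom n{j-1}$ equals $f^{(n-j,j)}$; this is the hook-length formula for two-row shapes, or equivalently Young's rule for $M^{(n-j,j)}\cong\bigoplus_{i\le j}S^{(n-i,i)}$ differenced in $j$. Hence
\[
  S_{n,\lfloor n/2\rfloor}=\sum_{\lambda\vdash n,\ \ell(\lambda)\le2}(f^\lambda)^2=A_2(n).
\]
By the Robinson--Schensted correspondence quoted before \cref{thm:schur-collective}, $A_2(n)$ counts permutations of $[n]$ whose longest decreasing subsequence has length at most two, that is, $321$-avoiding permutations. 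Their number is the Catalan number $C_n$, a standard enumeration.

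To keep the proof self-contained instead of citing pattern avoidance, I would give a direct lattice-path or Vandermonde-type argument. The identity $\sum_{j\le n/2}\bigl(\binom nj-\binom n{j-1}\bigr)^2=C_n$ can be handled as follows. Expand the square and use $\sum_j\binom nj\binom n{j-a}=\binom{2n}{n-a}$ by Vandermonde. The truncation at $j\le n/2$ complicates this, so I would first use the symmetry $D_{n,j}$ versus the reflection $j\mapsto n+1-j$, which gives $D_{n,n+1-j}=-D_{n,j}$, to extend the sum to all $0\le j\le n+1$. This extension exactly doubles the sum. The full sum then telescopes via Vandermonde to $2\bigl(\binom{2n}{n}-\binom{2n}{n+1}\bigr)=2C_n$. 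The main obstacle is getting this reflection bookkeeping right, in particular the middle term when $n$ is odd (where $j=(n+1)/2$ gives $D=0$) and the boundary terms $j=0$ and $j=n+1$. If it becomes messy, I would fall back on citing RSK together with the $321$-avoiding Catalan count \cite{StanleyEC2}.

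The second assertion is then immediate from the definition $\mathsf B_{n,k}=S_{n,k}-k$ established in the sector decomposition above, with $k=\lfloor n/2\rfloor$. This matches the qubit formula $\mathsf B_{U(2)}(n,2)=C_n-\lfloor n/2\rfloor$ from \cref{thm:schur-collective}, since $p_2(n)=\lfloor n/2\rfloor+1$.
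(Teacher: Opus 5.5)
Your proposal is correct. Your primary route is exactly the paper's proof: for $k=\lfloor n/2\rfloor$ the sectors $(n-j,j)$ with $0\le j\le k$ exhaust the partitions with at most two rows, and $D_{n,j}=f^{(n-j,j)}$. Robinson--Schensted together with the Catalan count of $321$-avoiding permutations then gives $C_n$, and the statement about $\mathsf B_{n,\lfloor n/2\rfloor}$ follows from $\mathsf B_{n,k}=S_{n,k}-k$.

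Your self-contained alternative is a genuinely different argument, and it is also correct. The bookkeeping you were worried about closes cleanly:
\begin{itemize}
\item The reflection $j\mapsto n+1-j$ with $D_{n,n+1-j}=-D_{n,j}$ pairs $\{0,\dots,\lfloor n/2\rfloor\}$ with $\{\lceil n/2\rceil+1,\dots,n+1\}$.
\item The only unpaired index arises for odd $n$, at $j=(n+1)/2$, where $D_{n,j}=0$.
\item The endpoints $j=0$ and $j=n+1$ each contribute $1$.
\item Expanding the unrestricted sum with Vandermonde gives $2\binom{2n}{n}-2\binom{2n}{n+1}=2C_n$. This is a direct expansion rather than a telescoping, but the value is right.
\end{itemize}

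The two routes buy different things. The reflection--Vandermonde computation is purely algebraic and needs only the two-row dimension formula, so it removes any dependence on RSK and pattern-avoidance enumeration. The paper's route instead places the identity inside the family $A_d(n)$ from the collective-unitary section. That ties the half-filling law to the $d=2$ Schur--Weyl capacity and to the Regev-type asymptotics used there. The reflection trick is specific to two rows and has no comparably simple analogue for $d\ge3$.
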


\begin{proof}
The Robinson--Schensted correspondence identifies
\[
  \sum_{\lambda\vdash n,\ \ell(\lambda)\leq2}
  (f^\lambda)^2
\]
with the number of permutations whose longest decreasing subsequence
has length at most two.  These are the \(321\)-avoiding permutations,
which are counted by the Catalan number \(C_n\)
\cite{StanleyEC2}.  The partitions with at most two rows are exactly
\((n-j,j)\), and
\[
  f^{(n-j,j)}=D_{n,j}.
\]
\end{proof}

At half filling, the squared-sector-dimension sum is exactly Catalan and the structural capacity is $C_n-\lfloor n/2\rfloor$, the Catalan count with its central-sector correction removed.  Both are determined entirely by the representation profile, before any transition family or accepting readout is chosen.  This gives an independently computable test of the center--commutator law rather than a model-specific construction.

Stirling's formula gives
\[
  C_n
  \sim
  \frac{4^n}{\sqrt\pi\,n^{3/2}}.
\]
At half filling,
\[
  N_{n,\lfloor n/2\rfloor}
  \sim
  2^n\sqrt{\frac{2}{\pi n}},
\]
and hence
\[
  \frac{C_n}{N_{n,\lfloor n/2\rfloor}^2}
  \sim
  \frac{\sqrt\pi}{2\sqrt n}.
\]
Balanced binary automata therefore have worst PFA state cost
\[
  \Theta(C_n)
  =
  \Theta\left(\frac{N_{n,\lfloor n/2\rfloor}^2}{\sqrt n}\right).
\]

The half-filled point is therefore smaller than the naive quadratic
scale by a factor of order \(\sqrt n\).  The transition is not abrupt:
when the distance from half filling is itself of order \(\sqrt n\), the
sector sum converges to a nontrivial interpolation function.

\subsection{The critical window}

\begin{theorem}[Critical transition law]\label{thm:critical-window}
Let \(k_n\leq n/2\) and suppose
\[
  c_n
  =
  \frac{n-2k_n}{\sqrt n}
  \longrightarrow c\in[0,\infty).
\]
Then
\[
  \sqrt n\,
  \frac{S_{n,k_n}}{N_{n,k_n}^2}
  \longrightarrow
  F(c),
\]
where
\[
  F(c)
  =
  c+
  \frac{\sqrt\pi}{2}
  e^{c^2}\erfc(c).
\]
\end{theorem}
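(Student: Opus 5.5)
We follow the proof of \cref{thm:fixed-macro}: write \(j=k_n-\ell\), express each normalized sector dimension \(D_{n,k_n-\ell}/N_{n,k_n}\) through a ratio of binomial coefficients and a prefactor, find its pointwise limit on the scale \(\ell\sim\sqrt n\), and turn the sum into a Riemann integral. Put \(x=\ell/\sqrt n\). The prefactor is
\[
  \frac{n-2(k_n-\ell)+1}{n-(k_n-\ell)+1}
  =\frac{c_n\sqrt n+2\ell+1}{n-k_n+\ell+1}
  =\frac{2}{\sqrt n}\,(c+2x)\,(1+o(1)),
\]
uniformly for \(x\) in compact sets, because \(n-k_n+\ell\sim n/2\). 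For the binomial ratio we use the central local limit theorem (equivalently, Stirling's formula):
\[
  \log\frac{\binom n{k_n-\ell}}{\binom n{k_n}}
  =-\frac{2}{n}\Bigl[(n/2-k_n+\ell)^2-(n/2-k_n)^2\Bigr]+o(1)
  =-(2cx+2x^2)+o(1),
\]
uniformly on compact \(x\)-sets. Here \(n/2-k_n=c_n\sqrt n/2\).

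Squaring and summing gives
\[
  \sqrt n\,\frac{S_{n,k_n}}{N_{n,k_n}^2}
  =\sum_{\ell}\frac{1}{\sqrt n}\,\bigl[2(c+2x)\bigr]^2 e^{-4cx-4x^2}\,(1+o(1)),
\]
whose Riemann limit is
\[
  \int_0^\infty 4(c+2x)^2e^{-4cx-4x^2}\,dx .
\]
Substitute \(u=2x+c\). Then \(4cx+4x^2=u^2-c^2\) and \(dx=du/2\), so the integral equals
\[
  2e^{c^2}\int_c^\infty u^2e^{-u^2}\,du .
\]
Integrating by parts,
\[
  \int_c^\infty u^2e^{-u^2}\,du=\frac c2e^{-c^2}+\frac12\int_c^\infty e^{-u^2}\,du=\frac c2e^{-c^2}+\frac{\sqrt\pi}{4}\erfc(c).
\]
Hence the limit is \(c+\frac{\sqrt\pi}{2}e^{c^2}\erfc(c)=F(c)\). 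As a check, \(F(0)=\sqrt\pi/2\), which agrees with the Catalan computation \(C_n/N^2\sim\sqrt\pi/(2\sqrt n)\) in \cref{thm:catalan}.

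The main obstacle is justifying the interchange of limit and sum, i.e.\ controlling the tail \(\ell\gg\sqrt n\) uniformly; this is where the dominated-convergence step of \cref{thm:fixed-macro} must be replaced by a Gaussian bound. We would use the exact ratio
\[
  \frac{\binom n{j-1}}{\binom nj}=\frac{j}{n-j+1}.
\]
For \(j=k_n-i\le n/2\) this equals
\[
  \frac{j}{n-j+1}\le\exp\Bigl(-\frac{2(n/2-j)}{n}\Bigr)=\exp\Bigl(-\frac{c_n\sqrt n+2i}{n}\Bigr),
\]
since
\[
  \log\frac{n-j+1}{j}\ge\log\frac{n-j}{j}\ge\frac{n-2j}{n/2}\cdot\frac12 .
\]
Multiplying these factors for \(i=0,\dots,\ell-1\) gives
\[
  \binom n{k_n-\ell}\Big/\binom n{k_n}\le e^{-\ell(\ell-1)/n}.
\]
Together with the crude prefactor bound \(\le(c_n\sqrt n+2\ell+1)/(n/2)\), the summand is dominated by \(C(1+x)^2e^{-2x^2+O(1/\sqrt n)}/\sqrt n\). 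This is a uniformly integrable Riemann envelope, so the sum over \(\ell\ge L\sqrt n\) is small uniformly in \(n\) once \(L\) is large, while on \(x\le L\) the local expansion converges uniformly. The boundary term at \(\ell=k_n\), the zero-dimensional convention \(\binom n{-1}=0\), and the replacement of \(c_n\) by \(c\) contribute only \(o(1)\) after normalization. This proves the claimed limit for every \(c\in[0,\infty)\).
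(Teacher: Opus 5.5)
Your argument is correct and uses the same mechanism as the paper: a local Gaussian approximation of \(D_{n,j}\), a Riemann sum converging to a Gaussian moment integral, and a tail bound. The normalization and the tail control differ. The paper uses absolute local-CLT asymptotics \(\binom nj\approx 2^n\sqrt{2/(\pi n)}\,e^{-x_j^2/2}\) in the variable \(x_j=(n-2j)/\sqrt n\). It obtains \(S_{n,k_n}\sim\frac{4^{n+1}}{\pi n^{3/2}}\int_c^\infty x^2e^{-x^2}\,dx\) and \(N_{n,k_n}^2\sim 4^n\frac{2}{\pi n}e^{-c^2}\) separately and then divides. This buys an absolute asymptotic for \(S_{n,k_n}\) itself; at \(c=0\) it reproduces \(C_n\sim 4^n/(\sqrt\pi\, n^{3/2})\).

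You normalize by \(\binom n{k_n}\) from the start, so only binomial ratios enter. The factor \(e^{c^2}\) then comes from completing the square with \(u=c+2x\), which is precisely the paper's \(x_j\). Your tail estimate is the more self-contained part. Where the paper appeals to \emph{standard Gaussian bounds for binomial coefficients}, your exact ratio bound \(j/(n-j+1)\le e^{-(n-2j)/n}\) gives the explicit envelope \(\binom n{k_n-\ell}/\binom n{k_n}\le e^{-\ell(\ell-1)/n}\). The same exact product also yields the local expansion without Stirling.

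Two small repairs are needed:
\begin{itemize}
\item When \(c=0\), the multiplicative form \(\frac{2}{\sqrt n}(c+2x)(1+o(1))\) fails near \(x=0\): at \(\ell=0\) the prefactor equals \((c_n\sqrt n+1)/(n-k_n+1)\ne0\). State it additively as \(\frac{2}{\sqrt n}\bigl(c+2x+o(1)\bigr)\) uniformly on compacts. That is all the Riemann sum needs, and it plays the role of the paper's endpoint-safe expansion and separate region \(x_j<\varepsilon\).
\item The squared ratio is bounded by \(e^{-2x^2+2x/\sqrt n}\le e^{1-2x^2}\) (since \(x\le\sqrt n/2\)), not by \(e^{-2x^2+O(1/\sqrt n)}\). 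This does not affect the domination.
\end{itemize}
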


\begin{proof}
See \cref{app:critical-proof}.
\end{proof}

At \(c=0\),
\[
  F(0)=\frac{\sqrt\pi}{2},
\]
recovering the Catalan ratio.  As \(c\to\infty\),
\[
  F(c)=c+\frac{1}{2c}+O(c^{-3}),
\]
which matches the onset of the macroscopic law
\(1-2k/n\sim c/\sqrt n\).

\begin{corollary}[Permutation-sector binary simulation law]\label{cor:fixed-weight-pfa}
Let $\operatorname{SC}^{(2)}_{\R}(n,k)$ be the worst real-PFA
simulation cost over binary measure-once automata on $\mathcal H_{n,k}$
whose transitions and readout preserve the displayed Specht-sector
decomposition.  Then
\[
  \operatorname{SC}^{(2)}_{\R}(n,k)
  =\Theta(S_{n,k}).
\]
Balanced nontrivial accepting ranks on the sectors with
$D_{n,j}\geq2$ attain the lower order.
Consequently,
\[
  \operatorname{SC}^{(2)}_{\R}(n,k)
  =
  \begin{cases}
    \Theta(n^{2k}), & k \text{ fixed},\\[2mm]
    \Theta\!\left(\binom nk^2\right),
      & k/n\to\alpha<1/2,\\[2mm]
    \Theta(C_n),
      & k=\lfloor n/2\rfloor.
  \end{cases}
\]
\end{corollary}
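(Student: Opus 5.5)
The plan is to treat the fixed-weight module as a sector-preserving profile with sector dimensions $D_{n,j}$, $0\le j\le k$, and read off both bounds from the binary profile law. The lower and upper bounds come from \cref{thm:profile-bounds} and \cref{cor:worst-profile}, applied with $\mathbf D=(D_{n,j})$. Since the module is multiplicity free, the group-algebra image is $\bigoplus_j\operatorname{End}(S^{(n-j,j)})$. Consequently \cref{thm:binary-dense} supplies two unitaries whose closure is $\prod_j SU(D_{n,j})$ inside that image. This is all the prepare--test construction of \cref{thm:prepare-test} needs.

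For the upper bound, any sector-preserving automaton has $\beta(\A)\le\mathsf B(\mathbf D)\le 1+\sum_j(D_{n,j}^2-1)=S_{n,k}-k$ by \cref{thm:sector-cap}. The constant-mode refinement of \cref{thm:codimension-one-pfa} then gives cost at most $S_{n,k}-k+1\le S_{n,k}+1$. For the lower bound, take the active sectors to be those $j\ge1$ with $D_{n,j}\ge2$, and give them balanced ranks $r_j=\lfloor D_{n,j}/2\rfloor$. \Cref{thm:prepare-test} then yields
\[
2+\sum_{j:D_{n,j}\ge2}\left\lfloor\frac{D_{n,j}^2}{2}\right\rfloor\ \ge\ 2+\frac12\sum_{j}(D_{n,j}^2-1).
\]
Any sector with $D_{n,j}=1$ contributes nothing to either side. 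The right-hand side is $\ge\tfrac12(S_{n,k}-k)$, so the cost is $\Theta(S_{n,k}-k)$.

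The one point needing care is replacing $S_{n,k}-k$ by $S_{n,k}$. Here $D_{n,0}=1$, and for $1\le j\le k\le n/2$ one checks $D_{n,j}\ge 1$. Hence $S_{n,k}\ge k+1$ and $S_{n,k}-k\ge1$, but a constant-factor comparison needs more than that. I would argue as follows. Whenever $n\ge2$ and $k\ge1$, the sector $j=1$ has $D_{n,1}=n-1$. For $n\ge3$ this gives $S_{n,k}\ge (n-1)^2+1$, while $k\le n/2$. So $k\le S_{n,k}/2$ for all but finitely many small cases, and those cases, together with $k=0$ (where cost is $1=S_{n,0}$), are absorbed into the constants. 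This yields $\operatorname{SC}^{(2)}_\R(n,k)=\Theta(S_{n,k})$.

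The three regimes then follow by substituting earlier asymptotics. For fixed $k$, use $S_{n,k}\sim n^{2k}/(k!)^2$. For $k/n\to\alpha<1/2$, \cref{thm:fixed-macro} gives $S_{n,k}\sim(1-2\alpha)\binom nk^2$ with $1-2\alpha>0$. At half filling, \cref{thm:catalan} gives $S=C_n$. I expect no real obstacle beyond the bookkeeping of the $-k$ correction and the degenerate one-dimensional sectors, which the argument above handles.
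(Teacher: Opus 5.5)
Your proposal is correct and follows essentially the same route as the paper, which applies \cref{cor:worst-profile} to $D_{n,1},\ldots,D_{n,k}$ and then substitutes the fixed-$k$, macroscopic, and Catalan asymptotics. That corollary packages the balanced-rank prepare--test lower bound and the constant-normalized codimension-one upper bound you use. Your one addition is a uniform check that $S_{n,k}-k\geq S_{n,k}/2$, via $D_{n,1}=n-1$ and $k\leq n/2$; the paper only notes that the $-k$ correction is of lower order in each displayed regime.
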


\begin{proof}
Apply \cref{cor:worst-profile} to the dimensions
$D_{n,1},\ldots,D_{n,k}$.  Its balanced-rank lower bound and universal
upper bound are both of order
$1+\sum_{j=1}^k(D_{n,j}^2-1)=S_{n,k}-k$; in each displayed asymptotic
regime this has the same order as $S_{n,k}$.  The three cases then
follow from the fixed-$k$, macroscopic, and half-filling formulas above.
\end{proof}

The three regimes are governed by one representation-theoretic sum: endpoint domination away from half filling, Catalan accumulation at the symmetric point, and a Gaussian critical window between them.  The finite \(S_4\) calculation in \cref{app:s4-certificate} follows the same structure-to-witness chain through capacity, Fisher rank, and prepare--test margin; it serves as a concrete consistency check, not as evidence for the asymptotic law.

\section{Intermediate halting and nonhalting behavioral memory}

Intermediate halting tests the three-layer framework at a different boundary.  Absorption into accepting and rejecting accumulators changes the instance pairing, while repeated projection makes the nonhalting corners---including their trace coordinates---the relevant structural space.  Finite weak-leak shattering then realizes the resulting profile capacity, but the scalar boundary below shows why the dynamic phase upgrade from measure-once channels is not automatic.  Consider the standard Kondacs--Watrous measure-many model
\cite{KondacsWatrous,LiQiuEquivalence} with
\[
  \Hh
  =
  \Hh_{\mathrm{acc}}
  \oplus
  \Hh_{\mathrm{rej}}
  \oplus
  \Hh_{\mathrm{non}},
\]
and corresponding orthogonal projectors
$P_{\mathrm{acc}},P_{\mathrm{rej}},P_{\mathrm{non}}$.  For every input
symbol and both endmarkers, the machine applies a unitary $U_\sigma$ and
then measures this three-outcome decomposition.  We assume
\[
  U_\sigma,
  P_{\mathrm{acc}},
  P_{\mathrm{rej}},
  P_{\mathrm{non}}
  \in\mathcal C_K
\]
for every $\sigma$.  The initial state is supported on
$\Hh_{\mathrm{non}}$; acceptance is accumulated throughout the
computation; and any weight remaining nonhalting after the right
endmarker is nonaccepting.  These conventions fix the model used in the
following structural statement.

\begin{proposition}[Absorbing-channel representation]
\label{prop:mm-absorption}
For every symbol or endmarker $\sigma$, define
\begin{align*}
  \widehat\Phi_\sigma(X)
  ={}&
  P_{\mathrm{acc}}XP_{\mathrm{acc}}
  +P_{\mathrm{rej}}XP_{\mathrm{rej}}\\
  &+
  \sum_{j\in\{\mathrm{acc},\mathrm{rej},\mathrm{non}\}}
  P_jU_\sigma P_{\mathrm{non}}X
  P_{\mathrm{non}}U_\sigma^\dagger P_j.
\end{align*}
Then $\widehat\Phi_\sigma$ is completely positive and trace preserving,
and all its Kraus operators belong to $\mathcal C_K$.  Replacing every
intermediate measurement by these absorbing channels preserves the
acceptance probability of every word.  Moreover, for every isotypic
central projection $Z_\lambda$,
\[
  \widehat\Phi_\sigma^*(Z_\lambda)=Z_\lambda.
\]
Hence the total weight of every isotypic component is conserved
throughout the absorbed computation.
\end{proposition}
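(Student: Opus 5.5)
The plan is to read off a Kraus representation directly from the displayed formula and verify each claim from it. The Kraus operators are
\[
  P_{\mathrm{acc}},\qquad P_{\mathrm{rej}},\qquad
  A_j=P_jU_\sigma P_{\mathrm{non}}\quad(j\in\{\mathrm{acc},\mathrm{rej},\mathrm{non}\}).
\]
Complete positivity is then immediate. Membership in \(\mathcal C_K\) follows because \(\mathcal C_K\) is an algebra containing \(U_\sigma\) and the three projectors, so every product of them lies in it.

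For trace preservation, compute
\[
  P_{\mathrm{acc}}+P_{\mathrm{rej}}
  +\sum_j P_{\mathrm{non}}U_\sigma^\dagger P_jU_\sigma P_{\mathrm{non}}
  =P_{\mathrm{acc}}+P_{\mathrm{rej}}+P_{\mathrm{non}}U_\sigma^\dagger U_\sigma P_{\mathrm{non}}=I,
\]
using \(\sum_jP_j=I\) and unitarity.

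For the behavioral equivalence, I would argue by induction on the number of processed symbols. The claim is that, after each step, the absorbed state is block diagonal with respect to the three-outcome decomposition. Its \(\mathrm{acc}\) and \(\mathrm{rej}\) blocks should equal the cumulative (unnormalized) halting weights, and its \(\mathrm{non}\) block should equal the unnormalized nonhalting state of the Kondacs--Watrous computation.

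\begin{itemize}
\item \emph{Base case.} The initial state is supported on \(\Hh_{\mathrm{non}}\).
\item \emph{Accumulator blocks.} The first two Kraus terms leave the \(\mathrm{acc}\) and \(\mathrm{rej}\) blocks untouched, so these blocks act as absorbing accumulators. The nonhalting corner \(P_{\mathrm{non}}XP_{\mathrm{non}}\) is evolved by \(U_\sigma\) and then split into its three outcome blocks. This is exactly the measure-many update: it adds \(\Tr(P_{\mathrm{acc}}U_\sigma\rho_{\mathrm{non}}U_\sigma^\dagger)\) to the accepted weight and keeps \(P_{\mathrm{non}}U_\sigma\rho_{\mathrm{non}}U_\sigma^\dagger P_{\mathrm{non}}\) as the new nonhalting state.
\item \emph{Cross blocks.} The channel only ever reads \(P_{\mathrm{non}}XP_{\mathrm{non}}\) and the diagonal accumulator blocks, so no off-diagonal block is ever read, and dropping those coherences changes nothing.
\end{itemize}

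After the right endmarker, the acceptance probability is \(\Tr(P_{\mathrm{acc}}\,\cdot\,)\) of the final absorbed state, consistent with the convention that residual nonhalting weight is nonaccepting.

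The conservation law uses centrality. Since \(Z_\lambda\) is central in \(\mathcal C_K\) and all Kraus operators \(A_r\) lie in \(\mathcal C_K\), \(Z_\lambda\) commutes with each of them. Hence
\[
  \widehat\Phi_\sigma^*(Z_\lambda)=\sum_rA_r^\dagger Z_\lambda A_r=Z_\lambda\sum_rA_r^\dagger A_r=Z_\lambda,
\]
exactly as in the proof of \cref{cor:kraus-zero-capacity}. By duality, \(\Tr(Z_\lambda\widehat\Phi_w(\rho))=\Tr(Z_\lambda\rho)\) for every word \(w\), which is the conservation of every isotypic weight.

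The only step requiring care is the inductive bookkeeping of the equivalence, in particular confirming that block-diagonality is preserved and that the endmarker conventions match. Everything else is a short algebraic check.
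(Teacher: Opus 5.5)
Your proposal is correct and follows the paper's proof essentially step for step. It uses the same Kraus family $P_{\mathrm{acc}},P_{\mathrm{rej}},P_jU_\sigma P_{\mathrm{non}}$, the same completeness computation and membership in $\mathcal C_K$, the same induction over endmarkers and input for the accumulated-acceptance bookkeeping, and the same centrality identity $\widehat\Phi_\sigma^*(Z_\lambda)=Z_\lambda\sum_rA_r^\dagger A_r=Z_\lambda$; your remark that every output is block diagonal, so cross-block coherences are never read, just spells out in more detail the paper's statement that only the nonhalting block is updated.
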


\begin{proof}
A Kraus family is
\[
  P_{\mathrm{acc}},
  \qquad
  P_{\mathrm{rej}},
  \qquad
  P_jU_\sigma P_{\mathrm{non}}
  \quad
  (j\in\{\mathrm{acc},\mathrm{rej},\mathrm{non}\}).
\]
Its completeness relation is
\begin{align*}
  &P_{\mathrm{acc}}+P_{\mathrm{rej}}
  +P_{\mathrm{non}}U_\sigma^\dagger
  (P_{\mathrm{acc}}+P_{\mathrm{rej}}+P_{\mathrm{non}})
  U_\sigma P_{\mathrm{non}}\\
  &\hspace{8em}
  =P_{\mathrm{acc}}+P_{\mathrm{rej}}+P_{\mathrm{non}}
  =I.
\end{align*}
Thus the map is trace preserving, and every displayed Kraus operator
lies in $\mathcal C_K$.  The accepting and rejecting blocks retain the
probability already absorbed there, while only the nonhalting block is
updated.  Induction over the left endmarker, input, and right endmarker
shows that the accepting trace equals the cumulative measure-many
acceptance probability; this is the total-state bookkeeping used for
general one-way quantum automata \cite{LiEtAl1gQFA}.

Since $Z_\lambda$ is central in $\mathcal C_K$, it commutes with every
Kraus operator $A_r$ above.  Therefore
\[
  \widehat\Phi_\sigma^*(Z_\lambda)
  =
  \sum_rA_r^\dagger Z_\lambda A_r
  =
  Z_\lambda\sum_rA_r^\dagger A_r
  =Z_\lambda.
\]
\end{proof}

Applied to the absorbed dynamics, \cref{thm:channel-exact-rank} identifies the measure-many continuation space with its exact reachable--observable pairing.  Absorption changes the operator coordinates that carry the behavior, but not the instance-geometric invariant.

Write the symmetry decomposition as
\[
  \Hh\cong
  \bigoplus_\lambda
  \mathcal M_\lambda\otimes V_\lambda
\]
and resolve the three halting projectors by
\[
  P_{\mathrm{acc}}
  =
  \bigoplus_\lambda A_\lambda\otimes I_{V_\lambda},
  \qquad
  P_{\mathrm{rej}}
  =
  \bigoplus_\lambda R_\lambda\otimes I_{V_\lambda},
  \qquad
  P_{\mathrm{non}}
  =
  \bigoplus_\lambda N_\lambda\otimes I_{V_\lambda}.
\]
Set
\[
  a_\lambda=\rank A_\lambda,
  \qquad
  r_\lambda^{\mathrm{rej}}=\rank R_\lambda,
  \qquad
  n_\lambda=\rank N_\lambda,
  \qquad
  a_\lambda+r_\lambda^{\mathrm{rej}}+n_\lambda=m_\lambda,
\]
and let
\[
  p_\lambda=\Tr(Z_\lambda\rho_0)
\]
be the initial isotypic weight.  For a feasible fixed profile,
\[
  \sum_\lambda p_\lambda=1,
  \qquad
  p_\lambda>0\Longrightarrow n_\lambda>0,
\]
because the initial state is supported on the nonhalting subspace.
The nonhalting corner has dimension
\[
  D_{\mathrm{non}}
  =
  \dim_\C
  \bigl(P_{\mathrm{non}}\mathcal C_KP_{\mathrm{non}}\bigr)
  =
  \sum_\lambda n_\lambda^2.
\]

\begin{corollary}[Measure-many commutant upper bound]
\label{cor:mm-upper}
Every measure-many automaton satisfying the hypotheses of
\cref{prop:mm-absorption} has
\[
  \beta(\A)
  \leq
  \min\left\{
    1+\nu_K,
    1+D_{\mathrm{non}}
  \right\}.
\]
Consequently, its strict-cutpoint language is recognized by a real
probabilistic finite automaton with at most
\[
  \min\left\{
    1+\nu_K,
    1+D_{\mathrm{non}}
  \right\}+1
\]
states.
\end{corollary}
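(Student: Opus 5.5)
The plan is to apply \cref{thm:channel-exact-rank} to the absorbed dynamics of \cref{prop:mm-absorption}. This realizes the measure-many automaton as a symmetry-compatible covariant channel automaton. The initial state is \(\rho_0\), the symbol channels are \(\widehat\Phi_\sigma\) (with the left endmarker channel folded into the initial state and the right endmarker channel folded into the effect), and the readout effect is \(P_{\mathrm{acc}}\in\mathcal C_K\). Hence \(\beta(\A)\) equals the rank of the Hilbert--Schmidt pairing between \(\cR_\Phi\) and \(\cE_\Phi\), where \(\cR_\Phi=\spanR\{\widehat\Phi_x(\widehat\Phi_{\text{left}}(\widetilde\rho_0))\}\) and \(\cE_\Phi=\spanR\{\widehat\Phi_y^*(\widehat\Phi_{\text{right}}^*(P_{\mathrm{acc}}))\}\). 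Both spans lie in \(\Herm(\mathcal C_K)\). I then prove the two bounds separately by exhibiting a space of the stated dimension containing one side of the pairing, up to a single constant mode.

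For the bound \(1+\nu_K\): all Kraus operators lie in \(\mathcal C_K\), so every \(\widehat\Phi_\sigma^*\) fixes each \(Z_\lambda\) by \cref{prop:mm-absorption}. The channels are therefore \(\Pi\)-conservative for the singleton partition, and every reachable state lies in the affine slice \(\mathcal S_{\mathbf p}\) with fixed isotypic weights \(p_\lambda\). As in the proof of \cref{thm:component-channel-capacity}, the linear span of this slice has dimension at most \(D_K-z_K+1=1+\nu_K\).

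For the bound \(1+D_{\mathrm{non}}\): the key point is the structure of the reachable states. Each reachable state has the block form \(X=\alpha P_{\mathrm{acc}}\)-part \(\oplus\) rejecting part \(\oplus\) nonhalting corner. Only the accepting trace \(\Tr(P_{\mathrm{acc}}X)\) and the nonhalting corner \(P_{\mathrm{non}}XP_{\mathrm{non}}\in\Herm(P_{\mathrm{non}}\mathcal C_KP_{\mathrm{non}})\) can influence any future acceptance. This holds because the accepting and rejecting blocks are frozen by \(\widehat\Phi_\sigma\), the accepting block contributes exactly its trace to the final readout, and the rejecting block contributes nothing. Dually, every observable effect has the form \(P_{\mathrm{acc}}\oplus 0\oplus E_{\mathrm{non}}\) with \(E_{\mathrm{non}}\) in the nonhalting corner, which follows by induction since \(\widehat\Phi_\sigma^*\) fixes \(P_{\mathrm{acc}}\) and \(P_{\mathrm{rej}}\).

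I therefore factor the Hankel matrix through the map \(X\mapsto(\Tr P_{\mathrm{acc}}X,\;P_{\mathrm{non}}XP_{\mathrm{non}})\) into \(\R\oplus\Herm(P_{\mathrm{non}}\mathcal C_KP_{\mathrm{non}})\). This target has real dimension \(1+D_{\mathrm{non}}\), with the real dimension of the Hermitian part of a \(C^*\)-algebra equal to its complex dimension. The rank bound follows because \(H(x,y)=a_x+\Tr(E_{\mathrm{non},y}X_{\mathrm{non},x})\), where \(a_x\) is the accepted mass. Taking the minimum gives the Hankel bound.

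The step I expect to be the main obstacle is the PFA conclusion, since \cref{cor:pfa-upper} in general gives \(\beta+2\) and the statement needs only \(\beta+1\). For this I need a normalized constant mode in a minimal representation of the appropriate dimension. In the \(1+\nu_K\) representation, the trace coordinate is a left constant mode: the total trace is preserved, and on the slice it is the normalized functional. In the \(1+D_{\mathrm{non}}\) representation, trace preservation gives \(\Tr X=a_x+r_x+\Tr X_{\mathrm{non}}=1\), but the rejected mass \(r_x\) is not recorded. I would handle this by instead using the row \(q\) corresponding to the effect \(I\). The identity effect is fixed by every \(\widehat\Phi_\sigma^*\) and takes the value \(1\). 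If \(I\) lies in the effect span, this gives a right constant mode with value \(1\). Otherwise I would enlarge the coordinate by total accepted-or-rejected mass, i.e.\ accumulate the halted mass \(a_x+r_x=1-\Tr X_{\mathrm{non}}\) and observe that it is already determined by the nonhalting trace. The pair \((a_x,X_{\mathrm{non}})\) then determines the trace, so the constant functional \(1\) is a linear function on the \((1+D_{\mathrm{non}})\)-dimensional space. Thus the centered series \(f-\tau\) has a representation of the same dimension, and \cref{thm:codimension-one-pfa} yields the stated \(+1\). I would verify in the same way that the cutpoint centering stays within dimension \(1+\nu_K\) in the other case, and take the smaller of the two representations.
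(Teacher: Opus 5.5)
Your two Hankel-rank bounds are correct and follow the paper's route. The absorbing channels are Kraus-wise charge zero, so isotypic weights are conserved and the reachable span has dimension at most $1+\nu_K$. The pairing also factors through the accumulated acceptance together with the nonhalting corner, which gives $1+D_{\mathrm{non}}$.

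The gap is in the step that should turn the second representation into a $(2+D_{\mathrm{non}})$-state PFA. You assert that the pair $(a_x,X_{\mathrm{non},x})$ determines the trace, so the constant $1$ is a linear functional on the $(1+D_{\mathrm{non}})$-dimensional space. It is not. Once the rejected mass is eliminated, $1=a_x+r_x+\Tr X_{\mathrm{non},x}$ becomes an affine relation, not a linear one. Concretely, take a one-dimensional nonhalting corner. If the empty prefix gives $(0,X)$ and a one-letter prefix that rejects half of the survivor gives $(0,X/2)$, no linear functional equals $1$ on both. Your other fallback fails too. The effect $I$ is not a functional of $(a,X)$. It also never lies in the effect span when $P_{\mathrm{rej}}\neq0$, because every observable effect vanishes on the rejecting block. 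Recording the halted mass as an extra coordinate does make the trace linear, but the representation then has dimension $2+D_{\mathrm{non}}$, and you are back at the generic $\beta+2$ bound.

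The missing idea is to center on the prefix side. Consider the row vector $q=(1,0)$: all mass already accepted, empty nonhalting block. It is fixed by every symbol update $(a,X)\mapsto\bigl(a+\Tr(P_{\mathrm{acc}}U_\sigma XU_\sigma^\dagger),\,P_{\mathrm{non}}U_\sigma XU_\sigma^\dagger P_{\mathrm{non}}\bigr)$. Let $U_{\mathrm{end}}$ be the right-endmarker unitary; the right-endmarker functional $a+\Tr(P_{\mathrm{acc}}U_{\mathrm{end}}XU_{\mathrm{end}}^\dagger)$ takes the value $1$ on $q$. This is exactly a normalized left constant mode ($qA_\sigma=q$, $qv=1$) in \cref{thm:codimension-one-pfa}. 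Replacing the initial pair $(a_0,X_0)$ by $(a_0-\tau,X_0)$ represents $f_\A-\tau$ in dimension $1+D_{\mathrm{non}}$. This is what the paper means when it says the accumulator representation contains a normalized constant mode.

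Your treatment of the $1+\nu_K$ representation is fine, with one correction of convention. The invariant trace functional is a column $t$ with $A_\sigma t=t$ and $ut=1$. In the paper's convention that is the right constant mode, not a left one.
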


\begin{proof}
The absorbing channels are Kraus-wise charge zero.  Incorporate the
left endmarker into the initial state and regard the right endmarker as
one additional fixed charge-zero symbol.  The input acceptance series
is a right translate of the resulting measure-once series, so its
Hankel rank cannot increase.  The bound
$\beta(\A)\leq1+\nu_K$ therefore follows from
\cref{cor:kraus-zero-capacity}.

For the corner bound, twirl the absorbed initial state.  After each
input symbol it is enough to retain the pair
\[
  (a,X)
  \in
  \R\oplus
  \Herm\bigl(
    P_{\mathrm{non}}\mathcal C_KP_{\mathrm{non}}
  \bigr),
\]
where $a$ is the acceptance probability already absorbed and $X$ is
the nonhalting block.  Reading a symbol $\sigma$ acts linearly by
\[
  (a,X)
  \longmapsto
  \left(
    a+\Tr(P_{\mathrm{acc}}U_\sigma XU_\sigma^\dagger),
    P_{\mathrm{non}}U_\sigma XU_\sigma^\dagger
    P_{\mathrm{non}}
  \right).
\]
The right endmarker supplies a fixed linear functional of this pair.
Hence the acceptance function has a real linear representation of
dimension $1+D_{\mathrm{non}}$, proving the second rank bound.  Both
this accumulator representation and the charge-zero commutant
representation contain a normalized constant mode.  The
probabilistic-state bound therefore follows from
\cref{thm:codimension-one-pfa}.
\end{proof}

The nonhalting weight inside an isotypic component may decrease, but its
loss is absorbed into accepting or rejecting states in the same
component.  Intermediate measurement therefore does not release the
central coordinates of $\mathcal C_K$.  It does, however, make the
surviving weight of each nonhalting block a dynamical coordinate.  This
closes the capacity once the halting profile is fixed.

Call a label $\lambda$ active when
\[
  p_\lambda>0,
  \qquad
  a_\lambda>0,
  \qquad
  n_\lambda>0,
\]
and write $\Lambda_{\mathrm{act}}^{\mathrm{MM}}$ for the active set.  Put
\[
  D_{\mathrm{MM}}
  =
  \sum_{\lambda\in\Lambda_{\mathrm{act}}^{\mathrm{MM}}}
  n_\lambda^2,
  \qquad
  M_{\mathrm{MM}}=1+D_{\mathrm{MM}}.
\]
For $q\geq1$, let
\[
  B_q^{\mathrm{MM}}
  (\mathbf a,\mathbf r^{\mathrm{rej}},\mathbf n,\mathbf p)
\]
be the supremum of the real Hankel rank over $K$-commuting
measure-many automata with the displayed projector ranks and initial
isotypic weights and with at most $q$ input symbols.  The left and right
endmarkers follow the convention fixed above and are not counted in
$q$.

\begin{theorem}[Exact measure-many profile capacity]
\label{thm:mm-profile-capacity}
If $\Lambda_{\mathrm{act}}^{\mathrm{MM}}$ is empty, then
\[
  B_q^{\mathrm{MM}}
  (\mathbf a,\mathbf r^{\mathrm{rej}},\mathbf n,\mathbf p)=0.
\]
Otherwise, for every $q\geq1$,
\[
  B_q^{\mathrm{MM}}
  (\mathbf a,\mathbf r^{\mathrm{rej}},\mathbf n,\mathbf p)
  =
  M_{\mathrm{MM}}
  =
  1+
  \sum_{\lambda\in\Lambda_{\mathrm{act}}^{\mathrm{MM}}}
  n_\lambda^2.
\]
In particular, one input symbol attains the exact profile capacity.
\end{theorem}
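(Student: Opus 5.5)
The argument has two halves, an upper bound and a unary saturating construction, and both work in the accumulator representation from the proof of \cref{cor:mm-upper}.

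\textbf{Upper bound.} After twirling, the absorbed state is encoded by the pair \((a,X)\), where \(X\in\Herm(P_{\mathrm{non}}\mathcal C_KP_{\mathrm{non}})\) decomposes blockwise as \(X=\bigoplus_\lambda X_\lambda\otimes I_{V_\lambda}/d_\lambda\) with \(X_\lambda\in\Herm(N_\lambda\mathcal M_\lambda N_\lambda)\). Each update is block diagonal in \(\lambda\), so the accumulator increment \(\Tr(A_\lambda W_{\sigma,\lambda}X_\lambda W_{\sigma,\lambda}^\dagger)\) depends only on block \(\lambda\). If \(p_\lambda=0\), then \(X_\lambda\) stays zero. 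If \(a_\lambda=0\), block \(\lambda\) never feeds the accepting coordinate, either through input symbols or through the right endmarker. So only the blocks with \(\lambda\in\Lambda_{\mathrm{act}}^{\mathrm{MM}}\) contribute, and the functional acts on \(\R\oplus\bigoplus_{\lambda\in\Lambda_{\mathrm{act}}^{\mathrm{MM}}}\Herm(\C^{n_\lambda})\), a space of real dimension \(1+D_{\mathrm{MM}}\). Here the extra \(1\) is the accumulator \(a\), which starts at zero and absorbs the leaked weight. I will state this as a restricted linear representation: the quotient by the inactive coordinates is invariant because no active coordinate ever depends on an inactive one. If the active set is empty, every active block is absent, the acceptance sequence is identically zero, and the rank is \(0\).

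\textbf{Unary saturation.} This is modelled on \cref{thm:alphabet-capacity} and on the component-channel construction, with one change. Because of the leak, the nonhalting trace in each active block is itself a decaying mode, so the diagonal directions are now released. This is exactly why the count is \(n_\lambda^2\) and not \(n_\lambda(n_\lambda-1)\). In each active block I choose \(W_\lambda\) on \(\mathcal M_\lambda\) so that the compressed map \(X_\lambda\mapsto N_\lambda W_\lambda X_\lambda W_\lambda^\dagger N_\lambda\) has a leak of size \(\epsilon\) into \(A_\lambda\), for instance by a small rotation coupling the nonhalting subspace to one accepting vector. The compression \(T_\lambda=N_\lambda W_\lambda N_\lambda\) is then a strict contraction with distinct eigenvalues \(\mu_{\lambda,i}\), and the map on \(\Herm(\C^{n_\lambda})\) has eigenvalues \(\mu_{\lambda,i}\overline{\mu_{\lambda,j}}\). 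The intended genericity statement is the following.
\begin{itemize}
\item[(i)] For generic parameters, the \(n_\lambda^2\) products \(\mu_{\lambda,i}\overline{\mu_{\lambda,j}}\) are pairwise distinct, distinct across blocks, and different from \(1\).
\item[(ii)] The accepting increment functional and the right-endmarker functional have nonzero coefficients on every eigenvector.
\item[(iii)] The initial block (with weight \(p_\lambda\)) has a nonzero component on every eigenvector.
\end{itemize}

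\textbf{Why this gives rank \(M_{\mathrm{MM}}\).} The accumulated sequence then takes the form \(c_0+\sum c_{\lambda ij}z_{\lambda ij}^t\). The constant \(c_0\) arises from summing the geometric series, giving the limit of absorbed acceptance. \Cref{lem:exponential-hankel} then yields rank \(1+D_{\mathrm{MM}}\), provided \(c_0\ne0\) and every \(c_{\lambda ij}\ne0\).

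\textbf{Main obstacle.} The hard step is the genericity argument: showing the coefficients are simultaneously nonzero. I would do this by a polynomial-nonvanishing argument in the parameters \((W_\lambda,\rho_0,\text{endmarker unitaries})\). Each coefficient is a real-analytic function on a connected parameter set, so it suffices to exhibit one point where it is nonzero. I would try an explicit point with small leak, using perturbation theory around a unitary on \(N_\lambda\) with generic phases. There the diagonal modes become \(|\mu_i|^2\approx1-O(\epsilon)\) with distinct rates, and the coefficients reduce to products of matrix entries that can be chosen nonzero as in \cref{thm:alphabet-capacity}. I also need the constant \(c_0\) to be nonzero, which holds because with a strict leak into accepting states the total absorbed acceptance is positive.
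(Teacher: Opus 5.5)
Your proposal is correct in outline, and your upper bound is the paper's. In both, the Hankel matrix factors through $\R\oplus\bigoplus_{\lambda\in\Lambda_{\mathrm{act}}^{\mathrm{MM}}}\Herm(N_\lambda\mathcal M_\lambda)$. Strictly, this is a restriction to the reachable subspace, where the $p_\lambda=0$ blocks vanish, followed by a quotient by the unobservable $a_\lambda=0$ blocks; it is not only a quotient.

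Your saturation step takes a genuinely different route. The paper avoids genericity entirely. \cref{lem:mm-defect-modes} prescribes eigenvalues $\zeta_i$ in the open disc with all products distinct and sets $V=G^{1/2}AG^{-1/2}$, where $G_{ij}=(1-\overline{\zeta_i}\zeta_j)^{-1}$ is the Stein Gramian and $\psi\propto G^{1/2}b$. Then $\|V^t\psi\|^2\propto\sum_{i,j}G_{ij}(\overline{\zeta_i}\zeta_j)^t$, whose coefficients are nonzero by inspection. It dilates $V$ to a unitary on $n_\lambda+1$ dimensions, and with identity endmarkers $f_\A(a^t)=p_{\mathrm{act}}-\sum_{\lambda\in\Lambda_{\mathrm{act}}^{\mathrm{MM}}}p_\lambda\|V_\lambda^t\psi_\lambda\|^2$.

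Your small-rotation leak gives $T=(I-(1-\cos\theta)|v\rangle\langle v|)U_0$, which is also a defect-one contraction, and the perturbative check does succeed. You should, however, state what it must establish. Write $X_0=\sum_{i,j}x_{ij}|r_i\rangle\langle r_j|$ in unit right eigenvectors of $T$. The coefficient of $(\mu_i\overline{\mu_j})^t$ in $\Tr(T^tX_0T^{\dagger t})$ is then $x_{ij}\langle r_j|r_i\rangle$. So the off-diagonal modes survive only because $T$ is non-normal; a normal $T$ gives at most $1+\sum_\lambda n_\lambda$ modes.

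With $U_0e_i=e^{i\phi_i}e_i$ and $v_i=\langle e_i|v\rangle$, your model gives
\[
\langle r_j|r_i\rangle=-\theta^2v_j\overline{v_i}e^{i\phi_i}/(e^{i\phi_i}-e^{i\phi_j})+O(\theta^4).
\]
This vanishes in the unitary limit, but its leading term is nonzero when $v$ has full support in the $U_0$-eigenbasis. Consequently:
\begin{itemize}
\item Your ``products of matrix entries'' heuristic is right, with the rank-one leak effect $\sin^2\theta\,U_0^\dagger|v\rangle\langle v|U_0$ playing the role of the readout.
\item The diagonal coefficients are exactly $x_{ii}$.
\item Distinct decay rates require distinct $|v_i|$.
\end{itemize}

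Two small repairs are needed. First, condition (ii) on the right-endmarker functional is not the right requirement: its $z^t$ terms could cancel the accumulated ones. Take that endmarker to be the identity, as the paper does. Second, the global real-analyticity step is unnecessary, since one explicit small-$\theta$ point suffices. Note also that $T$ has spectral radius, not norm, below one.

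The paper's lemma buys exact non-perturbative coefficients and modes anywhere in the disc. Your route stays closer to the weak-leak picture and to the unary construction of \cref{thm:alphabet-capacity}, at the price of a first-order eigenvector computation and modes clustered near the unit circle.
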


\begin{proof}
For a prefix $x$, let $a_x$ be the acceptance probability already
absorbed and let $X_{x,\lambda}$ be the nonhalting multiplicity block in
label $\lambda$.  Every suffix $y$ has a positive nonhalting effect
$0\leq E_{y,\lambda}\leq N_\lambda$ such that
\[
  f_\A(xy)
  =
  a_x+
  \sum_\lambda\Tr(E_{y,\lambda}X_{x,\lambda}).
\]
If $p_\lambda=0$, then $X_{x,\lambda}=0$ for every prefix.  If
$a_\lambda=0$, then $E_{y,\lambda}=0$ for every suffix.  Labels with
$n_\lambda=0$ have no nonhalting corner.  Hence the Hankel matrix
factors through
\[
  \R\oplus
  \bigoplus_{\lambda\in\Lambda_{\mathrm{act}}^{\mathrm{MM}}}
  \Herm(N_\lambda\mathcal M_\lambda),
\]
a real space of dimension $M_{\mathrm{MM}}$.  This proves the upper
bound.  If the active set is empty, the acceptance function is
identically zero.

Assume now that the active set is nonempty.  For each active label,
choose a contraction $V_\lambda$ and a unit vector
$|\psi_\lambda\rangle\in N_\lambda\mathcal M_\lambda$ as in
\cref{lem:mm-defect-modes}.  All products
$z_{\lambda,ij}=\zeta_{\lambda,i}
\overline{\zeta_{\lambda,j}}$ are globally distinct, and
\[
  \|V_\lambda^t\psi_\lambda\|^2
  =
  \sum_{i,j=1}^{n_\lambda}
  c_{\lambda,ij}z_{\lambda,ij}^{\,t}
\]
with every coefficient nonzero.  The same lemma completes
$V_\lambda$ to a unitary on the nonhalting space plus one accepting
direction.  Extend it by the identity on the remaining multiplicity
coordinates and tensor with $I_{V_\lambda}$.  Use these block unitaries
as one common input symbol, take the left endmarker to be the identity,
and let all residual nonhalting weight after the right endmarker be
nonaccepting.

Choose the active initial multiplicity state in label $\lambda$ to be
$|\psi_\lambda\rangle\langle\psi_\lambda|$ with weight $p_\lambda$.
Every defect leaks into acceptance and no active weight leaks into
rejection.  Therefore
\[
  f_\A(a^t)
  =
  p_{\mathrm{act}}
  -
  \sum_{\lambda\in\Lambda_{\mathrm{act}}^{\mathrm{MM}}}
  p_\lambda
  \sum_{i,j=1}^{n_\lambda}
  c_{\lambda,ij}z_{\lambda,ij}^{\,t},
  \qquad
  p_{\mathrm{act}}
  =
  \sum_{\lambda\in\Lambda_{\mathrm{act}}^{\mathrm{MM}}}p_\lambda.
\]
The constant mode and the $D_{\mathrm{MM}}$ nonconstant modes are all
distinct and have nonzero coefficients.  The Vandermonde factorization
in \cref{lem:exponential-hankel} gives Hankel rank
$1+D_{\mathrm{MM}}$.
\end{proof}

The profile law counts the full Hermitian operator space of every
active nonhalting multiplicity block, including its trace direction.
A defect-one compression is already sufficient.  The construction in
\cref{lem:mm-defect-modes} solves a discrete Stein equation for a
diagonal matrix with prescribed eigenvalues and then conjugates by the
Stein Gramian.  The resulting contraction has rank-one defect but is
generically nonnormal, so one survival probability contains all
ordered products
\(\zeta_i\overline{\zeta_j}\), rather than only the diagonal moduli.
These \(n_\lambda^2\) modes account for the full nonhalting corner,
while accumulated acceptance supplies the shared constant mode.

Let
\[
  \operatorname{SC}_{\R}^{\mathrm{MM}}
  (\mathbf a,\mathbf r^{\mathrm{rej}},\mathbf n,\mathbf p)
\]
be the worst, over arbitrary finite input alphabets and the same
measure-many profile, of the minimum number of states in a real PFA
recognizing the same strict-cutpoint language.

\begin{theorem}[Measure-many probabilistic state law]
\label{thm:mm-pfa}
For every nonempty active profile,
\[
  M_{\mathrm{MM}}
  \leq
  \operatorname{SC}_{\R}^{\mathrm{MM}}
  (\mathbf a,\mathbf r^{\mathrm{rej}},\mathbf n,\mathbf p)
  \leq
  M_{\mathrm{MM}}+1.
\]
The lower bound is witnessed by an alphabet of size at most
\[
  |\Lambda_{\mathrm{act}}^{\mathrm{MM}}|+2
  \leq M_{\mathrm{MM}}+1.
\]
If the active set is empty, the exact value is one.
\end{theorem}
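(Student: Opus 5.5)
The upper bound and the empty case come directly from the capacity results; the work lies in the lower bound, which I would build as a block-cyclic witness on the nonhalting corners.

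\emph{Upper bound and empty case.} I will combine \cref{thm:mm-profile-capacity} with the accumulator representation from \cref{cor:mm-upper}. Since inactive labels never contribute, the accumulator representation restricted to the factor space $\R\oplus\bigoplus_{\lambda\in\Lambda_{\mathrm{act}}^{\mathrm{MM}}}\Herm(N_\lambda\mathcal M_\lambda)$ has dimension $M_{\mathrm{MM}}$. Its absorbed-acceptance coordinate is a normalized left constant mode. Centering the cutpoint inside it via \cref{thm:codimension-one-pfa} therefore gives at most $M_{\mathrm{MM}}+1$ states. If the active set is empty, the acceptance function is identically zero, so the language is empty and one state is exact.

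\emph{Lower bound.} The plan is to imitate the block-cyclic witness of \cref{thm:fixed-readout-component-pfa}, now with $d=D_{\mathrm{MM}}=M_{\mathrm{MM}}-1$ affine coordinates carried by the active nonhalting blocks $X_\lambda\in\Herm(N_\lambda\mathcal M_\lambda)$, including their trace directions. The aim is to realize the transpose of $H_d^{(-1)}$ and then invoke \cref{lem:affine-complete-sign,lem:pfa-cutpoint-rank}.

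Concretely, I will fix a reference nonhalting state $\omega_\lambda$ in each active block, weighted by $p_\lambda$. I then look for unitaries whose induced linear maps on the nonhalting corners are weak-leak perturbations of the identity. The compression $X\mapsto N U XU^\dagger N$ plus the leaked accepting trace is a contraction, and perturbing near a nearly isometric block gives enough freedom.

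The witness uses the following letters:
\begin{itemize}
\item a \emph{shift} letter $s_0$ whose corner action, after removing the common near-identity factor, realizes the block map $x\mapsto\varepsilon Ax$ of \cref{lem:block-cyclic-impulse}, with $\ell_\lambda(X)$ the leak-to-accept functional;
\item one \emph{injection} letter per active label, or a combined pair, that adds $\varepsilon v_\lambda$.
\end{itemize}
With the identity left endmarker, this is how the count $|\Lambda_{\mathrm{act}}^{\mathrm{MM}}|+2$ would arise: the shift letter, the injectors, and one further letter serving as the readout or reset. Prefixes of length at most $d$ encode bits, suffixes $s_0^k$ read them, and a cutpoint $\varepsilon^d/2$ above the baseline gives strict signs with a uniform margin, exactly as in \eqref{eq:block-cyclic-readout}.

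\emph{Main obstacle.} The difficulty is that measure-many dynamics are not freely Choi-interior affine. The maps come from unitaries followed by projection, so the linear part is $X\mapsto NUXU^\dagger N$. This is neither arbitrary nor affine-injective, and the leak steadily drains the trace coordinate, so the baseline is not fixed.

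I would handle this with a two-scale plan:
\begin{enumerate}
\item Choose $U_\sigma=e^{\varepsilon H_\sigma}$ acting near identity on $N_\lambda\mathcal M_\lambda$ plus one accepting direction, as in \cref{lem:mm-defect-modes}, so that to first order the corner map is $I+\varepsilon L_\sigma$. Here $L_\sigma$ ranges over commutators plus a negative defect term $-\{V^\dagger V,\cdot\}$.
\item Verify that these generators span all real-linear endomorphisms needed on $\Herm(N_\lambda\mathcal M_\lambda)$, including the trace direction, which is moved only by the defect.
\item Replace the exact cyclic system by its first-order linearization, with the drift absorbed into the cutpoint as a known word-length-dependent baseline. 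Because all tested words of a given length are compared, and suffix lengths differ, I would pad words to a common length with a letter acting as the identity on the corners so that the baseline is the same for all entries.
\end{enumerate}
Checking that the second-order errors stay below the $\varepsilon^d/4$ margin is the step I expect to need the most care.
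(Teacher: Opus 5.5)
Your upper bound and empty case follow the paper's argument. The absorbed-acceptance vector $(1,0)$ is fixed by every letter and is read as $1$, so it is a normalized left constant mode of the $M_{\mathrm{MM}}$-dimensional accumulator representation, and \cref{thm:codimension-one-pfa} gives $M_{\mathrm{MM}}+1$. In the empty case the zero series gives the empty or the universal language, depending on the sign of $\tau$; one state suffices either way.

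The lower bound has a genuine gap. The block-cyclic witness of \cref{thm:fixed-readout-component-pfa} depends on Choi-interior maps $x\mapsto\varepsilon Ax+\varepsilon bv$. These supply three things: an arbitrary real-linear part, transported so that $\ell_C$ becomes the Vandermonde covector; a strong contraction toward a fixed replacement baseline; and a trace-proportional affine injection. A $K$-commuting measure-many letter supplies none of them. On each nonhalting corner it acts by the single-Kraus compression $X\mapsto VXV^\dagger$, $V=N_\lambda U_\lambda N_\lambda$, and compositions stay in this class. The first-order generators $L_K(X)=KX+XK^\dagger$ obey $[L_K,L_{K'}]=L_{[K,K']}$, so they form a Lie algebra of real dimension $2n_\lambda^2-1$ inside $\operatorname{End}_{\R}\Herm(\C^{n_\lambda})$, which has dimension $n_\lambda^4$. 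Your step 2 is therefore false as soon as $n_\lambda\geq2$.

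Removing the near-identity factor does not recover the hierarchy of \eqref{eq:block-cyclic-readout} either. Even granting the impulse structure for $L$, replacing $(\varepsilon A)^m$ by $(I+\varepsilon L)^m$ makes every bit $b_t$ with $t\leq k+1$ reach suffix $k$ at the same order, with weight $\binom{k+d-t}{d-1}\varepsilon^d$. So $b_{k+1}$ is never isolated.

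The drain is also state dependent and accumulated. The acceptance of $ws_0^k$ sums the leaks along the whole word (plus whatever the right endmarker reads); it does not read the current state as \eqref{eq:block-cyclic-readout} does. Padding with an identity letter, which neither leaks nor moves the state, changes none of this. With a trivial right endmarker the problem is fatal: $k\mapsto f(ws_0^k)$ is then nondecreasing, so a prefix with $b_1=1$ cannot be positive at $k=0$ and nonpositive at $k=d$. This is the monotonicity obstruction recorded in \cref{app:operational-boundaries}.

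The missing idea is to shatter with positive test effects instead of engineering linear dynamics on the corner. The paper fixes $D=M_{\mathrm{MM}}-1$ affinely independent, rational, blockwise rank-one survivors $X_i$ of a common trace $s_0$. Because they lie in an affine hyperplane that avoids the origin, they are linearly independent and have Hilbert--Schmidt duals $G_i$. The paper writes $F_\eta=L_0C_0I+R\sum_i\eta_iG_i$ as a nonnegative integer combination of a tight rank-one frame with $\sum_Qr_QQ=C_0I$. Each frame projection is compiled as a weak leak conjugated by nonleaking dense controls, $W_Q^{-1}\ell_\lambda^{N_{\eta,Q}}W_Q$, so the acceptance effect of the test word is $\delta F_\eta+O(\delta^2R^2)+O(\zeta)$ (\cref{lem:mm-compiled-frame}).

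The rest of that construction then goes through as follows:
\begin{itemize}
\item The shift $L_0C_0I$ is harmless because every $X_i$ has trace $s_0$.
\item Aligned-leak preparations reach the $X_i$ with a common accumulated baseline $a_0$, up to $O(\delta)$.
\item A long leak prefix gives the positive anchor.
\item Choosing $R>4C$, then $\delta R\ll1$, then $\zeta\ll\delta R$ yields $H_D^{(+1)}$, whose sign-rank is $M_{\mathrm{MM}}$.
\end{itemize}
The alphabet is two nonleaking controls plus one leak per active label, which is exactly the count $|\Lambda_{\mathrm{act}}^{\mathrm{MM}}|+2$. Your inventory of shift, injectors, reset and padding letters does not reproduce that count.
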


\begin{proof}
The pair \((a,X)\) used in the capacity proof gives a
constant-normalized representation of dimension \(M_{\mathrm{MM}}\),
so the upper bound follows from
\cref{thm:codimension-one-pfa,thm:mm-profile-capacity}.

For the lower bound, set \(D=M_{\mathrm{MM}}-1\) and
\(p=\sum_{\lambda\in\Lambda_{\mathrm{act}}^{\mathrm{MM}}}p_\lambda\).
Choose a rational \(s_0\in(0,p)\).  Blockwise rank-one positive operators
\[
  X=\bigoplus_\lambda
  w_\lambda|\phi_\lambda\rangle\langle\phi_\lambda|,
  \qquad
  0<w_\lambda<p_\lambda,
  \qquad
  \sum_\lambda w_\lambda=s_0,
\]
affinely span the trace-\(s_0\) hyperplane in the active nonhalting
corner.  Rational pure multiplicity projectors and rational weights are
dense in this family, while affine independence is an open condition.
Hence one may fix \(D\) members \(X_1,\ldots,X_D\) in its relative
interior that are affinely independent and have rational coordinates in
the frame used below.

Apply the compiled weak-leak construction of
\cref{lem:mm-compiled-frame}.  Two nonleaking control symbols densely
generate the independent nonhalting multiplicity unitaries, and one
weak-leak symbol is assigned to each active label.  Prefix words
prepare pairs within \(O(\delta)\) of the fixed \(X_i\)'s and of one
common accumulated-acceptance baseline; suffix words implement the
dual rank-one tests; and one further prefix gives a common positive
anchor.  With the amplification, leak strength, and control accuracy
chosen in that order, the resulting finite sign matrix is
\(H_D^{(+1)}\).  Hence its sign-rank is
\(D+1=M_{\mathrm{MM}}\), over an alphabet of size
\(|\Lambda_{\mathrm{act}}^{\mathrm{MM}}|+2\).
\Cref{lem:pfa-cutpoint-rank} gives the claimed PFA lower bound.

If the active set is empty, the language is empty or universal
according to the cutpoint and is recognized by one state.
\end{proof}

Thus the standard measure-many model has a closed profile-dependent
numerical capacity and an exact-order strict-cutpoint obstruction over
an alphabet controlled by the number of addressable active nonhalting labels rather than
the nonhalting operator dimension.

The dynamic lift does not automatically raise this interval to its
upper endpoint.  In the compiled weak-leak frame the static test effect
and the accumulated-acceptance baseline satisfy a common affine
identity.  Appending a clock to the existing tests can therefore leave
a nonzero constant offset, including for the all-positive and
all-negative columns, so the required recurrent cutpoint crossing does
not follow from \cref{lem:mm-compiled-frame}.

This is a genuine boundary, not only a proof artifact.  Consider one
active label with \(n_\lambda=a_\lambda=p_\lambda=1\) and
\(r_\lambda^{\mathrm{rej}}=0\).  Its nonhalting survivor is scalar, so
every word has the form
\[
  f(w)=1-c\prod_{j=1}^{|w|}q_{w_j}
  \qquad(0\leq c,q_{w_j}\leq1),
\]
after incorporating the endmarkers.  A two-state PFA reproduces this
series exactly, while \(M_{\mathrm{MM}}=2\) and the lower bound above
is attainable.  Hence this profile has exact state cost two, not three.
A phase-assisted upgrade, if available, must at least require a
nonhalting corner of dimension two and a new transverse test frame;
the present theorem makes no such unproved claim.

\subsection{Covariant general channels between halting measurements}

The previous profile law uses unitaries that commute with \(K\).
Allowing a covariant general channel before each halting measurement
changes the conserved objects from individual isotypic labels to the
mobility components imposed on the channel.

\begin{definition}[Measure-many general-channel model]
\label{def:mmgqfa}
A \(K\)-covariant measure-many one-way general quantum finite automaton
has the same halting decomposition and endmarker convention as above,
but applies a \(K\)-covariant completely positive trace-preserving map
\(\Phi_\sigma\) before each three-outcome measurement
\cite{LiEtAl1gQFA}.  Fix a partition \(\Pi\) of the isotypic labels.
The automaton is \(\Pi\)-conservative when
\[
  \Phi_\sigma^*(Z_C)=Z_C
  \qquad(C\in\Pi)
\]
for every input symbol and endmarker.
\end{definition}

The absorbing form now becomes
\begin{align}
  \widehat\Phi_\sigma(X)
  ={}&
  P_{\mathrm{acc}}XP_{\mathrm{acc}}
  +P_{\mathrm{rej}}XP_{\mathrm{rej}}
  \notag\\
  &+
  \sum_{j\in\{\mathrm{acc},\mathrm{rej},\mathrm{non}\}}
  P_j\Phi_\sigma(P_{\mathrm{non}}XP_{\mathrm{non}})P_j.
  \label{eq:mmg-absorption}
\end{align}
If \(\Phi_\sigma(X)=\sum_rA_rXA_r^\dagger\), a Kraus family for
\(\widehat\Phi_\sigma\) is
\[
  P_{\mathrm{acc}},\quad P_{\mathrm{rej}},\quad
  P_jA_rP_{\mathrm{non}}.
\]
It follows that \(\widehat\Phi_\sigma\) is completely positive and
trace preserving.  Covariance follows from the invariance of the
halting projectors.  Moreover,
\[
  \widehat\Phi_\sigma^*(Z_C)
  =
  P_{\mathrm{acc}}Z_CP_{\mathrm{acc}}
  +P_{\mathrm{rej}}Z_CP_{\mathrm{rej}}
  +P_{\mathrm{non}}\Phi_\sigma^*(Z_C)P_{\mathrm{non}}
  =Z_C,
\]
so absorption preserves the prescribed mobility components.

For \(C\in\Pi\), define the nonhalting invariant corner and its
dimension by
\[
  \mathfrak N_C
  =
  P_{\mathrm{non}}Z_C\mathcal C_KZ_CP_{\mathrm{non}},
  \qquad
  D_C^{\mathrm{non}}
  =
  \dim_\C\mathfrak N_C
  =
  \sum_{\lambda\in C}n_\lambda^2,
\]
and put \(p_C=\sum_{\lambda\in C}p_\lambda\).  Call \(C\) active when
\[
  p_C>0,\qquad
  P_{\mathrm{acc}}Z_C\neq0,\qquad
  D_C^{\mathrm{non}}>0.
\]
Write \(\Pi_{\mathrm{act}}^{\mathrm{MM},\mathrm{1g}}\) for the active
components and set
\[
  D_{\mathrm{MM}}^{\mathrm{1g}}
  =
  \sum_{C\in\Pi_{\mathrm{act}}^{\mathrm{MM},\mathrm{1g}}}
  D_C^{\mathrm{non}},
  \qquad
  M_{\mathrm{MM}}^{\mathrm{1g}}
  =
  1+D_{\mathrm{MM}}^{\mathrm{1g}}.
\]
Also write
\[
  Z_{\mathrm{act}}
  =
  \sum_{C\in\Pi_{\mathrm{act}}^{\mathrm{MM},\mathrm{1g}}}Z_C,
  \qquad
  p_{\mathrm{act}}
  =
  \sum_{C\in\Pi_{\mathrm{act}}^{\mathrm{MM},\mathrm{1g}}}p_C.
\]

\begin{lemma}[Generic nonhalting contraction]
\label{lem:mmg-generic-contraction}
Let \(\mathfrak N_C\neq0\), and let
\(\rho_C\in\Herm(\mathfrak N_C)\) be a nonzero positive operator.
There is a \(K\)-covariant completely positive trace-nonincreasing map
\(\mathcal T_C\) on the nonhalting \(C\)-subspace such that
\[
  t\longmapsto\Tr\!\left(\mathcal T_C^t(\rho_C)\right)
\]
has real Hankel rank \(D_C^{\mathrm{non}}\).  For finitely many
components, the maps can be chosen so that the spectra of their
restrictions to \(\Herm(\mathfrak N_C)\) are simple, nonzero, avoid
one, and are pairwise disjoint.
\end{lemma}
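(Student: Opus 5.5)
The plan is to build \(\mathcal T_C\) as a small Choi-interior perturbation of a scaled replacement map, as in the proof of \cref{thm:component-channel-capacity}, and then read off the Hankel rank from a Vandermonde decomposition (\cref{lem:exponential-hankel}).

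Work on the nonhalting \(C\)-subspace \(\Hh_C^{\mathrm{non}}=P_{\mathrm{non}}Z_C\Hh\) and put \(\omega=P_{\mathrm{non}}Z_C/\Tr(P_{\mathrm{non}}Z_C)\), which lies in \(\mathfrak N_C\). Set \(D=D_C^{\mathrm{non}}\).

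\textbf{Choice of basis.} Choose a real basis \(F_1=\omega,F_2,\ldots,F_D\) of \(\Herm(\mathfrak N_C)\) with \(\Tr F_j=0\) for \(j\geq2\). The positive operator \(\rho_C\) expands as \(\rho_C=\sum_j x_jF_j\), with \(x_1=\Tr\rho_C>0\).

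The basis must be generic enough that every coordinate \(x_j\) is nonzero. Since \(\rho_C\neq0\) and the conditions \(x_j\neq0\) are open and dense in the choice of basis, this is achievable. The way to do it is to perturb the traceless basis vectors, which changes the coordinates \(x_j\) affinely.

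\textbf{Definition of the map.} Define a Hermiticity-preserving linear map \(L\) on \(\mathfrak N_C\) by
\[
  L(F_1)=\mu_1F_1,\qquad L(F_j)=\mu_j F_j\quad(j\ge2).
\]
Then set
\[
  \mathcal T_C(X)=\mu\,\Tr(X)\,\omega+\varepsilon L'\bigl(\mathcal T_K(X)-\Tr(X)\omega\bigr),
\]
where \(0<\mu<1\) and \(L'\) is \(L\) restricted to the traceless part, with eigenvalues \(\mu_j\) for \(j\ge2\).

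\textbf{Properties of \(\mathcal T_C\).} For small \(\varepsilon\) the map \(\mathcal T_C\) has the following properties.
\begin{itemize}
\item It is completely positive, because the Choi matrix of \(\mu\,\mathcal R_\omega\) is positive definite.
\item It is trace-nonincreasing, because the perturbation is trace annihilating, so \(\Tr\mathcal T_C(X)=\mu\Tr X\).
\item It is \(K\)-covariant, because its input passes through the twirl and its output lies in the commutant.
\end{itemize}

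\textbf{Spectrum on \(\Herm(\mathfrak N_C)\).} Restricted to \(\Herm(\mathfrak N_C)\), the map is upper triangular with respect to the basis \(F_1,\ldots,F_D\). Its eigenvalues are \(\mu\) on the \(F_1\) direction and \(\varepsilon\mu_j\) on the traceless directions.

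The difficulty is the trace sequence. Since \(\Tr\mathcal T_C^t(\rho_C)=\mu^t\Tr\rho_C\), this map alone gives Hankel rank one, not \(D\).

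\textbf{Repair of the trace functional.} The fix is to let the traceless coordinates leak into the trace. Replace the first term by \(\Tr(KX)\omega\), where \(K\) is an invariant positive operator with \(K\le I\) and \(K\) has nonzero pairing with every \(F_j\). Concretely, take \(K=\mu I+\varepsilon\sum_j k_jF_j\) with all \(k_j\) nonzero. For small \(\varepsilon\) this keeps \(K\) positive definite with \(K<I\), so \(\mathcal T_C\) remains completely positive and trace-nonincreasing.

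With this change \(\Tr\mathcal T_C(X)=\Tr(KX)\), and the restricted map is again triangular, now in the opposite direction: the traceless directions \(F_j\) are eigenvectors, each with eigenvalue \(\varepsilon\mu_j\) and feeding into \(\omega\). Take the \(\mu_j\) distinct, nonzero, and with \(\varepsilon\mu_j\neq\mu\). Then the restricted operator is diagonalizable with simple spectrum \(\{\mu\}\cup\{\varepsilon\mu_j\}\). This spectrum is nonzero and avoids one.

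\textbf{Hankel rank.} The sequence \(\Tr\mathcal T_C^t(\rho_C)\) is then a combination of \(D\) distinct exponentials. Its coefficients are products of the coordinates of \(\rho_C\) in the eigenbasis and the values of the trace functional on the eigenvectors.

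The main obstacle is making all \(D\) coefficients nonzero.
\begin{itemize}
\item The trace functional on the eigenvector attached to \(F_j\) is \(\Tr(\text{eigvec}_j)\). This equals a nonzero multiple of \(k_j\), because the eigenvector is \(F_j\) plus a nonzero multiple of \(\omega\), with the multiple proportional to \(\varepsilon k_j/(\varepsilon\mu_j-\mu)\).
\item The \(\rho_C\)-coordinates in the eigenbasis are the \(x_j\), up to a triangular change that leaves the traceless coordinates unchanged. For the \(\omega\)-direction the coefficient is \(\Tr\rho_C\) corrected by \(O(\varepsilon)\), which is nonzero.
\end{itemize}
Genericity of the \(k_j\) and \(x_j\) handles the rest, and \cref{lem:exponential-hankel} then gives Hankel rank \(D\).

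\textbf{Several components.} For finitely many components, choose all the \(\mu_C\) and \(\varepsilon\mu_{C,j}\) globally distinct. This is a finite set of avoidance conditions, so it can always be met.
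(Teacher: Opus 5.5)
You have a genuine gap on the state side, and it sits where your write-up says the triangular structure runs ``in the opposite direction.'' In the repaired map the invariant line is $\R\,\omega$, not the $F_j$. Since $\mathcal T_K(\omega)-\Tr(\omega)\,\omega=0$, one has $\mathcal T_C(\omega)=\Tr(K\omega)\,\omega$, so the eigenvalue there is $\Tr(K\omega)$ rather than $\mu$. Each $F_j$ is sent to $\Tr(KF_j)\,\omega+\varepsilon\mu_jF_j$.

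Expanding, the coefficient of $(\varepsilon\mu_j)^t$ in $\Tr\mathcal T_C^t(\rho_C)$ is $-x_j\Tr(KF_j)/(\Tr(K\omega)-\varepsilon\mu_j)$. So you need every traceless coordinate $x_j$ of $\rho_C$ to be nonzero. Your justification, that $x_j\neq0$ is open and dense in the choice of basis, holds only when the traceless part $\rho_C-(\Tr\rho_C)\,\omega$ is nonzero.

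The lemma allows every nonzero positive $\rho_C\in\Herm(\mathfrak N_C)$, including $\rho_C=p\,\omega$. This is a twirled state that is maximally mixed on the nonhalting $C$-subspace, e.g.\ a component with a single label of nonhalting rank $n_\lambda=2$, where $D_C^{\mathrm{non}}=4$. For such $\rho_C$, every $x_j$ with $j\geq2$ vanishes in every basis. Then $\mathcal T_C^t(\rho_C)=\Tr(K\omega)^t\rho_C$, and the Hankel rank is $1<D_C^{\mathrm{non}}$. Tilting the trace functional by $K$ repaired observability. But controllability from $\rho_C$ cannot come from choosing a basis when $\rho_C$ lies on an invariant line of your map.

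The missing ingredient is a perturbation that moves $\omega$ itself off its line. The paper gets this by genericity. It considers the restrictions to $\Herm(\mathfrak N_C)$ of the admissible maps $s\Tr(\cdot)\,\omega+\varepsilon L\circ\mathcal T_K$, where $L$ is an arbitrary real-linear map on all of $\Herm(\mathfrak N_C)$, including its trace direction. These fill a Euclidean-open set. Cyclicity of the fixed nonzero vector $\rho_C$, observability of the trace functional, and the spectral conditions are all nonempty Zariski-open conditions. So they hold simultaneously, and the reachable--observable rank equals $D_C^{\mathrm{non}}$.

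Within your explicit scheme, you can close the missing case by replacing the output $\omega$ of the first term with $\omega+\varepsilon v$. Here $v\in\Herm(\mathfrak N_C)$ should be traceless with all $F_j$-coordinates nonzero. This keeps the Choi matrix $K^{\mathsf T}\otimes(\omega+\varepsilon v)$ positive definite, keeps the output invariant, and keeps the trace equal to $\Tr(KX)$. The Krylov space of $\omega$ then contains $v,\Lambda v,\Lambda^2v,\ldots$ with $\Lambda=\operatorname{diag}(\mu_j)$, hence all of $\Herm(\mathfrak N_C)$ once the $\mu_j$ are distinct. Observability still follows from $\Tr(KF_j)\neq0$.
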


\begin{proof}
Let \(\omega_C^{\mathrm{non}}\) be the normalized identity on the
nonhalting \(C\)-subspace and choose \(0<s<1\).  The map
\[
  \mathcal S_0(X)
  =
  s\Tr(X)\omega_C^{\mathrm{non}}
\]
has positive-definite Choi matrix and satisfies
\(\mathcal S_0^*(P_{\mathrm{non}}Z_C)=sP_{\mathrm{non}}Z_C\), strictly
below the identity effect.  Complete positivity and the strict
trace-nonincreasing inequality therefore persist in a neighborhood of
\(\mathcal S_0\).

Every real linear map \(L\) on \(\Herm(\mathfrak N_C)\) has a
Hermiticity-preserving \(K\)-covariant extension to the physical
nonhalting block: extend \(L\) complex linearly and compose it with the
Haar twirl.  Hence restrictions of admissible perturbations of
\(\mathcal S_0\) contain a Euclidean-open set in
\(\operatorname{End}_{\mathbb R}(\Herm(\mathfrak N_C))\).
Within the full matrix space, simple nonzero spectrum, avoidance of a
finite prescribed set, and nonsingularity of the controllability and
observability matrices for the fixed vector \(\rho_C\) and trace
functional are complements of proper polynomial zero sets.  They are
proper because a matrix with distinct eigenvalues and a generic
eigenbasis makes both fixed boundary vectors have nonzero coordinates
in every eigenmode.  This Zariski-open set meets the admissible
Euclidean neighborhood.  The reachable--observable rank formula then
gives Hankel rank \(D_C^{\mathrm{non}}\).  Choosing the finitely many
maps successively also makes their spectra pairwise disjoint.
\end{proof}

For \(q\geq1\), let
\[
  B_{q,\Pi}^{\mathrm{MM},\mathrm{1g}}
  (\mathbf a,\mathbf r^{\mathrm{rej}},\mathbf n,\mathbf p)
\]
be the supremum of the real Hankel rank over the
\(\Pi\)-conservative model in \cref{def:mmgqfa} with the displayed
halting ranks and initial isotypic weights, and with at most \(q\)
input symbols.

\begin{theorem}[Exact covariant measure-many channel capacity]
\label{thm:mmg-profile-capacity}
If \(\Pi_{\mathrm{act}}^{\mathrm{MM},\mathrm{1g}}\) is empty, then
\[
  B_{q,\Pi}^{\mathrm{MM},\mathrm{1g}}
  (\mathbf a,\mathbf r^{\mathrm{rej}},\mathbf n,\mathbf p)=0.
\]
Otherwise, for every \(q\geq1\),
\[
  B_{q,\Pi}^{\mathrm{MM},\mathrm{1g}}
  (\mathbf a,\mathbf r^{\mathrm{rej}},\mathbf n,\mathbf p)
  =
  M_{\mathrm{MM}}^{\mathrm{1g}}
  =
  1+
  \sum_{C\in\Pi_{\mathrm{act}}^{\mathrm{MM},\mathrm{1g}}}
  D_C^{\mathrm{non}}.
\]
Thus one input symbol attains the exact capacity.
\end{theorem}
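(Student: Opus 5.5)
The proof has two parts: an upper bound obtained by factoring the Hankel matrix through an accumulator space, and a unary construction built from \cref{lem:mmg-generic-contraction}.

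\textbf{Upper bound.} Fix a prefix $x$. After the absorbed channels \eqref{eq:mmg-absorption}, keep the pair $(a_x,X_x)$, where $a_x$ is the accepted weight already absorbed and $X_x$ is the twirled nonhalting block. By covariance and \cref{thm:channel-exact-rank}, $X_x$ lies in $\Herm(P_{\mathrm{non}}\mathcal C_KP_{\mathrm{non}})$. Because every $\widehat\Phi_\sigma$ preserves each $Z_C$, the nonhalting block splits over components as $X_x=\bigoplus_C X_{x,C}$ with $X_{x,C}\in\Herm(\mathfrak N_C)$. This splitting uses the multiplicative-domain argument from \cref{thm:fixed-readout-component-capacity}: $\Phi^*(Z_CYZ_C)=Z_C\Phi^*(Y)Z_C$.

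For each suffix $y$ there are effects $E_{y,C}$, supported on component $C$ and including the right endmarker, such that
\[
  f_\A(xy)=a_x+\sum_C\Tr(E_{y,C}X_{x,C}).
\]
Some components contribute nothing:
\begin{itemize}
  \item If $p_C=0$, then $X_{x,C}=0$ for every prefix.
  \item If $P_{\mathrm{acc}}Z_C=0$, weight in $C$ can never reach acceptance, since $C$-weight stays inside $Z_C$. Hence $E_{y,C}=0$.
  \item If $D_C^{\mathrm{non}}=0$, the corner is trivial.
\end{itemize}
So the Hankel matrix factors through $\R\oplus\bigoplus_{C\ \mathrm{active}}\Herm(\mathfrak N_C)$, whose dimension is $M_{\mathrm{MM}}^{\mathrm{1g}}$. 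If the active set is empty, every accepted contribution vanishes. The initial weight is nonhalting and the right endmarker is nonaccepting, so the series is identically zero and the rank is $0$.

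\textbf{Lower bound (unary construction).} Assume the active set is nonempty. For each active $C$, take the initial twirled state $p_C\rho_C$ with $\rho_C$ a positive definite density on the nonhalting $C$-subspace. Put zero weight on inactive components. Apply \cref{lem:mmg-generic-contraction} to obtain covariant trace-nonincreasing maps $\mathcal T_C$. Their spectra on $\Herm(\mathfrak N_C)$ are simple, nonzero, avoid one, and are pairwise disjoint across components, and $t\mapsto\Tr\mathcal T_C^t(\rho_C)$ has Hankel rank $D_C^{\mathrm{non}}$.

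Realize one input channel $\Phi_a$ whose nonhalting-to-nonhalting part is $\mathcal T_C$ on component $C$. The defect $X\mapsto\Tr(X)-\Tr\mathcal T_C(X)$ is sent by a covariant replacement map into a normalized invariant state on $P_{\mathrm{acc}}Z_C$, which is nonzero because $C$ is active. This is CPTP, covariant, and $\Pi$-conservative: the added Kraus part has Choi matrix equal to a tensor product of the positive defect operator with a state, and it maps $C$ into $C$. Take the left endmarker to be the identity, and let the right endmarker leave nonhalting weight nonaccepting.

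The resulting unary series is
\[
  f(a^t)=p_{\mathrm{act}}-\sum_C p_C\Tr\mathcal T_C^t(\rho_C).
\]
Each summand expands into exactly $D_C^{\mathrm{non}}$ exponential modes with nonzero coefficients; this holds because its Hankel rank is $D_C^{\mathrm{non}}$ and the spectrum is simple. The modes are distinct across components and all differ from $1$. The constant $p_{\mathrm{act}}>0$ contributes one more mode, so \cref{lem:exponential-hankel} gives rank $1+D_{\mathrm{MM}}^{\mathrm{1g}}$.

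\textbf{Main obstacle.} The delicate step is keeping the constant mode's coefficient nonzero. If $\mathcal T_C$ were trace preserving on some component, no weight would be absorbed and the constant would cancel, or the eigenvalue $1$ would collide with the constant mode. The lemma's strict trace-nonincreasing choice (spectrum avoiding one, with $s<1$) handles this: $\Tr\mathcal T_C^t\rho_C\to0$, so the constant coefficient is $p_{\mathrm{act}}>0$.

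A secondary check is that completing $\mathcal T_C$ by the defect-to-accept map stays covariant. This holds because the defect effect $P_{\mathrm{non}}Z_C-\mathcal T_C^*(P_{\mathrm{non}}Z_C)$ is positive and invariant.
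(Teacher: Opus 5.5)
Your proposal is correct and follows the paper's proof essentially step for step. For the upper bound you use the same accumulator--corner factorization of the Hankel matrix through $\R\oplus\bigoplus_{C\text{ active}}\Herm(\mathfrak N_C)$. For the lower bound you use a single symbol built from the contractions $\mathcal T_C$ of \cref{lem:mmg-generic-contraction}, with the defect $P_{\mathrm{non}}Z_C-\mathcal T_C^*(P_{\mathrm{non}}Z_C)$ routed to an invariant accepting state.

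One small correction is needed because the profile prescribes the initial label weights $p_\lambda$. Take $\rho_C$ to be the active block of a twirled initial state with those weights; the lemma only needs a nonzero positive $\rho_C$, not a positive definite one. Also leave any weight on inactive components nonhalting and never accepted, rather than setting it to zero. Neither change affects the rank count.
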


\begin{proof}
For a prefix \(x\), retain the accumulated acceptance \(a_x\) and the
nonhalting invariant blocks \(X_{x,C}\).  Every suffix \(y\) supplies
effects \(0\leq E_{y,C}\leq P_{\mathrm{non}}Z_C\) such that
\[
  f_\A(xy)
  =
  a_x+\sum_C\Tr(E_{y,C}X_{x,C}).
\]
If \(p_C=0\), the corresponding reachable block vanishes.  If
\(P_{\mathrm{acc}}Z_C=0\), its suffix effect vanishes.  If
\(D_C^{\mathrm{non}}=0\), there is no nonhalting block.  The Hankel
matrix therefore factors through
\[
  \mathbb R\oplus
  \bigoplus_{C\in\Pi_{\mathrm{act}}^{\mathrm{MM},\mathrm{1g}}}
  \Herm(\mathfrak N_C),
\]
which has real dimension \(M_{\mathrm{MM}}^{\mathrm{1g}}\).  When the
active set is empty, component conservation makes acceptance
identically zero.

For the lower bound, twirl the fixed initial state and write its
nonzero active blocks as \(\rho_C\), with trace \(p_C\).  Choose maps
\(\mathcal T_C\) from \cref{lem:mmg-generic-contraction}.  The positive
effect
\[
  \Delta_C
  =
  P_{\mathrm{non}}Z_C-\mathcal T_C^*(P_{\mathrm{non}}Z_C)
\]
records the lost trace.  Route that trace to the invariant accepting
state
\[
  \alpha_C
  =
  \frac{P_{\mathrm{acc}}Z_C}
       {\Tr(P_{\mathrm{acc}}Z_C)}
\]
and route \(\mathcal T_C\) back to the nonhalting block.  This
instrument is completely positive, trace preserving on the nonhalting
input, \(K\)-covariant, and confined to component \(C\).  Complete it
on the orthogonal input blocks using invariant replacement states in
the same component.  Taking the direct sum over \(C\) gives one
\(\Pi\)-conservative input channel; inactive components may be left
nonhalting and never accepted.

After \(t\) copies of this symbol,
\[
  f_\A(a^t)
  =
  p_{\mathrm{act}}
  -
  \sum_{C\in\Pi_{\mathrm{act}}^{\mathrm{MM},\mathrm{1g}}}
  \Tr\!\left(\mathcal T_C^t(\rho_C)\right),
  \qquad
  p_{\mathrm{act}}=\sum_{C\in\Pi_{\mathrm{act}}^{\mathrm{MM},\mathrm{1g}}}p_C.
\]
The constant mode and all component modes are nonzero and spectrally
disjoint by the lemma.  Their Vandermonde factorization has rank
\(1+\sum_CD_C^{\mathrm{non}}\), proving equality.
\end{proof}

The mobility partition is essential in this formula.  A general
covariant channel may move initial weight from one label to another
inside the same \(C\), so labels with \(p_\lambda=0\) can still
contribute through the full component corner.  For the singleton
partition the formula reduces to the active-label law above; for one
full mobility component it becomes \(1+D_{\mathrm{non}}\).

Let
\[
  \operatorname{SC}_{\Pi,\mathbb R}^{\mathrm{MM},\mathrm{1g}}
  (\mathbf a,\mathbf r^{\mathrm{rej}},\mathbf n,\mathbf p)
\]
be the worst real-PFA state cost over arbitrary finite input alphabets
for the same covariant general-channel profile.

Define the number of addressable nonhalting labels by
\[
  L_{\mathrm{MM}}^{\mathrm{1g}}
  =
  \#\left\{
    \lambda:
    n_\lambda>0,\ 
    \lambda\in C\text{ for some }
    C\in\Pi_{\mathrm{act}}^{\mathrm{MM},\mathrm{1g}}
  \right\}.
\]

\begin{theorem}[Covariant measure-many channel state law]
\label{thm:mmg-pfa}
For every nonempty active component profile,
\[
  M_{\mathrm{MM}}^{\mathrm{1g}}
  \leq
  \operatorname{SC}_{\Pi,\mathbb R}^{\mathrm{MM},\mathrm{1g}}
  (\mathbf a,\mathbf r^{\mathrm{rej}},\mathbf n,\mathbf p)
  \leq
  M_{\mathrm{MM}}^{\mathrm{1g}}+1.
\]
The lower bound is witnessed by an alphabet of size at most
\[
  L_{\mathrm{MM}}^{\mathrm{1g}}+2
  \leq M_{\mathrm{MM}}^{\mathrm{1g}}+1.
\]
If the active set is empty, the exact value is one.
\end{theorem}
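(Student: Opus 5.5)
The proof will mirror that of \cref{thm:mm-pfa}, with the label-wise nonhalting blocks replaced by the component corners \(\mathfrak N_C\). The upper bound is routine. The accumulator pair \((a,(X_C)_C)\) from the proof of \cref{thm:mmg-profile-capacity} gives a real linear representation of the acceptance function of dimension \(M_{\mathrm{MM}}^{\mathrm{1g}}\). Its coordinate \(a+\sum_C\Tr X_C\) is the total weight of the active components. By \cref{def:mmgqfa} and the conservation identity \(\widehat\Phi_\sigma^*(Z_C)=Z_C\), this weight equals the constant \(p_{\mathrm{act}}\) minus the rejected weight, which is not constant. I will therefore not use the total-weight row directly. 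Instead I will take the normalized constant mode to be the row functional \(q=(\,\cdot\,)\mapsto\Tr(Z_{\mathrm{act}}\,\cdot\,)\) on the full absorbed state. This functional is invariant under every absorbed symbol channel, and the representation is enlarged only by coordinates that never pair with the final effect. If this bookkeeping proves awkward, I will fall back on the observation that the accumulated-acceptance coordinate is invariant under the dynamics, so it supplies a right constant mode \(t\) with \(A_at=t\), normalized after rescaling by \(ut\). Either way \cref{thm:codimension-one-pfa} gives at most \(M_{\mathrm{MM}}^{\mathrm{1g}}+1\) states. If the active set is empty, component conservation makes acceptance identically zero, so the language is empty or universal and one state is exact.

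For the lower bound, set \(D=M_{\mathrm{MM}}^{\mathrm{1g}}-1=\sum_C D_C^{\mathrm{non}}\). The aim is a finite sign matrix \(H_D^{(+1)}\), after which \cref{lem:affine-complete-sign,lem:pfa-cutpoint-rank} give sign-rank \(D+1\) and hence at least \(M_{\mathrm{MM}}^{\mathrm{1g}}\) states. First I will fix \(D\) affinely independent positive invariant operators \(X_1,\dots,X_D\) in the active nonhalting corners \(\bigoplus_C\mathfrak N_C\). They will have a common trace \(s_0\) and lie in the relative interior. The point is that within one component \(C\) the weights on the individual labels \(\lambda\in C\) are no longer frozen. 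A covariant channel may redistribute population among the labels of \(C\), so the relevant affine family is the full trace-\(s_0\) slice of \(\bigoplus_C\mathfrak N_C\). Its dimension is \(D-1\), and the accumulated-acceptance baseline \(a\) supplies the missing affine coordinate, exactly as in \cref{thm:mm-pfa}.

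The preparations will use Choi-interior channels of the form \eqref{eq:choi-interior-affine}, restricted to nonhalting corners. These channels are covariant and \(\Pi\)-conservative, and they realize arbitrary sufficiently contracted affine maps within each component. So I will adapt \cref{lem:mm-compiled-frame} as follows. Two nonleaking control symbols steer the nonhalting multiplicity blocks, now together with label-mixing covariant perturbations inside each \(C\). One weak-leak symbol is assigned to each addressable label \(\lambda\) with \(n_\lambda>0\) in an active component. Its leak goes into \(P_{\mathrm{acc}}Z_C\), which is nonzero because \(C\) is active. Prefixes prepare states within \(O(\delta)\) of the \(X_i\) with a common baseline. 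Suffixes implement the dual rank-one tests, and one further prefix provides the common positive anchor. This gives at most \(L_{\mathrm{MM}}^{\mathrm{1g}}+2\) letters. The inequality \(L_{\mathrm{MM}}^{\mathrm{1g}}\le D\) holds because each addressable label contributes at least \(n_\lambda^2\ge1\) to \(D\).

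I expect the main obstacle to be the compiled frame, specifically the requirement that a weak leak attached to a single label still yields a dual test that can see every label of the component. In the unitary case each label is separately addressable and its leak sits in that same label. Here an active component \(C\) may contain a label with \(n_\lambda>0\) but \(a_\lambda=0\), so its survivor cannot leak to acceptance within \(\lambda\). My first attempt will compose the leak with a covariant channel that moves that label's nonhalting weight into a label \(\mu\in C\) with \(a_\mu>0\) before measurement. This is legitimate because \(\Pi\)-conservation only fixes \(Z_C\). Any remaining discrepancy will be absorbed into the affine-independence margin, and the parameters will be chosen in the order amplification, leak strength, control accuracy, as before.
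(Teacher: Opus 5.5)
Your outline matches the paper's: the accumulator--corner representation for the upper bound, and an $H_D^{(+1)}$ weak-leak witness with one leak per addressable label whose complementary outcome goes to an invariant accepting state of the same component. However, the lower-bound construction has a genuine gap where weight must reach labels with $p_\lambda=0$.

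Such labels are counted in $D=\sum_C D_C^{\mathrm{non}}$. You propose to populate them by folding irreversible label-mixing Choi-interior perturbations into the two control symbols. That breaks what \cref{lem:mm-compiled-frame} depends on. The positive-word closure of the controls must be the compact group $\prod_\lambda SU(n_\lambda)$, so that positive words approximate both $W_Q$ and $W_Q^{-1}$. This is needed in the compiled tests $W_Q^{-1}\ell_\lambda^{N}W_Q$ and in the final rotation of each preparation. With a strictly contracting mixing component, these long words accumulate mixing and contraction toward the replacement state. The estimate $E_\eta(\delta)=\delta F_\eta+O(\delta^2R^2)+O(\zeta)$ is then lost.

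There is no reversible substitute. A covariant unitary channel has a single common-character Kraus operator and therefore only permutes labels (\cref{prop:common-character}). A separate mixing letter would break the alphabet count $L_{\mathrm{MM}}^{\mathrm{1g}}+2$.

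The missing device is the left endmarker. It is itself a $\Pi$-conservative covariant channel and is not counted in the alphabet. The paper uses it once to seed every addressable label with weight $q_\lambda$, where $\sum_{\lambda\in C}q_\lambda=p_C$ and $q_\lambda>\max_i w_{i,\lambda}$. Thereafter only commutant-unitary controls and weak leaks are used.

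Correspondingly, your $X_i$ cannot be arbitrary relative-interior points of the trace-$s_0$ slice. They must be rational and blockwise rank one. They must also lie near a reference survivor whose trace in each active component is strictly below $p_C$, because leaks only lower label weights and nothing moves weight between components. The obstacle you flagged, labels with $a_\lambda=0$, is already handled by your own measure--prepare leak into $P_{\mathrm{acc}}Z_C$. No extra label-moving step is needed there.

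For the upper bound, your primary route fails. Expressing the conserved functional $\Tr(Z_{\mathrm{act}}\,\cdot\,)$ requires adding the rejected-weight coordinate. That raises the representation dimension to $M_{\mathrm{MM}}^{\mathrm{1g}}+1$, so \cref{thm:codimension-one-pfa} gives only $M_{\mathrm{MM}}^{\mathrm{1g}}+2$. Coordinates that never pair with the final effect still count in $k$.

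Your fallback is correct in substance, but it is a left, not a right, constant mode. The functional ``read $a$'' is not conserved, since $a$ increases. The accumulator state $q=(1,0)$, however, is fixed by every update $(a,X)\mapsto\bigl(a+\Tr(P_{\mathrm{acc}}\Phi_\sigma(X)),\,P_{\mathrm{non}}\Phi_\sigma(X)P_{\mathrm{non}}\bigr)$ and satisfies $qv=1$ exactly. Replacing $u$ by $u-\tau q$ then centers the cutpoint inside the $M_{\mathrm{MM}}^{\mathrm{1g}}$-dimensional representation.
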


\begin{proof}
The accumulator--corner representation in the capacity proof is
constant normalized and has dimension
\(M_{\mathrm{MM}}^{\mathrm{1g}}\).  The upper bound follows from
\cref{thm:codimension-one-pfa,thm:mmg-profile-capacity}.

For the lower bound, set
\(D=M_{\mathrm{MM}}^{\mathrm{1g}}-1\) and choose a rational
\(s_0\in(0,p_{\mathrm{act}})\).  Choose a relative-interior reference
survivor whose trace in each active component \(C\) is strictly below
the available weight \(p_C\).  In a sufficiently small neighborhood,
rational pure multiplicity projectors and rational weights are dense,
and affine independence is open.  We may therefore fix \(D\) rational,
affinely independent blockwise-rank-one survivors \(X_i\) of common
trace \(s_0\), with label weights \(w_{i,\lambda}>0\), and then choose
post-left-endmarker seed weights \(q_\lambda\) satisfying
\[
  \sum_{\lambda\in C}q_\lambda=p_C,
  \qquad
  q_\lambda>\max_i w_{i,\lambda}
\]
for every addressable label in every active component.

Use the covariant part of \cref{lem:mm-compiled-frame}.  Two nonleaking
commutant-unitary controls act on the multiplicity blocks.  For each of
the \(L_{\mathrm{MM}}^{\mathrm{1g}}\) seeded labels, one weak-leak
channel has survivor
\(I+(\sqrt{1-\delta}-1)Q\) and routes the complementary outcome to an
invariant accepting state in the same mobility component.  The full
extension is completely positive, trace preserving, \(K\)-covariant,
and \(\Pi\)-conservative.  Compiled preparation and test words, with
the parameters chosen in the order specified by the lemma, realize
\(H_D^{(+1)}\).  Its sign-rank is
\(D+1=M_{\mathrm{MM}}^{\mathrm{1g}}\), and the alphabet has
\(L_{\mathrm{MM}}^{\mathrm{1g}}+2\) symbols.  The PFA lower bound follows
from \cref{lem:pfa-cutpoint-rank}.  The empty active case is
a one-state threshold language.
\end{proof}

General covariant channels therefore close both measure-many profile
capacity and its strict-cutpoint state scale.  The witness alphabet is
controlled by the number of addressable nonhalting labels, whereas the state
obstruction is controlled by the full nonhalting operator dimension.

\section{Bridges and limits of operational realization}

The three layers of the theory measure different objects, and their separation is part of the result.  Instance geometry is an equality for one acceptance function, structural capacity is a maximum over a dynamical class, and operational state cost concerns only signs relative to a cutpoint.  The stochastic embedding gives the general upper bridge
\[
  \operatorname{sc}_{\mathrm{PFA},\mathbb R}(\A,\tau)
  \leq\beta_\tau(\A)+1
  \leq\beta(\A)+2.
\]
For every nontrivial saturated charge-zero or component-conserving measure-once channel class, and for every nonempty measure-many profile, if \(M\) is the exact structural capacity then
\[
  M\leq\operatorname{SC}\leq M+1.
\]
A recurrent noncommutative phase reaches the upper endpoint for measure-once channels, whereas the scalar measure-many construction reaches the lower endpoint.  For a reversible prescribed readout the corresponding bridge is
\[
  2+\kappa\leq\operatorname{SC}\leq\mathsf B+1,
\]
so task-visible orbit geometry, rather than total structural capacity alone, controls the available lower obstruction.  These inequalities are the points at which the three filters communicate; none identifies the quantities on its two sides.

The remaining gap for reversible fixed readouts is itself structural.  Finite affine-complete shattering is limited by the dimension of the accepting-projector orbit, so a full Hankel-capacity lower bound cannot always be obtained by enlarging the same sign witness.

\begin{proposition}[Ceiling for affine-complete reversible witnesses]
\label{prop:witness-ceiling}
Fix a reversible profile with a prescribed readout $P$ whose orbit has
dimension $\kappa\geq1$.  Suppose reachable states
$\rho_1,\ldots,\rho_d$ are shattered at one cutpoint by effects in the
readout orbit: for every $\eta\in\{\pm1\}^d$, some $E\in G\cdot P$
satisfies
\[
  \operatorname{sign}\bigl(\Tr(E\rho_j)-\tau\bigr)=\eta_j
  \qquad(1\leq j\leq d).
\]
Then
\[
  d\leq4\kappa\log_2(16e\kappa).
\]
\end{proposition}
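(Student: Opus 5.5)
The plan is to read the shattering hypothesis as a statement about the VC dimension of a semialgebraic concept class with few real parameters. We then apply a Warren / Milnor--Thom sign-pattern count, in the form used by Goldberg--Jerrum type bounds. The constant \(4\kappa\log_2(16e\kappa)\) is typical of such a count for degree-two polynomials in \(O(\kappa)\) parameters, so the target is to realize the orbit \(G\cdot P\) by low-degree polynomial maps from a space of dimension about \(\kappa\).

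\emph{Step 1: parametrize the orbit.} By \cref{prop:orbit-capacity}, \(G\cdot P\) is a compact real-analytic manifold of dimension \(\kappa\), and indeed a real algebraic variety. The most economical choice is to work directly with the affine span of the orbit rather than the manifold. Each effect \(E\in G\cdot P\) is determined by \(\Tr(E\rho_j)\), which is linear in \(E\). Pure linearity would only give a bound in terms of \(\dim\operatorname{span}(G\cdot P)\), which can be as large as \(\mathsf B-1\) and is too big, so the constraint that \(E\) lies on the orbit must be used.

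For that I use local Cayley/exponential charts. Near any \(E_0=gPg^\dagger\), write
\[
E=(I-\tfrac12 Y)^{-1}(I+\tfrac12 Y)E_0(I+\tfrac12 Y)^{-1}(I-\tfrac12 Y),
\]
with \(Y\) ranging over a \(\kappa\)-dimensional complement of the stabilizer algebra \(\mathfrak g_{E_0}\). Clearing the denominator \(|\det(I-\tfrac12Y)|^2>0\) turns each condition \(\Tr(E\rho_j)-\tau\gtrless0\) into a sign condition on a real polynomial in the \(\kappa\) real coordinates of \(Y\). The degree of that polynomial, however, grows with the matrix size.

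\emph{Step 2: reduce the degree to two.} To avoid degree dependence on the matrix size, introduce auxiliary real variables. Equivalently, lift to the Grassmannian description: \(E=VV^\dagger\) with \(V\) an isometry, restricted to the \(G\)-orbit. Then \(\Tr(E\rho_j)=\Tr(V^\dagger\rho_jV)\) is quadratic in \(V\). This raises the parameter count, though. I would instead aim for a constant-degree description in at most \(2\kappa\) variables by using the orbit's local graph structure together with a finite atlas. The number of charts contributes only a multiplicative factor that a \(\log\) term can absorb.

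This degree-and-parameter reduction is the main obstacle. If it fails cleanly, the fallback is a Goldberg--Jerrum computation-count bound: each \(\operatorname{sign}(\Tr(E\rho_j)-\tau)\) is computed by an arithmetic circuit in the \(\kappa\) chart parameters, and the resulting VC bound is \(O(\kappa\log(\text{degree}))\). Reconciling this with the stated constant would require a degree bound of order \(\kappa\).

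\emph{Step 3: count sign patterns and conclude.} Apply Warren's theorem. For \(d\) polynomials of degree at most \(\Delta\) in \(k\) variables, the number of realizable nonzero sign vectors is at most \((4e\Delta d/k)^k\). Zero values are handled by a small cutpoint shift, as in \cref{lem:pfa-cutpoint-rank}, since shattering uses only finitely many effects. Shattering forces \(2^d\le(4e\Delta d/k)^k\) with \(k\le 2\kappa\) and \(\Delta\le 2\).

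Solving \(d\le k\log_2(8ed/k)\) by the standard lemma that \(x\le a\log_2(bx)\) implies \(x\le 2a\log_2(ab)\) gives
\[
d\le 4\kappa\log_2(16e\kappa),
\]
matching the statement. The final arithmetic is routine; the substance lies in Step 2, namely certifying a degree-two description of the readout orbit in about \(2\kappa\) real parameters, for instance via real and imaginary parts of the \(r\times q\) off-diagonal block in each sector.
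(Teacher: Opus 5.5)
Your overall plan matches the paper's: chart the readout orbit, turn each test $\Tr(E\rho_j)-\tau$ into a polynomial sign condition, bound the number of strict sign vectors by Warren's theorem, and solve $2^d\leq\#\{\text{patterns}\}$. But the proposal stops at the step that carries the proof, because Step~2 is never established. Your concrete suggestion does not work. In the graph coordinates $Z_\lambda\in\C^{q_\lambda\times r_\lambda}$ (the off-diagonal block), the projector is $\binom{I}{Z_\lambda}(I+Z_\lambda^\dagger Z_\lambda)^{-1}(I\;\,Z_\lambda^\dagger)$, which is rational, not quadratic.

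Behind this is a misreading of the constant. Degree two in at most $2\kappa$ variables would give a bound \emph{linear} in $\kappa$. Indeed, your own Step~3 arithmetic with $\Delta\leq2$ and $k=2\kappa$ gives $d\leq4\kappa\log_2(8e)$, not $4\kappa\log_2(16e\kappa)$. The factor $\log_2\kappa$ in the target is the footprint of polynomials of degree about $\kappa$. The missing idea is simply that degree $O(\kappa)$ is affordable.

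The paper uses exactly those graph charts:
\begin{itemize}
\item They have $\sum_\lambda 2r_\lambda q_\lambda=\kappa$ real coordinates.
\item Multiplying by the positive denominator $\prod_\lambda\det(I+Z_\lambda^\dagger Z_\lambda)$ turns each test into a polynomial of degree at most $2\sum_\lambda r_\lambda\leq\kappa$.
\item The coordinate charts number $J=\prod_\lambda\binom{m_\lambda}{r_\lambda}\leq 2^{\sum_\lambda m_\lambda}\leq 2^\kappa$, because $m_\lambda\leq 2r_\lambda q_\lambda$ on active blocks. Your remark that a log term absorbs the chart count needs exactly this check, since $\log_2 J$ enters $d$ additively.
\item Warren then gives at most $(4ed)^\kappa$ strict patterns per chart and $(8ed)^\kappa$ overall, and $2^d\leq(8ed)^\kappa$ solves to the stated bound.
\end{itemize}
Your Cayley charts would likewise have had degree $O(\sum_\lambda m_\lambda)=O(\kappa)$. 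So ``degree grows with matrix size'' was not fatal, although how many such charts are needed to cover the orbit is less transparent.

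Alternatively, the isometry lift you discarded does close Step~2 as you posed it, once you use the smaller of the two ranks and drop the isometry constraint. In each active block write $E_\lambda=V_\lambda V_\lambda^\dagger$ with $V_\lambda\in\C^{m_\lambda\times r_\lambda}$ if $r_\lambda\leq q_\lambda$, and $E_\lambda=I-W_\lambda W_\lambda^\dagger$ with $W_\lambda\in\C^{m_\lambda\times q_\lambda}$ otherwise. Warren counts sign vectors over the whole parameter space, and the orbit's patterns are a subset of those. Each test is then a quadratic polynomial in $\sum_\lambda 2m_\lambda\min(r_\lambda,q_\lambda)\leq\sum_\lambda 4r_\lambda q_\lambda=2\kappa$ real variables. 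Pad with dummy variables if needed; the case $d<2\kappa$ is trivial. You obtain $2^d\leq(4ed/\kappa)^{2\kappa}$, hence the sharper linear ceiling $d\leq 4\kappa\log_2(8e)$, which implies the proposition.

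As written, though, the proposal contains neither route. Without a parametrization whose degree, variable count, and chart count are verified, the Warren step has nothing to apply to.
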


The sign-pattern proof is given in \cref{app:operational-boundaries}.  Conceptually, the ceiling is a limitation of the operational layer itself: it bounds every affine-complete witness drawn from the readout orbit, not merely one construction.

For one rank-one block of dimension $m$, $\kappa=2(m-1)$ while the
full numerical capacity is $m^2$.  The proposition shows that an
affine-complete shattering witness can certify only $O(m\log m)$
states in this regime; closing the quadratic fixed-readout gap requires
a different sign matrix or a different lower-bound invariant.  For
balanced ranks, where $\kappa=\Theta(m^2)$, this ceiling leaves the
fixed-readout question open.
The recurrent affine lift adds one state to such a finite affine
obstruction but does not change this asymptotic ceiling, so the two
statements are consistent.

Alphabet restrictions create a separate boundary.  The measure-many lower construction uses two nonleaking controls and one weak-leak letter per addressable active label, and therefore establishes the stated finite-alphabet result rather than a binary theorem.  Within the terminal-reject weak-leak architecture, powers of a single test letter make cumulative acceptance monotone and cannot realize the complete sign frame for dimension at least two; \cref{app:operational-boundaries} gives the argument.  This obstruction is to the present compression mechanism, leaving binary measure-many automata as a distinct semigroup problem.

The state-count resource used throughout is the number of PFA states.  Word length, gate depth, and hardware locality are separate resources; the finite witnesses above have explicit positive margins and finite word lengths, while the fixed-weight group-algebra realization is a representation-sector statement.  The universal claims are analytic; the finite \(S_4\) calculations in \cref{app:s4-certificate} audit concrete formulas and certificates.

\section{Conclusion}

The main conclusion is that symmetry does not have a single ``dimension-reduction'' effect on quantum-automaton memory.  It separates operator coordinates into classes with different computational status, and the three-stage theory identifies that status exactly: reachable--observable pairing gives the memory of one instance, center--commutator and mobility structure give the capacity of a dynamical class, and threshold realization determines which surviving dimensions become unavoidable probabilistic states.  This is the sense in which behavioral memory is produced rather than read off from an ambient Hilbert space.

The full-mobility theorem gives the sharpest expression of the principle.  For a fixed nontrivial invariant readout, a commutative invariant algebra makes the symmetry-reduced automaton stochastic on its central populations, so the full-mobility class has worst state cost equal to the algebra dimension.  If the algebra contains a noncommutative multiplicity block, the unrestricted worst-case cost is exactly one state larger.  The quadratic-plus-one law at trivial symmetry is therefore the endpoint of an algebraic dichotomy, not a separate phenomenon.  Reversible dynamics, component conservation, and intermediate halting show why the broader framework is necessary: they can leave part of the commutant read-only, restrict what the readout sees, or prevent structural coordinates from acquiring the recurrent phase needed for the extra state.

The representation-theoretic consequences show that this distinction has scale.  On one tensor-power Hilbert space, the two Schur--Weyl commutants support polynomial and exponential worst-case memory laws under different preserved symmetries; at fixed weight, half filling gives an exact Catalan squared-sector-dimension sum and the corresponding Catalan-minus-center structural capacity.  Readout-orbit Fisher geometry and subgroup branching make the same theory task-sensitive by quantifying which structural directions a chosen measurement can see and how symmetry release restores them.  The remaining open cases now have a precise location: rank-one reversible readouts require lower-bound invariants beyond affine-complete shattering to approach full quadratic capacity, while binary channel endpoints and phase-assisted measure-many constructions require different finite semigroup realizations.  Those are operational-realization problems, rather than ambiguities about which operator space symmetry makes available.

\clearpage
\appendix

\section{Binary Lie generation}\label{app:binary-lie-proof}
The body states the two-generator construction needed for
\cref{thm:binary-dense}.  We record the frequency-separation proof here.

\begin{proof}[Proof of \cref{lem:binary-lie}]
For an adjacent edge \(e=(\alpha,j,j+1)\), let
\[
  \Delta_e
  =
  2^{m_{\alpha,j+1}}-2^{m_{\alpha,j}}.
\]
If
\[
  2^b-2^a=2^d-2^c
\]
with \(b>a\) and \(d>c\), the two-adic valuation gives \(a=c\), and
then \(b=d\).  Thus all \(|\Delta_e|\) are distinct.

Write
\[
  S_e=i(E_{j,j+1}+E_{j+1,j}),
  \qquad
  A_e=E_{j,j+1}-E_{j+1,j}
\]
inside the \(\alpha\)-block.  Direct calculation gives
\[
  \operatorname{ad}_{iH_0}^2(S_e)=-\Delta_e^2S_e,
  \qquad
  [iH_0,S_e]=\Delta_e A_e.
\]
Because the numbers \(-\Delta_e^2\) are pairwise distinct, for each
edge \(e\) there is a real Lagrange polynomial \(p_e\) which is one at
\(-\Delta_e^2\) and zero at all other edge eigenvalues.  Hence
\[
  p_e\!\left(\operatorname{ad}_{iH_0}^2\right)(iH_1)=S_e.
\]
Every term here is an iterated commutator of \(iH_0\) and \(iH_1\),
with the constant term a scalar multiple of \(iH_1\), so each \(S_e\)
belongs to the generated Lie algebra.  The second displayed identity
then gives \(A_e\), since \(\Delta_e\neq0\), and
\[
  [S_e,A_e]=2i(E_{j+1,j+1}-E_{j,j}).
\]
Thus every adjacent edge supplies its two off-diagonal quadratures and
the corresponding traceless diagonal difference.  Commutators such as
\([E_{j,j+1},E_{j+1,j+2}]=E_{j,j+2}\) propagate along the path, so these
elements generate \(\mathfrak{su}(D_\alpha)\) in each block.  The same
global interpolation isolates edges in different blocks, giving the
direct sum over all \(D_\alpha\geq2\).  The trace subtraction in \(H_0\)
and the zero diagonal of \(H_1\) exclude central \(\mathfrak u(1)\)
directions; blocks with \(D_\alpha=1\) contribute
\(\mathfrak{su}(1)=\{0\}\).
\end{proof}

\section{Fisher geometry of the readout orbit}\label{app:fisher-proof}
\begin{proof}[Proof of \cref{thm:fisher-orbit}]
The spectrum of \(\sigma_P\) consists of \(1/R_P\) on the accepting
subspace and \(0\) on its orthogonal complement.  A projector tangent
has only accepting--rejecting off-diagonal blocks.  Substitution into
the spectral SLD formula gives
\[
  g_{\sigma_P}^{\mathrm{SLD}}
  \left(
  \frac{\dot P_X}{R_P},
  \frac{\dot P_Y}{R_P}
  \right)
  =
  \frac{2}{R_P}\Tr(\dot P_X\dot P_Y).
\]
The Hilbert--Schmidt form is positive definite on the orbit tangent
space, whose dimension is \(\kappa_G(P)\).
\end{proof}

\section{Critical-window asymptotics}\label{app:critical-proof}
\begin{proof}[Proof of \cref{thm:critical-window}]
For an index $j$ write
\[
  x_j=\frac{n-2j}{\sqrt n}.
\]
The mesh spacing of the variables $x_j$ is $2/\sqrt n$.  Fix
$0<\varepsilon<L$.  Uniformly for
$x_j\in[\varepsilon,L]$, the local central limit theorem gives
\[
  \binom nj
  =
  2^n\sqrt{\frac{2}{\pi n}}
  e^{-x_j^2/2}\bigl(1+o(1)\bigr),
\]
while
\[
  \frac{n-2j+1}{n-j+1}
  =
  \frac{2x_j}{\sqrt n}\bigl(1+o(1)\bigr).
\]
Consequently,
\[
  D_{n,j}
  =
  2^n\frac{2\sqrt2}{\sqrt\pi\,n}
  x_j e^{-x_j^2/2}\bigl(1+o(1)\bigr)
\]
uniformly on this truncated interval.  Summing squares and using the
mesh size gives
\[
  \sum_{\substack{j\leq k_n\\
                  \varepsilon\leq x_j\leq L}}
  D_{n,j}^2
  \sim
  \frac{4^{n+1}}{\pi n^{3/2}}
  \int_{\max(c,\varepsilon)}^{L}
  x^2e^{-x^2}\,dx.
\]

It remains to justify the two discarded regions.  On every fixed
interval $0\leq x_j\leq L$ one has the endpoint-safe expansion
\[
  \frac{n-2j+1}{n-j+1}
  =
  \frac{2x_j}{\sqrt n}+O(n^{-1})
\]
uniformly.  Together with the local central estimate, this implies
\[
  \sum_{0\leq x_j<\varepsilon}D_{n,j}^2
  \leq
  C\frac{4^n}{n^{3/2}}
  \bigl(\varepsilon^3+o(1)\bigr)
\]
for a constant $C$ independent of $\varepsilon$.  Standard Gaussian
bounds for binomial coefficients similarly give
\[
  \sum_{x_j>L}D_{n,j}^2
  \leq
  C\frac{4^n}{n^{3/2}}
  \int_L^\infty(1+x^2)e^{-x^2}\,dx.
\]
We may therefore first let $n\to\infty$, then $L\to\infty$, and finally
$\varepsilon\downarrow0$.  This yields
\[
  S_{n,k_n}
  \sim
  \frac{4^{n+1}}{\pi n^{3/2}}
  \int_c^\infty x^2e^{-x^2}\,dx.
\]

At the endpoint $k_n$, the local central limit theorem gives
\[
  N_{n,k_n}^2
  \sim
  4^n\frac{2}{\pi n}e^{-c^2}.
\]
Hence
\[
  \sqrt n\,
  \frac{S_{n,k_n}}{N_{n,k_n}^2}
  \longrightarrow
  2e^{c^2}\int_c^\infty x^2e^{-x^2}\,dx.
\]
Integration by parts gives
\[
  \int_c^\infty x^2e^{-x^2}\,dx
  =
  \frac{c}{2}e^{-c^2}
  +
  \frac{\sqrt\pi}{4}\erfc(c),
\]
which is the claimed function.
\end{proof}

\section{Measure-many contraction and test constructions}
\label{app:mm-constructions}

\begin{lemma}[Defect-one contractions with complete product spectrum]
\label{lem:mm-defect-modes}
Given positive integers $n_1,\ldots,n_s$, there exist diagonalizable
contractions $V_\lambda\in\Mat_{n_\lambda}(\C)$ and unit vectors
$|\psi_\lambda\rangle$ such that
\[
  \rank(I-V_\lambda^\dagger V_\lambda)
  =
  \rank(I-V_\lambda V_\lambda^\dagger)
  =1,
\]
and
\[
  \|V_\lambda^t\psi_\lambda\|^2
  =
  \sum_{i,j=1}^{n_\lambda}
  c_{\lambda,ij}
  z_{\lambda,ij}^{\,t}.
\]
Every coefficient $c_{\lambda,ij}$ is nonzero, and all
$\sum_\lambda n_\lambda^2$ numbers $z_{\lambda,ij}$ are nonzero and
pairwise distinct.  Each $V_\lambda$ is the compression of a unitary on
$\C^{n_\lambda+1}$.
\end{lemma}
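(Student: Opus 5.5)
The construction follows the hint given after \cref{thm:mm-profile-capacity}: solve a discrete Stein equation for a diagonal matrix with prescribed eigenvalues, then conjugate by the Stein Gramian. Fix one block of size $n=n_\lambda$.

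\textbf{Step 1: prescribe the spectrum.} Choose distinct nonzero $\zeta_1,\ldots,\zeta_n$ in the open unit disk. Choose them globally across all blocks so that the $\sum_\lambda n_\lambda^2$ products $\zeta_{\lambda,i}\overline{\zeta_{\lambda,j}}$ are pairwise distinct. Each required coincidence is a nontrivial polynomial identity in the moduli and arguments, so it fails off a measure-zero set. For instance, take the moduli $r_i$ generic and the phases rationally independent.

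\textbf{Step 2: solve the Stein equation.} Set $\Lambda=\operatorname{diag}(\zeta_i)$ and pick $b\in\C^n$ with every coordinate nonzero. Solve
\[
  W-\Lambda^\dagger W\Lambda=bb^\dagger,
  \qquad
  W_{ij}=\frac{\overline{b_i}b_j}{1-\overline{\zeta_i}\zeta_j}.
\]
This $W$ is a Cauchy-type Gram matrix, hence positive definite: it equals $\sum_t (\Lambda^t)^\dagger bb^\dagger\Lambda^t$, and $(\Lambda,b)$ is observable because the $\zeta_i$ are distinct and every $b_i\neq0$.

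\textbf{Step 3: define $V$ by conjugation.} Put $V=W^{1/2}\Lambda W^{-1/2}$. Then
\[
  I-V^\dagger V=W^{-1/2}(W-\Lambda^\dagger W\Lambda)W^{-1/2}=cc^\dagger,
  \qquad c=W^{-1/2}b,
\]
which has rank one. Hence $V$ is a contraction, and $\|V\|<1$ fails only along one direction. Since $V$ is square, $V^\dagger V$ and $VV^\dagger$ have the same spectrum, so $I-VV^\dagger$ also has rank one. A square contraction with a rank-one defect dilates to a unitary on $\C^{n+1}$ by the standard Halmos/Julia construction
\[
  \begin{pmatrix} V & D_{V^\dagger}\\ D_V & -V^\dagger\end{pmatrix},
  \qquad
  D_V=(I-V^\dagger V)^{1/2}.
\]
Restricting the off-diagonal entries to the one-dimensional ranges of the defects gives a unitary on $n+1$ dimensions whose compression is $V$.

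\textbf{Step 4: expand the survival probability.} Write $\psi=W^{1/2}\phi$ after normalization. Then
\[
  \|V^t\psi\|^2=\phi^\dagger\Lambda^{\dagger t}W\Lambda^t\phi
  =\sum_{i,j}\overline{\phi_i}\phi_j W_{ij}\bigl(\overline{\zeta_i}\zeta_j\bigr)^t.
\]
This has the stated form with $c_{ij}=\overline{\phi_i}\phi_jW_{ij}$. Every $W_{ij}\neq0$, so choosing every coordinate $\phi_i\neq0$ makes all $c_{ij}$ nonzero. The $z_{ij}$ are the products from Step 1, which are nonzero and globally distinct. The labelling $z_{\lambda,ij}=\overline{\zeta_i}\zeta_j$ differs from the convention in the body only by the index swap $i\leftrightarrow j$, which is harmless.

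\textbf{Main obstacle.} The delicate point is Step 1: guaranteeing global distinctness of all ordered products, including $i=j$ versus $i\neq j$ and cross-block coincidences, while staying inside the disk. I would argue this by a genericity count. The bad set is a finite union of proper real-analytic subsets of $\mathbb D^{\sum n_\lambda}$, so its complement is nonempty and in fact dense.

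Diagonalizability of $V$ is immediate, since $V$ is similar to $\Lambda$, which has distinct eigenvalues. The degenerate case $n=1$ is consistent: $\zeta_1\ne0$, $|\zeta_1|<1$, and the defect $1-|\zeta_1|^2$ has rank one.
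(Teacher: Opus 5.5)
Your proposal is correct and follows the paper's proof essentially step for step. You choose generic eigenvalues with distinct ordered products, take the Stein Gramian of $(\Lambda,b)$, and conjugate to $V=W^{1/2}\Lambda W^{-1/2}$, so that $I-V^\dagger V=W^{-1/2}bb^\dagger W^{-1/2}$. You then read off the $n^2$ modes from $\psi\propto W^{1/2}\phi$. The paper fixes $b$ and $\phi$ to be the all-ones vector. It also builds the unitary on $\C^{n+1}$ from an explicit singular-value decomposition rather than your Halmos--Julia colligation, which additionally requires identifying the two one-dimensional defect spaces.

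Only cosmetic slips remain. The entry formula should read $W_{ij}=b_i\overline{b_j}/(1-\overline{\zeta_i}\zeta_j)$. For $n\geq2$ one has $\|V\|=1$: $V$ is isometric on $c^\perp$ and strictly contractive only along $c$, not the other way round.
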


\begin{proof}
Choose nonzero numbers
$\zeta_{\lambda,1},\ldots,\zeta_{\lambda,n_\lambda}$ in the open unit
disc so that all products
\[
  \zeta_{\lambda,j}\overline{\zeta_{\lambda,i}}
\]
are pairwise distinct across all labels and ordered pairs.  Such a
choice is generic: each forbidden equality is a proper real-algebraic
hypersurface in a finite product of open discs.

Fix one label and suppress $\lambda$.  Let
\[
  A=\operatorname{diag}(\zeta_1,\ldots,\zeta_n),
  \qquad
  b=(1,\ldots,1)^{\mathsf T},
\]
and define the convergent Stein Gramian
\[
  G
  =
  \sum_{t=0}^\infty A^{\dagger t}bb^\dagger A^t,
  \qquad
  G_{ij}
  =
  \frac{1}{1-\overline{\zeta_i}\zeta_j}.
\]
The matrix $G$ is positive definite.  Indeed, if
$c^\dagger Gc=0$, then
$\sum_i c_i\zeta_i^t=0$ for every $t\geq0$; the first $n$ equations
form a nonsingular Vandermonde system.  The Stein identity is
\[
  G-A^\dagger GA=bb^\dagger.
\]
Set
\[
  V=G^{1/2}AG^{-1/2}.
\]
Then
\[
  I-V^\dagger V
  =
  G^{-1/2}bb^\dagger G^{-1/2}\geq0,
\]
so $V$ is a contraction with a rank-one defect.  Since $V$ is square,
$I-VV^\dagger$ has the same rank.  It is similar to $A$ and therefore
has eigenvalues $\zeta_1,\ldots,\zeta_n$.

Take
\[
  |\psi\rangle
  =
  \frac{G^{1/2}b}{\sqrt{b^\dagger Gb}}.
\]
Then
\[
  \|V^t\psi\|^2
  =
  \frac{1}{b^\dagger Gb}
  \sum_{i,j=1}^n
  G_{ij}
  \bigl(\zeta_j\overline{\zeta_i}\bigr)^t.
\]
Every $G_{ij}$ is nonzero, giving the required coefficients and modes.

Finally write a singular-value decomposition
\[
  V=U\operatorname{diag}(1,\ldots,1,s)W^\dagger,
  \qquad 0\leq s<1.
\]
The unitary
\[
  (U\oplus1)
  \left[
    I_{n-1}\oplus
    \begin{pmatrix}
      s&\sqrt{1-s^2}\\
      \sqrt{1-s^2}&-s
    \end{pmatrix}
  \right]
  (W^\dagger\oplus1)
\]
has $V$ as its compression to the first $n$ coordinates.
\end{proof}

\begin{lemma}[Compiled weak-leak frame]
\label{lem:mm-compiled-frame}
Let
\[
  \mathscr X=\bigoplus_{\lambda\in L}\Herm(\mathbb C^{n_\lambda}),
  \qquad
  D=\sum_{\lambda\in L}n_\lambda^2,
\]
and let \(p_\lambda>0\).  Fix \(D\) affinely independent operators,
with rational coordinates in the rank-one frame used below, of common
trace \(s_0>0\) and of the form
\[
  X_i
  =
  \bigoplus_{\lambda\in L}
  w_{i,\lambda}
  |\phi_{i,\lambda}\rangle\langle\phi_{i,\lambda}|,
  \qquad
  0<w_{i,\lambda}<p_\lambda.
\]
If every block has an accepting exit, there is a commuting-unitary
measure-many witness over at most \(|L|+2\) input symbols whose finite
prefix--suffix sign matrix is \(H_D^{(+1)}\).

If the labels are grouped into mobility components with prescribed
available weights \(p_C>0\), suppose a left endmarker seeds every
listed label with weights \(q_\lambda\) such that
\(q_\lambda>\max_i w_{i,\lambda}\) and
\(\sum_{\lambda\in C}q_\lambda=p_C\) in each active component, and
every component has an invariant accepting exit.  Then the same
conclusion holds for
component-conservative covariant channels over at most \(|L|+2\)
symbols.  The two control symbols are nonleaking; the remaining symbol
for each label is a weak leak.  Approximation is allowed only inside
the construction: the resulting threshold signs are exact.
\end{lemma}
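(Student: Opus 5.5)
The plan is to reduce the threshold problem to an affine-complete sign frame on the accumulator--corner pair \((a,X)\) from \cref{cor:mm-upper}. Then \cref{lem:affine-complete-sign} and \cref{lem:pfa-cutpoint-rank} finish the argument.

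First fix the dual frame. The \(X_i\) are affinely independent of common trace \(s_0\) inside the \(D\)-dimensional real space \(\mathscr X\). So the differences \(X_i-X_{i_0}\) span the trace-zero hyperplane, and there are \(D\) linear functionals on \(\mathscr X\), each a difference of rank-one tests in the same rational frame, separating the \(X_i\) in every prescribed sign pattern. Concretely, for each \(\eta\in\{\pm1\}^{D}\) choose a Hermitian effect \(E_\eta=\bigoplus_\lambda E_{\eta,\lambda}\) and a scalar \(\tau\) such that \(\operatorname{sign}(\Tr(E_\eta X_i)-\tau)\) matches the affine-complete pattern \(H_D^{(+1)}\) on the rows \(X_i\) together with one anchor. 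The anchor is taken as an extra prefix state, namely a survivor with larger acceptance weight. After an affine rescaling \(E_\eta\mapsto \alpha I+\beta E_\eta\), each \(E_\eta\) can be taken with \(0\leq E_{\eta,\lambda}\leq I\) and with all margins uniform in \(\eta\).

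Next realize prefixes and suffixes by words, using two nonleaking control letters that densely generate \(\prod_\lambda U(n_\lambda)\) on the nonhalting multiplicity blocks via \cref{thm:binary-dense}. For each label \(\lambda\) there is one weak-leak letter \(\ell_\lambda\), with survivor \(I+(\sqrt{1-\delta}-1)Q_\lambda\) for a fixed rank-one \(Q_\lambda\); its lost weight is routed to the accepting exit of that block (respectively that component, in the covariant case). This is completely positive and trace preserving, and \(\Pi\)-conservative by the absorbing form \eqref{eq:mmg-absorption}.

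A prefix that prepares \(X_i\) proceeds label by label. Rotate the seeded pure block so that powers of \(\ell_\lambda\) bleed exactly \(q_\lambda-w_{i,\lambda}\) of weight into acceptance; since the weights are rational, integer powers can hit them after amplification of the leak count. Then rotate the survivor to \(\phi_{i,\lambda}\). The accumulated acceptance \(a=\sum_\lambda(q_\lambda-w_{i,\lambda})=p-s_0\) is common to all \(i\), which gives the shared baseline. A suffix implements \(E_\eta\) in the same way. Spectrally decompose each \(E_{\eta,\lambda}\) in the rational frame, rotate each eigenvector onto the leak direction, and apply a number of leak letters approximating the eigenvalue. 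Rotating back between steps makes the total accepted amount approximately \(\Tr(E_\eta X)\); residual nonhalting weight is rejected at the right endmarker.

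Parameters are fixed in order. Amplification and leak strength \(\delta\) are chosen first, so that the discretization error \(O(\delta)\) of every eigenvalue and weight is below a quarter of the static margin. Control accuracy comes last, with each group element approximated by binary words using density and continuity, because only finitely many prefix and suffix words occur. The resulting finite matrix has exactly the signs of \(H_D^{(+1)}\), giving sign-rank \(D+1\) by \cref{lem:affine-complete-sign}. The alphabet consists of two controls plus one leak per label. The covariant case is identical, with leaks routed to invariant accepting states of the component and seeds \(q_\lambda\) supplied by the left endmarker.

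The main obstacle is the compilation of a non-monotone test \(E_\eta\) through terminal-accept leaks. Accumulated acceptance only grows, so negative coefficients must be absorbed into the affine shift \(\alpha\). This requires checking that the baseline stays common to all prefixes, which is exactly why the trace \(s_0\) and the leaked amount \(p-s_0\) must agree across the \(X_i\).
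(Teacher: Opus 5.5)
Your proposal is correct. It shares the paper's architecture: two nonleaking controls, one weak leak per label, prefixes with a common accumulated baseline, an affine shift to make the tests positive, an anchor obtained by draining the survivor, and the same covariant leak channel. It differs in how the tests $E_\eta$ are compiled.

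The paper works at first order in $\delta$. It fixes the rational tight frame of $2n_\lambda^2-n_\lambda$ rank-one projections per block and writes $R\sum_i\eta_iG_i$ as an integer combination of frame elements, with $G_i$ the Hilbert--Schmidt dual basis. It then shifts by $L_0C_0I$ so that every multiplicity $N_{\eta,Q}$ is nonnegative, and realizes $\delta F_\eta$ up to a cross-term error $O(\delta^2R^2)$. In the paper both the signal $\delta R$ and the one-dimensional-block preparation error $C\delta$ scale with $\delta$. It must therefore amplify first ($R>4C$), then take $\delta R\ll1$, then take the control error $\zeta\ll\delta R$; this is where the rationality hypothesis is used.

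You instead realize an order-one effect $E_\eta=\tfrac12 I+\beta\sum_i\eta_iG_i$ directly, by leaking along its orthonormal eigenvectors. The conjugated survivors $I+((1-\delta)^{m_k/2}-1)|v_k\rangle\langle v_k|$ commute. Hence $I-V^\dagger V$ equals $E_\eta$ up to the $\delta$-spacing of the grid $1-(1-\delta)^m$, with no product error. Because your margin $\beta$ is fixed before $\delta$, every $O(\delta)$ preparation or discretization error is removed by shrinking $\delta$ and then $\zeta$.

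Your route needs neither the rationality hypothesis nor the amplification step. The paper's route uses one fixed family of control conjugations, independent of $\eta$, with each test a literal nonnegative integer count of frame leaks.

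Three points to tidy:
\begin{itemize}
\item Your claim that rationality lets integer leak powers bleed $q_\lambda-w_{i,\lambda}$ exactly is false. In a one-dimensional block the survivor weights form the grid $q_\lambda(1-\delta)^k$, which cannot hit several prescribed values. For $n_\lambda\geq2$, exactness comes from a continuously adjustable overlap with the leak direction. The error is harmless because your own parameter step already budgets an $O(\delta)$ weight error.
\item Make the single cutpoint explicit: $\tau=a_0+s_0/2$, with $a_0$ the common baseline. The anchor must then be drained to survivor trace strictly below $s_0/2$.
\item \cref{thm:binary-dense} yields $\prod_\lambda SU(n_\lambda)$, not $\prod_\lambda U(n_\lambda)$. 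This suffices, since you only need to rotate vectors up to phase.
\end{itemize}
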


\begin{proof}
Take the right endmarker to act as the identity on the survivor and
declare all residual nonhalting mass nonaccepting.  Thus the eventual
test effect of a weak-leak word with nonhalting survivor \(V_y\) is
\(I-V_y^\dagger V_y\).

Choose two unitaries \(g_0,g_1\) that preserve every nonhalting block
and whose positive-word closure is
\(\prod_{\lambda\in L}SU(n_\lambda)\); they act as the identity on the
halting subspaces.  For each label fix a nonhalting multiplicity
direction \(q_{\lambda,0}\) and an accepting multiplicity direction
\(\alpha_\lambda\).  A label leak \(\ell_\lambda\) rotates these two
directions, tensored with \(I_{V_\lambda}\), through an angle with
\(\sin^2\theta=\delta\), and is the identity elsewhere.  Its
nonhalting survivor along \(Q=|q\rangle\langle q|\) is
\[
  S_Q(\delta)
  =
  I+(\sqrt{1-\delta}-1)Q.
\]
Positive control words approximate both \(W\) and \(W^{-1}\), so a
word of the form \(W^{-1}\ell_\lambda^kW\) compiles \(k\) aligned
leaks along any prescribed rank-one \(Q\) in that block.  The control
letters themselves never change accumulated acceptance.

In an \(n\)-dimensional block with basis
\(e_1,\ldots,e_n\), use the rank-one projections
\[
  Q_j=|e_j\rangle\langle e_j|
\]
and, for \(j<k\),
\[
  Q_{jk,\pm}^{R}
  =
  \left|
    \frac{e_j\pm e_k}{\sqrt2}
  \right\rangle
  \left\langle
    \frac{e_j\pm e_k}{\sqrt2}
  \right|,
\]
\[
  Q_{jk,\pm}^{I}
  =
  \left|
    \frac{e_j\pm i e_k}{\sqrt2}
  \right\rangle
  \left\langle
    \frac{e_j\pm i e_k}{\sqrt2}
  \right|.
\]
There are \(2n^2-n\) such projections.  Their real span is
\(\Herm(\mathbb C^n)\), because their pairwise differences give the
real and imaginary off-diagonal matrix units, while the \(Q_j\) give
the diagonal units.  They also form a tight frame in the elementary
sense
\[
  \sum_{Q\in\mathcal Q_n}Q=(2n-1)I_n.
\]

Let \(C_0\) be a common multiple of all \(2n_\lambda-1\), and assign
the positive integer
\[
  r_Q=\frac{C_0}{2n_\lambda-1}
\]
to every frame projection in block \(\lambda\).  On the direct sum,
\[
  \sum_Qr_QQ=C_0I.
\]
Let \(G_1,\ldots,G_D\) be Hilbert--Schmidt dual to
\(X_1,\ldots,X_D\).  All frame Gram entries are rational, so the
rational-coordinate hypothesis gives a common positive integer
\(R\) and integers \(b_{\eta,Q}\) such that
\[
  R\sum_{i=1}^D\eta_iG_i
  =
  \sum_Qb_{\eta,Q}Q
\]
for every sign vector \(\eta\).  Choose an integer \(L_0\) large enough
that
\[
  N_{\eta,Q}=L_0r_Q+b_{\eta,Q}\geq0
\]
for every \(\eta\) and \(Q\).  Then
\[
  F_\eta:=\sum_QN_{\eta,Q}Q
  =L_0C_0I+R\sum_{i=1}^D\eta_iG_i,
\]
and therefore
\[
  \Tr(F_\eta X_i)=L_0C_0s_0+R\eta_i.
\]

Compile each frame projection \(Q\) by conjugating the single leak of
its label, and group repeated uses as
\(W_Q^{-1}\ell_\lambda^{N_{\eta,Q}}W_Q\).  Thus the number of control
approximations is fixed by the frame size, even though the exponents
may be large.  Since
\[
  S_Q(\delta)=I-\frac{\delta}{2}Q+O(\delta^2),
\]
the eventual acceptance effect of the resulting test word satisfies
\[
  E_\eta(\delta)
  =
  \delta F_\eta+O(\delta^2R^2)+O(\zeta),
\]
uniformly over \(\eta\), where \(\zeta\) is the total dense-control
error.  Here \(L_0\), every \(N_{\eta,Q}\), and the norm of \(F_\eta\)
are \(O(R)\) after the initial rational data are fixed.

It remains to compile the preparations.  Let \(r_\lambda=p_\lambda\)
in the commuting-unitary realization and \(r_\lambda=q_\lambda\) in
the covariant realization.  Starting from weight \(r_\lambda\) in a
pure nonhalting direction, aligned leaks produce the grid
\(r_\lambda(1-\delta)^k\).  If \(n_\lambda\geq2\), a final
leak along a direction with continuously variable overlap fills the
interval between two consecutive grid values exactly; a final
nonleaking control sets the survivor direction to
\(\phi_{i,\lambda}\).  If \(n_\lambda=1\), the grid approximates any
fixed target weight with error \(O(\delta)\).  Consequently there are
prefix words producing pairs
\[
  (a_i,X_i')
  \quad\text{with}\quad
  |a_i-a_0|+\|X_i'-X_i\|
  \leq C\delta+O(\zeta),
  \qquad
  a_0=\sum_\lambda r_\lambda-s_0,
\]
for a constant \(C\) independent of \(R\) and \(\delta\).  A long
aligned-leak prefix leaves total survivor below \(s_0/4\) and supplies
the anchor row.

Choose the parameters in the following order.  First scale the integer
dual construction so that \(R>4C\).  Next choose \(\delta>0\) with
\(\delta R\ll1\), making the product error
\(O(\delta^2R^2)\) smaller than the signal \(\delta R\).  Finally take
\(\zeta\ll\delta R\).  At the cutpoint
\[
  \tau=a_0+\delta L_0C_0s_0,
\]
the \(i\)-th preparation and \(\eta\)-th test then have sign
\(\eta_i\).  The anchor is positive because its accumulated acceptance
exceeds \(a_0+3s_0/4\), while
\(\delta L_0C_0s_0<s_0/4\).  Hence the finite sign matrix is
\(H_D^{(+1)}\).

For the covariant realization, the left endmarker first prepares the
specified invariant seed weights \(q_\lambda\) on the listed
nonhalting labels.  For a multiplicity-space rank-one projection
\(Q\) in label \(\lambda\), set on the full physical space
\[
  \widetilde Q=Q\otimes I_{V_\lambda},
  \qquad
  S=I-\bigl(1-\sqrt{1-\delta}\bigr)\widetilde Q,
\]
where \(\widetilde Q\) is zero, and hence \(S\) is the identity, on
all other labels and on the halting subspaces.  Replace the unitary
leak by the full channel
\[
  \Phi_{\lambda,Q}(X)
  =
  SXS^\dagger
  +\Tr\!\bigl((I-S^\dagger S)X\bigr)\alpha_C,
\]
where \(\alpha_C\) is a normalized invariant accepting state in the
same component.  The explicit survivor and the measure--prepare branch
make this map completely positive, trace preserving,
\(K\)-covariant, and \(\Pi\)-conservative.  The preceding preparation,
test, and error analysis is unchanged.  In either realization the
alphabet is precisely the two controls together with one leak per
label.
\end{proof}

\section{Operational boundary arguments}\label{app:operational-boundaries}

\subsection{Affine-complete shattering ceiling}

\begin{proof}[Proof of \cref{prop:witness-ceiling}]
In an active block of dimension \(m_\lambda\) and readout rank \(r_\lambda\), a standard graph chart for the complex Grassmannian uses \(2r_\lambda(m_\lambda-r_\lambda)\) real coordinates.  Its projector entries are rational functions with positive denominator \(\det(I+Z_\lambda^\dagger Z_\lambda)\) of degree \(2r_\lambda\).  After the blockwise denominators are cleared, each function \(\Tr(E\rho_j)-\tau\) has the sign of a real polynomial in \(\kappa\) variables of degree at most
\[
  2\sum_\lambda r_\lambda\leq\kappa.
\]
The standard coordinate-subspace charts cover the orbit with
\[
  J=\prod_\lambda\binom{m_\lambda}{r_\lambda}
  \leq2^{\sum_\lambda m_\lambda}
  \leq2^\kappa
\]
charts, since every active block satisfies \(m_\lambda\leq2r_\lambda(m_\lambda-r_\lambda)\).

If \(d<\kappa\), the claimed bound is immediate.  Otherwise Warren's polynomial sign-pattern theorem bounds the number of strict sign vectors realized in one chart by \((4ed)^\kappa\) \cite{Warren}.  Across all charts the orbit therefore realizes at most \((8ed)^\kappa\) patterns.  Shattering gives \(2^d\leq(8ed)^\kappa\), or
\[
  \frac d\kappa
  \leq
  \log_2\!\left(8e\kappa\frac d\kappa\right).
\]
The function \(x-\log_2x\) is increasing for \(x\geq2\), and evaluating the inequality at \(x=4L\), where \(L=\log_2(16e\kappa)\), gives \(4L>L+1+\log_2L=\log_2(8e\kappa\cdot4L)\).  Hence no \(x\geq4L\) can satisfy it.
\end{proof}

\subsection{Why the compiled weak-leak frame is not directly binary}

In the compiled measure-many witness, the right endmarker acts trivially on the survivor and all residual nonhalting mass is terminally nonaccepting.  Suppose, within this architecture, that every test were a power \(t^N\) of one letter.  For a fixed preparation prefix \(x_i\), each further application of \(t\) can only transfer additional survivor weight into the accepting accumulator, so \(f_\A(x_it^N)\) is nondecreasing in \(N\).  Relative to a fixed cutpoint, the sign in row \(i\) can therefore change at most once.  As \(N\) varies, the column sign vectors form a chain in which each of the \(D\) coordinates flips at most once, and hence contain at most \(D+1\) distinct patterns.  The complete frame \(H_D\) contains \(2^D\) patterns, so no such power family realizes it when \(D\geq2\).  This proves only the architecture-specific obstruction stated in the main text: a right-endmarker survivor effect or another semigroup construction need not obey the same monotonicity.

\section{Common-character Kraus boundary}\label{app:common-character-boundary}

\begin{proof}[Proof of \cref{prop:common-character}]
The intertwining relation sends the \(\lambda\)-isotypic summand into the \(T_\chi(\lambda)\)-summand, so
\[
  Z_\mu A_r=A_rZ_{T_\chi^{-1}(\mu)}.
\]
Trace preservation forbids an occurring source label from having an absent target.  Since tensoring by a character is invertible and the occurring set is finite, \(T_\chi\) permutes that set.  Trace preservation now gives
\[
  \Phi^*(Z_\mu)
  =\sum_rA_r^\dagger Z_\mu A_r
  =\left(\sum_rA_r^\dagger A_r\right)Z_{T_\chi^{-1}(\mu)}
  =Z_{T_\chi^{-1}(\mu)}.
\]
Summing over an orbit proves orbit conservation, and \cref{thm:component-channel-capacity} gives the capacity bound.

To see that equality can fail, take \(K=\mathbb Z_2\times\mathbb Z_2\) with each of its four one-dimensional irreducible representations occurring once, and choose a nontrivial character \(\chi\).  Tensoring by \(\chi^{-1}\) partitions the four labels into two transpositions.  Here \(D_K=4\), so the component bound is \(4-2+1=3\).  Yet each common-character channel sends the population of every one-dimensional label deterministically to its paired label.  Starting from an invariant state and using an invariant readout, every word behavior therefore depends only on the parity of its length and has Hankel rank at most two.  Orbit conservation is thus the exact universal consequence of the common-character assumption, not an automatic saturation theorem.
\end{proof}

\section{Finite \texorpdfstring{$S_4$}{S4} certificate and numerical validation}\label{app:s4-certificate}

The Hamming-weight-two module for four sites is the smallest example in
which two nontrivial sectors contribute simultaneously.  It instantiates
the behavior capacity, readout capacity, Fisher rank, and analytic
prepare--test margin without being used as evidence for the asymptotic
theorems.

Take the Hamming-weight-two permutation module:
\[
  \mathcal H_{4,2}
  \cong
  S^{(4)}
  \oplus
  S^{(3,1)}
  \oplus
  S^{(2,2)},
\]
with dimensions \(1,3,2\).  Use the one-dimensional sector as an
always-accepting spectator and take accepting rank one in each active
sector.

\begin{table}[ht]
\centering
\begin{tabular}{c@{\qquad}c@{\qquad}c@{\qquad}c@{\qquad}c}
\toprule
sector & \(D_\alpha\) & \(r_\alpha\)
& \(D_\alpha^2-1\) & \(2r_\alpha q_\alpha\)\\
\midrule
\((4)\)   & \(1\) & \(1\) & \(0\) & \(0\)\\
\((3,1)\) & \(3\) & \(1\) & \(8\) & \(4\)\\
\((2,2)\) & \(2\) & \(1\) & \(3\) & \(2\)\\
\bottomrule
\end{tabular}
\caption{Sector capacities and rank-one readout dimensions for the
Hamming-weight-two $S_4$ module.}
\label{tab:s4-sector-data}
\end{table}

Hence
\[
  \mathsf B=1+8+3=12,
  \qquad
  \kappa=4+2=6,
\]
and the binary state bounds are
\[
  8
  \leq
  \operatorname{sc}_{\mathrm{PFA},\R}
  \leq
  12+1
  =13.
\]

The active dynamical Lie algebra is
\[
  \mathfrak{su}(3)\oplus\mathfrak{su}(2)
\]
of dimension \(8+3=11\).  The projector stabilizer has dimension
\(4+1=5\), so the orbit dimension is \(11-5=6\), agreeing with \(\kappa\).

The total accepting projector has rank
\[
  R_P=3.
\]
Its intrinsic SLD Fisher metric has rank six and six equal nonzero
eigenvalues
\[
  \frac{2}{R_P}=\frac23.
\]

The explicit first-order certificate contains
\[
  (\kappa+1)2^{\kappa}=7\cdot64=448
\]
prefix--suffix signs, including the affine anchor row.  It gives the
seven-state sub-obstruction with an explicit margin.  The curved
rank-one witness in \cref{thm:curved-witness} adds the radial row and
uses
\[
  (\kappa+2)2^{\kappa+1}=8\cdot128=1024
\]
strict signs to obtain the displayed eight-state lower bound.  For the
first-order margin, assign weight $p=1/3$ to each active sector and
weight $1/3$ to the always-accepting spectator.  The spectator places
the common cutpoint at $1/2$, while the active derivatives retain the
factor $p=1/3$.  Choosing
\[
  t=\frac1{8\kappa}=\frac1{48}
\]
gives the conservative margin
\[
  \gamma=\frac{p}{16\kappa}=\frac1{288}.
\]
If every word is estimated from \(N_{\mathrm{shot}}\) independent
Bernoulli trials, Hoeffding's inequality and a union bound give total
failure probability at most \(\delta\) whenever
\[
  N_{\mathrm{shot}}
  \geq
  \frac{1}{2\gamma^2}
  \log
  \left(
    \frac{2(\kappa+1)2^{\kappa}}{\delta}
  \right).
\]
This finite calculation instantiates the structure-to-witness chain.
It is not evidence for asymptotic experimental efficiency.

This example illustrates the role of finite calculations in the paper:
they check the dimension and margin formulas in a concrete instance,
while the scaling laws themselves remain analytic.

More generally, numerical calculations may reproduce finite Lie closures, branching and release identities, Catalan sums, critical-window scaling, the \(448\)-sign tangent certificate, and the \(1024\)-sign curved certificate above.  These checks diagnose implementations of the formulas; no numerical observation is used to prove a universal capacity or state-cost theorem.

\end{document}